%% file: main.tex
\documentclass[a4paper,11pt]{article}
\pdfoutput=1

\usepackage[utf8]{inputenc}
\usepackage[T1]{fontenc}

\usepackage{xparse}
\usepackage{rotating}

\usepackage[sorting=nyt, 
style=alphabetic, 
backref, 
maxbibnames=20, 
doi=false,
url=true]{biblatex}
\usepackage[margin=0.8in]{geometry}

\usepackage{amsmath,amssymb,amsfonts,mathrsfs,ulem,bm,url}
\usepackage{longtable}

\expandafter\def\expandafter\normalsize\expandafter{%
    \normalsize%
    \setlength\abovedisplayskip{1pt}%
    \setlength\belowdisplayskip{3pt}%
    \setlength\abovedisplayshortskip{-8pt}%
    \setlength\belowdisplayshortskip{2pt}%
}

\usepackage{enumitem}
\usepackage{graphicx}
\usepackage[rightcaption]{sidecap}
\usepackage{wrapfig}
\usepackage{soul}
\usepackage{verbatim}
\usepackage{pdflscape}
\usepackage{mathtools}
\usepackage{multirow}
\usepackage{pgfplots}
\usepgfplotslibrary{fillbetween}
\pgfplotsset{compat=newest}
\usepgfplotslibrary{groupplots}
\usetikzlibrary{matrix}
\usetikzlibrary{quantikz2}
\usepackage{stmaryrd}
\usepackage{tikz}
\usepackage{cancel}
\usepackage{appendix}
\usepackage{caption}
\usepackage{booktabs}

\definecolor{coral}{RGB}{254,125,106}

\usepackage{soul}

\newcommand{\T}{T}
\usepackage{braket}
\newcommand{\ketbra}[1]{\ket{#1}\bra{#1}}

\usepackage{amsthm}
\usepackage[most]{tcolorbox}
\usepackage{keytheorems}

\newtheorem{theorem}{Theorem}[section]
\newtheorem{example}[theorem]{Example}
\newtheorem{remark}[theorem]{Remark}
\newtheorem{corollary}[theorem]{Corollary}
\newtheorem{definition}[theorem]{Definition}
\newtheorem{lemma}[theorem]{Lemma}
\newtheorem{proposition}[theorem]{Proposition}
\newtheorem{claim}[theorem]{Claim}

\newkeytheorem{construction}[sibling=theorem, tcolorbox-no-titlebar={colback=red!10,breakable}]

\input{macrosetup}

\usepackage{hyperref}

\usepackage{array}
\usepackage{tabularx}
\usepackage{makecell}

\newcolumntype{P}[1]{>{\centering\arraybackslash}p{#1}}
\newcolumntype{Y}{>{\centering\arraybackslash}X}

\usepackage{authblk}
\makeatletter
\renewcommand\AB@affilsepx{\quad\quad\protect\Affilfont}
\makeatother
\begin{document}

\title{Magic State Distillation via Codes over Binary Extension Fields}
\author[1,3]{Anqi Gong}
\author[2]{Christopher A. Pattison}
\author[3]{Patrick Rall}
\author[4,3]{Adam Wills}
\affil[1]{ETH Zürich}
\affil[2]{UC Berkeley}
\affil[3]{IBM Quantum}
\affil[4]{MIT}
\date{\today}

\maketitle

\begin{abstract}
Fault-tolerant quantum computation architectures are frequently bottlenecked by the overhead of producing high-fidelity magic states. 
In this work, we use algebraic geometric techniques to construct codes over binary extension fields $\mathbb{F}_{2^s}$, thus discovering new protocols for the distillation of qubit magic states, where our focus is on the regime of practical qubit-based quantum computing architectures. To do this, we show that multi-qubit gates of interest such as $\CS$, $\CCZ$, and $\TOF\# = \CCZ_{123}\CCZ_{345}$, can be packaged into simple gates over the larger fields, and we derive simple algebraic conditions in the extension fields allowing the distillation of these gates. Because they are derived from Galois qudits, the corresponding qubits codes naturally handle the correlated errors present on such multi-qubit states. Moreover, the protocols we discover are extremely compact; for example, we show that $4$ $\CS$ states can be distilled to $1$ $\CS$ state at distance $2$, using only $4$ logical qubits.

For a case study, we consider the distillation of $\CS$ and $\CCZ$ states from injected $\T$ and $\CS$ states. When optimized for magic state production per unit time, or logical spacetime volume, we find that our protocols outperform the state-of-the-art in almost every situation, both at input error rates $10^{-3}$ (direct injection), and $10^{-6}$ (allowing some cultivation pre-injection).

\end{abstract}

\tableofcontents
\section{Introduction}
\subsection{Overview}
Gate teleportation is a common way to achieve a universal quantum gateset.
Most commonly, the teleported gate is in the third level of the Clifford hierarchy (termed a ``third-level gate'') where the input resource state is known as a \emph{magic state}.
The usual recipe for the preparation of magic states is to first prepare encoded noisy magic states and \textit{distill} them to high-fidelity magic states, which requires only Clifford operations and measurements.
In the standard procedure for magic state distillation~\cite{MSD,bravyi2012magic}, a quantum error-detecting stabilizer code encoding \(k\) qubits into \(n\) qubits with a transversal third-level gate can be used to construct a magic state distillation protocol by 1) preparing a codestate, 2) applying the noisy transversal third-level gate \(U\) using input noisy magic states, 3) measuring the syndrome of the code, and 4) outputting the unencoded state if the syndrome is trivial.
If the action of the third-level gate is Clifford-equivalent to encoded \(V^{\otimes k}\) for some third-level gate \(V\), then this protocol turns \(n\) noisy magic states for \(U\) into \(k\) less-noisy magic states for \(V\).

In this paper, we construct magic state distillation protocols leveraging algebraic geometry codes over binary extension fields, with a focus on using novel theory to construct practical protocols. A binary extension field is a finite field of size $2^s$, denoted $\mathbb{F}_{2^s}$. A CSS code over $\mathbb{F}_{2^s}$ is also a CSS code over $\mathbb{F}_2$~\cite{wills2026review}. However, by working directly with the extension field, one is often able to construct qubit codes with desirable properties that would have been very difficult to construct by working directly over qubits. This intuition has been leveraged for magic state distillation before~\cite{wills2024,golowich2025asymptotically,nguyen2024,nguyen_pattison}, although this is the first time these techniques have been considered in a truly practical setting. The main messages of this paper are as follows:
\begin{enumerate}
    \item\label{message:gate_package} Multi-qubit gates of interest, including $\CS$ and $\CCZ$, may be neatly packaged into gates for single Galois qudits.\footnote{Galois qudits are the quantum systems that encode finite extension fields like $\mathbb{F}_{2^s}$. Such a Galois qudit is equivalent to a set of $s$ qubits~\cite{wills2026review}. Throughout this work, when we refer to a qudit, it is left implicit that we simply mean a Galois qudit, that is, a set of qubits. Moreover, whenever we refer to a quantum code for qudits, we mean a CSS code over $\mathbb{F}_{2^s}$.} It was previously unknown that such widely-considered multi-qubit gates admitted such neat algebraic descriptions in the Galois qudit picture. With these descriptions, we can prove conditions for Galois qudit codes to admit these gates transversally, thus allowing their distillation.
    \item Moving to codes over binary extension fields offers richer algebraic structures, allowing for better code parameters than are possible than by working with the binary field directly, including in the small-size regime of practicality, ultimately resulting in record-breaking protocols.
    \item\label{message:multi-qubit-errors} Working with Galois qudit codes allows us to naturally handle the multi-qubit errors present on the corresponding multi-qubit states. This is the same intuition that has led to the wide-spread use of codes over larger alphabets to handle burst errors in classical communication and storage.
    \item\label{message:exotic} Constructing codes with novel algebraic techniques often leads to the possibility of distilling exotic magic states. If such an exotic magic state is called for in a particular algorithm, it will often be more efficient to use a specialized magic state distillation protocol for that state, rather than using generic schemes that distill states such as $\ket{\T}$, and synthesizing the gate.
\end{enumerate}
Let us give some examples of the above points in our work. For point~\ref{message:gate_package}, in Section~\ref{sec:CS_distillation_protocols}, we show that the well-known two-qubit $\CS$ gate can be written into a natural form as a single-qudit gate over $\mathbb{F}_4$, and that its transversality can be achieved using a simple three-orthogonality condition. Using the Reed-Solomon code over $\mathbb{F}_4$, we use this to construct a protocol that distills $4$ $\ket{\CS}$ states into $1$ $\ket{\CS}$ state at distance $2$, that can be run on $4$ logical qubits. Moving to something slightly larger, by using the $\mathbb{F}_8$ Klein quartic algebraic curve, we show that $24$ $\ket{\CCZ}$ states can be distilled to $4$ $\ket{\CCZ}$ states at distance $3$. For some context here, prior to this work, we are not aware of any way to perform magic state distillation with rate $1/6$ at distance $3$, let alone at such small sizes (outside of very pathological and/or large-qudit magic states). There are several examples of such breakthroughs in our catalogue.

Note that, in the two examples mentioned above, the notion of \textit{distance} is the correct one for this situation --- that is --- the distance is the minimum number of input magic states that must have errors to cause an undetectable logical failure. Because multi-qubit states like $\ket{\CS}$ and $\ket{\CCZ}$ can have correlated errors, this is the correct notion of distance, rather than the usual qubit distance of a code; this is the content of point~\ref{message:multi-qubit-errors} above. For an example of a more exotic distillation protocol as mentioned in point~\ref{message:exotic} above, we will show using the $\mathbb{F}_4$ Hermitian algebraic curve that $8$ $\ket{\CS}$ states may be distilled into the $5$-qubit magic state for the $\TOF\#$ gate, which is $\CCZ_{123}\CCZ_{345}$. This gate is Clifford-equivalent to a shared-controlled-SWAP, which is used in fermionic Hamiltonian simulation~\cite{babbush2018encoding}, and other algorithmic subroutines~\cite{Childs_2003}. We show our full catalogue of new protocols in Section~\ref{subsec:catalogue}. Following that, in Section~\ref{subsec:footprint_results}, we show that our novel protocols turn out to be exceed the state-of-the-art in a very broad sense, when considering natural distillation tasks, as well as sizes, states, and error rates relevant to realistic quantum computation architectures.

\subsection{Relation to Prior Work}

As discussed above, quantum codes over Galois qudits~\cite{wills2026review}, meaning quantum codes over binary extension fields, were considered recently, to construct asymptotic families of protocols with constant distillation rate for qubits~\cite{wills2024,golowich2025asymptotically,nguyen2024}, following a long line of work improving the asymptotic rate of distillation~\cite{MSD,meier2012magic,   ,jones2013multilevel,haah2017magic,sublogarithmic,krishna2018towards}. Codes over binary extension fields were subsequently used to construct quantum error-correcting codes with desirable properties for qubits~\cite{nguyen_pattison,golowich2025quantum,he2025quantum,he2025asymptotically,golowich2025near,wills2026concatenating}. Finite-size magic state distillation protocols were constructed in the recent work~\cite{cervia2025magic} using codes over binary extension fields, although the resulting protocols are for qudits, not qubits, and the sizes considered are larger than would allow for immediate practical implementation. Nevertheless, our works on the distillation of qudit $\CCZ$ gates in binary extension fields --- see Section~\ref{sec:AG_codes} --- may be viewed as complementary to theirs. 

Further recent works have constructed, or surveyed, magic state distillation protocols in several parameter regimes~\cite{nezami2022classification,singh2026borrowed,jacinto2026exploring}, although our protocols lie outside the purvue of all of these. A further, related group of works~\cite{jacob2025single,li2026transversaldimensionjumpproduct,Menon_2026,tiew2026copy} constructs transversal non-Clifford gates in the finite length regime, where here the focus is on qLDPC codes and implementation at the \textit{physical} level. Complementarily, our protocols have stronger parameters, and generally distill more directly-useful gates, although are intended for use at the \textit{logical} level. One might consider if our techniques, allowing the construction of distillation protocols taking in exotic multi-qubit gates at the logical level, can be used in tandem with those works, which generate such exotic magic states from physical-level magic.

\subsection{Results}\label{sec:results}

\subsubsection{Catalogue of Protocols}\label{subsec:catalogue}

We begin by presenting a catalogue of our new protocols in Table~\ref{tab:protocol_catalogue}. 

\begin{table}[ht]
\centering
\begin{tabular}{cccc}
\textbf{Input} & \textbf{Output} & \textbf{Distance} & \textbf{Method}\\
\hline\hline
$(2k+2)\;\CS$ & $k\;\CS$ & $d=2$ & \hyperref[con:2k+2-to-k-CS]{$\F_4$ Reed-Solomon}
\\
$8 \; \CS$ & $1\;\TOF\#$ & $d=2$ & \hyperref[con:8cstoTOF]{$\F_4$ Hermitian code}
\\
$(6k+2)\;\CCZ$ & $2k\;\CCZ$ & $d=2$ & \hyperref[con:6k+2CCZto2kCCZ]{$\F_8$ Reed-Solomon}
\\
$2^s\;\text{C}^{s-1}\text{Z}$ & $2\;\text{C}^{s-1}\text{Z}$ & $d=2$ & \hyperref[con:multi-control-Z-d2]{$\F_{2^s}$ Reed-Solomon}
\\
$8\;\CCS$ & $1\;\CCS$ & $d=2$ & \hyperref[con:8ccsto1ccsd2]{$\F_8$ Reed-Solomon}
\\
$16\;\CS$ & $1\;\TOF\#$ & $d=3$ & \hyperref[con:16cs1TOF]{$\F_4$ affine monomial code}
\\
$9\;\CCZ$ & $1\;\CCZ$ & $d=3$ & \hyperref[con:9-to-1-CCZ]{$\F_8$ projective Reed-Solomon}
\\
$16\;\CCZ$ & $2\;\CCZ$ & $d=3$ & \hyperref[con:16to2CCZ]{$\F_8$ Reed-Solomon}
\\
$21\;\CCZ$ & $3\;\CCZ$ & $d=3$ & \hyperref[con:21CCZ-to-3CCZ]{$\F_8$ projective monomial code}
\\
$(2^s+1)\;\text{C}^{s-1}\text{Z}$ & $1\;\text{C}^{s-1}\text{Z}$ & $d=3$ & \hyperref[con:d3multicontrol]{$\mathbb{F}_{2^s}$ projective Reed-Solomon}
\\
$12\;\CS$ & $1\;\CS$ & $d=3$ & \hyperref[con:12cs1csd3]{$\mathbb{F}_4$ projective monomial code}
\\
$16\;\CS$ & $2\;\CS$ & $d=3$ & \hyperref[con:16cs2csd3]{$\mathbb{F}_4$ affine monomial code}
\\
$21\;\CS$ & $4\;\CS$ & $d=3$ & \hyperref[con:21CS-to-4CS]{$\mathbb{F}_4$ projective monomial code}
\\
$24\;\CCZ$ & $4\;\CCZ$ & $d=3$ & \hyperref[con:24CCZ-to-4CCZ-d3]{$\mathbb{F}_8$ Klein quartic}
\\
$64\;T$ & $2\;\TOF\#$ & $d=4$ & \hyperref[con:64T_to_2TOF]{$\F_2$ Reed-Muller}
\\
$64\;T$ & $1\;\F_4$-$\CCZ$ & $d=4$ & \hyperref[con:64ttof4qudit]{$\F_{16}$ Hermitian trace code}
\\
$64\;T$ & $2\;\CCZ$ & $d=4$ & \hyperref[con:64tto2ccz]{$\F_{16}$ Hermitian trace code}
\\
$64\;\CCZ$ & $8\;\CCZ$ & $d=4$ & \hyperref[con:64CCZ-to-8CCZ]{$\F_8$ affine monomial code}
\end{tabular}
\caption{Catalogue of new protocols in this paper. For each new protocol, we write the input, the output, the distance, and the method used to obtain the protocol. Each method is hyperlinked to the construction in the paper. Note that a protocol distilling $64$ $\T$ states to $2$ $\CCZ$ states at distance $4$ is already known~\cite{Haah2018}, although our novel protocol has a lower leading-order probability of any error in the factory: $2720p^4$ versus $2944p^4$. In the table, we emphasize that all states are regular qubit magic states. $\TOF\#$ is the $5$-qubit magic state $\CCZ_{123}\CCZ_{345}$. Similarly, $\mathbb{F}_{2^s}$-$\CCZ$ is a $3s$-qubit magic state formed from a certain product of $\CCZ$ gates; see Section~\ref{sec:qudit-CCZ}. There are further novel protocols taking $\mathbb{F}_{2^s}$-$\CCZ$ as input and output; see Sections~\ref{subsec:qudit-ccz-protocols} and~\ref{sec:AG_codes}.\label{tab:protocol_catalogue}}
\end{table}

\subsubsection{Spacetime Footprints for Distillation Tasks}\label{subsec:footprint_results}

In this section, we discuss how we benchmark the performance of our schemes on distillation tasks of interest, at sizes and error rates of interest, against metrics of interest. The resulting Pareto frontiers are shown in Figures~\ref{fig:pareto-grid-p1e-3} and~\ref{fig:pareto-grid-p1e-6}, where we will find that the optimal schemes use our protocols, in the overwhelming majority of situations. We are going to consider various distillation tasks, and examine which concatenation of protocols, both known and novel to this paper, performs best. The distillation tasks we consider are encoded in the following choices:
\begin{itemize}
    \item \textbf{Injected state}. This is the state at the start of the concatenated tower of distillation protocols. We focus on either injected $\ket{\T}$ states or $\ket{\CS}$ states;
    \item \textbf{Input error rate}. We consider input error rate $p = 10^{-3}$, modeling a situation of direct injection, or $p = 10^{-6}$, modeling a situation of some small amount of cultivation~\cite{gidney2024magic} before injection, as is becoming popular\footnote{While cultivation on its own can reach error rates lower than $10^{-6}$, doing so can be costly in terms of physical spacetime volume. To reach extremely low logical error rates, or to save physical spacetime volume, a desirable middle ground is to cultivate a small amount to error rates of (roughly) $10^{-6}$, before performing distillation. It is true, however, that if one is willing to pay the physical spacetime volume to cultivate down to very low error rates, or if one is not aiming for extremely low logical error rates, only one round of a simple distance-$2$ protocol~\cite{campbell2017unified,singh2026borrowed} is likely necessary post-cultivation.}~\cite{gidney2025factor2048bitrsa,lee2025low,xu2025batched,xu2026distillingmagicstatesbicycle,webster2026pinnacle,cain2026shor}. It is worth noting that the cultivation of $\ket{\CS}$ states has not been studied directly, as far as we know, but is entirely reasonable.
    \item \textbf{Outputted state}. We consider the distillation of either $\ket{\CS}$ states, or $\ket{\CCZ}$ states, of a given target error rate, as these are valuable states for the purposes of real logical algorithms.
\end{itemize}

\begin{figure*}[t]
  \centering
  \begin{minipage}[t]{0.495\textwidth}
    \centering
    \includegraphics[width=\linewidth]{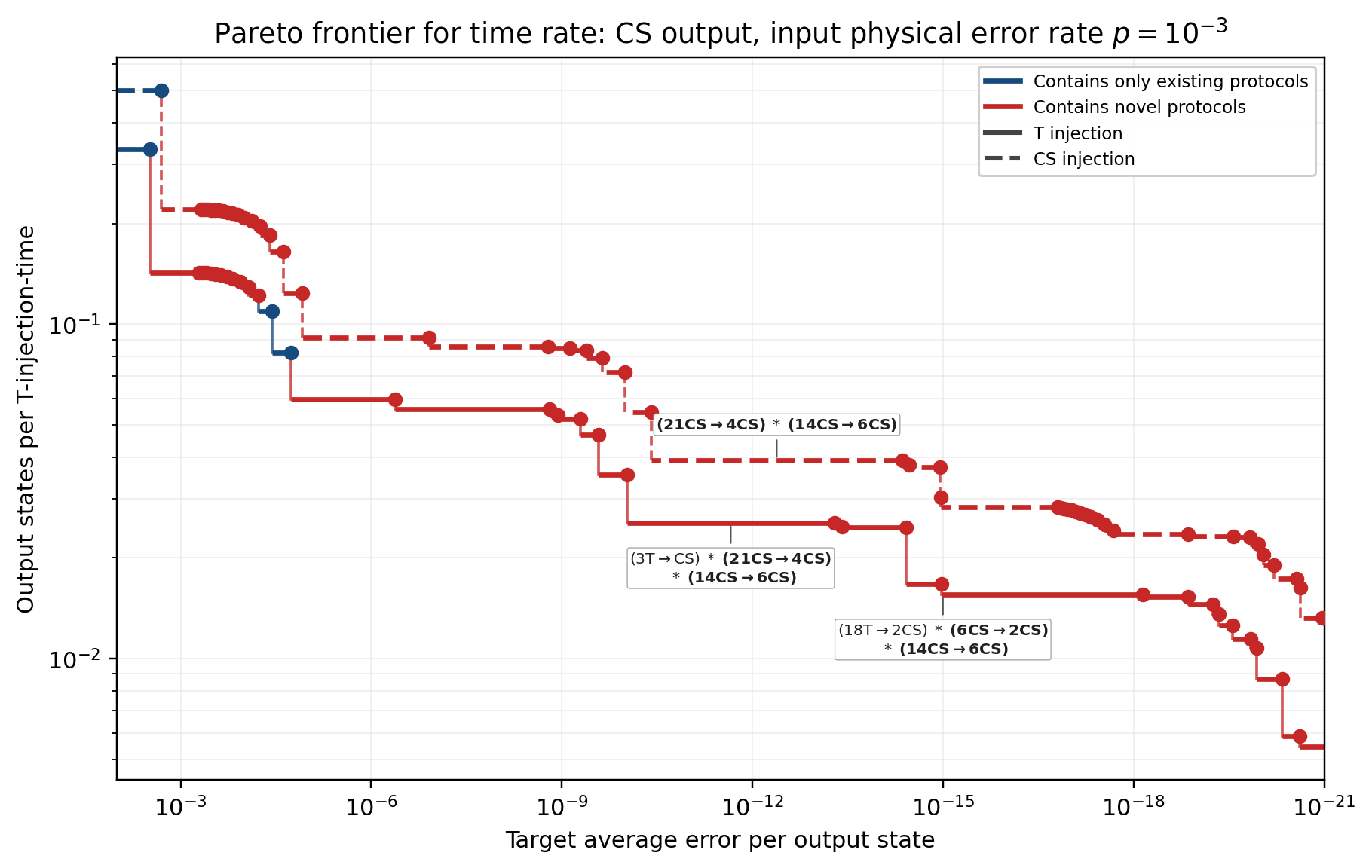}
    \par\smallskip
    \textup{(a) CS distillation: optimal for time rate}
  \end{minipage}%
  \hfill
  \begin{minipage}[t]{0.495\textwidth}
    \centering
    \includegraphics[width=\linewidth]{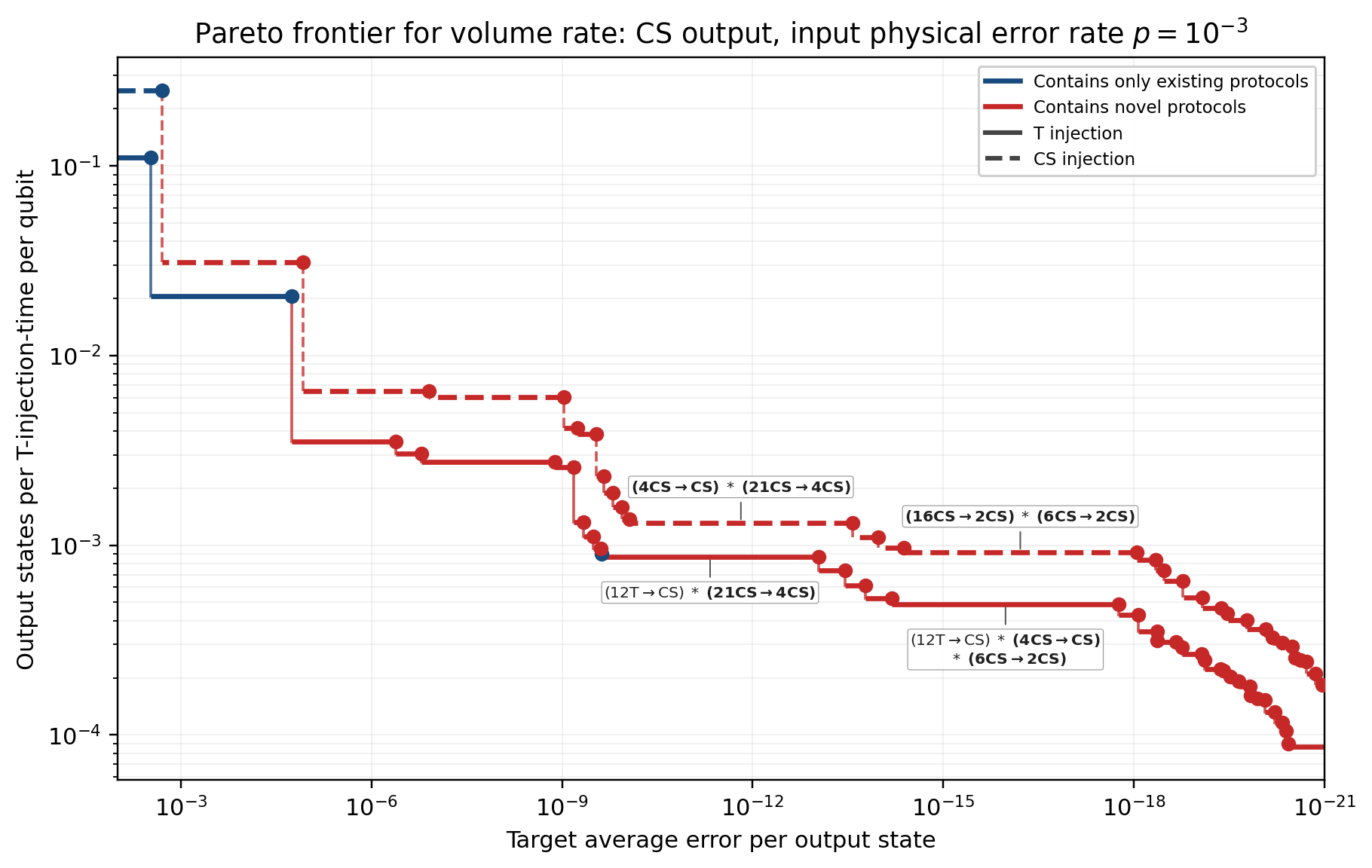}
    \par\smallskip
    \textup{(b) CS distillation: optimal for volumetric rate}
  \end{minipage}

  \par\medskip

  \begin{minipage}[t]{0.495\textwidth}
    \centering
    \includegraphics[width=\linewidth]{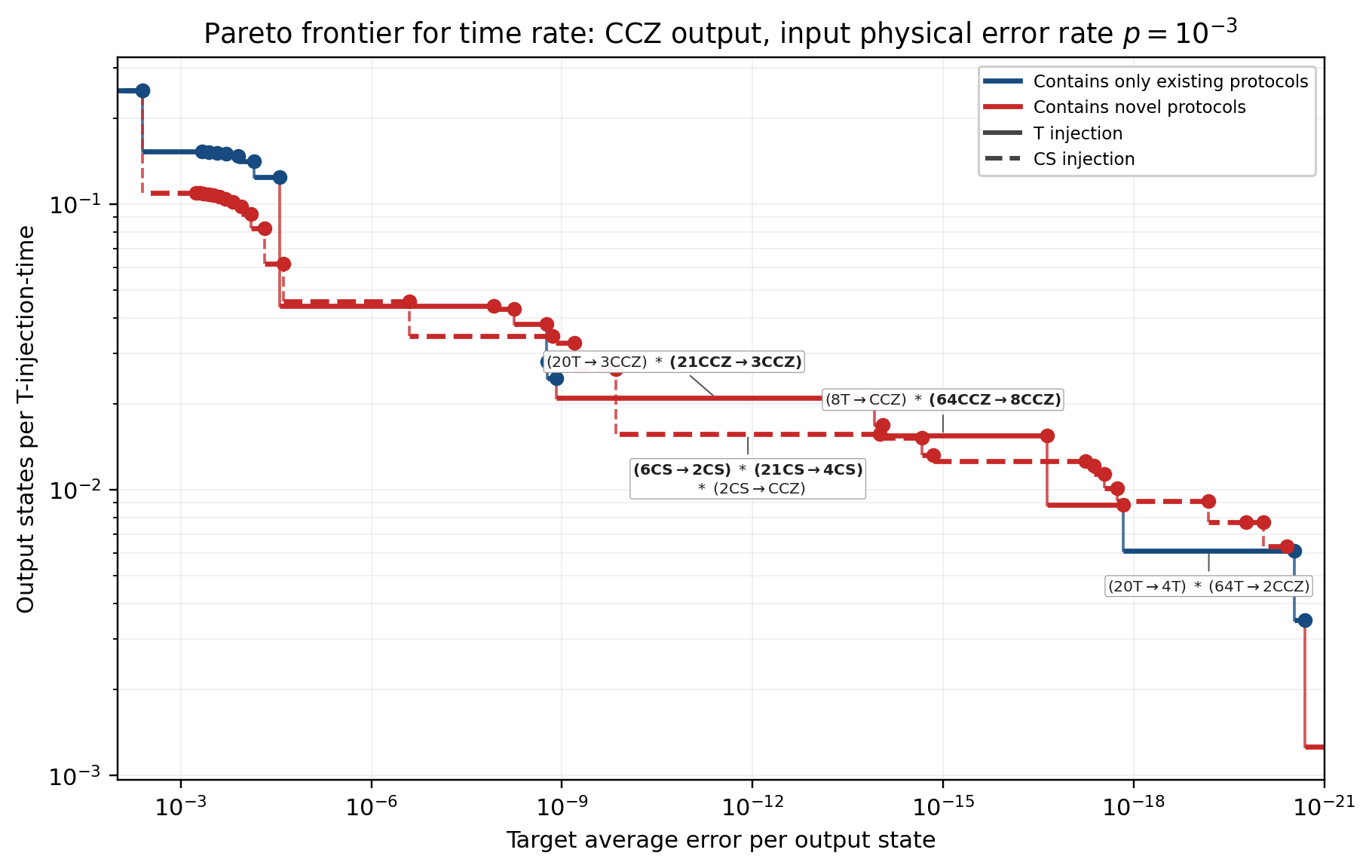}
    \par\smallskip
    \textup{(c) CCZ distillation: optimal for time rate}
  \end{minipage}%
  \hfill
  \begin{minipage}[t]{0.495\textwidth}
    \centering
    \includegraphics[width=\linewidth]{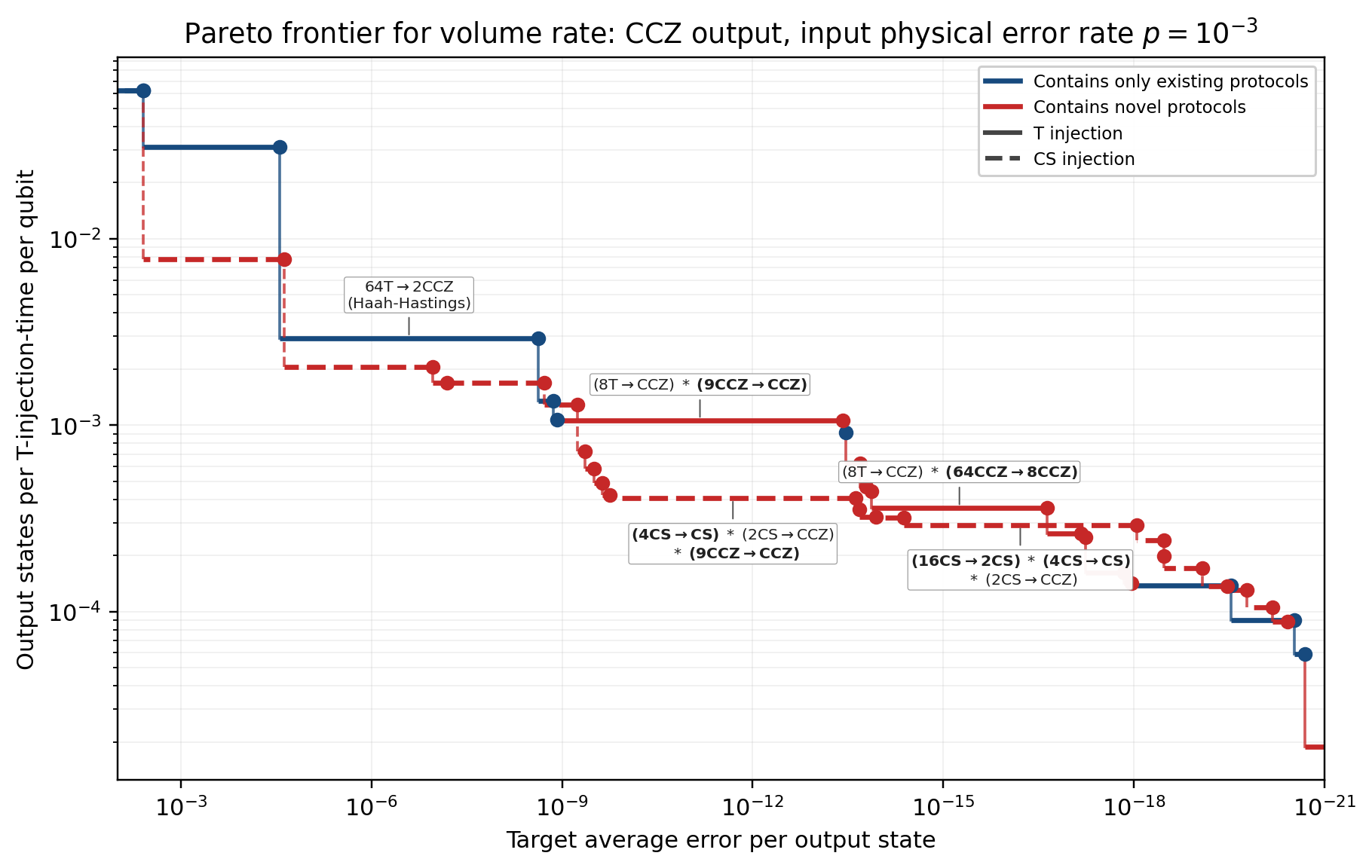}
    \par\smallskip
    \textup{(d) CCZ distillation: optimal for volumetric rate}
  \end{minipage}

  \caption{Pareto frontiers for physical input error
  $p=10^{-3}$. Solid lines use $\ket{\T}$ state injection and dashed lines use
  $\ket{\CS}$ state injection. Red lines indicate that our novel protocols are used somewhere in the competitive scheme; blue lines indicate that they are not. Our novel protocols are bolded in labels. Statistics for some of the competitive protocols are given in Appendix~\ref{subsec:tables_footprints_errors}.}
  \label{fig:pareto-grid-p1e-3}
\end{figure*}

\begin{figure*}[t]
  \centering
  \begin{minipage}[t]{0.495\textwidth}
    \centering
    \includegraphics[width=\linewidth]{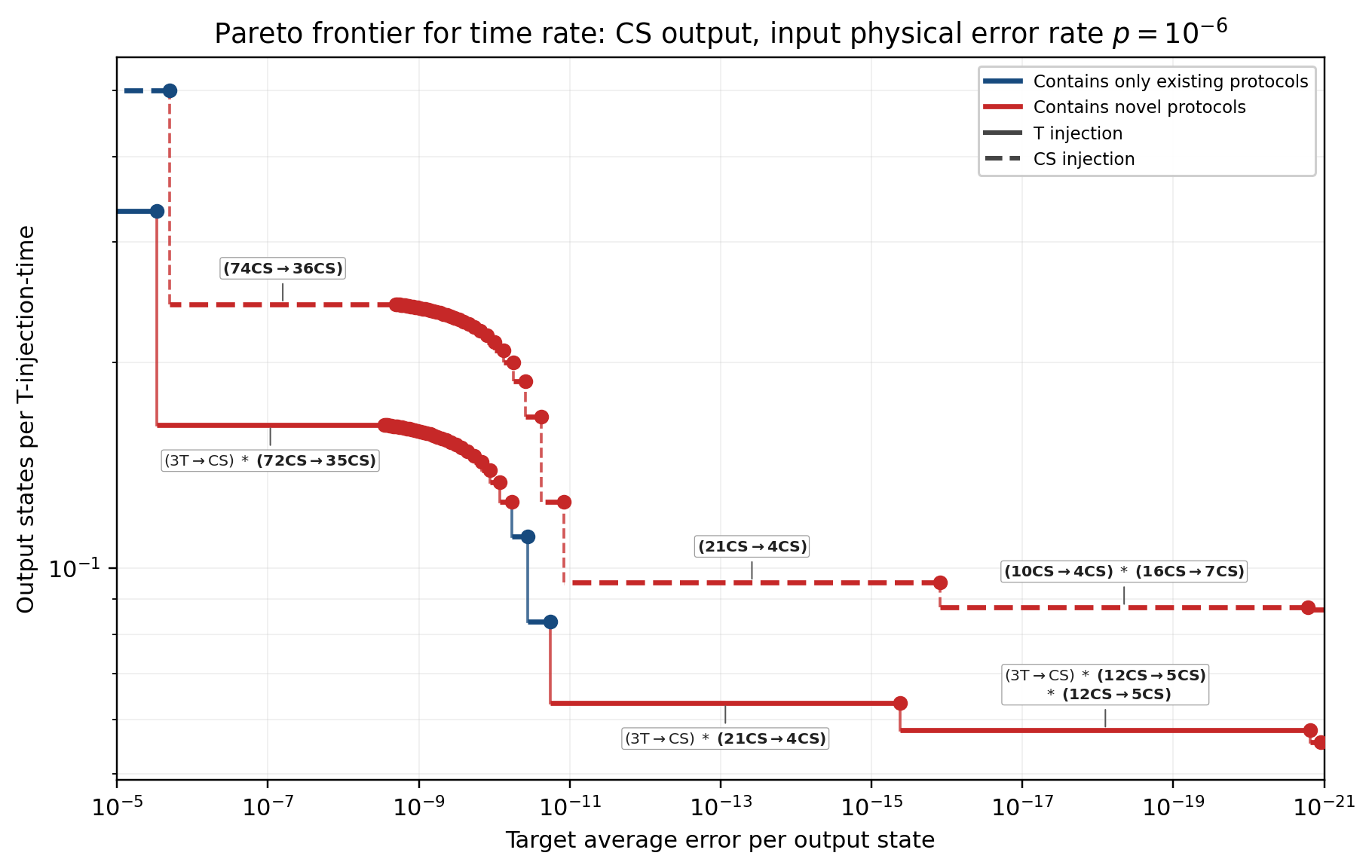}
    \par\smallskip
    \textup{(a) CS distillation: optimal for time rate}
  \end{minipage}%
  \hfill
  \begin{minipage}[t]{0.495\textwidth}
    \centering
    \includegraphics[width=\linewidth]{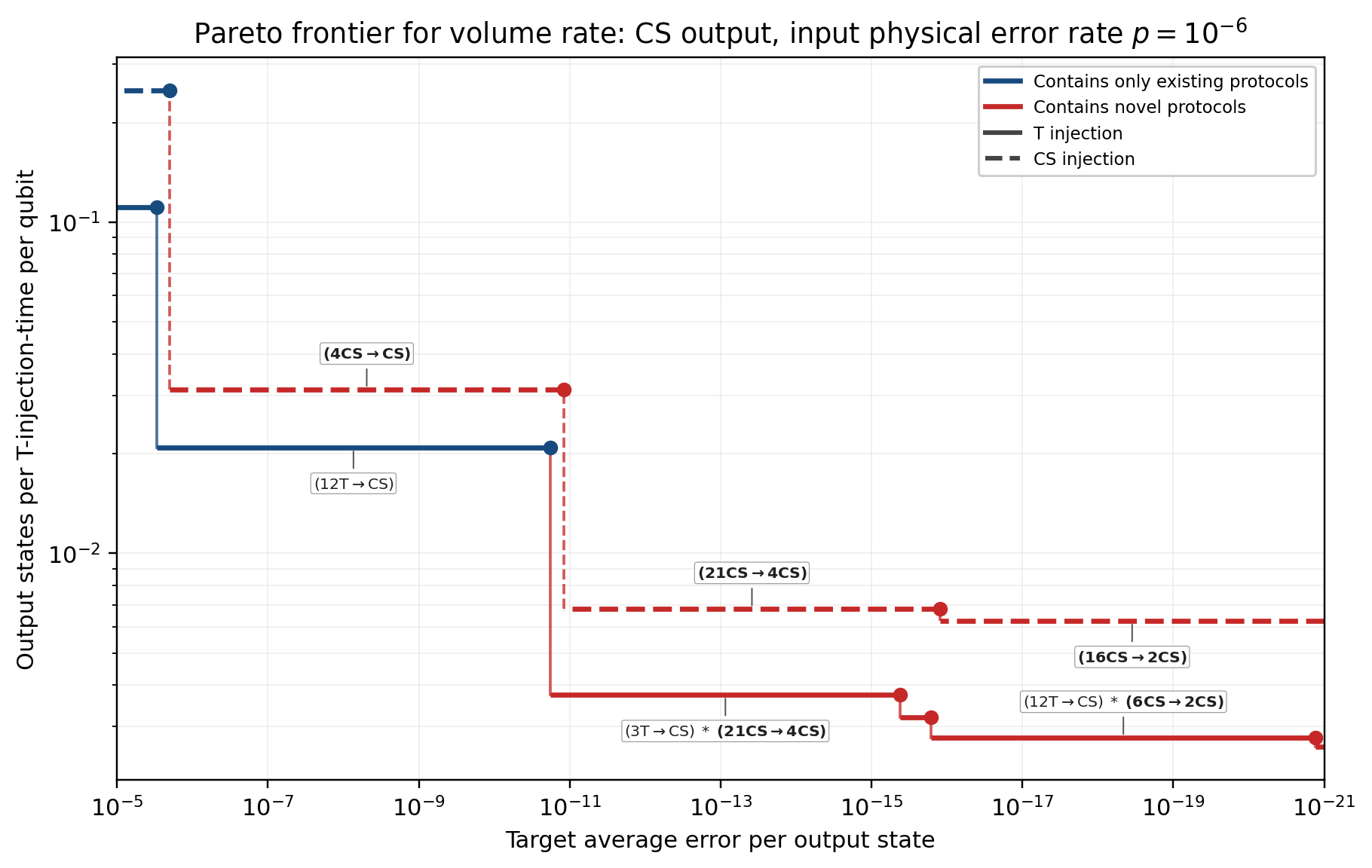}
    \par\smallskip
    \textup{(b) CS distillation: optimal for volumetric rate}
  \end{minipage}

  \par\medskip

  \begin{minipage}[t]{0.495\textwidth}
    \centering
    \includegraphics[width=\linewidth]{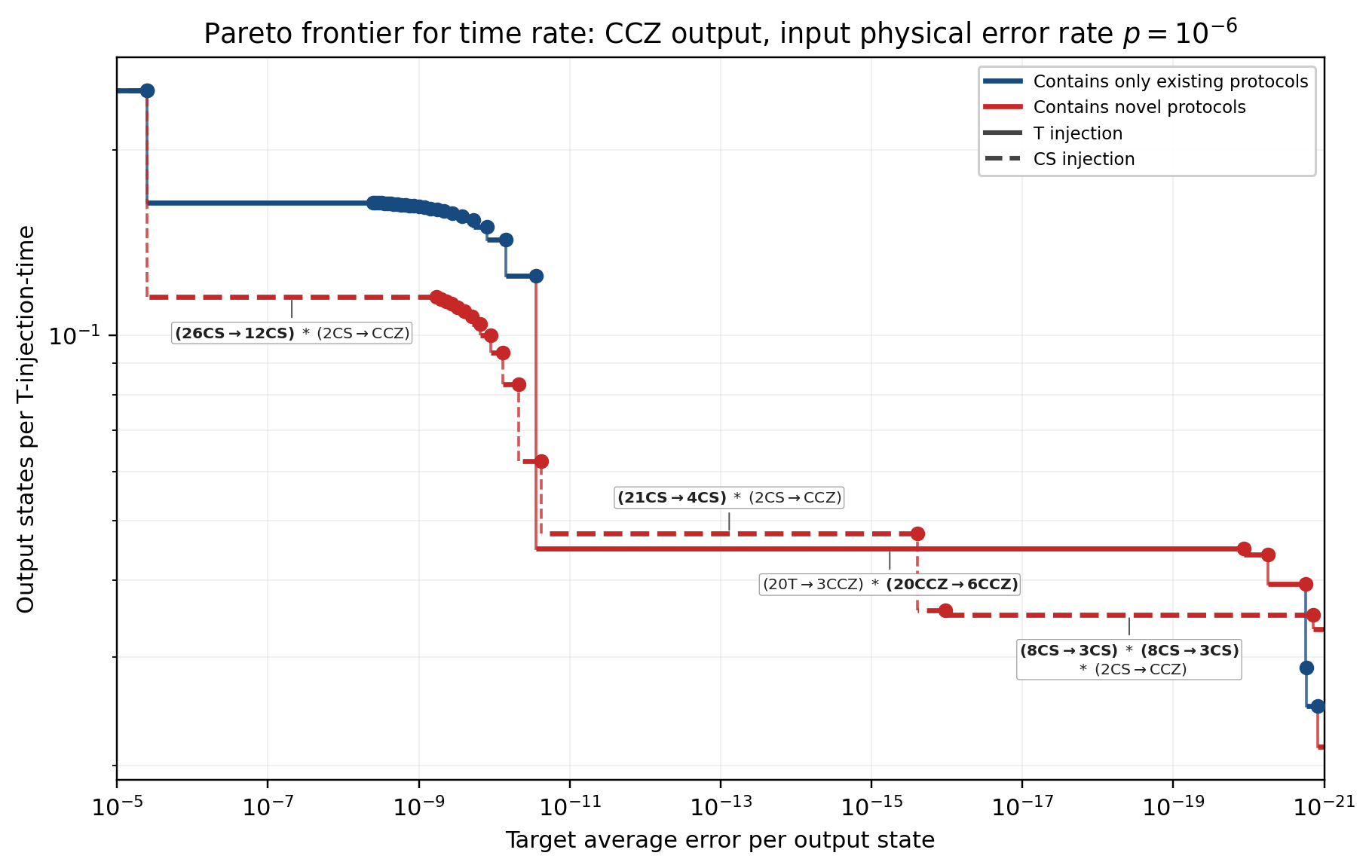}
    \par\smallskip
    \textup{(c) CCZ distillation: optimal for time rate}
  \end{minipage}%
  \hfill
  \begin{minipage}[t]{0.495\textwidth}
    \centering
    \includegraphics[width=\linewidth]{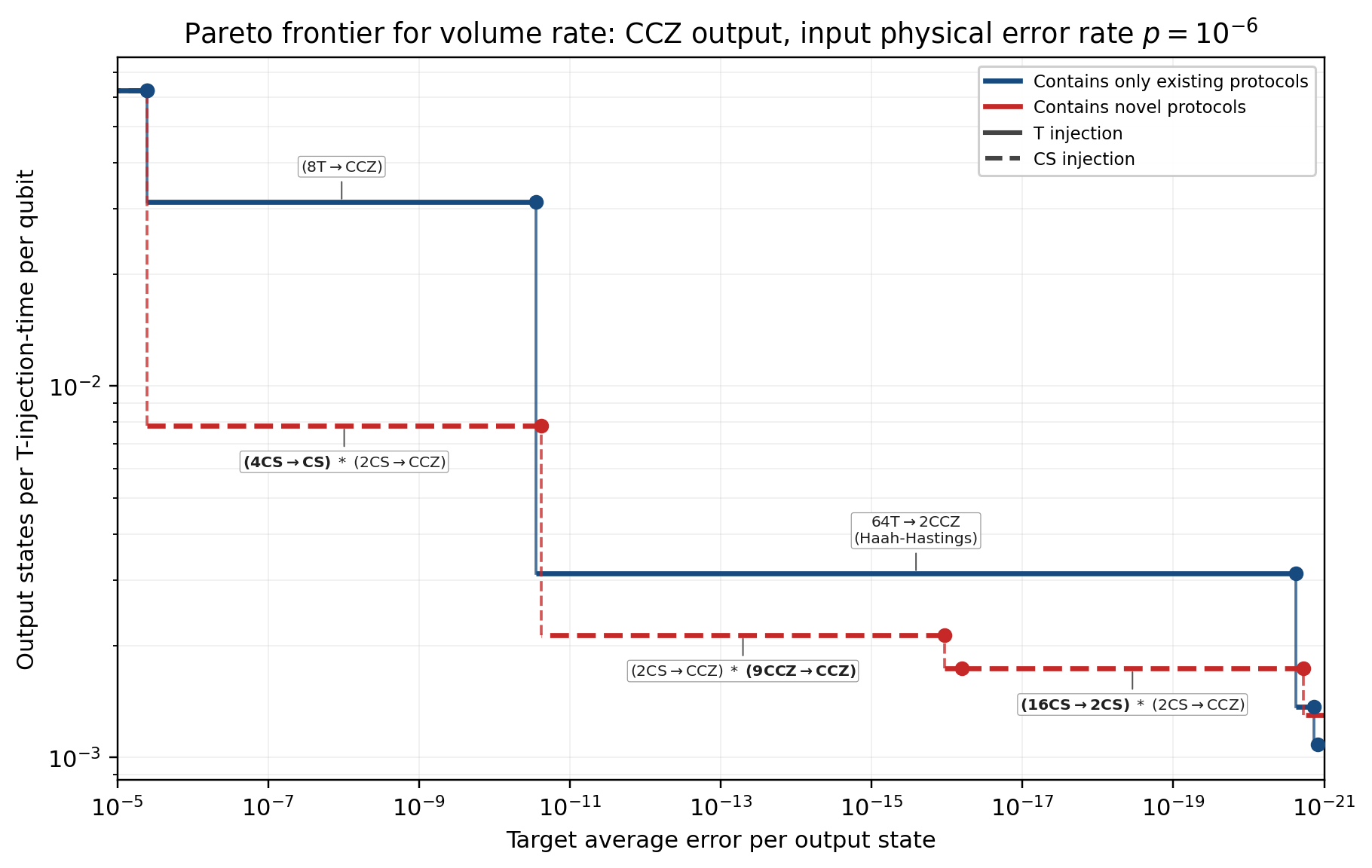}
    \par\smallskip
    \textup{(d) CCZ distillation: optimal for volumetric rate}
  \end{minipage}

  \caption{Pareto frontiers for physical input error
  $p=10^{-6}$ (modeling some small amount of cultivation pre-injection). Solid curves use $\ket{\T}$ state injection and dashed curves use
  $\ket{\CS}$ state injection. Red lines indicate that our novel protocols are used somewhere in the competitive scheme; blue lines indicate that they are not. Our novel protocols are bolded. Statistics for some of the competitive protocols are given in Appendix~\ref{subsec:tables_footprints_errors}.}
  \label{fig:pareto-grid-p1e-6}
\end{figure*}

For each of the $8$ distillation tasks encoded in the above $3$ choices, we aim to construct the best distillation protocols according to some metric, considering concatenated chains of both our novel protocols, and the existing protocols. We may make different choices for the success metric, as follows:
\begin{itemize}
    \item \textbf{Success metric}: Either \textit{time rate}: average retained number of output magic states of the target fidelity per unit time, or \textit{logical volumetric rate} (or simply \textit{volume rate}): average retained number of output magic states of the target fidelity per unit logical spacetime footprint. Our calculations of time and logical spacetime volume will be elucidated below, and explained in detail in App.~\ref{sec:time_footprint_calculations}.
\end{itemize}
Ultimately, we will find that our novel protocols will feature extensively in the optimal frontier of protocols, for almost every choice of injected state, input error rate, outputted state, and success metric.

We assume that magic state distillation protocols take place on logical qubits. Note that, when setting up magic state distillation protocols in an architecture, several design choices may be made, and we find that these different choices compound significantly when considering concatenated schemes. However, we aim to make choices that capture many of the popular architectures for realistic fault-tolerant quantum computation. 

To this end, we consider the implementation of magic state distillation protocols in an abstract Pauli-based computation (PBC) model as detailed in \cite{Litinski_2019}. This approach allows us to define a relatively architecture-agnostic cost model for implementation of magic state distilleries in logical qubits, be they in surface codes \cite{GOSC}, bivariate bicycle codes \cite{yoder2025tourgrossmodularquantum}, homological product codes \cite{Xu2025}, or general qLDPC architectures \cite{he2025extractorsqldpcarchitecturesefficient}. We assume magic state distillation protocols are executed on logical qubits, using a gate set of multi-qubit Pauli measurements between logical qubits and optionally undistilled $\ket{\T}$ states.
Techniques for protocol implementation using PBC \cite{Litinski_2019} readily generalize \cite{beverland2020lower_bounds} to non-$\ket{\T}$ multi-qubit input and output states using diagonal gate injection. In Appendix~\ref{sec:building_distillation_circuits}, we extend these techniques to the Galois qudit codes considered in this paper.

In large quantum computing architectures targeting extremely low error rates, magic state factories are expected to dominate the overall cost of the computation \cite{Lee_2021}. This motivates dedicating a large number of logical qubits in the computer to magic state production, meaning output states per unit spacetime volume is the natural cost metric. However, in early fault-tolerant quantum architectures, logical error rates are less stringent (e.g., \cite{gidney2025factor2048bitrsa}), but space constraints are more severe. In this setting, architectures \cite{yoder2025tourgrossmodularquantum} consider allocating a single magic state factory instance within a module. Then, footprint becomes more of a measure of feasibility (does the factory fit within a single module, yes or no?), and the relevant measure of cost is time. Our analysis considers both situations and can therefore be used to evaluate both large-scale and small-scale FTQC settings. To these ends, we elucidate the calculation of our cost metrics; full explanations are deferred to App.~\ref{sec:time_footprint_calculations}.
\begin{description}
\item[Time:] We consider a simplified timing model for which a unit of time is one $\ket{\T}$ state injection. We imagine that $\ket{\T}$ states are always consumed sequentially: an implementation choice. Our model neglects additional overhead from Clifford synthesis and routing deliberately, since these concerns are architecture-specific. For the concatenated distillation protocols involving $\ket{\CS}$ state injections, one state injection is deemed to take two units of time. The time metric we will calculate will be a rate: the average number of retained output magic states of the target fidelity per unit time. Ignoring the possibility of postselection, this is simply the number of magic states produced by a concatenated protocol divided by the number of inputs (divided by two in the case of $\ket{\CS}$ injection). We account for the overhead from repeating the protocol due to postselection in our analysis, although the corresponding effect on the time rate is generally small.

\item[Spatial Footprint:] In a small, early fault-tolerant quantum computer, or in a modular architecture with a limited number of qubits per module, it is preferable to implement protocols on a small number of logical qubits. Typically, this is the number of rows in the binary matrix describing the protocol~\cite{GOSC}.\footnote{The number of rows is equal to the number of qubit $X$ logicals and $X$ stabilizer generators describing the corresponding code.} For our $\mathbb{F}_4$ and $\mathbb{F}_8$-Galois qudit codes describing $\CS$-to-$\CS$ distillation and $\CCZ$-to-$\CCZ$ distillation, respectively, we show in App.~\ref{sec:time_footprint_calculations} that Litinski's ideas~\cite{GOSC} carry over naturally, and that the protocols may be run on a number of logical Galois qudits equal to the number of rows in the $\mathbb{F}_4$ or $\mathbb{F}_8$ matrix. The corresponding logical qubit footprint is then twice, or three times this, respectively. We further consider a compression technique~\cite{xu2026distillingmagicstatesbicycle}, which in certain special cases allows the protocol to be run in even smaller space, although this applies to a minority of the protocols considered.

\item[Spacetime Volume:] Simply defined as the product of time and compressed footprint, volumetric considerations are natural when scaling production across many parallel factory instances. The metric calculated will again be a rate: the average number of retained output magic states of the target fidelity per unit spacetime volume.
\end{description}
We add a final remark on the error rate calculations.
\begin{description}
\item[Output error rate:] Since our protocols generally give more than one magic state output, the natural error rate to calculate is average error rate per-output. This differs from the more commonly quoted error rate of a distillation protocol, which is the probability of any error. However, note that because a failing protocol typically has highly correlated errors, the average error rate per-output is usually much higher than the probability of any error divided by the number of outputs. We go into much more detail on these calculations in App.~\ref{sec:time_footprint_calculations}.
\end{description}
Now turning to our results, we define a concatenated protocol to be competitive if, for some distillation task, for some metric, for some target output error rate, it is optimal. For these distillation tasks, the protocols included in our survey are as follows; this is intended to be comprehensive over the protocols that could be competitive for the present tasks. We consider concatenations of the following:
\begin{itemize}
    \item Our novel protocols: Our $\CCZ$-to-$\CCZ$ distillation protocols --- see Section~\ref{sec:CCZ_distillation_protocols} --- based on $\mathbb{F}_8$-Galois qudit codes, our $\CS$-to-$\CS$ distillation protocols --- see Section~\ref{sec:CS_distillation_protocols} --- based on $\mathbb{F}_4$-Galois qudit codes, our novel $64\T$-to-$2\CCZ$ $(d=4)$ protocol\footnote{Note that our novel distance-$4$ $64\T$-to-$2\CCZ$ protocol has the same parameters as the existing Haah-Hastings protocol~\cite{Haah2018}. However, our novel protocol has a lower leading-order total error rate: $2720p^4$ versus $2944p^4$ (albeit an equal per-state average error rate of $2368p^4$). Unfortunately, our novel protocol also requires greater space footprint, and thus does not appear in any competitive protocols.} --- see Section~\ref{sec:trace_codes}. Note that this paper contains many more novel protocols than these; we are only considering the ones here that could be competitive for the present tasks.
    \item Existing protocols: Synthillation protocols~\cite{campbell2017unified}, including the $(7k+4 + (k \text{ mod } 2))\T \to k\CS$ and the $(6k+2)\T \to k \CCZ$ protocols, both at distance 2, the well-known $15\T\to\T$ protocol at distance $3$~\cite{MSD}, $(3k+8)\T\to k\T$ protocols, for even $k$, at distance $2$~\cite{bravyi2012magic}, the Haah-Hastings $64\T\to2\CCZ$ protocol at distance $4$~\cite{Haah2018}, and the known synthesis (non-distillation) conversions of $3\T\to\CS$~\cite{howard2017application}, and $2\CS\to\CCZ$~\cite{beverland2020lower_bounds}.
\end{itemize}

We further comment that, in finding the competitive protocols, we restrict ourselves to concatenated distance at most $8$, and spatial footprint at most $75$.\footnote{Competitive protocols for time rate tend to be larger, as better rates can be achieved at larger sizes, especially using our novel protocol families. We cap the allowed spatial footprint at $75$ to model a situation in which a quantum architecture sacrifices a moderate number, in this case $75$, of logical qubits, for the given distillation task.} The resulting Pareto frontiers are shown in Figures~\ref{fig:pareto-grid-p1e-3} and~\ref{fig:pareto-grid-p1e-6}. We find that our novel protocols feature heavily in the competitive schemes at the target error rates of interest; the only exception is when considering $\ket{\T}$ state injection at physical error rate $10^{-6}$ (assuming some small amount of cultivation), and where the metric for success is volumetric rate. In this case, either the $8\T\to\CCZ$ synthillation protocol~\cite{campbell2017unified}, or the $64\T\to2\CCZ$ Haah-Hastings protocol~\cite{Haah2018} are found to be the competitive protocols.

\subsection{Open Questions}

\begin{itemize}
    \item Can one design larger codes \emph{maximally} satisfying the twisted three-orthogonality condition of Eq.~\ref{eq:U7_orthogonal} for distilling the single-qudit $U_7$ gates~\cite{wills2024}? This is particularly relevant since the $U_7$ gate is equivalent to the widely-considered qubit $\CCZ$ gate over the field $\mathbb{F}_8$; see Section~\ref{sec:U7_gate}.
    Similar twisted products also appear when considering trace codes of AG codes. In this work, for this problem we develop a numerical approach, which is to build a forbidden hypergraph, and solve for its maximum independent set (MIS); see Construction~\ref{con:64CCZ-to-8CCZ} and~\ref{con:24CCZ-to-4CCZ-d3}. However, MIS is an NP-hard problem, and so this approach does not scale. 
    
    Even in the small-codes examples we developed in this work, we have to restrict ourselves to codewords coming from monomial evaluations for finding $\CCZ$-to-$\CCZ$ ($U_7$ in $\F_8$) distillation protocols. A better understanding, or more geometrical picture of the twisted three-orthogonality condition, might help in generalizing our constructions to larger lengths and alphabet sizes. Such an understanding might help in establishing the best asymptotic achievable rate and distance of $U_7$ distillation protocols. Moreover, once we have subspaces satisfying the twisted three-orthogonality condition over large alphabets, can one take the trace~\cite{cohen2025tracing}, subfield subcode~\cite{DELSARTE1970403}, or the subspace code~\cite{hattori1998}, to obtain codes in a smaller alphabet (since $U_7$ is simpler there)?
    
    \item Can two- and multipoint divisors
    ~\cite{matthews2001weierstrass,duursma2011improved,cervia2025magic}, and improved order bounds~\cite{BEELEN2007665} improve the parameters of our distillation protocols? Can other algebraic geometry objects, such as curves of positive-defect, or higher-dimensional varieties~\cite{HANSEN2001530} help with the code design?
    
    \item For real-phase qudit gates, we have only classified their Clifford hierarchy level, but not classified them up to Clifford equivalence; see Section~\ref{subsubsec:classification}. For example, in Sec.~\ref{sec:U7_gate}, we leave the question open whether the three possible $U_7^\beta$ gates in $\F_{2^s}$, where $3\mid s$, are Clifford inequivalent in general. For non-real-phase gates, due to the difference between $\F_{2^s}$ and $\Z_{2^s}$, we find it difficult to even classify the Clifford hierarchy levels of the corresponding gates. This is problematic, since if one wants to distill some multi-qubit magic gate, one naturally tries to pack such a multi-qubit gate into a Galois qudit gate, and derive the corresponding orthogonality condition for transversality. A better understanding of the non-real-phase magic gate would make this process easier.
    
    \item For our CCZ and CS-input protocols, we are mostly enforcing the logical action to be disjoint CCZ and CS (except Con.~\ref{con:8cstoTOF} and Con.~\ref{con:16cs1TOF}). However, hypergraph state outputs also provide an interesting design space. Can they be directly used for some algorithmic subroutines (for example, we show that qudit-CCZ gate can be used to perform multiplication in the corresponding binary fields)?
    If not, how can one extract relevant simple-enough hypergraph states (e.g. $\TOF\#$) from a complicated hypergraph magic state? We gave such examples in Con.~\ref{con:64T_to_2TOF} \& 
    \ref{con:64ttof4qudit}), but a systematic understanding is lacking, except for extracting disjoint CCZs from a CCZ-only circuit, which was shown to be equivalent to the subrank problem of trilinear forms~\cite{Menon_2026}.

    \item How can one implement the distillation factories we found most efficiently in a broader architecture? Especially, how can one compile these protocols at the logical level in relevant qLDPC codes? One might need to optimize the weight of our protocols, co-optimize $\F_2$-bases and qLDPC code logical bases to reduce surgery overhead.

    \item The distillation protocols constructed in this work are mostly algebraically motivated, and we know little about their optimality. Recent SAT-solver~\cite{jacinto2026exploring} or symmetry-based protocol discovery approaches~\cite{singh2026borrowed} might help either certifying the optimality, or find smaller factories.
\end{itemize}

\subsection{Organization}
We begin in Section~\ref{sec:prelims} by presenting preliminary material necessary for Sections~\ref{sec:galois} and~\ref{sec:protocols}. In Section~\ref{sec:galois}, we prove results on Galois qudits and their gates which will support our protocol constructions. A result in this section which lies outside our main contributions can be found in Section~\ref{subsubsec:classification}, where we give a classification of the diagonal gates in the Galois qudit Clifford hierarchy with real phases in a way that is native to those gates, resolving an open question in~\cite{he2025quantum}. Section~\ref{sec:protocols} then constructs our most important protocols. From there, Section~\ref{sec:AG_codes} constructs more exotic protocols that are outside our primary scope. These use more advanced algebraic geometric techniques, and we begin that section with the necessary preliminary material.

Appendix~\ref{sec:CS_transversality_condition} proves several facts around our distillation of $\ket{\CS}$ using $\mathbb{F}_4$ codes. Appendix~\ref{sec:algebraic_curves} then presents complete preliminaries on algebraic geometry codes in order to supplement Section~\ref{sec:AG_codes}. Appendix~\ref{sec:building_distillation_circuits} discusses how our Galois qudit magic state distillation protocols may be run most compactly in hardware, by showing how known tricks for the regular qubit factories transfer seamlessly to this case. Finally, Appendix~\ref{sec:time_footprint_calculations} goes into detail on our error rate and spacetime footprint calculations, finishing with tables of statistics on some of the competitive schemes.

The SageMath notebooks containing all the necessary tools to reproduce and verify our protocols are available online at \href{https://github.com/gongaa/MSD-binary-fields}{https://github.com/gongaa/MSD-binary-fields}.

\section{Preliminaries}\label{sec:prelims}
\subsection{Diagonal Gates in the Clifford hierarchy}\label{subsec:diagonal_gates_ch}

The Clifford hierarchy is a recursively defined hierarchy of quantum gates~\cite{gottesman1999demonstrating}, whose exact classification remains a major open problem. While the $k$-th level of the Clifford hierarchy does not form a group, the diagonal gates in the $k$-th level do form a group, and furthermore the diagonal gates in the Clifford hierarchy have been classified~\cite{cui2017diagonal}. We consider the following diagonal $n$-qubit gates (and their products):
\begin{equation}
    U_{m,\boldsymbol{a}} \coloneq \sum_{\bx \in \mathbb{F}_2^n}\exp\left(\frac{2\pi i}{2^m}x_1^{a_1}\ldots x_n^{a_n}\right)\ket{\bx}\bra{\bx},
\end{equation}
where $m$ is a positive integer and $\boldsymbol{a}=(a_1,\dots,a_n) \in \mathbb{F}_2^n$. One immediate example is $n = 2, m = 1$ and $\boldsymbol{a} = (1,1)$, which gives the $\CZ$ gate. $U_{m,\boldsymbol{a}}$ is in the Clifford hierarchy for $n$ qubits and its level is $(m-1)+\wt(\boldsymbol{a})$. Moreover, these gates can be used to generate the diagonal Clifford hierarchy. The magic state corresponding to $U_{m,\boldsymbol{a}}$, often denoted $\ket{U_{m,\boldsymbol{a}}}$, is defined as the gate $U_{m,\boldsymbol{a}}$ acting on $\ket{+}^{\otimes n}$. Using gate teleportation~\cite{gottesman1999demonstrating}\cite[Alg.~A.1]{beverland2020lower_bounds}, one may consume the magic state corresponding to $U_{m,\boldsymbol{a}}$ to execute the gate $U_{m,\boldsymbol{a}}$ on an arbitrary $n$-qubit state using only operations in lower levels of the Clifford hierarchy. In particular, a gate in the third level can be teleported using Clifford operations. 

The most famous gates (and corresponding states) of this kind are $T$, $\CS$ and $\CCZ$. More general $U_{m,\boldsymbol{a}}$ gates, and their products, can be described by ``phase polynomials''. For example, when $n=5$ and $m=1$, the phase polynomial $x_1x_2x_3 + x_3x_4x_5$ corresponds to the multiplication of two $\CCZ$ gates (overlapping on a single qubit), and is known as the $\TOF\#$ gate:
\begin{equation}
    \TOF\#_{12345} \coloneq \sum_{\bx \in \mathbb{F}_2^5}(-1)^{x_1x_2x_3+x_3x_4x_5}\ket{\bx}\bra{\bx}
\end{equation}
This gate is useful in many algorithmic subroutines, for example, in a log-depth adder~\cite{log_adder}. Additionally, $\TOF\#$ can also be used to execute a shared-controlled-SWAP; indeed, $\TOF\#_{12345} = \CCZ_{123}\CCZ_{145}$ is Clifford-equivalent to $C_1$-$(\text{SWAP}_{23}\text{SWAP}_{45})$ via
\begin{equation}
    C_1\text{-}(\text{SWAP}_{23}\text{SWAP}_{45}) = \text{CNOT}_{23}H_3\text{CNOT}_{45}H_5\cdot(\TOF\#_{12345})\cdot H_5\text{CNOT}_{45}H_3\text{CNOT}_{23}.  
\end{equation}
The shared-controlled swap is prevalent in fermionic Hamiltonian simulation~\cite{babbush2018encoding}, and other algorithmic subroutines~\cite{Childs_2003}.

More generally than the above, one can consider the notion of Clifford equivalence:

\begin{definition}[Clifford Equivalence]
\label{def:Clifford_equiv}
We say two unitary gates $U$ and $V$ are \emph{Clifford equivalent}, if there exist Clifford unitaries $C$ and $C'$, such that $U=CVC'$. 
\end{definition}
It is possible to exhaustively classify (at least at small sizes) classes of phase polynomial gates up to Clifford equivalence. For example, considering only gates formed from products of $\CCZ$ gates, one can check that there is only one such gate up to Clifford equivalence on four qubits, and it is Clifford equivalent to a single $\CCZ$ gate. Further, on five qubits, any product of $\CCZ$ gates is Clifford equivalent to either a single $\CCZ$ gate, or the $\TOF\#$ gate. Such examples are illustrated in Figure~\ref{fig:Clifford_equiv}.

\begin{figure}[ht]
\centering
\includegraphics[width=0.9\linewidth]{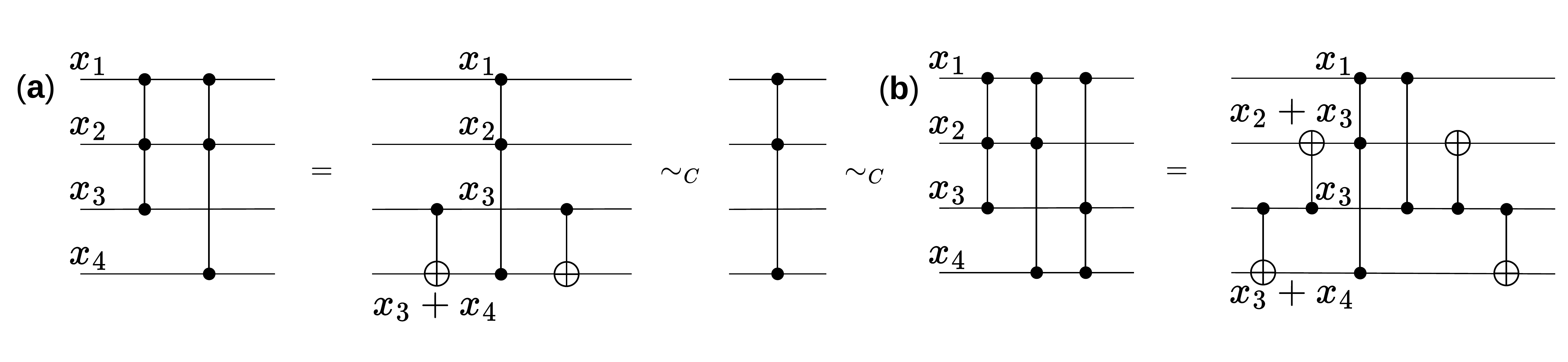}
\caption{Examples of Clifford equivalence. (a) and (b) are both Clifford equivalent to a single CCZ gate. (a) $x_1x_2x_3+x_1x_2x_4=x_1x_2(x_3+x_4)$. (b) $x_1x_2x_3+x_1x_2x_4+x_1x_3x_4=x_1(x_2+x_3)(x_3+x_4)+x_1x_3$.}
\label{fig:Clifford_equiv}
\end{figure}
Relatedly to Clifford equivalence, it is useful to establish the most efficient means to synthesize various gates from others. For example, the ancilla-free $T$ count for a diagonal gate in the third level of the Clifford hierarchy is defined in~\cite{campbell2017unified} to be the \emph{minimum} number of $T$ gates required to synthesize the gate unitarily with $T$ gates, $\CNOT$ gates, and $S$ gates, and using no ancillas. The ancilla-free $T$ count of $\CCZ$ is $7$. Such transformations can be discovered by manipulating phase polynomials. For example, considering variables $x_i \in \{0,1\}$, one verifies the identity
\begin{equation}
    4x_1x_2x_3\equiv x_1+x_2+x_3-(x_1\oplus x_2)-(x_1\oplus x_3)-(x_2\oplus x_3)+(x_1\oplus x_2 \oplus x_3),
\end{equation}
where $+$ denotes regular integer addition, and $\oplus$ denotes integer addition modulo $2$.\footnote{The minimality of the various ancilla-free $T$ counts is established by a relation to the decoding of a Reed-Muller code~\cite{T_count_RM}.} It follows from this identity that the $\CCZ$ state (or gate) can be unitarily synthesized using $7$ $T$ gates, as well as $\CNOT$ operations, by considering various $U_{m,\boldsymbol{a}}$ gates with $n=3$ and $m=3$. Because $m>1$, we must differentiate between regular integer addition and addition modulo $2$. Explicitly, we have
\begin{equation}
    \exp\left(\frac{2\pi i}{2^m}(x_1+x_2)\right) \neq \exp\left(\frac{2\pi i}{2^m}(x_1\oplus x_2)\right)\text{ for } m>1.
\end{equation}
This is unlike when we consider products of $\CCZ$ gates like the $\TOF\#$ gate, in which case addition in the phase polynomial is both regular integer addition and modulo $2$ addition. For example, for the $\TOF\#$ gate,
\begin{equation}
    \exp\left(\pi i(x_1x_2x_3+x_3x_4x_5)\right) = \exp\left(\pi i(x_1x_2x_3\oplus x_3x_4x_5)\right).
\end{equation}
It is also found in~\cite{campbell2017unified} that the ancilla-free $T$ count of the $\TOF\#$ gate mentioned above is $11$. We note that this is less than twice the ancilla-free $T$ count of twice that of $\CCZ$, showing that it is more efficient to directly synthesize $\TOF\#$ than to synthesize two $\CCZ$ gates separately, at least in this setting. Synthillation techniques~\cite{campbell2017unified} also allow one to distill $12$ $T$ gates to $1$ $\TOF\#$ gate at distance $2$ (see App.C for the explicit example).

Hadamard gates and measurement-and-feedback add more power over merely using CNOT+$T$. The most famous example is Jones' construction of a $\CCZ$ gate from four $T$ gates~\cite{jones2013low}. An interpretation~\cite{beverland2020lower_bounds} is that, applying $HSH=\sqrt{X}$ on the third qubit on the $|\CCZ\;\CS_{1,2}\ra=\CCZ\;\CS_{12}|\text{+++}\ra$ state (can be unitarily synthesized using CNOTs and four $T$) leads to the $|\CCZ\ra$ state~\cite[Fig.~12(b)]{beverland2020lower_bounds}, which can be further teleported to realize a $\CCZ$ \emph{gate}. As another example, while unitary synthesis of a $\CCZ$ gate (using CNOT+$\CS$) requires $3$ $\CS$ gates, only $2$ $\CS$ gates are required for the $|\CCZ\ra$ state~\cite[Fig.~13]{beverland2020lower_bounds}.

\subsection{Introduction to rings, finite fields and Reed-Muller codes}
\label{sec:ring_field}
We now give a brief introduction to rings and finite fields.
The following definitions and facts can be found in many textbooks such as \cite{Lidl_Niederreiter_1996}; we thus state the most relevant ones without proof.

A \emph{ring} is a set $R$ equipped with two binary operations $+$ (addition) and $\cdot$ (multiplication) such that: $R$ is an abelian group with respect to $+$; $\cdot$ is associative, that is, $(a\cdot b)\cdot c=a\cdot (b\cdot c),\;\forall a,b,c\in R$; multiplication is distributive over addition, $a\cdot (b+c)=a\cdot b+a\cdot c$ and $(b+c)\cdot a=b\cdot a+c\cdot a$.
In this work, we only work with commutative rings (those for which $a\cdot b = b\cdot a$ for all $a,b \in R$) with a (multiplicative) identity (those containing an element called $1$ satisfying $1 \cdot a = a$ for all $a \in R$).

Given a positive integer $n$ and $r \in R$, we write $nr$ for the element of $R$ obtained by summing $r$ $n$ times. If, for a ring $R$, there exists a positive integer $n$ such that $nr=0$ for every $r\in R$, then the least such positive integer $n$ is called the \emph{characteristic} of $R$.

In a ring, multiplicative inverses are not required to exist. A ring in which every nonzero element has a multiplicative inverse is called a \emph{field}. A subset of fields are integral domains, which are integral domains. These are rings in which there are no zero divisors, meaning that if $ab = 0$, then $a = 0$ or $b=0$. Given an integral domain $R$, one may construct a field from it called its fraction field, $\text{Frac}(R)$, which contains a copy of $R$. The elements of the fraction field are the equivalence classes of $R \times (R \setminus \{0\})$, under the equivalence relation $(a,b) \sim (c,d) \iff ad = bc$. Canonically, the equivalence class of $(a,b)$ is denoted $\frac{a}{b}$, and the operations of addition and multiplication are defined via $\frac{a}{b} + \frac{c}{d} = \frac{ad+bc}{bc}$, and $\frac{a}{b}\cdot \frac{c}{d} = \frac{ac}{bd}$.

An example of a field is $\F_p=\{0,1,\dots,p-1\}$ for a prime $p$, where the operations are achieved by arithmetic modulo $p$. More generally, a finite field (also known as a Galois field) $K$ must have $q=p^s$ elements for some $s\in\N$, where the prime $p:=\text{char}\; K$ is the characteristic of $K$, and $s$ is the \emph{extension degree} of $K$ over $\F_p$. For all primes $p$ and $s\in \N$, $\F_{p^s}$ exists and it is unique up to isomorphism. It is canonical to denote the corresponding $K$ as $\F_{q}$ or alternatively $\GF(q)$.
The \emph{multiplicative group} $\F_q^\times$ of nonzero elements of $\F_q$ is cyclic, meaning that there exists at least one \emph{primitive element} $\alpha$ such that $\F_q=\{0,\alpha,\alpha^2,\dots,\alpha^{q-1}=1\}$. 
Since any power of $1$ is $1$ itself, one has $\gamma^{q-1}=\begin{cases}1 & \gamma\in K^\times\\ 0 & \gamma=0\end{cases}$, and hence also $\gamma^q=\gamma,\;\forall \gamma\in \F_q$. $q$ is called the \emph{order} of the finite field $\F_q$.

A subset $F$ of $K$ that is itself a field is called a \emph{subfield} of $K$, and $K$ is called an \emph{extension field} of $F$.
It turns out that every \emph{subfield} of $\F_{p^{s'}}$ has order $p^{s}$, where $s$ is a positive integer dividing $s'$. Let $q=p^s$ and $m=s'/s$ in the following.

A field isomorphism is a bijective map from one field to another that preserves multiplication and addition. A field automorphism on a field $K$ is a field isomorphism from $K$ to itself. The group of automorphisms of $\F_{q^m}$ that fix every element of $\F_q$ ($q=p^s$) is also referred to as the \emph{Galois group} of $\F_{q^m}$ over $\F_q$, and it is generated cyclically by the \emph{Frobenius transform}
$\sigma_1:\alpha\mapsto \alpha^q$ ($\alpha\in \F_{q^m}$). To be more explicit, we can write the elements of the Galois group as the mappings $\sigma_0,\sigma_1,\dots,\sigma_{m-1}$ defined by $\sigma_j(\alpha)=\alpha^{q^j}$ for $\alpha\in\F_{q^m}$. The images $\sigma_j(\alpha)$ are called the Galois \emph{conjugates} of $\alpha$. Note that indeed $\sigma_j(\gamma)=\gamma$ for $\gamma\in \F_q$.

The \emph{trace} $\tr_{\F_{q^m}/\F_q}(\alpha)$ of $\alpha\in \F_{q^m}$ is the sum of the conjugates of $\alpha$, i.e.,
\begin{equation}
\label{eq:def_trace}
\tr_{\F_{q^m}/\F_q}(\alpha)=\alpha +\alpha^q+\alpha^{q^2}+\dots+\alpha^{q^{m-1}}.
\end{equation}
The trace maps into $\mathbb{F}_q$ surjectively.
The trace function $\tr_{\F_{q^m}/\F_q}$ is $\F_q$-linear,
\begin{equation}
\label{eq:trace}
\tr_{\F_{q^m}/\F_q}(\gamma_1 \alpha_1+\gamma_2\alpha_2)=\gamma_1\tr_{\F_{q^m}/\F_q}(\alpha_1) + \gamma_2\tr_{\F_{q^m}/\F_q}(\alpha_2)
,\;\forall \gamma_1,\gamma_2\in \F_q\text{ and } \forall \alpha_1,\alpha_2\in \F_{q^m}\;,
\end{equation}
where the additivity follows from the fact that $(a+b)^p=a^p+b^p$ (and hence $(a+b)^{p^i}=a^{p^i}+b^{p^i}$) for $a,b\in \F_{p^s}$ since $\binom{p}{i}=\frac{p!}{i!(p-i)!}\equiv 0\mod p$ for $i=1,2\dots,p-1$.
We have $\tr_{\F_{q^m}/\F_q}(\alpha^q)=\tr_{\F_{q^m}/\F_q}(\alpha)$. A further, less trivial fact, is that $\tr_{\F_{q^m}/\F_q}(\alpha)=0$ if and only if $\alpha=\beta^q-\beta$ for some $\beta\in F_{q^m}$.

The \emph{norm} $\Nm_{\F_{q^m}/\F_q}$ of $\alpha$ is the product of of all its conjugates:
\begin{equation}
\label{eq:norm}
\Nm_{\F_{q^m}/\F_q}(\alpha)=\alpha\cdot \alpha^q\cdot\cdots\cdot \alpha^{q^{m-1}}=\alpha^{(q^m-1)/(q-1)}\;.
\end{equation}
The norm function is multiplicative: $\Nm_{\F_{q^m}/\F_q}(\alpha\beta)=\Nm_{\F_{q^m}/\F_q}(\alpha)\Nm_{\F_{q^m}/\F_q}(\beta),\;\forall\alpha,\beta\in \F_{q^m}$. As for the trace, $\Nm_{\F_{q^m}/\F_q}(\alpha^q)=\Nm_{\F_{q^m}/\F_q}(\alpha)$, and the norm function maps into $\mathbb{F}_q$ surjectively.

For a general ring (not necessarily a field), an ideal $I$ of a ring $R$ is a subset of $R$ that is itself a ring and, moreover, $\forall a\in I$ and $r\in R$ we have $ar=ra\in I$. We will only be working with finitely generated ideals. Such ideals are formed by considering $\{a_1,\dots,a_k\}$ a subset of $R$, and the ideal generated by them is $(a_1,\dots,a_k):=\{a_1r_1+\dots a_kr_k\;|\;\forall r_1,\dots,r_k\in R\}$. We will sometimes use $\la a_1,\dots,a_k\ra$ to denote the ideal generated by $\{a_1, \ldots, a_k\}$.

$R/I$ is the \emph{quotient ring} of $R$ by the ideal $I$. Its addition and multiplication are given by
$(r+I) + (s+I) = (r+s) + I$, $(r+I) \cdot (s+I) = rs + I$. In the quotient ring when $I=\la a_1,\dots,a_k\ra$, it is convenient to write $R/I = R/(a_1, \ldots, a_k)$. In this quotient ring, one can identify $a_1=0,\dots,a_k=0$.

In coding theory, e.g., when defining the Reed-Muller codes, we will frequently deal with \emph{polynomial rings} and their quotient rings. Letting $x$ be an indeterminate variable, the polynomial ring $R[x]$ is the set of formal sums $a_n x^n +\cdots + a_1x+ a_0$ under usual addition and multiplication. 
Multivariate polynomial rings are defined inductively by $R[x_1,x_2,\dots,x_n]:=R[x_1,x_2,\dots,x_{n-1}][x_n]$. In the case that $R=K$ for some field $K$ of characteristic $p$, we have that $p$ times anything equals zero.

To define Reed-Muller codes, we consider the quotient ring
$$\F_q[x_1,\dots,x_m]/(x_1^q-x_1,\dots,x_m^q-x_m),\quad q=2^s.$$ The following definitions and facts about Reed-Muller codes may be found in~\cite{theoryEC}. The $q$-ary Reed-Muller code in $m$ variables of degree $r$ is denoted $RM_q(r,m)$, and is the length-$q^m$ code over $\mathbb{F}_q$ formed by evaluating all polynomials in the above ring of degree at most $r$ at all points $(x_1, x_2, \ldots, x_m) \in \mathbb{F}_q^m$. Because $\eta^q = \eta$ for all $\eta \in \mathbb{F}_q$, quotienting by the terms $(x_i^q-x_i)$ in the above ring allows us to identify the given polynomials with the corresponding codewords of the Reed-Muller code in a one-to-one fashion. For example, note that the degree of any individual variable in any polynomial in the quotient ring does not exceed $q-1$.

For Reed-Muller codes, a particularly important example for us will be the binary Reed-Muller code with $q=2$, defined over the quotient ring
$$\mathbb{F}_2[x_1, \ldots, x_m]/(x_1^2-x_1, \ldots, x_m^2-x_m).$$ The distance of $\RM_2(r,m)$ turns out to be $2^{m-r}$; indeed, one may quickly see that it must be at most this by considering the degree-$r$ polynomial $x_1x_2\ldots x_r$, whose evaluation has weight $2^{m-r}$. We may denote by $\ev(f)$ the codeword corresponding to the polynomial $f$. Then we have that the coordinate-wise multiplication of $\ev(f)$ and $\ev(g)$ is $\ev(fg)$. More generally, one can see that the weight of a codeword corresponding to the evaluation of a monomial formed from the product of $v$ variables is $2^{m-v}$. The code dual to $\RM_2(r,m)$ is $\RM_2(m-r-1)$.

Analogous facts hold more generally for $q$-ary Reed-Muller codes. For example, the distance of $\RM_q(r,m)$ is $(\mu+1)q^\nu$, if we write $(q-1)m -r = \nu(q-1)+\mu$ for $0 \leq \mu < q-1$. The coordinate-wise multiplication of $\ev(f)$ and $\ev(g)$ is $\ev(fg)$ in the general $q$-ary case, also. 

More general codes, called \textit{affine monomial codes}, can be considered by evaluating different sets of polynomials (than those of degree at most $r$) at the points of $\mathbb{F}_q^m$. A key example for us will be the hyperbolic codes. These can be obtained from Reed-Muller codes by deleting parity checks, increasing the code dimension while keeping the same minimum distance~\cite[Ch.~4.4]{AG_codes_handbook}. Concretely, a basis of the hyperbolic code of designed distance $d$ is formed by the monomials $X_1^{\alpha_1}\ldots X_m^{\alpha_m}$ where $\prod_{i=1}^m(\alpha_i+1)<d$. A more general class of affine monomial codes will be considered in App.~\ref{sec:decreasing_monomial_code}.

\subsection{Affine and projective space}
\label{sec:affine_projective_space}
The \emph{affine space} over the field $K$, is simply the vector space $K^n$ over $K$; this is usually denoted $\A_K^n$ or just $\A^n$ when the field $K$ is implicit. We write all its elements as $\A^n_K=\{(x_1,\dots,x_n)\;|\;x_i\in K\}$.
The \emph{projective space} over a field $K$, denoted as $\bbP_K^n$, or just $\bbP^n$ when $K$ is implicit, is defined by quotienting $K^{n+1}$ by scalar multiplication by elements of $K^\times = K\backslash\{0\}$. Explicitly, $$\bbP^n=(K^{n+1}\backslash\{0\})/\sim,$$ where the equivalence relation $\sim$ on $K^{n+1}$ is defined as $$(x_0:\dots:x_n)\sim(\lambda x_0:\dots:\lambda x_n)\quad\forall\;\lambda\in K^\times.$$

Take $\bbP^2_{\F_4}$ as an example, one can enumerate all its elements in the following way: $\{(1:y:z)\;|\;y,z\in \F_4\}\cup \{(0:1:z)\;|\;z\in \F_4\}\cup \{(0:0:1)\}$.

Let $U_i\subset \bbP^n$ be the subset of points $(X_0:\dots:X_n)$ with $X_i\neq 0$. Then on $U_i$ the ratios $x_j=X_j/X_i$ are well-defined and give a bijection $U_i\cong \A^n$. 
Notice that $U_0,\dots, U_n$ together cover $\bbP^n$.
We will later refer to $U_i$ as the $X_i=1$ affine chart covering $\bbP^n$.

A homogeneous polynomial is a polynomial for which all terms have a common degree. An example would be
\begin{equation*}
    f^h(x_0,x_1,x_2) = x_0^2x_1x_2 + x_1^3x_2,
\end{equation*}
which is a homogeneous polynomial of degree $4$. Notice that if $F$ is a homogeneous polynomial of degree $d$ then
\begin{equation}\label{eq:scalar_homogeneous}
    f^h(\lambda x_0, \lambda x_1, \ldots, \lambda x_n) = \lambda^d\cdot f^h(x_0, x_1, \ldots, x_n).
\end{equation}
Such functions are not in general well-defined functions on $\mathbb{P}^n$, but what is well-defined, because of Eq.~\eqref{eq:scalar_homogeneous}, is the zero locus of such functions. The zero locus of $F$ is
\begin{equation*}
    V(F) \coloneq \{(x_0:\dots:x_n) \in \mathbb{P}^n:f^h(x_0, x_1, \ldots, x_n) = 0\}.
\end{equation*}
Clearly, given any polynomial on the affine space $\mathbb{A}^n$, the polynomial is a genuinely well-defined function on $\mathbb{A}^n$, and it always makes sense to talk about the zero locus. Later, we will use capitalized variables like $X,Y,Z$ to emphasize that they are variables taking values in a projective space, see for example Construction~\ref{con:21CCZ-to-3CCZ}, but lower-case variables like $x,y$ to emphasize that they take values in affine space, see for example Construction~\ref{con:64CCZ-to-8CCZ}.

\section{\texorpdfstring{Galois Qudits over $\GF(2^s)$}{Galois Qudits over Binary Extension Fields}}\label{sec:galois}

\subsection{\texorpdfstring{Galois Qudits and their Implementation in Qubits}{Galois Qudits and their Implementation in Qubits}}\label{sec:intro_qudit_codes}

A Galois qudit over the finite field $\GF(2^s) = \F_{2^s}$ is a $2^s$-dimensional qudit with a choice of Pauli group deliberately made to encode the arithmetic of the finite field $\F_{2^s}$. In making this choice, the $2^s$-dimensional qudit is made equivalent to a set of $s$ qubits in its states, Pauli group and Clifford hierarchy~\cite{Gottesman,wills2026review}. Building quantum codes for Galois qudits over $\GF(2^s)$ with desirable properties often allows one to build qubit codes with similarly desirable properties~\cite{wills2024,golowich2025asymptotically,nguyen2024,nguyen_pattison,he2025quantum,he2025asymptotically}. In particular, by constructing qudit codes over $\GF(2^s)$ with transversal third-level gates, we can construct distillation protocols for qubit magic states that only require qubit Clifford operations~\cite{wills2024,nguyen_pattison} to execute on qubits.
Thus, we study qubit magic state distillation protocols by constructing and analyzing quantum codes over \(\F_{2^s}\).

The mapping between the $2^s$-dimensional Galois qudit and the set of $s$ qubits goes via a ``qudit-to-qubit mapping'', for which all the details are laid out in~\cite{wills2026review}. In particular, isomorphisms may be constructed for the states of the two systems, CSS codes on them, as well as the levels of the Clifford hierarchy, which specialises to diagonal gates in each level of the Clifford hierarchy. All of these isomorphisms are compatible with each other in the natural ways. Accordingly, by constructing distillation protocols for Galois qudits, we obtain distillation protocols for qubits. We make this part of the transformation explicit shortly. 

Throughout, we specify distillation protocols via CSS codes, where the CSS codes are specified by matrices in a way that is now standard~\cite{bravyi2012magic}, where the rows of the given matrix specify the $X$-logical operators in its first $k$ rows, and its remaining rows specify $X$ stabilizer generators. This is an important enough notion for us that we give it a name here. One may see~\cite{wills2026review} for the construction of Galois qudit CSS codes from spaces of $X$ and $Z$ stabilizers/logical operators. In what follows, two vectors \(x,y \in \F^n_q\) are said to be \textit{orthogonal} if \(\langle x,y\rangle \equiv \sum_i x_i y_i = 0\).

\begin{definition}[X-generator matrix]\label{def:X_gen_matrix}
    An $X$-generator matrix for a quantum CSS code is some matrix $G \in \mathbb{F}_q^{m \times n}$, with an associated positive integer $1 \leq k \leq m$. We require that $G$ is full rank over $\mathbb{F}_q$. This matrix defines a quantum CSS code where the first \(k\) rows are non-trivial logical \(X\) operators (referred to as \textit{logical rows}), and the remaining rows are \(X\) stabilizer generators. Formally, let $\mathcal{G}$ be the $\mathbb{F}_q$-vector space generated by the rows of $G$, and $\mathcal{G}_0$ be the $\mathbb{F}_q$-vector space generated by the latter $m-k$ rows of $G$. The CSS code has $X$ stabilizers given by the space $\mathcal{G}_0$ and $Z$ stabilizers given by the space $\mathcal{G}^\perp$.
\end{definition}

It is straightforward to compute code properties from the \(X\)-generator matrix.
\begin{proposition}\label{prop:matrix_to_code}
    Let $G \in \mathbb{F}_q^{m \times n}$ be an $X$-generator matrix for a quantum CSS code over $\mathbb{F}_q$ with $k$ logical rows. The quantum CSS code has $k$ logical qudits and $n$ physical qudits. Its $Z$-distance is equal to the minimum Hamming weight of a nonzero vector in $\mathbb{F}_q^n$ orthogonal to all of the latter $m-k$ rows of $G$, but not orthogonal to at least one of the first $k$ rows of $G$.
\end{proposition}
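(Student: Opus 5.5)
The plan is to read off the code parameters directly from the stabilizer spaces in Definition~\ref{def:X_gen_matrix}, using the standard CSS counting and the characterization of $Z$-logicals. Write $\mathcal{G}$ for the row space of all $m$ rows and $\mathcal{G}_0$ for the row space of the last $m-k$ rows. Since $G$ is full rank, $\dim\mathcal{G}=m$ and $\dim\mathcal{G}_0=m-k$. The first step is to check that this is a valid CSS code, i.e.\ that the $X$-stabilizer space $\mathcal{G}_0$ and the $Z$-stabilizer space $\mathcal{G}^\perp$ commute. This holds because $\mathcal{G}_0\subseteq\mathcal{G}=(\mathcal{G}^\perp)^\perp$. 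Over Galois qudits, $X(a)$ and $Z(b)$ commute exactly when $\tr(\langle a,b\rangle)=0$, which is implied by $\langle a,b\rangle=0$; I will cite~\cite{wills2026review} for this qudit Pauli commutation rule.

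The number of physical qudits is $n$ by construction. For the number of logical qudits, I will use the standard CSS count over $\mathbb{F}_q$: the number of logical qudits is $n-\dim\mathcal{G}_0-\dim\mathcal{G}^\perp = n-(m-k)-(n-m)=k$. The justification is that the code space is spanned by the states $\sum_{s\in\mathcal{G}_0}\ket{x+s}$ for $x\in\mathcal{G}$. These are indexed by $\mathcal{G}/\mathcal{G}_0$, which has $q^k$ elements. The $k$ logical rows are representatives of a basis of this quotient, which also shows they are nontrivial $X$-logicals, consistent with the definition.

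For the $Z$-distance, I need to identify the $Z$-logical operators. A $Z$-type Pauli $Z(v)$ preserves the code space exactly when it commutes with every $X$-stabilizer. Because $\mathcal{G}_0$ is an $\mathbb{F}_q$-space, the condition $\tr(\langle v,s\rangle)=0$ for all $s\in\mathcal{G}_0$ is equivalent to $\langle v,s\rangle=0$ for all $s\in\mathcal{G}_0$: if $\langle v,s\rangle=c\neq0$, replace $s$ by $\lambda s$ with $\tr(\lambda c)\neq0$, which exists since the trace is surjective and $\mathbb{F}_q$-linear. So $Z(v)$ preserves the code space if and only if $v\in\mathcal{G}_0^\perp$. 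Such a $Z(v)$ acts trivially on the code space if and only if $v\in\mathcal{G}^\perp$, i.e.\ it is a $Z$-stabilizer. To see this, note that $Z(v)$ multiplies $\sum_s\ket{x+s}$ by a phase depending on $\tr\langle v,x\rangle$. This phase is constant over all $x\in\mathcal{G}$ exactly when $\langle v,\mathcal{G}\rangle=0$, again using the scaling trick. Hence the $Z$-distance is the minimum weight of a vector in $\mathcal{G}_0^\perp\setminus\mathcal{G}^\perp$, which is exactly the stated characterization: orthogonal to the last $m-k$ rows but not to at least one of the first $k$.

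The main obstacle is the subtlety that Galois-qudit commutation is governed by the trace of the inner product rather than by the inner product itself. The scaling argument handles this and is the one step needing care; everything else is dimension counting.
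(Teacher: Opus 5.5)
Your proposal is correct and follows the same route as the paper: count the logical qudits as $\dim\mathcal{G}-\dim\mathcal{G}_0=k$ (your $n-\dim\mathcal{G}_0-\dim\mathcal{G}^\perp$ is the same quantity), identify the $Z$-distance as the minimum weight of $\mathcal{G}_0^\perp\setminus\mathcal{G}^\perp$, and unpack that condition row by row. The only difference is that you explicitly justify these standard CSS facts from the trace form of Galois-qudit commutation together with the scaling argument, whereas the paper simply takes them as known.
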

\begin{proof}
In the notation of Definition~\ref{def:X_gen_matrix}, the number of logical qudits of the corresponding CSS code is $\dim_{\mathbb{F}_q}\mathcal{G}-\dim_{\mathbb{F}_q}\mathcal{G}_0 = m-(m-k) = k$, where the first equality follows from the requirement that $G$ is full rank. The $Z$ distance is then the minimum Hamming weight of an element of $\mathcal{G}_0^\perp \setminus \mathcal{G}^\perp$. By linearity, a vector $z \in \mathbb{F}_q^n$ is in $\mathcal{G}_0^\perp\setminus\mathcal{G}^\perp$ if and only if it is orthogonal to every row of $G_0$, but not every row of $G$.
\end{proof}

For distilling magic states of diagonal gates, only the $Z$-distance is relevant, since one can twirl the magic states~\cite{MSD,bravyi2012magic}.
\begin{proposition}[Generalization of \cite{MSD,bravyi2012magic} twirling]\label{prop:diagonal_twirl}
    Let \(U\) be a diagonal third-level \(s\)-qubit operator.
    For \(a \in \F_2^s\), define the operators \(M_a \equiv UX^aU^\dagger\) and states \(\ket{A_a} \equiv Z^a U\ket{+}^s\).
    Then, for any state, \(\rho\), its twirl with respect to \((M_a)_{a \in \F_2}\) picked uniformly at random is \(\frac{1}{2^s}\sum_{a \in \F_2^s} M_a  \rho M_a ^\dagger = \sum_{a \in \F_2^s} \rho_a \ketbra{A_a}{A_a}\) where \(\rho_a \equiv \bra{A_a} \rho \ket{A_a}\).
\end{proposition}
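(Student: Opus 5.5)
The plan is to show that the states $\ket{A_a}$, $a \in \F_2^s$, form an orthonormal basis of $(\mathbb{C}^2)^{\otimes s}$ of simultaneous eigenvectors of the commuting family $\{M_a\}$, with pairwise distinct eigenvalue characters. The twirl is then the standard group-averaging projection onto the diagonal of this basis.

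First, the $M_a$ form a representation of $\F_2^s$ up to phases, and in fact an honest commuting group. Since $U$ is diagonal and third-level, $UX^aU^\dagger$ is a Clifford. We have $M_aM_b = UX^aX^bU^\dagger = M_{a+b}$, so $a\mapsto M_a$ is a genuine unitary representation of the abelian group $\F_2^s$. In particular the $M_a$ commute and satisfy $M_a^2 = I$. This step needs no Clifford-hierarchy input at all; the third-level hypothesis only matters for Clifford-implementability.

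Second, the $\ket{A_a}$ are eigenvectors. Because $U$ is diagonal it commutes with $Z^b$, so $\ket{A_b} = U Z^b\ket{+}^{\otimes s}$. Then
\[
M_a\ket{A_b} = UX^aZ^b\ket{+}^{\otimes s} = (-1)^{a\cdot b}\, UZ^bX^a\ket{+}^{\otimes s} = (-1)^{a\cdot b}\ket{A_b}.
\]
The states $Z^b\ket{+}^{\otimes s}$ form the orthonormal $X$-basis, and $U$ is unitary, so the $\ket{A_b}$ form an orthonormal basis. Moreover, distinct $b$ give distinct characters $a\mapsto(-1)^{a\cdot b}$.

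Third, compute the twirl. Expand $\rho = \sum_{b,c}\rho_{bc}\ket{A_b}\bra{A_c}$. Then
\[
\frac{1}{2^s}\sum_a M_a\rho M_a^\dagger = \sum_{b,c}\rho_{bc}\Big(\frac{1}{2^s}\sum_a(-1)^{a\cdot(b+c)}\Big)\ket{A_b}\bra{A_c}.
\]
The character sum equals $\delta_{b,c}$, which leaves $\sum_b \rho_{bb}\ketbra{A_b}$ with $\rho_{bb} = \bra{A_b}\rho\ket{A_b}$, as claimed.

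There is no real obstacle here. The only point needing care is the sign bookkeeping in $X^aZ^b = (-1)^{a\cdot b}Z^bX^a$ together with the commutation of the diagonal $U$ with $Z^b$; the statement's $\rho_a$ notation should be read as this diagonal matrix element. I would also remark that the $M_a$ are Clifford because $U$ is third level, which is what makes the twirl implementable.
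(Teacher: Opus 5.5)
Your proof is correct and follows the paper's argument essentially step for step. Both proofs establish orthonormality of the $\ket{A_b}$ via $\bra{+}^{\otimes s}Z^{a+b}\ket{+}^{\otimes s}=\delta_{ab}$, then derive the eigenvalue relation $M_a\ket{A_b}=(-1)^{a\cdot b}\ket{A_b}$, and finally use the character sum to kill the off-diagonal terms of $\rho$ in that basis. Your side remark is also accurate: the third-level hypothesis is not used in the identity itself, only to make the $M_a$ Clifford so that the twirl is implementable.
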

\begin{proof}
    First, note that \(\bra{A_a}A_b\rangle = \bra{+}^s Z^{a+b} \ket{+}^s = \delta_{ab}\), so the states \((\ket{A_a})_{a \in \F_2^s}\) form an orthonormal basis. 
    Additionally, they are eigenvectors of the operators \((M_a)_{a \in \F_2}\) with eigenvalues \(M_a \ket{A_b} = UX^aU^\dagger Z^b U \ket{+}^s= (-1)^{a \cdot b}\ket{A_b}\).
    For an arbitrary density matrix, we can expand it in the basis \(\rho = \sum_{a,b\in\F_2} \rho_{ab} |A_a\rangle\langle A_b|\), where \(\rho_{ab} = \bra{A_a} \rho \ket{A_b}\).
    After passing this state through the twirling channel, we get 
    \begin{align*}
        \frac{1}{2^s}\sum_{a \in \F_2^s} M_a  \rho M_a ^\dagger &= \frac{1}{2^s}\sum_{c \in \F_2^s} \sum_{a,b\in\F_2^s} \rho_{ab} M_c |A_a\rangle\langle A_b|M_c^\dagger \\
        &= \frac{1}{2^s}\sum_{c \in \F_2^s} \sum_{a,b\in\F_2^s} \rho_{ab} (-1)^{c \cdot (a+b)}|A_a\rangle\langle A_b|\\
        &= \sum_{a\in\F_2^s} \rho_{aa} |A_a\rangle\langle A_a|
    \end{align*}
\end{proof}

Typically, we will refer to the rows of $G$ as $\bg_1,\dots,\bg_m$, and the $j$-th entry of the $i$-th row as $g_{ij}$. We will usually think of $G$ as a generator matrix for a classical code, i.e., the rows of $G$ form a basis for the code. We restrict ourselves to $\F_{2^s}$-linear codes, i.e., if $\bg_1,\bg_2\in \F_{2^s}^n$ are codewords, then so is $\gamma_1\bg_1+\gamma_2\bg_2$ for all $\gamma_1,\gamma_2\in\F_{2^s}$. 

If $G$ is a binary matrix, one immediately has a quantum CSS code for qubits. If $s > 1$, then we may binarize the matrix $G$ as follows to produce a qubit code.\footnote{It is equivalent to construct a qudit code from the matrix over $\mathbb{F}_{2^s}$, and binarize the code itself~\cite{wills2026review}.} Consider some basis $B = (\alpha_i)_{i=1}^s$ for $\mathbb{F}_{2^s}$ over $\mathbb{F}_2$, i.e., every $\gamma\in \F_{2^s}$ can be written uniquely as $\gamma=\sum_{i=1}^s b_i \alpha_i$ for $b_i\in\F_2$, $i=1,\dots,n$. The generator matrix $G\in\F_{2^s}^{m\times n}$ may be expanded into a binary matrix in $\F_2^{ms\times ns}$, simply by taking any entry $\gamma$ of $G$ and replacing it with an $\F_2^{s\times s}$ matrix whose $(i,j)$-th entry is $\tr(\gamma\alpha_i\alpha_j)$~\cite{theoryEC}\footnote{One can think of this $\gamma\mapsto (\tr(\gamma \alpha_i\alpha_j))_{i,j=1}^s$ mapping as multiplying each codeword of the $\F_{2^s}$-linear code by $\alpha_i$ and then subsequently binarizing each entry; see~\cite{wills2026review} for more details on mapping qudit codes to qubit codes.}. The resultant binary matrix specifies a qubit CSS code with its first $ks$ rows being logical $X$ operators and its latter $(m-k)s$ rows being $X$ stabilizer generators.

One may use different bases $B$ for the mapping of qudits to qubits. This yields different isomorphisms between the qudits and qubits. Changing the basis $B$ used in the mapping is equivalent to performing a $\CNOT$ circuit on the $s$ qubits making up the qudit. A particularly convenient choice of basis is a \textit{self-dual basis}, one for which $\tr(\alpha_i\alpha_j) = \delta_{ij}$. Such a basis exists for all $s$, and can be found in polynomial time using Lempel factorization~\cite{lempel_factorization}. It is particularly convenient to extract the components of a field element in a self-dual basis:
\begin{equation}
\label{eq:extract_bit}\F_{2^s}\ni \gamma=\sum_{i=1}^s b_i\alpha_i,\quad b_i=\tr(\gamma\alpha_i)\in\F_2\;.
\end{equation}
Using this expression and the phase polynomial language, we can build up the connection between qudit and qubit diagonal gates. In the rest of this section, we will study particular qudit gates, including the qudit-CCZ gates acting on three qudits, as well as single qudit gates that include $\CS$ (specifically over $\F_4$), norm gates and $U_7$ gates for arbitrary binary extension fields. These are all non-Clifford gates. The aim is therefore to construct qudit codes, encoded in matrices $G \in \mathbb{F}_{2^s}^{m \times n}$, having these gates transversal, in order to distill their magic states. This immediately gives us qubit protocols distilling the equivalent of these gates under the qudit-to-qubit mapping.

Before proceeding, however, let us state some simple observations about the Clifford hierarchy for Galois qudits of dimension $2^s$ (regardless of the extension degree $s$). We focus particularly on the operations that perform arithmetic over $\mathbb{F}_{2^s}$, for which we note that the level of the Clifford hierarchy are very different to arithmetic over $\Z$, originating from the fact that addition does not cause ``carrying'' in $\F_{2^s}$.

\begin{remark}
\label{remark:CH_arithmetic}
Clifford hierarchy of arithmetic operations in binary extension fields.
\begin{enumerate}
\item In-place addition over $\F_{2^s}$, that is, $\ket{x}\ket{y} \mapsto \ket{x}\ket{x+y}$ for $x,y \in \F_{2^s}$, is a Clifford operation.
\item Multiplying an element in $\F_{2^s}$ by a known constant $\beta\in\F^\times_{2^s}$ is a Clifford operation. This can be also seen as a change of basis, i.e., from $(\alpha_i)_{i=1}^s$ to $(\beta\alpha_i)_{i=1}^s$. There are many more changes of basis than these, however, and they are all Clifford.
\item Multiplication of two unknown elements $x,y\in \F_{2^s}$, that is, $\ket{x}\ket{y}\ket{0} \mapsto \ket{x}\ket{y}\ket{xy}$, is in the third level of the Clifford hierarchy. We will see this explicitly in subsection~\ref{sec:qudit-CCZ}.
\item The in-place Frobenius transform $|\gamma\ra\mapsto |\gamma^2\ra$ is a Clifford operation. A way to see this is to expand in the normal basis\footnote{A normal basis of $\F_{q^m}$ over $\F_q$, which is a basis of the form $\{\alpha,\alpha^q,\dots,\alpha^{q^{m-1}}\}$, always exists, see e.g. \cite[Thm.~2.35]{Lidl_Niederreiter_1996}. As we commented before, for $\F_{2^s}$ over $\F_2$, a self-dual basis always exists; however, a self-dual normal basis exists if and only if $s$ is not divisible by $4$.~\cite{mullen2013handbook}} $\alpha_i=\alpha^{2^i}$ for some primitive element $\alpha$ of $\F_{2^s}$. Writing $\gamma=\sum_{i=1}^s b_i\alpha_i$, we have $\gamma^2=(\sum_{i=1}^s b_i\alpha_i)^2=\sum_{i=1}^s b_i^2 \alpha_i^2=\sum_{i=1}^s b_i \alpha_{i+1\pmod{s+1}}$, and so the Frobenius transform is just cyclically shifting the $s$ qubits making up the qudit (with this basis $B$), which is Clifford.\footnote{Recall that picking a different basis $B$ for $\mathbb{F}_q$ over $\mathbb{F}_2$ yields different isomorphisms from qudits to qubits, however each give valid isomorphisms of the diagonal gates in each level of the Clifford hierarchy. Thus, we may choose a particular basis $B$, establish the level of the corresponding qubit gate, and conclude the level of the initial qudit gate.}
\end{enumerate} 
\end{remark}

\subsection{Expansion of gates}\label{subsec:qudit_gates}

\subsubsection{Qudit-CCZ gate}
\label{sec:qudit-CCZ}

As mentioned, gates on $\GF(2^s)$ qudits can be expanded into qubit gates as well, where diagonal gates are mapped to diagonal gates, and they retain their level in the Clifford hierarchy. Let us examine this further. Full details can be found in~\cite{wills2026review}, but we give the idea here as well. Let us start with the example of the three-qudit CCZ gate~\cite{golowich2025asymptotically,nguyen2024}, defined as follows for $x,y,z\in\GF(2^s)$:
\begin{equation}
\text{CCZ}:\;|x\ra|y\ra|z\ra\mapsto (-1)^{\tr(xyz)}|x\ra|y\ra|z\ra
\end{equation}
For simplicity, let $B = (\alpha_i)_{i=1}^s$ be a self-dual basis for $\mathbb{F}_q$ over $\mathbb{F}_2$. To expand the gate, first expand $x,y,z$ in the self-dual basis: $x=\sum_{i=1}^s x_i\alpha_i$, $y=\sum_{i=1}^s y_i\alpha_i$, and $z=\sum_{i=1}^s z_i\alpha_i$, where $x_i,y_i,z_i\in\{0,1\}$. The computational basis states $\ket{x}\ket{y}\ket{z}$ on $3$ qudits can be expanded to computational basis states on $3s$ qubits by writing $x,y,z$ out in their components. Next, using the fact that $\tr(\cdot)$ is $\F_2$-linear:
\begin{equation}
\label{eq:expand_qudit_CCZ_into_qubit_CCZ}
\tr(xyz)=\tr\left(\sum_{i,j,k=1}^s x_iy_jz_k\;\alpha_i\alpha_j\alpha_k\right)=\sum_{i,j,k=1}^s x_iy_jz_k\tr(\alpha_i\alpha_j\alpha_k).
\end{equation}
When we expand $\CCZ$ to a $3s$-qubit gate, it gives a phase $\tr(xyz) = \sum_{i,j,k=1}^sx_iy_jz_k\tr(\alpha_i\alpha_j\alpha_k)$ to the computational basis state $\ket{x}\ket{y}\ket{z}$.
Therefore, at the qubit level, this qudit CCZ gate applies a qubit CCZ to every triplet $(x_i,y_j,z_k)$ for which $\tr(\alpha_i\alpha_j\alpha_k)=1$.

Let us show an explicit example of the qudit $\CCZ$ gate for $\F_4=\{0,1,\omega,\omega^2\}$, where arithmetic in this field is specified by $\omega + \omega^2 = 1$. One can check that $(\omega, \omega^2)$ forms a self-dual basis. The left side of figure~\ref{fig:F4_qudit_CCZ} denotes the example of the $\mathbb{F}_4$-qudit $\CCZ$ gate expanded in this basis. Claim~\ref{claim:coherent_field_mult} shows that the qudit $\CCZ$ gate can be used to coherently multiply elements in the field $\mathbb{F}_q$ on a quantum computer.
\begin{figure}[ht]
\centering
\includegraphics[width=1.0\linewidth]{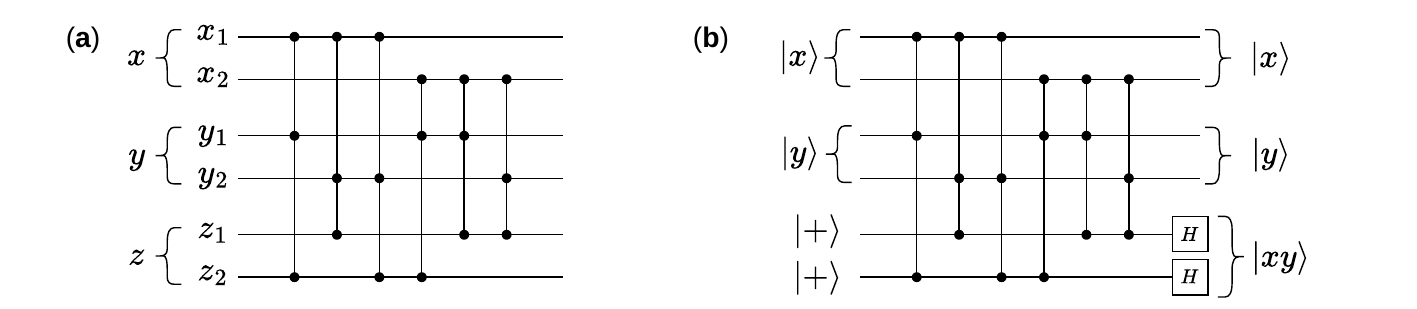}
\caption{(\textbf{a}) $\F_4$-qudit CCZ expanded as qubit CCZs. Expand $x=x_1\omega+x_2\omega^2$, where $x_1,x_2\in\F_2$, and similarly for $y,z$. (\textbf{b}) The magic state corresponding to $\mathbb{F}_4$-$\CCZ$ can be consumed to perform multiplication in $\F_4$ -- one just needs to conjugate the third pair of qubits by Hadamard gates.}
\label{fig:F4_qudit_CCZ}
\end{figure}

\begin{claim}\label{claim:coherent_field_mult}
The $\F_{2^s}$-qudit CCZ gate allows one to multiply two elements in that field on a quantum computer. That is, one use of a $\CCZ$ gate, and Clifford operations, is sufficient to perform the arithmetic operation $\ket{x}\ket{y}\ket{0} \mapsto \ket{x}\ket{y}\ket{xy}$, where $x,y \in \F_{2^s}$, and each ket is an $s$-qubit register (equivalently, a one $2^s$-dimensional qudit register. Fig.~\ref{fig:F4_qudit_CCZ}(b) shows an example for $\F_4$. 
\end{claim}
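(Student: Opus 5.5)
The plan is to mimic the qubit fact that a Toffoli is a $\CCZ$ conjugated by a Hadamard on the target, with the qudit Fourier transform playing the role of the Hadamard. On a Galois qudit over $\F_{2^s}$, define
\[
F:\ket{z}\mapsto 2^{-s/2}\sum_{w\in\F_{2^s}}(-1)^{\tr(zw)}\ket{w}.
\]
In a self-dual basis $B=(\alpha_i)_{i=1}^s$ we have $\tr(zw)=\sum_i z_iw_i$, by Eq.~\eqref{eq:extract_bit} and $\F_2$-linearity of the trace. So $F$ is exactly $H^{\otimes s}$ on the $s$ qubits of the register, and in particular it is Clifford. The claim is that
\[
(\mathrm{I}\otimes\mathrm{I}\otimes F)\,\CCZ\,(\mathrm{I}\otimes\mathrm{I}\otimes F)\ket{x}\ket{y}\ket{0}=\ket{x}\ket{y}\ket{xy},
\]
which uses one qudit $\CCZ$ together with Cliffords.

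The computation proceeds in three steps.
\begin{enumerate}
\item Applying $F$ to $\ket{0}$ gives $2^{-s/2}\sum_w\ket{w}$.
\item Applying $\CCZ$ gives $2^{-s/2}\sum_w(-1)^{\tr(xyw)}\ket{x}\ket{y}\ket{w}$.
\item Applying $F$ again gives $2^{-s}\sum_{u}\Big(\sum_w(-1)^{\tr((xy+u)w)}\Big)\ket{x}\ket{y}\ket{u}$.
\end{enumerate}
The key fact is character orthogonality: $\sum_{w\in\F_{2^s}}(-1)^{\tr(aw)}$ equals $2^s$ if $a=0$ and $0$ otherwise. For $a\neq0$ this holds because $w\mapsto aw$ is a bijection and $\tr$ is a surjective $\F_2$-linear map onto $\F_2$, so its kernel has index $2$ and the values are balanced. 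The sum therefore collapses to $u=xy$, since $-xy=xy$ in characteristic $2$. By linearity the identity extends to superpositions of $x,y$.

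Two further points need care.
\begin{itemize}
\item If we use a basis that is not self-dual, the corresponding qubit operation is $F$ conjugated by a $\CNOT$ circuit (the basis change from Remark~\ref{remark:CH_arithmetic}), which is still Clifford.
\item Because Claim~\ref{claim:coherent_field_mult} concerns one \emph{use} of $\CCZ$, it is cleanest to argue at the level of the gate and then note that gate teleportation consumes the $\ket{\CCZ}$ magic state with Clifford operations, since $\CCZ$ is third-level by Eq.~\eqref{eq:expand_qudit_CCZ_into_qubit_CCZ}.
\end{itemize}

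The main obstacle is showing that $F$ is Clifford, which is exactly where the self-dual basis enters. Everything else is a routine Fourier computation. For Fig.~\ref{fig:F4_qudit_CCZ}(b), I would specialize to $\F_4$ with basis $(\omega,\omega^2)$, where $F=H\otimes H$ on the third pair of qubits, matching the figure.
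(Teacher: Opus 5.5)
Your proposal is correct and follows essentially the same route as the paper: apply $H^{\otimes s}$ (identified with the trace-Fourier transform via a self-dual basis) to the target register, apply one qudit $\CCZ$, apply $H^{\otimes s}$ again, and collapse the sum using $\sum_{z}(-1)^{\tr(zu)}=2^s\delta_{u,0}$. Your additional remarks, on non-self-dual bases and on consuming a $\ket{\CCZ}$ state by gate teleportation, are consistent with the paper; the latter appears there as a footnote.
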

\begin{proof}
We will show that the following sequence of operations executes the desired operation on the $3$-qudit state $\ket{x}\ket{y}\ket{0}$:
\begin{enumerate}
    \item Perform Hadamard on every qubit making up the third register, which all begin in the $\ket{0}$ state;
    \item Perform the qudit $\CCZ$ gate on the three qudits ($3s$ qubits);\footnote{This step can also be performed by consuming a $\ket{\CCZ}$ state, and using Clifford operations.}
    \item Perform Hadamard again on all $s$ qubits making up the last register.
\end{enumerate}
A fact we will be using is that 
\begin{equation}
H^{\otimes s}\,|z\ra\;=\;\frac{1}{\sqrt{2^s}}\sum_{t\in \F_{2^s}} (-1)^{\tr(zt)}\,|t\rangle.
\end{equation}
Here, $\ket{z}$ denotes the qudit computational basis state corresponding to $z \in \mathbb{F}_q$, but $H^{\otimes s}$ corresponds to performing the qubit Hadamard operation on the $s$ qubits making up the qudit.
To see this equation, consider expanding $z\in\F_{2^s}$ in a self-dual basis $(\alpha_i)_{i=1}^s$ as $z=z_1\alpha_1+\dots+z_s\alpha_s$; one also expands $t\in\F_{2^s}$ as $t=t_1\alpha_1+\dots,t_s\alpha_s$: $z_i,t_i\in\{0,1\}$. Then, $H^{\otimes s}\ket{z}$ is a uniform superposition over $\ket{\T}$ with phase $(-1)^{z_1t_1+\dots+z_st_s}$. This exponent is exactly $$\tr(zt)=\tr\left((\sum_i z_i\alpha_i)\;(\sum_j t_j\alpha_j)\right)=\sum_{i,j}z_i t_j\tr(\alpha_i\alpha_j)=\sum_i z_it_i\;.$$

The desired circuit is then performed as follows:
\begin{equation}
|x\ra|y\ra|0^s\ra\xrightarrow{H^{\otimes s}\text{ on }z}\ |x\ra|y\ra\frac{1}{\sqrt{2^s}}\sum_{z\in \F_{2^s}}|z\ra \xrightarrow{\text{Qudit }\CCZ}
\ \frac{1}{\sqrt{2^s}}\sum_{z}(-1)^{\tr(xyz)}|x\ra|y\ra|z\ra\;.
\end{equation}
Apply $H^{\otimes s}$ again on the $z$-register:
\begin{align*}
H^{\otimes s}\left(\frac{1}{\sqrt{2^s}}\sum_{z}(-1)^{\tr(xyz)}|z\ra\right)
&=\frac{1}{\sqrt{2^s}}\sum_{z} (-1)^{\tr(xyz)}\left(\frac{1}{\sqrt{2^s}}\sum_{t} (-1)^{\tr(z t)} |t\ra\right)\\
&=\frac{1}{2^s}\sum_{t}\left(\sum_{z} (-1)^{\tr\big(z(xy+t)\big)}\right)|t\ra\\
&=|xy\ra\;,
\end{align*}
where in the last step, we have used
\begin{equation*}
\sum_{z\in \F_{2^s}} (-1)^{\operatorname{tr}(zu)} \;=\;
\begin{cases}
2^s, & \text{if } u=0,\\
0, & \text{if } u\neq 0.
\end{cases}
\end{equation*}
\end{proof}
This multiplication in the binary extension field finds usage in, for example, decoded quantum interferometry~\cite{jordan2025DQI}. In particular, \cite{khattar2025verifiable} provides a detailed resource analysis for solving the optimal polynomial intersection problem in binary extension fields. The binary extension field multiplication might also be useful in preparing certain catalyst states~\cite{kim2025catalytic}.

Later, we will only present qudit-CCZ distillation protocols (Table~\ref{table:maximal_curves_CCZ_only}) up to $\F_{64}$. For multiplication in larger binary extension fields, e.g., $\F_{2^{163}}$, one can reduce the problem into many multiplications in small binary extension fields by paying with some additional $\CNOT$ gates~\cite{chudnovsky1988algebraic, CENK2010}. We present an example of this reduction in App.~\ref{sec:binary_field_multiplication}, after introducing relevant concepts for algebraic curves in Sec.~\ref{sec:intro_to_AG_codes} and App.~\ref{sec:curves_formal_definitions}.

Note that, as a consequence of remark~\ref{remark:CH_arithmetic}, the qudit CCZ gates $\CCZ^{\beta_1}$ and $\CCZ^{\beta_2}$ are Clifford equivalent for any $\beta_1,\beta_2\in\F_{2^s}^\times$, where
\begin{equation}
\CCZ^{\beta}|\eta_1\ra |\eta_2\ra |\eta_3\ra = (-1)^{\tr(\beta \eta_1 \eta_2 \eta_3)}|\eta_1\ra |\eta_2\ra |\eta_3\ra\;.
\end{equation}
This is because an equivalent way of doing $\CCZ^{\beta_1}$ is to perform multiplication on the first qudit\footnote{In fact, any of the three qudits work.} $|\eta_1\ra\mapsto|\frac{\beta_2}{\beta_1}\eta_1\ra$ before applying $\CCZ^{\beta_2}$. This fact will later play a role in Sec.~\ref{sec:qudit-CCZ-one-point-codes} when we distill qudit-CCZ gates.

\subsubsection{\texorpdfstring{Control-$S$ gate in $\GF(4)$}{Control-S gate in GF(4)}}
Besides expanding a qudit diagonal gate into a qubit one, one can also do the reverse in some cases.
If the qubit gate only involves multi-control-$Z$ gates, i.e., it is a $\pm1$-phase gate, then it turns out to always be possible; this boils down to the fact that arithmetic in $\mathbb{F}_2$ is simply integer arithmetic modulo $2$, that is, $\mathbb{Z}_2 = \mathbb{F}_2$.
However, if we are considering phase polynomial gates with non-real phases, i.e., involving phases $e^{i\pi/2^t}$, for some $t\in\mathbb{N}$, to the power of some polynomial, then due to the difference between $\Z_{2^t}$ and $\F_{2^t}$, complications arise in general.

Nevertheless, let us show a simple viable example for $\F_4=\{0,1,\omega,\omega^2\}$, as it leads to interesting MSD protocols in Sec.~\ref{sec:CS_distillation_protocols}. In $\F_4$, we have that $\omega^2=\omega+1$ and $\omega^3=1$; also $\tr(1)=0=\tr(0)$ and $\tr(\omega)=\tr(\omega^2)=1$. We consider applying the $\CS$ gate to the $2$ qubits of an $\F_4$ qudit that has been expanded in the self-dual basis $(\omega, \omega^2)$.
\begin{equation}
\gamma=b_1\omega+b_2\omega^2,\quad\CS|\gamma\ra:=\CS|b_1\ra|b_2\ra=i^{b_1b_2}|\gamma\ra.
\end{equation}
Since the basis $(\omega, \omega^2)$ is self dual, one can extract $b_1$ and $b_2$ as $b_1=\tr(\gamma\omega)$ and $b_2=\tr(\gamma\omega^2)$ (see Equation~\eqref{eq:extract_bit}). Using arithmetic over $\mathbb{F}_4$, $b_1b_2=(\gamma\omega+\gamma^2\omega^2)(\gamma\omega^2+\gamma^2\omega^4)=\gamma^3+\tr(\gamma)$. Therefore,
\begin{equation}
\CS:|\gamma\ra\mapsto i^{\gamma^3 + \tr(\gamma)}|\gamma\ra,
\end{equation}
where we emphasise that the arithmetic in the exponent is over $\mathbb{F}_4$.

\subsubsection{\texorpdfstring{The norm gate}{Norm gate}}
Consider the multi-control-$Z$ gate $\text{C}^{s-1}\text{Z}$ acting on the $s$ qubits constituting a $\GF(2^s)$ qudit:
\begin{equation}
\text{C}^{s-1}\text{Z}|\gamma\ra := \text{C}^{s-1}\text{Z} |b_1\ra|b_2\ra\cdots |b_s\ra=(-1)^{b_1b_2\cdots b_n}|b_1\ra|b_2\ra\cdots|b_s\ra\;.
\end{equation}
Considering a self-dual basis $(\alpha_i)_{i=0}^{s-1}$ for $\mathbb{F}_{2^s}$ over $\mathbb{F}_2$, we can use $b_i=\tr(\gamma\alpha_i)$ and the explicit formula of trace in Eq.~\eqref{eq:trace} to write the above qubit phase polynomial into a qudit one. However, a better way is to directly observe that such a gate is Pauli equivalent to 
\begin{equation}
|\gamma\ra\mapsto (-1)^{\Nm(\gamma)}|\gamma\ra=(-1)^{\gamma^{2^s-1}}|\gamma\ra\;.
\end{equation}
This is a gate for a single $2^s$-dimensional qudit which we call the \emph{norm gate}.
Recall that besides $\tr(\cdot)$, the norm function $\Nm(\cdot)$ also maps $\F_{2^s}$ elements to $\{0,1\}$; it takes a particularly simple form in $\F_{2^s}$ (cf. Eq.~\eqref{eq:norm}): $\Nm(\gamma):=\prod_{i=0}^{s-1}\gamma^{2^i}=\gamma^{2^s-1}$. Note that $\Nm(\gamma)=0$ if and only if $\gamma=0$, and else $\Nm(\gamma)=1$; this is most easily seen from the fact that the multiplicative group of a finite field is cyclic.\footnote{Strictly speaking, this would be true irrespective of the structure of the multiplicative group given Lagrange's theorem for finite groups.}
That is to say, $|\gamma\ra$ gains a phase of $(-1)^0=1$ when $\gamma=0$ and $-1$ when $\gamma\neq 0$. 
The Pauli equivalence of the norm gate to the $\text{C}^{s-1}\text{Z}$ gate is now obvious: the norm gate can be implemented by first applying $X$ gates to each of the $s$ qubits making up the qudit, giving us $b_i\mapsto b_i\oplus 1$, then applying a $\text{C}^{s-1}\text{Z}$ gate, and finally undoing the $X$ gates. Then only in the case of $b_1=\cdots=b_s=0$ does the intermediate state gain a phase of $-1$.

\subsubsection{Classification of the diagonal Galois qudit Clifford hierarchy with real phases}\label{subsubsec:classification}

Aside from the norm gate, we can examine more general diagonal Galois qudit gates in the Clifford hierarchy with real phases, i.e., with phases $\pm 1$. The aim is to classify the level of such gates in the Clifford hierarchy. Note that in some sense this has already been achieved, because each level of the diagonal Clifford hierarchy for (multiple) qudits of prime dimension is isomorphic to that of a single larger Galois qudit, as discussed, and the former has been classified~\cite{cui2017diagonal}. Thus, for any given Clifford hierarchy gate on one Galois qudit, one can use the isomorphism to write it as a gate for multiple qudits of a smaller dimension, and establish its level using that classification. The aim here, however, is to establish the level of such a gate in language which is native to the Galois qudits. Aside from giving us a deeper understanding of the real diagonal Clifford hierarchy gates for Galois qudits, it is operationally useful as it allows us to derive transversality conditions for the corresponding magic states naturally. This treatment resolves an open question in~\cite{he2025quantum}.

All single-qudit diagonal real-phase gates are in the Clifford hierarchy, and can be written
\begin{equation}\label{eq:F_qtoF_2fn}
    \ket{\gamma} \mapsto (-1)^{f(\gamma)}\ket{\gamma}\;, \quad  f(\gamma)=\sum_{i=0}^{2^s-1}c_i\gamma^i\in\{0,1\}\;,\quad c_i \in \mathbb{F}_q\;.
\end{equation}
Writing $q=2^s$, we note that the polynomial only has terms of degree less than $q$, because $\gamma^q=\gamma,\;\forall \gamma\in\F_{q}$. 

Now, since $f:\F_q\to \F_2$, we have $f^2=f$, so the coefficients $c_i$ satisfy $c_{2i}=c_i^2$. Therefore, if $c_i$ is nonzero, then so is $c_j$ for any $j$ in the $2$-cyclotomic coset $C_i$ of $i$, which is defined by
\begin{equation}
\label{eq:cyclotomic_coset}
C_i=\{i,2i,4i,\dots,2^{\ell_i-1}i\}\!\!\mod{q-1}\;,\text{ where }\; \ell_i=\min\{j\ge 1\;|\;2^j i\equiv i\!\!\mod{q-1}\}.
\end{equation}
For example, if $s=4$ then $q = 2^4 = 16$, and considering $i=3$, we have $C_i = \{3, 6, 12, 9\}$, and $\ell_i = 4$. In this case, we notice that $\ell_i = s$, but more generally, we must always have $\ell_i \mid s$, following from the fact that $2^si \equiv i \mod q-1$.
Moreover, using $c_{2i}=c_i^2$, we have
\begin{equation}
    \sum_{j=0}^{\ell_i-1}c_{2^ji}\gamma^{2^ji} = \sum_{j=0}^{\ell_i-1}c_i^{2^j}\gamma^{2^ji} = \sum_{j=0}^{\ell_i-1}\left(c_i\gamma_i\right)^{2^j} = \tr_{\F_{2^{\ell_i}}/\F_2}(c_i\gamma^i).
\end{equation}
Note that the right-hand side makes sense, indeed, $c_i\gamma^i \in \mathbb{F}_{2^\ell_i}$ follows from 
\begin{equation}
    c_i = c_{2^{\ell_i}i} = c_i^{2^{\ell_i}} \implies c_i \in \mathbb{F}_{2^{\ell_i}}
\end{equation}
and
\begin{equation}
    2^{\ell_i}i \equiv i\mod q-1 \implies \gamma^i = (\gamma^i)^{2^{\ell_i}}.
\end{equation}
Now, notationally, we always let $i$, the smallest element in $C_i$, represent $C_i$, and we then let $\Gamma$ be the set of such representatives of the $2$-cyclotomic cosets modulo $q-1$.
In total, every Boolean function $f:\F_{2^s}\to \F_2$ can be written uniquely in the form of Equation~\eqref{eq:F_qtoF_2fn}, and in turn uniquely as
\begin{equation}
    f(\gamma)=c_0+\sum_{i\in \Gamma}\tr_{\F_{2^{\ell_i}}/\F_2}(c_i \gamma^i)\;.
\end{equation}
using this discussion.

In Theorem~\ref{thm:real_phase_monomial_level}, we will start by establishing the level of the Clifford hierarchy of each term $(-1)^{\tr_{\F_{2^{\ell_i}}/\F_2}(c_i \gamma^i)}$. Later, in Theorem~\ref{thm:clifford_level_full_polynomial}, we will argue that the level of their product is simply the maximum of each of their levels.
\begin{theorem}
\label{thm:real_phase_monomial_level}
Given $q=2^s$ and a non-negative integer $n<q-1$, let $\ell_n$ be the size of the $2$-cyclotomic coset $C_n$ of $n$ as in Eq.~\eqref{eq:cyclotomic_coset}. For $c_n\in \F_{2^{\ell_n}}$,
the single-qudit real-phase gate mapping $|\gamma\ra$ to $(-1)^{\tr_{\F_{2^{\ell_n}}/\F_2}(c_n \gamma^n)}|\gamma\ra$ is in exactly the $\wt(n)$-th level of Clifford hierarchy\footnote{That is, the gate is in the $\wt(n)$-th level, but not in a lower level.} if $c_n\neq 0$, where $\wt(n)$ stands for the Hamming weight of the binary expansion of $n$. 
If $c_n=0$, then the gate is trivial up to a global phase.
\end{theorem}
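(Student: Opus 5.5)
We reduce to the qubit picture, where diagonal real-phase gates are classified by the degree of their Boolean phase polynomial. Fix a self-dual basis $(\alpha_i)_{i=1}^s$ and write $\gamma=\sum_i b_i\alpha_i$ with $b_i\in\F_2$. Under the qudit-to-qubit isomorphism the gate becomes the $s$-qubit gate $\ket{\bm b}\mapsto(-1)^{F(\bm b)}\ket{\bm b}$, where $F(\bm b)=\tr_{\F_{2^{\ell_n}}/\F_2}(c_n\gamma^n)$ is viewed as a Boolean function of $b_1,\dots,b_s$. By~\cite{cui2017diagonal}, and as recalled in Section~\ref{subsec:diagonal_gates_ch} for $m=1$, a gate $(-1)^{F}$ with $F$ a multilinear polynomial over $\F_2$ of degree $d$ lies exactly in level $d$. (Here $\CZ$ has degree $2$, and a nonconstant affine $F$ gives a Pauli, which is level $1$.) So it suffices to show that the algebraic degree of $F$ equals $\wt(n)$ when $c_n\neq0$.

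For the upper bound, expand $n=2^{e_1}+\dots+2^{e_w}$ with $w=\wt(n)$. Then $\gamma^n=\prod_{j}\gamma^{2^{e_j}}$. Each factor $\gamma^{2^{e_j}}=\sum_i b_i\alpha_i^{2^{e_j}}$ is $\F_2$-linear in $\bm b$ by the Frobenius additivity used in Eq.~\eqref{eq:trace}. Hence $\gamma^n$ is a product of $w$ linear forms, so each coordinate has degree at most $w$. The trace map and multiplication by $c_n$ are $\F_2$-linear, so $\deg F\le w$.

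The main obstacle is the lower bound, i.e.\ showing that the degree-$w$ part does not cancel. I would use the derivative characterization of degree: $\deg F\ge w$ if some order-$w$ discrete derivative $D_{a_1}\cdots D_{a_w}F$ is nonzero, where $D_aF(\gamma)=F(\gamma+a)+F(\gamma)$. The $w$-th derivative of the multilinear product $\prod_j\gamma^{2^{e_j}}$ in directions $a_1,\dots,a_w$ is the constant symmetrized sum $\sum_{\pi\in S_w}\prod_j a_{\pi(j)}^{2^{e_j}}$. This is the Moore-type permanent (equal to the determinant in characteristic $2$) $\det\big(a_k^{2^{e_j}}\big)_{j,k}$. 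So $D_{a_1}\cdots D_{a_w}F=\tr_{\F_{2^{\ell_n}}/\F_2}\big(c_n\det(a_k^{2^{e_j}})\big)$, which is constant in $\gamma$. It remains to choose $a_1,\dots,a_w$ so that this trace is $1$. Since $c_n\neq0$, it suffices that the values $\det(a_k^{2^{e_j}})$, as the $a_k$ range over $\F_q$, are not all in the kernel of $x\mapsto\tr(c_nx)$. Viewed as a polynomial in $a_1$ with the other $a_k$ fixed generic, this determinant is a linearized polynomial whose exponents $2^{e_j}$ are distinct and below $q$. I would argue that the trace-composed function is a nonzero polynomial of degree $<q$ in each variable, hence not identically zero on $\F_q^w$. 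For the trace, the relevant point is that the exponents $2^{e_j+t}$ for the conjugates $t<\ell_n$ stay distinct mod $q-1$. Equivalently, the monomials in $\tr(c_n\gamma^n)$ over the coset $C_n$ are distinct reduced monomials, and the $\gamma^n$ term alone contributes the nonvanishing degree-$w$ part. I expect this bookkeeping, i.e.\ showing that distinct exponents in $C_n$ give linearly independent top-degree components, to be the delicate step. The key input is that the $\gamma^{m}$ with $m<q$ are linearly independent as functions on $\F_q$.

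Finally, if $c_n=0$ the phase is identically $(-1)^0$, so the gate is trivial. The cases $n=0$ and the constant term give global phases only, and these are excluded or handled by the same argument.
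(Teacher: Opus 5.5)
Your argument is correct, and its skeleton matches the paper's. Your order-$w$ derivative $D_{a_1}\cdots D_{a_w}F=\tr_{\F_{2^{\ell_n}}/\F_2}\bigl(c_n\det(a_k^{2^{e_j}})\bigr)$ is exactly the paper's alternating $\F_2$-multilinear form $B_{c_n}(a_1,\dots,a_w)=\tr(c_nM_T(a_1,\dots,a_w))$. The paper reaches that form via the multinomial theorem and Lucas' theorem, then passes from basis vectors $\alpha_i$ to arbitrary $h_i$ by multilinearity; your product-of-Frobenius-linear-forms bound and the derivative criterion do the same work more directly.

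The genuine difference is how you show this form takes the value $1$. The paper uses four steps: a rank argument gives some $h$ with $M_T(h)\neq 0$; rescaling $h\mapsto ah$ multiplies $M_T$ by $a^n$; a subfield argument gives $\mathrm{span}_{\F_2}\{a^n: a\in\F_q\}=\F_{2^{\ell_n}}$; and nondegeneracy of the trace finishes. You replace all of this by one polynomial-identity argument, which you should state precisely as follows.
\begin{itemize}
\item Put $T=\{e_1,\dots,e_w\}$ and write the trace as $x\mapsto\sum_{t=0}^{\ell_n-1}x^{2^t}$. Then the derivative is $G(a_1,\dots,a_w)=\sum_{t=0}^{\ell_n-1}c_n^{2^t}\det\bigl(a_k^{2^{e_j+t}}\bigr)$, with $e_j+t$ read mod $s$, so each variable has degree $<q$.
\item The monomial $a_1^{2^{e_1}}\cdots a_w^{2^{e_w}}$ can only arise from a term with $T+t\equiv T \pmod s$. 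By minimality of $\ell_n$ this forces $t=0$, and then the identity permutation.
\item So its coefficient is $c_n\neq 0$. Hence $G\not\equiv 0$ on $\F_q^w$, and since $G$ is $\F_2$-valued it equals $1$ somewhere.
\end{itemize}
What must be distinct are the exponent \emph{sets} $T+t$ (equivalently the elements of $C_n$). The individual exponents $2^{e_j+t}$ can coincide for different $t$ (e.g.\ $n=3$, $s\ge 3$). The one-variable linearized-polynomial remark is not needed.

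Your version is shorter, avoids the spanning lemma, and extends verbatim to Theorem~\ref{thm:clifford_level_full_polynomial}, because distinct exponents of equal weight have distinct binary supports. The paper's version avoids expanding the Moore determinant and keeps the computation inside $\F_{2^{\ell_n}}$.
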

\begin{proof}
Say $(\alpha_i)_{i=1}^s$ is a basis for $\F_{2^s}$, and expand $\gamma=\sum_{i=1}^s b_i\alpha_i$ where $b_i\in\{0,1\}$. Then, the exponent is
\begin{align}
\tr_{\F_{2^{\ell_n}}/\F_2}(c_n \gamma^n) &=\tr_{\F_{2^{\ell_n}}/\F_2}\left(c_n (b_1 \alpha_1+\cdots+b_s\alpha_s)^n\right)\nonumber\\
&= \tr_{\F_{2^{\ell_n}}/\F_2}\left(c_n \sum_{e_1+\dots+e_s=n}\binom{n}{e_1,\dots,e_s}\prod_{i=1}^s (b_i\alpha_i)^{e_i}\right),\label{eq:single_qubit_gate_expansion}
\end{align}
where we use the \href{https://en.wikipedia.org/wiki/Multinomial_theorem}{multinomial expansion formula}, and the multinomial coefficients are $\binom{n}{e_1,e_2,\dots,e_s}:=\frac{n!}{e_1! e_2!\cdots e_s!}$. This expression is raised to the power of $-1$, and so it is relevant to establish the parity of the multinomial coefficients.

\begin{lemma}
\label{lemma:lucas_theorem}
\textbf{(Corollary of Lucas' theorem).} Consider the multinomial coefficient $\binom{n}{e_1,e_2,\dots,e_m}$, where $e_i$ are positive integers and $e_1+e_2+\dots,e_m=n$. This is odd if and only if there is no carrying when adding up the binary expansions of $e_1,e_2,\dots,e_m$.
That is, considering the expansions of each $e_i$ in binary, the coefficient is odd if and only if, in each bit position of the binary expansion, there is at most one $1$ across the $e_i$'s. An immediate corollary is that, if $m >\wt(n)$, then the multinomial coefficient is even.
\end{lemma}
\begin{proof}
\href{https://en.wikipedia.org/wiki/Lucas\%27s_theorem}{Lucas' theorem} states that for non-negative integers $A,B$ and a prime $p$, the binomial coefficient satisfies the following congruence relation:
\begin{equation}
    \binom{A}{B}\equiv \prod_{i=1}^r \binom{A_i}{B_i} \mod{p}\;,
\end{equation}
where $A=A_r p^r+A_{r-1}p^{r-1}+\cdots + A_1p+A_0$ and $B=B_r p^r+B_{r-1}p^{r-1}+\cdots+B_1p+B_0$ are the base-$p$ expansions of $A,B$. This uses the convention that $\binom{A}{B}=0$ if $A<B$. It follows that $\binom{A}{B}$ is divisible by $p$ if and only if $B_i > A_i$ for some $i$. We will only use the $p=2$ case. The binomial coefficient $\binom{A}{B}$ is odd if and only if $B_i \leq A_i$ for every $i$, which is if and only if there is no carrying when $B$ and $A-B$ are summed.

Notice that we can write the multinomial coefficient as a product of binomial coefficients:
\begin{equation*}
\binom{n}{e_1,\dots,e_m}=\binom{n}{e_1}\binom{n-e_1}{e_2}\binom{n-e_1-e_2}{e_3}\cdots\binom{e_{m-1}+e_m}{e_{m-1}}\;.
\end{equation*}
Equivalently, writing the suffix sum as $S_k=e_k+e_{k+1}+\cdots+e_m$, the above formula is $\binom{n}{e_1,\dots,e_m}=\prod_{k=1}^{m-1} \binom{S_k}{e_k} = \prod_{k=1}^{m-1}\binom{S_k}{S_{k+1}}$.
The multinomial coefficient on the left is odd if and only if all the binomial coefficients on the right-hand side are odd, and $\binom{S_k}{e_k}$ is odd if and only if there is no carrying when $e_k$ and $S_{k+1}$ are summed in binary to make $S_k$. Finally, there is no carrying when $e_1, \ldots, e_m$ are summed to make $n$ if and only if there is no carrying when $e_k$ and $S_{k+1}$ are summed to make $S_k$ for all $k$.
\end{proof}
Returning to the expression in Equation~\eqref{eq:single_qubit_gate_expansion}, we note that an $s$-term product of $b_i$'s corresponds to a $\text{C}^{(s-1)}\text{Z}$ gate. With this, and the above lemma, we find that the single-qudit phase gate $|\gamma\ra\mapsto (-1)^{\tr_{\F_{2^{\ell_n}}/\F_2}(c_n \gamma^n)}|\gamma\ra$ is in the $\wt(n)$-th level of the Clifford hierarchy. The remaining challenge is to establish that the gate lies precisely in that level. To this end, we consider the terms in Equation~\eqref{eq:single_qubit_gate_expansion} with a $\wt(n)$-product of $b_i$'s. We look at the coefficient ($\in\{0,1\}$) in front of all the $\wt(n)$-term products of $b_i$'s, and show that at least one of them is one. That is, we just want to consider terms in the sum of Equation~\eqref{eq:single_qubit_gate_expansion} for which exactly $\wt(n)$ of the $e_i$ are positive, and of these we only need to consider terms for which the $e_i$ have binary expansions of weight one, i.e., they are powers of two, since the multinomial coefficient is odd in exactly these cases.

Let $T\subseteq\{0,1,\dots,s-1\}$ be the support of the binary expansion of $n$, i.e., $n=\sum_{t\in T}2^t$. Let $D=|T|=\wt(n)$, and write $T=\{t_1,\dots,t_D\}$.
Then considering some distinct set of indices $\{i_1, \ldots, i_D\} \subseteq [s]$, the coefficient of $b_{i_1}b_{i_2}\ldots b_{i_D}$ is seen to be
\begin{equation*}
B_{c_n}(\alpha_{i_1},\dots,\alpha_{i_D}):=\tr_{\F_{2^{\ell_n}}/\F_2}\left(c_n M_T(\alpha_{i_1},\dots,\alpha_{i_D})\right),\quad M_T(\alpha_{i_1},\dots,\alpha_{i_D}):=\det\begin{pmatrix}
\alpha_{i_1}^{2^{t_1}} & \alpha_{i_1}^{2^{t_2}} & \cdots & \alpha_{i_1}^{2^{t_D}}\\
\vdots & \vdots & & \vdots\\
\alpha_{i_D}^{2^{t_1}} & \alpha_{i_D}^{2^{t_2}} & \cdots & \alpha_{i_D}^{2^{t_D}}
\end{pmatrix}.
\end{equation*}
We wish to show that there is some $\{i_1, \ldots, i_D\}$ such that $B_{c_n}(\alpha_{i_1}, \ldots, \alpha_{i_D}) = 1$. Note that the trace in its definition is well-defined because
\begin{equation*}
M_T(h_1,\dots,h_D)^{2^{\ell_n}}=\det\begin{pmatrix}
h_1^{2^{t_1+\ell_n}} & \cdots & h_1^{2^{t_D+\ell_n}}\\
\vdots &  & \vdots\\
h_D^{2^{t_1+\ell_n}} & \cdots & h_D^{2^{t_D+\ell_n}}
\end{pmatrix}=M_{\{t_1+\ell_n,\dots,t_D+\ell_n\}}(h_1,\dots,h_D)=M_T(h_1,\dots,h_D).
\end{equation*}
The last step follows from $\{t_1+\ell_n,\dots,t_D+\ell_n\}=T=\{t_1,\dots,t_D\}$. This fact follows from the definition of $\ell_n$, since $\{t_1, \ldots, t_D\}$ forms the support of the binary expansion of $n$, and multiplying $n$ modulo $q-1$ is equivalent to cyclically shifting its binary expansion.
The following properties of the determinant $M_T(h_1,\dots,h_D)$ are easy to verify:
\begin{enumerate}
    \item If $\exists i\neq j$ such that $h_i=h_j$ are equal, then the determinant is zero (since two rows are equal).
    \item It is an $\F_2$-multlinear form since $(h_i+h_i')^{2^{t_i}}=(h_i)^{2^{t_i}}+(h'_i)^{2^{t_i}}$, i.e., $$M_T(h_1,\dots,h_i+h'_i,\dots,h_D)=M_T(h_1,\dots,h_i,\dots,h_D)+M_T(h_1,\dots,h'_i,\dots,h_D).$$ Since the trace function is also $\F_2$-linear, $B_{c_n}(h_1,\dots,h_D)$ is also an $\F_2$-multilinear form.
\end{enumerate}
We claim that it suffices to show the existence of $h_1,\dots,h_D\in \F_q$ such that $B_{c_n}(h_1,\dots,h_D)=1$. Indeed, expanding $h_r=\sum_{i=1}^s \lambda_{r,i}\alpha_i$ with $\lambda_{r,i}\in\F_2$, and using the second property above, we have
\begin{equation}
\label{eq:F2_expansion_of_B}
1=B_{c_n}(h_1,\dots,h_D)=\sum_{i_1,\dots,i_D} \lambda_{1,i_1}\cdots\lambda_{D,i_D} B_{c_n}(\alpha_{i_1},\dots,\alpha_{i_D})\;.
\end{equation}
There must therefore then be some $\{i_1,\dots,i_D\}\subseteq[s]$ such that $B_{c_n}(\alpha_{i_1},\dots,\alpha_{i_D})=1$; clearly $i_1,\dots,i_D$ need to be pairwise disjoint for this to happen by the first property above.

To show that there exists some $h_1, \ldots, h_D \in \mathbb{F}_q$ such that $1 = B_{c_n}(h_1, \ldots, h_D)$, we first show that there exists $h_1,\dots,h_D\in \F_q$ such that $\F_{2^{\ell_n}}\ni M_T(h_1,\dots,h_D)\neq 0$. Indeed, suppose that $M_T(h_1, \ldots, h_D) = 0$ for all $h_1, \ldots, h_D \in \mathbb{F}_q$. For $h \in \mathbb{F}_q$, let $r(h)$ be the row vector $(h^{2^{t_1}},  h^{2^{t_2}}, \ldots , h^{2^{t_D}})$, and consider the vector space $W \coloneq \text{span}_{\mathbb{F}_q}\left\{r(h):h \in \mathbb{F}_q\right\} \subseteq \mathbb{F}_q^D$. Since $M_T(h_1, \ldots, h_D) = 0$ for all $h_1, \ldots, h_D$, we have that $r(h_1), \ldots, r(h_D)$ are linearly dependent over $\mathbb{F}_q$ for all $h_1, \ldots, h_D$. This means that $\dim_{\mathbb{F}_q}W < D$, since otherwise there are vectors $r(h_1), \ldots, r(h_D)$ spanning the whole space. This means that there is some fixed non-zero vector $(a_1, \ldots, a_D) \in \mathbb{F}_q^D$ orthogonal to every vector in $W$, i.e., for which $\sum_{j=1}^Da_jh^{2^{t_j}} = 0$ for all $h \in \mathbb{F}_q$. However, no non-zero polynomial of degree $<q$ over $\mathbb{F}_q$ can vanish at all elements of $\mathbb{F}_q$, yielding a contradiction.

Now given $h_1,\dots,h_D\in \F_q$ for which $v:=M_T(h_1,\dots,h_D)\neq 0$, consider scaling them simultaneously by $a\in\F_q$. Since $\sum_{t\in T}2^t=n$, we have $M_T(ah_1,\dots,ah_D)=a^n M_T(h_1,\dots,h_D)$. We show that there exists $a$ such that $B_{c_n}(a h_1,\dots,a h_D)=\tr_{\F_{2^{\ell_n}/\F_2}}(c_n a^n v)=1$. It is clear that $a^n\in \F_{2^{\ell_n}}$ since $(a^n)^{2^{\ell_n}}=a^n$. We now use the fact that the $\F_2$-linear span of $\{a^n\;|\;a\in\F_q\}$ is all of $\F_{2^{\ell_n}}$, which we show at the end. Using this fact, the surjectivity of $\tr(\cdot)$ means that there exists $a'\in \F_{2^{\ell_n}}$ that $\tr_{\F_{2^{\ell_n}/\F_2}}(c_n a' v)=1$. Expanding $a'$ into an $\F_2$-sum of elements in $\{a^n\;|\;a\in \F_q\}$, there must be some $a^n$ such that $\tr_{\F_{2^{\ell_n}/\F_2}}(c_n a^n v)=1$ since $\tr_{\F_{2^{\ell_n}/\F_2}}(c_n a' v)=1$.

Let us conclude by showing that \begin{equation}
    U \coloneq \text{span}_{\mathbb{F}_2}\{a^n\;|\;a \in \mathbb{F}_q\} = \mathbb{F}_{2^{\ell_n}}.
\end{equation}
First, we know that the left-hand side is contained in the right-hand side, and one readily checks that $U$ forms a field. $U$ is a subfields of $\mathbb{F}_{2^s}$, and must therefore be isomorphic to $\mathbb{F}_{2^m}$ for some $m$. Moreover, since $U$ is a subfield of $\mathbb{F}_{2^{\ell_n}}$, we must have $m \leq \ell_n$. We can consider some primitive element $p$ of $\mathbb{F}_q$, that is, some generator of the (cyclic) multiplicative group $\mathbb{F}_q^*$, and note that $p^n \in U$. This means that $(p^n)^{2^m} = p^n$, which implies that $2^mn \equiv n \mod q-1$. By the minimality of $\ell_n$ in its definition, and since $m \leq \ell_n$, we have $m = \ell_n$, as required.

\end{proof}

\begin{remark}
\label{remark:qudit_monomial_gate}
Usually we would just define phase gates via $|\gamma\ra\mapsto (-1)^{\tr(\beta\gamma^n)}|\gamma\ra$, where the trace is from $\F_q=\F_{2^s}$ down to $\F_2$. In this case, to make use of Thm.~\ref{thm:real_phase_monomial_level}, one first uses the following identity, which can be proved by plugging in the definition of trace, cf. Eq.~\eqref{eq:def_trace}
\begin{equation}
\tr_{\F_{2^s}/\F_2}(\beta \gamma^n)=\tr_{\F_{2^{\ell_n}}/\F_2}(\tr_{\F_{2^s}/\F_{2^{\ell_n}}}(\beta)\gamma^n),
\end{equation}
and then we may use $c_n=\tr_{\F_q/\F_{2^{\ell_n}}}(\beta)$ in the theorem.
\end{remark}

\begin{theorem}\label{thm:clifford_level_full_polynomial}
The single-qudit real-phase gate $|\gamma\ra\mapsto (-1)^{f(\gamma)}$, where $f(\gamma)=c_0+\sum_{i\in\Gamma}\tr_{\F_{2^{\ell_i}}/\F_2}(c_i \gamma^i)$, lies in exactly the $\max\limits_{i\in\Gamma,\; c_i\neq 0}\wt(i)$-th level of the Clifford hierarchy.
\end{theorem}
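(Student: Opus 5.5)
The plan is to reduce to the single-monomial case of Theorem~\ref{thm:real_phase_monomial_level}, working throughout in the qubit picture. Fix a basis $(\alpha_i)_{i=1}^s$ and write $\gamma=\sum_j b_j\alpha_j$. Then $f(\gamma)$, as a function of $(b_1,\dots,b_s)\in\F_2^s$, is a Boolean function with a unique multilinear polynomial representation (its algebraic normal form) over $\F_2$. The gate $(-1)^{f}$ is a product of multi-controlled-$Z$ gates. I will use the standard fact behind~\cite{cui2017diagonal} for real-phase diagonal qubit gates: the gate lies in exactly level $r$, where $r$ is the degree of this ANF. The upper bound follows because a $\text{C}^{k-1}\text{Z}$ gate lies in level $k$ and the diagonal gates of a given level form a group. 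For the lower bound I will argue as follows. Conjugating a Pauli $X_j$ by $(-1)^{f}$ gives $X_j(-1)^{f(b)+f(b+e_j)}$, and the derivative $f(b)+f(b+e_j)$ has degree $\deg f-1$ for some $j$. Induction on the degree then says a degree-$r$ function is in level $r$ and in no lower one. It therefore suffices to show
\[
\deg_{\mathrm{ANF}} f \;=\; \max_{i\in\Gamma,\,c_i\neq0}\wt(i).
\]
The constant $c_0$ only contributes a global phase, so it can be ignored.

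\textbf{Upper bound.} Each term $\tr_{\F_{2^{\ell_i}}/\F_2}(c_i\gamma^i)$ expands, as in Eq.~\eqref{eq:single_qubit_gate_expansion} together with Lemma~\ref{lemma:lucas_theorem}, into monomials in the $b_j$ of degree at most $\wt(i)$. The sum therefore has degree at most $w:=\max\wt(i)$.

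\textbf{Lower bound.} This is the main obstacle, since degree-$w$ monomials coming from different terms might cancel. I will isolate the top-degree part. Only the cosets $i\in\Gamma$ with $\wt(i)=w$ and $c_i\ne0$ contribute degree-$w$ monomials. By the proof of Theorem~\ref{thm:real_phase_monomial_level}, the degree-$w$ homogeneous part of the ANF equals the $\F_2$-multilinear form
\[
B(h_1,\dots,h_w)=\sum_{i}\tr_{\F_{2^{\ell_i}}/\F_2}\big(c_i M_{T_i}(h_1,\dots,h_w)\big),
\]
evaluated on basis vectors. Here $T_i$ is the binary support of $i$. By Eq.~\eqref{eq:F2_expansion_of_B}, it suffices to find $h_1,\dots,h_w\in\F_q$ with $B(h_1,\dots,h_w)\neq0$.

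To find such $h_1,\dots,h_w$, I will unfold the traces. Then $B$ becomes
\[
\sum_i\sum_{k=0}^{\ell_i-1} c_i^{2^k}M_{T_i+k}(h_1,\dots,h_w),
\]
with all shifts taken mod $s$. The sets $T_i+k$ run over distinct $w$-subsets $S$ of $\mathbb{Z}/s$, because distinct cosets give distinct rotation orbits. Expanding each determinant $M_S$ gives
\[
\sum_{\pi}\prod_r h_r^{2^{S_{\pi(r)}}},
\]
a polynomial in $h_1,\dots,h_w$ whose individual degrees are below $q$. Distinct pairs $(S,\pi)$ give distinct monomials. Hence $B$, viewed as a polynomial, has nonzero coefficients $c_i^{2^k}$ on distinct reduced monomials, so it is a nonzero reduced polynomial. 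A nonzero polynomial with individual degrees below $q$ does not vanish identically on $\F_q^w$, which is the multivariate version of the fact already used in the earlier proof. This yields the required $h_1,\dots,h_w$, and the theorem follows.
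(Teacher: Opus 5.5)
Your proof is correct, but the key non-vanishing step goes differently from the paper's.

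\textbf{The paper's route.} The paper never expands $B_f$ as a polynomial in several variables. It fixes one $m\in\Gamma_D$ and takes $h_1,\dots,h_D$ with $M_{T(m)}(h_1,\dots,h_D)\neq 0$. This existence comes from the proof of Theorem~\ref{thm:real_phase_monomial_level}, via the linear-algebra argument on the span of the vectors $r(h)$. It then scales by $a$. The homogeneity $M_{T(i)}(ah_1,\dots,ah_D)=a^{i}M_{T(i)}(h_1,\dots,h_D)$ turns $B_f(ah_1,\dots,ah_D)$ into the univariate polynomial $\sum_{i\in\Lambda_D}c_iM_{T(i)}(h)\,a^{i}$. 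Its exponents are distinct and below $q-1$, because the cosets are disjoint, and its coefficient of $a^{m}$ is nonzero. Univariate non-vanishing then gives a suitable $a$.

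\textbf{Your route.} You show directly that $B$ is a nonzero reduced polynomial in $h_1,\dots,h_w$. The pairs $(S,\pi)$, with $S$ ranging over the rotations $T_i+k$, index distinct monomials $\prod_r h_r^{2^{S_{\pi(r)}}}$, each carrying a nonzero coefficient $c_i^{2^k}$. You then apply the multivariate non-vanishing fact.

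\textbf{What each buys.}
\begin{itemize}
\item Your argument is more self-contained. The same monomial-distinctness observation already shows $M_T\not\equiv 0$. Because $B$ is $\F_2$-valued, you also never need the step in the monomial theorem showing that the $\F_2$-span of $\{a^n\}$ is $\F_{2^{\ell_n}}$. In effect you reprove Theorem~\ref{thm:real_phase_monomial_level} and this theorem in one stroke.
\item It is worth stating explicitly why $B(h)\in\F_2$, so that nonzero means $1$. For example, the unfolded sum is invariant under squaring, since $c_i^{2^{\ell_i}}=c_i$ and $T_i+\ell_i=T_i$.
\item The paper's route is shorter once the monomial case is available and needs only the one-variable fact. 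It separates cosets more coarsely, by the exponent of $a$ rather than by full monomials, which suffices because only one nonzero coefficient is required.
\item Your derivative-based sketch of the correspondence between ANF degree and Clifford level, using $f(b)+f(b+e_j)$, makes explicit a fact the paper uses without proof.
\end{itemize}
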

\begin{proof}
The proof is similar to that of Theorem~\ref{thm:real_phase_monomial_level}.

Let $D=\max\limits_{i\in\Gamma,\; c_i\neq 0}\wt(i)$, and let $\Gamma_D=\{i\in\Gamma\;|\;c_i\neq 0,\;\wt(i)=D\}$. For $i\in\{0,1,\dots,s-1\}$, we use $T(i)$ to denote the support of the binary expansion of $i$.
For $\{i_1,\dots,i_D\}\subseteq [s]$ pairwise distinct, the coefficient of $b_{i_1}b_{i_2}\cdots b_{i_D}$ is
\begin{equation}
\label{eq:def_B_f}
B_f(\alpha_{i_1},\dots,\alpha_{i_D})=\sum_{m\in\Gamma_D}\tr_{\F_{2^{\ell_m}}/\F_2}(c_m M_{T(m)}(\alpha_{i_1},\dots,\alpha_{i_D}))\;.
\end{equation}
The strategy is again to show that there exists $h_1,\dots,h_D\in\F_q$ such that $B_f(h_1,\dots,h_D)=1$, and then by the same reasoning as in Eq.~\eqref{eq:F2_expansion_of_B}, there exist $i_1,\dots,i_D$ pairwise distinct such that $B_f(\alpha_{i_1},\dots,\alpha_{i_D})=1$.

Pick an arbitrary $m\in \Gamma_D$ and choose $h_1,\dots,h_D\in\F_q$ such that $B_{c_m}(h_1,\dots,h_D)=1$. Fix this choice of $h_1,\dots,h_D$ and further expand Eq.~\eqref{eq:def_B_f} using the definition of trace,
\begin{align*}
\{0,1\}\ni B_f(h_1,\dots,h_D)&=\sum_{m\in\Gamma_D}\sum_{j=0}^{\ell_m-1}\left(c_m M_{T(m)}(h_1,\dots,h_D)\right)^{2^j}\\
&=\sum_{i\in \Lambda_D} c_i M_{T(i)}(h_1,\dots,h_D)\;,
\end{align*}
where $\Lambda_D=\bigsqcup\limits_{m\in \Gamma_D}C_m$, we use $M_{T(m)}(h_1,\dots,h_D)^{2^j}=M_{T(2^j m)}(h_1,\dots,h_D)$, and we define $c_{2^j m}=c_m^{2^j}$. We have $$B_f(a h_1,\dots, ah_D)=\sum_{i\in\Lambda_D}c_i M_{T(i)}(h_1,\dots,h_D)a^i.$$ 
Since the cyclotomic cosets are disjoint, the coefficient of $a^i$ only gets a contribution from $C_i$. Since $c_i \neq 0$ for all $i \in \Lambda_D$ and $M_{t(m)}(h_1, \ldots, h_D) \neq 0$, this is a non-zero polynomial in $a$. No non-zero polynomial over $\mathbb{F}_q$ of degrees at most $q-1$ can vanish on all $q$ elements of $\mathbb{F}_q$. Therefore, there exists $a \in \mathbb{F}_q$ such that $B_f(ah_1, \ldots, ah_D) \neq 0$, meaning that $B_f(ah_1, \ldots, ah_D) = 0$.

\end{proof}

\begin{remark}
The above results provide a classification of the diagonal gates in the Galois qudit Clifford hierarchy with real phases, that is, those with phases $\pm 1$. While the full diagonal Clifford hierarchy is classified for qubits, it is not clear how to extend the above proof to general single-Galois qudit phase gates,\footnote{Again, the diagonal Galois qudit Clifford hierarchy is already classified in the sense that the diagonal qubit Clifford hierarchy is classified, and the two are isomorphic. The challenge here is to classify the diagonal Clifford hierarchy for Galois qudits in a way that is native to the Galois qudits.} for example, those with a phase
$$(-1)^{f_1(\gamma)}\cdot i^{f_2(\gamma)}\cdot (e^{i\pi/4})^{f_3(\gamma)}\cdot \cdots,$$
where $f_1,f_2,f_3,\dots$ are $\mathbb{F}_2$-valued functions.
One might naively think that we can classify each $f_i(\gamma)$ individually and combine the results (adding $t-1$ to the classification result of $f_t(\gamma)$ to account for the phase $e^{2\pi i/2^t}$ and taking the max). However, it is difficult to even transfer the above classification to the case of phases $i^{f_2(\gamma)}$, and this is because $\F_4$ is distinct from $\Z_4$.
Think of an $\F_{16}$ qudit expanded as $\gamma=\sum_{i=1}^s b_i\alpha_i$ for $b_i\in\{0,1\}$. Consider the phase gate $|\gamma\ra\mapsto i^{b_1b_2\oplus b_3b_4}|\gamma\ra$, where $\oplus$ denotes modulo $2$ addition. This gate is $\CS_{12}\CS_{34}\text{CCCZ}_{1234}$ on the four qubits, and therefore is in exactly the fourth level of the Clifford hierarchy. However, 
$$b_1b_2\oplus b_3b_4=\tr(\gamma\alpha_1)\tr(\gamma\alpha_2)+\tr(\gamma\alpha_3)\tr(\gamma\alpha_4)$$
only involves terms of degree at most $2$ in $\gamma$. Even with a $+1$ contribution accounting for base $i$, it will be wrongly classified into the third level.

Having said this, there might be a way to constrain each $f_i(\gamma)$ into a certain form, so that one can classify each term separately and combine the results, but the form of the above proof for real phases will be lost.

\end{remark}

\subsubsection{\texorpdfstring{$U_7$ gate}{U7 gate}}
\label{sec:U7_gate}

A single-qudit non-Clifford introduced in~\cite{wills2024} is the so-called $U_7$ gate.
\begin{equation}
U_7: |\gamma\ra\mapsto (-1)^{\tr(\gamma^7)}|\gamma\ra\;.
\end{equation}
We will also consider the $U_7^\beta$ generalization of it~\cite{he2025quantum}:
\begin{equation}
U_7^\beta: |\gamma\ra\mapsto (-1)^{\tr(\beta\gamma^7)}|\gamma\ra\;.
\end{equation}
As per Remark~\ref{remark:qudit_monomial_gate}, if $\tr_{\F_{2^s}/\F_{2^{\ell_7}}}(\beta)=0$ then this gate is effectively trivial, otherwise $U_7^\beta$ lies in the $\wt(7)=\wt(111_2)=3$-rd level of the Clifford hierarchy. Recall from Eq.~\eqref{eq:cyclotomic_coset} that 
\begin{equation}
\ell_7=\min\{j\ge 1\;|\; (2^j-1)\cdot 7\equiv 0\!\!\mod{q-1}\}\;,\; q=2^s\;.
\end{equation}
Besides the level of $U_7^\beta$, we also want to consider $U_7^\beta$ for different $\beta$, and their Clifford equivalence.

If $s$ is not divisible by three, then $\gcd(2^s-1,2^3-1)=1$.
Thus, in this case, any $\beta\in\F_{2^s}^\times$ can be written as $\eta^7$ for some $\eta\in\F_{2^s}$. So $U_7^\beta$ is Clifford equivalent to $U_7$ because $U_7^\beta:|\gamma\ra\to |\gamma\eta\ra\xrightarrow{U_7}|\gamma^7\eta^7\ra$ where the first step is a change of basis and is Clifford.
Therefore, if $s$ is not divisible by $3$, all $U_7^\beta$ are Clifford equivalent to each other as long as $\beta\neq 0$.

If $s$ is divisible by three, let us first determine $\ell_7$.
It is clear that for $s=3$ we have $\ell_7=1$, and thus $U_7^\beta$ becomes trivial if $\tr(\beta)=0$. Also, $U_7^\beta|\gamma\ra=(-1)^{\tr(\beta)\gamma^7}|\gamma\ra$, and thus all $U_7^\beta$ where $\tr(\beta)=1$ are the same gate, namely a $\CCZ$ gate supported on the three qubits constituting the qudit.

For $3\mid s$ and $s>3$, let us first show that $\ell_7=s$. Of course, $(2^s-1)\cdot 7 \equiv 0 \mod q-1$. On the other hand, suppose we have $(2^j-1)\cdot 7\equiv 0\!\!\mod{2^s-1}$ for $1\le j < s$. Write $M=(2^s-1)/7$, then $M\mid 2^s-1$ and $M\mid 2^j-1$ and thus $M\mid \gcd(2^s-1,\;2^j-1)=2^{\gcd(s,j)}-1$. Since $\gcd(s,j)\le s/2$ (given that $j<s$), and $M=\frac{2^s-1}{7}>2^{s/2}-1$ (this may be checked for $3 \mid s$ and $s > 3$), we have a contradiction. 

Thus, whenever $3 \mid s$ and $s > 3$, $\ell_7=s$, $U_7^\beta$ is a non-trivial third-level gate for every $\beta \neq 0$. Let $\alpha$ be a primitive root of $\F_{2^s}$. First notice that if $\beta_1=\beta_2\cdot \alpha^{7i}$ for some $i\in\Z_{\ge 0}$, then $U_7^{\beta_1}$ and $U_7^{\beta_2}$ are Clifford equivalent. This is because $\tr(\beta_1\gamma^7)=\tr(\beta_2(\alpha^i\gamma)^7)$ and according to Remark~\ref{remark:CH_arithmetic}, multiplying by the known constant $\alpha^i$ is Clifford. We therefore focus on the multiplicative group $\mathbb{F}_{2^s}^\times = \{1, \alpha, \alpha^2, \alpha^3, \ldots, \alpha^{q-2}\}$, where we emphasise $7 \mid q-1$. Different $\beta$'s that are equal under multiplication by some power of $\alpha^7$ lead to Cliford equivalent $U_7^\beta$, and so one only needs to focus on $U_7^\beta$ with $\beta \in \{1, \alpha, \alpha^2, \alpha^3, \ldots, \alpha^6\}$. Moreover, again from Remark~\ref{remark:CH_arithmetic}, we know that the Frobenius automorphism is Clifford. This further tells us that the three $U_7^\beta$ with $\beta \in \{\alpha, \alpha^2, \alpha^4\}$ are all Clifford equivalent, and similarly the $U_7^\beta$ with $\beta \in \{\alpha^3, \alpha^5, \alpha^6\}$ are Clifford equivalent. We conclude that for $3 \mid s$ and $s > 3$, there are at most $3$ Clifford inequivalent $U_7^\beta$.

The first non-trivial case of this is the $\F_{64}$ case shown in Fig.~\ref{fig:U7_expansion}.
For the three cases in $\F_{64}$, we perform minimum-weight decoding of a length $63$ Reed-Muller code to determine their minimal ancilla-free $T$ count~\cite{T_count_RM} and found them to be unequal, and therefore the three cases in $\F_{64}$ must be Clifford inequivalent to each other.
It is worth mentioning that, through brute-force basis change\footnote{For small fields $\F_{2^s}$, we may attempt to establish Clifford equivalence of two $\CCZ$ circuits by randomly sampling a basis transform described by an $\GL(s,\F_2)$ matrix $(A_{ij})$. Each variable $x_i$ occurring in the phase polynomial transforms as $x_i\mapsto\sum_{i}A_{ij}x_j$. We can then obtain the transformed phase polynomial (using $x_i^2=x_i$, $2$ times anything equals zero), removing any degree $\le 2$ terms (since they correspond to CZ and Z gates). We obtain Clifford equivalence if what is left over is the target phase polynomial. The procedure is illustrated in Fig.~\ref{fig:Clifford_equiv}(b). }, we observe that the case (\textbf{d3}) in Fig.~\ref{fig:U7_expansion} is Clifford equivalent to the $\F_4$-qudit CCZ gate plotted in Fig.~\ref{fig:F4_qudit_CCZ}(a).
\begin{figure}[ht]
\centering
\includegraphics[width=0.8\linewidth]{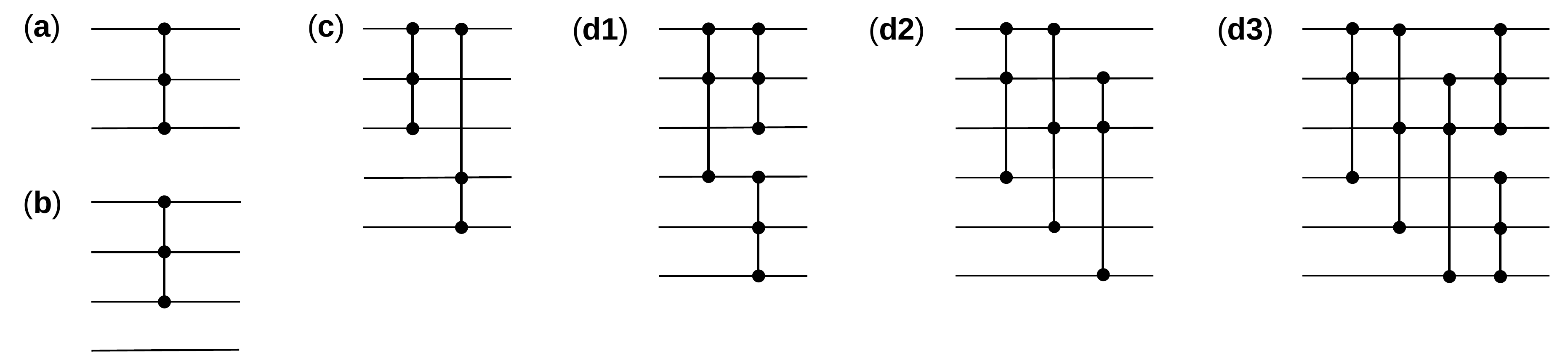}
\caption{$U_7^\beta$ gate expanded in qubit CCZ gates up to Clifford equivalence. (\textbf{a}) $U_7^\beta$ in $\F_8$ with $\beta$ satisfying $\tr(\beta)\neq 0$. If $\tr(\beta) = 0$, $U_7^\beta$ is the identity. (\textbf{b}) In $\F_{16}$, all $U_7^\beta$ for which $\beta \neq 0$ are Clifford equivalent to a single $\CCZ$. (\textbf{c}) In $\F_{32}$, all $U_7^\beta$ for which $\beta \neq 0$ are Clifford equivalent to a $\TOF\#$ gate. (\textbf{d1}-\textbf{d3}) In $\F_{64}$; (\textbf{d1}) For $\beta=\alpha^{7i}$; (\textbf{d2}) For $\beta=\alpha^{1+7i}$ or $\alpha^{2+7i}$ or $\alpha^{4+7i}$; (\textbf{d3}) For $\beta = \alpha^{3+7i}$ or $\alpha^{5+7i}$ or $\alpha^{6+7i}$. The three are Clifford inequivalent because their minimal ancilla-free $T$ count~\cite{campbell2017unified} is not equal. The $T$ count for (\textbf{d1}-\textbf{d3}) is $13,15,17$, respectively.}
\label{fig:U7_expansion}
\end{figure}

As a final remark, note that one can use a qudit-CCZ gate in $\F_{2^{3s}}$ to implement the $U_7$ gate in $\F_{2^s}$; copy $|\gamma\ra$ to the other two all-zero registers using CNOT gates and apply the Frobenius transform to raise the two copies to $|\gamma^2\ra$ and $|\gamma^4\ra$. Apply qudit-CCZ across the three registers, and finally undo the CNOT gates. Since the ancillae need to start in the zero state, this does not tell us that $U_7\otimes I\otimes I$ and qudit-CCZ are Clifford equivalent.

\section{Distillation: transversality conditions and protocol constructions}
\label{sec:protocols}

\subsection{Generalized triorthogonality on qubit codes and \texorpdfstring{$\TOF\#$}{Toffoli-Sharp} protocols}
\label{sec:generalized_triorthogonality}

The generalized triorthogonality framework~\cite[App.~D]{campbell2017unified} gives a flexible way to construct distillation protocols based on binary CSS codes. In particular, acting with $T$ gates transversally on the physical qubits, we can establish whether the codespace is preserved (possibly after Clifford corrections), and if so, what the induced logical action is.
Canonically, the binary CSS code is written into a matrix $G=\begin{pmatrix}G_1\\\hline G_0\end{pmatrix}\in\F_2^{m\times n}$ matrix, where rows of $G_1\in\F_2^{k\times n}$ and $G_0\in\F_2^{(m-k)\times n}$ describe the $X$-logicals and $X$ stabilizers of the $[[n,k,d]]$ code. From $G$, the procedure to check the logical action of the transversal $T$ gate on the physical qubits is efficient, only requiring checks on the weights of overlaps of various rows.

Let us now explain how this works, meanwhile slightly extending the framework with a metric $\bGamma=[\Gamma_1,\dots,\Gamma_n]\in\Z_8^n$. The physical meaning of the metric is that, instead of applying transversal $T$, we apply $T^{\Gamma_1}\otimes \cdots\otimes T^{\Gamma_n}$ to the physical qubits.

The binary matrix $G\in\F_2^{m\times n}$ is a generator matrix for some code. Its rows are denoted $\bg_1,\dots,\bg_k$, $\bg_{k+1},\dots,\bg_m$.
We associate binary variables $z_1,\dots,z_m$ with each row from $G$; codewords of the code for which $G$ is a generator matrix may be written $\sum_{i=1}^m z_i\bg_i$.

Applying the physical gate $T^{\Gamma_1}\otimes \cdots\otimes T^{\Gamma_n}$, the following transformation occurs:
\begin{equation}
|{\scriptstyle\sum}_{i=1}^m z_i\bg_i\ra \mapsto \exp\left[\frac{i\pi}{4}p(z_1,z_2,\cdots,z_m)\right]|{\scriptstyle\sum}_{i=1}^m z_i\bg_i\ra.
\end{equation}
The phase gained, as described by the phase polynomial $p(z_1,\cdots,z_m)\mod 8$ can be calculated using the general fact that \cite[Eq.~D7]{campbell2017unified} 
\begin{equation}
\label{eq:xor_to_sum}
\bigoplus_{i=1}^m a_i=\sum_{i=1}^m a_i - 2\sum_{1\le i<j\le m}a_ia_j + 4\sum_{1\le i<j<l\le m} a_ia_ja_k - 8\sum\cdots,
\end{equation}
where $\oplus$ denotes addition modulo $2$, and $a_i \in \{0,1\}$. The expression is up to $m$ terms, but the coefficient in front of the $m'$-th term is $2^{m'-1}$ ($m'<m$), and thus for us to determine the phase polynomial mod $8$, we can ignore all intersections involving more than three terms. We therefore have
\begin{align}
&p(z_1,\dots,z_m)
= \sum_{j=1}^n \Gamma_j \left(\bigoplus_{i=1}^m z_i g_{ij}\right)\nonumber\\
\equiv &\; \sum_{j=1}^n\Gamma_j\left[\sum_{i=1}^m z_i g_{ij} - 2\sum_{1\le i<l\le m}z_i z_l\; g_{ij}g_{lj} + 4\sum_{1\le i<l<r\le m}z_iz_lz_r\;g_{ij}g_{lj}g_{rj}\right]\mod 8\nonumber\\
\equiv &\; \sum_{i=1}^m \wt_{\bGamma} (\bg_i)z_i-2\sum_{1\le i<l\le m}\wt_{\bGamma}(\bg_i\wedge \bg_l)z_i z_l + 4\sum_{1\le i<l<r\le m}\wt_{\bGamma}(\bg_i\wedge \bg_l\wedge \bg_r)z_i z_l z_r\mod{8}\label{eq:phase_poly_expansion},
\end{align}
where $\bg_i\wedge \bg_j:=(g_{i1}g_{j1},\dots,g_{in}g_{jn})$ denotes entry-wise AND of the two binary vectors, and $\wt_{\bGamma}(\bu):=\sum_{j=1}^n \Gamma_j u_j$ denotes a Hamming weight in the metric $\bGamma$.

For a moment, think of the $X$ stabilizers as additional $X$ logicals fixed to the $|+\ra$ state and treat these $m$ logicals as $m$ wires. Then Eq.~\ref{eq:phase_poly_expansion} says that the logical circuit created on these $m$ wires is
\begin{enumerate}
\item $T^{\wt_{\bGamma}(\bg_i)}$ acting on logical qubit $i$, which becomes trivial if $\wt_{\bGamma}(\bg_i)\equiv0\mod 8$, and becomes Clifford if $\wt_{\bGamma}(\bg_i)\equiv0\mod 2$.
\item $\CS^{\wt_{\bGamma}(\bg_i\wedge\bg_l)}$ acting on logical qubits $i$ and $l$, which becomes trivial if $\wt_{\bGamma}(\bg_i\wedge\bg_l) \equiv 0\mod 4$, and Clifford if $\wt_{\bGamma}(\bg_i\wedge\bg_l) \equiv 0 \mod 2$.
\item $\CCZ^{\wt_{\bGamma}(\bg_i\wedge \bg_l\wedge \bg_r)}$ acting on logical qubits $i,l,r$, which becomes trivial if $\wt_{\bGamma}(\bg_i\wedge\bg_l) \equiv 0 \mod 2$.
\end{enumerate}
Therefore, the physical gate $T^{\Gamma_1}\otimes \cdots\otimes T^{\Gamma_n}$ is a valid logical gate if the non-trivial part of this circuit is only on the first $k$ wires (authentic logical qubits), which is equivalent to saying that $p(z_1,\dots,z_m)\mod 8$ depends only on $z_1,\dots,z_k$, but not $z_{k+1},\dots,z_m$ (the variables associated to the stabilizer rows).

However, as pointed out in~\cite{campbell2017unified}, when this binary code is used as a distillation protocol, it is enough to request only that the \emph{non-Clifford part} of the logical circuit lies on the first $k$ wires. This is because we can control the encoding circuit of this code. The encoding into logical all-plus state can be executed by, initializing wires $1$ to $m$ in $|+\ra$ and other $n-m$ wires in $|0\ra$ and then apply some $\CNOT$ circuit. The Clifford corrections (even in the case of a $\CZ$ gate between wire $1$ and $k+1$) can be done just before the $\CNOT$ circuit (but after the initialization).

We follow the same logic of generalized triorthogonality in deriving all our $\F_{2^s}$-code distillation protocols. Associate a variable with each row of the generator matrix, calculate the phase polynomials under some physical magic gates, and enforce that the \emph{non-Clifford part} of the phase polynomial be independent of the variables associated to the stabilizer rows.

We want to remark that it is straightforward to extend this framework for calculating the logical effect of applying $Z^{1/2^t}$ gates (or its power) to the physical qubits. One just needs to consider the phase polynomial modulo $2^{t+1}$, and thus change the cutoff in the expansion~\cite[Prop.~A.1]{webster2023transversal}
\begin{equation}
\label{eq:oplus_to_AND}
\bigoplus_{i=1}^m a_i=\sum_{s=1}^m 2^{s-1} \sum_{\substack{S\subseteq[m]\\|S|=s}}\prod_{j\in S}a_j.
\end{equation}
One immediately sees that only overlaps between every $\le (t+1)$ rows from $G$ contribute to the calculation.

\begin{construction}[$64T\to 2\TOF\#$ at $d=4$]\label{con:64T_to_2TOF} For this construction, we recall the preliminary material on Reed-Muller codes, see Section~\ref{sec:ring_field}, where we wish to consider binary Reed-Muller codes in the present case. In \cite{Haah2018}, several $T\to CCZ$ protocols based on Reed-Muller codes are developed. Now, we consider a length-$64$ Reed-Muller code, and letting $X$ logicals correspond to the $15$ degree-$2$ monomials $\{x_ix_j\}_{1\leq i < j \leq 6}$, and letting $X$ stabilizers be the degree $<2$ monomials, i.e., $(x_i)_{1 \leq i\leq 6}$, and the constant function $1$. This code has a $Z$-distance of $4$, and an $X$-distance of $16$.

To calculate the effect of the physical transversal $T$ gate on this code, we may consider the generalized triorthogonality framework, as well as the information about the overlaps and weights of Reed-Muller codewords given in the preliminaries. For this code, the logical action of the transversal gate turns out to be a $\CCZ$ circuit, with $\CCZ$ applied to every triplet of $X$-logicals (represented by monomials) multiplying to the full monomial $x_1x_2x_3x_4x_5x_6$, whose evaluation vector has Hamming weight one.\footnote{This logical \emph{hypergraph} magic state, as shown in Figure~\ref{fig:RM_2TOF_sharp}, has also been derived using other methods~\cite{optimality_CSS_T,barg2024rm}.} We can simplify this hypergraph state into $2\TOF\#$ gates, see Figure~\ref{fig:RM_2TOF_sharp}. 
One can set the logical qubits $3,4,7,8,12$ (wires associated to  $x_1x_4$, $x_1x_5$, $x_2x_4$, $x_2x_5$, $x_3x_6$) to $|0\ra$ prior to applying the physical $T$ gates.
$\CCZ$ gates that involve these qubits become trivial, and what is left is two decoupled $\TOF\#$ gates on logicals $\{1,10,11,14,15\}$ and $\{2,5,6,9,13\}$.
\end{construction}

\begin{figure}[htb]
    \centering
    \includegraphics[width=0.8\linewidth]{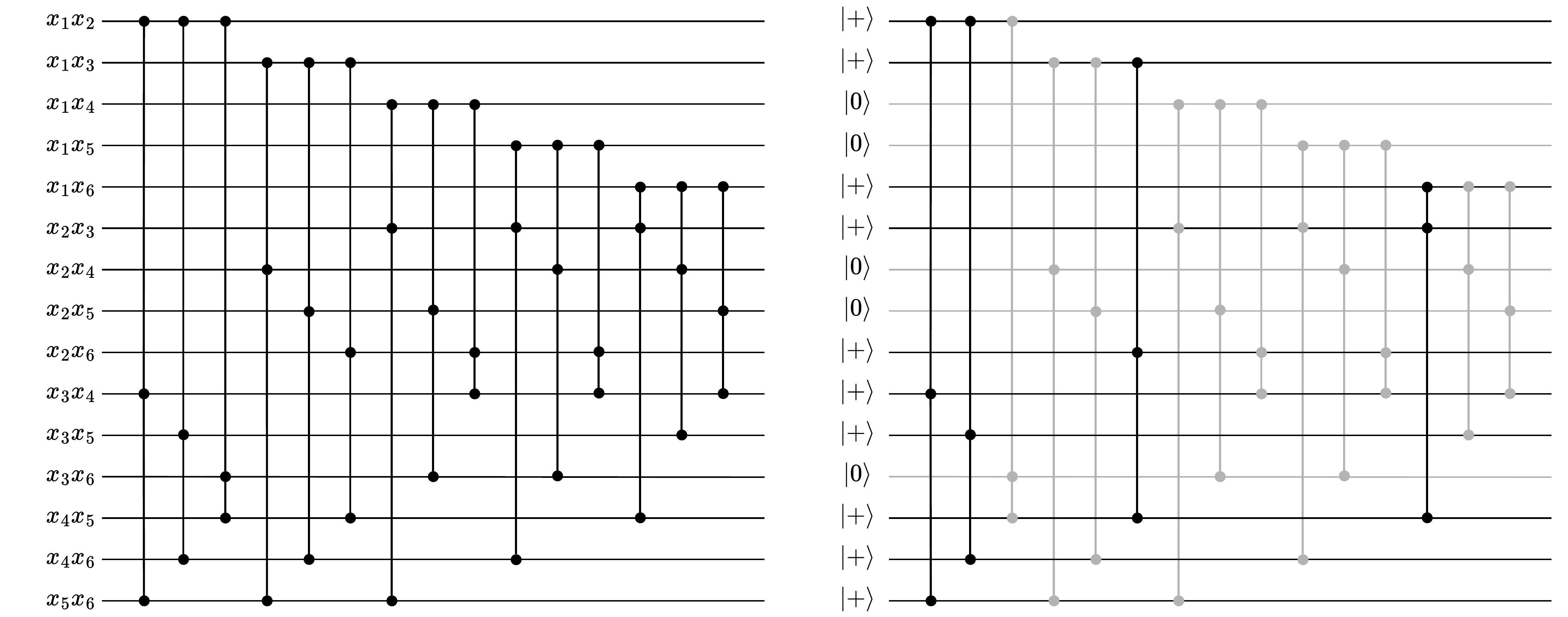}
    \caption{Left: the resulting logical circuit when applying transversal $T$ to the $[[64,15,4]]$ Reed-Muller code. The $15$ logicals are represented by the $\binom{6}{2}$ degree-two monomials in six variables. There is a $\CCZ$ for every three monomials that multiply to $x_1x_2x_3x_4x_5x_6$. Right: by initializing some of the logicals to the $|0\ra$ state, two disjoint $\TOF\#$ states can be obtained. The error suppression of this protocol is $10256p^4$.}
    \label{fig:RM_2TOF_sharp}
\end{figure}

\subsection{\texorpdfstring{Qudit $\CCZ$ gate protocols}{Qudit CCZ protocols}}\label{subsec:qudit-ccz-protocols}

Beyond $\mathbb{F}_2$, we may use matrices over higher fields, $G \in \mathbb{F}_q^{m \times n}$, to encode qudit quantum codes, where we only consider binary extension fields (the case $q = 2^s$) in our work, so that we can ultimately obtain qubit magic state distillation protocols. The first $k$ rows serve as logical $X$ operators, and the latter $m-k$ rows are $X$ stabilizer generators; see Definition~\ref{def:X_gen_matrix}. Generalizing the case for qubits, we denote the two sub-matrices given by the former $k$ rows and the latter $m-k$ rows as $G_1$ and $G_0$, and the $\mathbb{F}_q$-row span of $G_1,G_0,G$ as $\cG_1,\cG_0,\cG$ respectively.

\cite{nguyen2024,golowich2025asymptotically,nguyen_pattison} showed that if a qudit code encoded by a matrix $G$ satisfies the three-orthogonal property
\begin{equation}
\label{eq:three_orthogonal}
\sum_{i=1}^n (g_{a,i})(g_{b,i})(g_{c,i}) = \begin{cases}1 & \text{if }1\le a=b=c\le k\\0&\text{otherwise,}\end{cases}
\end{equation}
then applying transversal qudit CCZ gates to three patches of this qudit code leads to qudit CCZ gates on the logical level. A generalized version yielding a transversal qudit $\text{C}^{\ell-1}\text{Z}$ gate is found in Eq.~\eqref{eq:l_ortho_wrt_metric}~\cite{he2025asymptotically}; qudit $\CCZ$ is the special case of $\ell = 3$. 

We defer a discussion of our protocols distilling qudit CCZ states to Sec.~\ref{sec:AG_codes}, after we have introduced the necessary notions on algebraic geometry codes. The parameters of the corresponding finite-length constructions are summarized in Table~\ref{table:maximal_curves_CCZ_only}. Note that those protocols take qudit CCZ states as input and output, making them seem unwieldy, since synthesizing one input qudit CCZ state over $\mathbb{F}_{2^s}$ naively requires $O(s^3)$ qubit CCZ states. However, using the connection between CCZ states and multiplication in their corresponding field, it is in fact possible to synthesize such a qudit CCZ state with only $O(s)$ qubit CCZ states and Clifford operations; see the discussion in App.~\ref{sec:binary_field_multiplication}, as well as~\cite{ballet2011tensor}.

\subsection{\texorpdfstring{$U_7$ gate protocols}{U7 gate}}
\label{sec:U7_protocols}\label{sec:CCZ_distillation_protocols}
The $U_7^\beta$ gate, discussed in Section~\ref{sec:U7_gate}, is a non-Clifford gate~\cite{wills2024}. Moreover, they proved that if a qudit quantum code satisfies the following \emph{twisted three-orthogonal} criterion
\begin{equation}
\label{eq:U7_orthogonal}
\sum_{i=1}^n (g_{a,i})^4 (g_{b,i})^2 (g_{c,i}) = \begin{cases}1 & \text{if }1\le a=b=c\le k\\0&\text{otherwise,}\end{cases}
\end{equation}
then $U_7$ ($\beta=1$) is a transversal gate for this code. We consider here a more generalized framework that is useful for deriving protocols. If the code is twisted three-orthogonal with respect to $(\mathbf{\Gamma},\boldsymbol{\tau})$, that is,
\begin{equation}
\label{eq:U7_ortho_wrt_metric}
\sum_{i=1}^n \Gamma_i(g_{a,i})^4(g_{b,i})^2 (g_{c,i})=\begin{cases}\tau_{a} & \text{if } 1\le a=b=c\le k\\ 0&\text{otherwise}\end{cases},
\end{equation}
then $\forall\beta\in\F_q$, $U_7^{\beta\mathbf{\Gamma}}:=\otimes_{i=1}^nU_7^{\beta\Gamma_i}$ implements $\overline{U_7^{\beta\boldsymbol{\tau}}}:=\overline{\otimes_{i=1}^k U_7^{\beta\tau_i}}$.
The proof is almost identical to \cite{wills2024}, but before going into that, we remark that introducing the metric gamma $\mathbf{\Gamma}$ is a useful generalization when dealing with elliptic curve codes that are in general not self-orthogonal. Moreover, considering $\beta$ is important in $\F_{2^s}$ where $s$ is a multiple of $3$, see Section~\ref{sec:U7_gate}. This generalization is similar to the $\ell$-orthogonality framework developed in \cite{he2025quantum}:
\begin{equation}
\label{eq:l_ortho_wrt_metric}
\sum_{i=1}^n \Gamma_i(g_{a_1,i})(g_{a_2,i})\cdots (g_{a_\ell,i})=\begin{cases}\tau_{a_1} & \text{if } 1\le a_1=a_2=\cdots=a_\ell\le k\\ 0&\text{otherwise}\end{cases}.
\end{equation}
Notice that a matrix $G$ satisfying Eq.~\ref{eq:l_ortho_wrt_metric} satisfies Eq.~\ref{eq:U7_ortho_wrt_metric}. In~\cite{wills2024}, distillation protocols for $U_7^\beta$ gates were built by simply defining matrices $G$ that satisfied the stronger statement of Eq.~\ref{eq:l_ortho_wrt_metric} with $\ell=7$. However, this is inherently wasteful, because Eq.~\ref{eq:U7_ortho_wrt_metric} is sufficient for $U_7^\beta$ transversality, and requires the matrix to satisfy fewer conditions. For example, one does not need to check whether $\sum_{i=1}^n \Gamma_i (g_i^1)^4(g_i^2)(g_i^3)(g_i^4)$ is zero\footnote{This is because by lemma~\ref{lemma:lucas_theorem}, the multinomial coefficient $\binom{7}{4,1,1,1}$ is even and hence disappear in $\F_2$ arithmetic. In fact, there is no need to calculate the overlap of more than three rows.}. 
We will exploit this later (in Sec.~\ref{sec:U7_in_F8_protocols} and Con.~\ref{con:24CCZ-to-4CCZ-d3}), only aiming to satisfy Eq.~\ref{eq:U7_ortho_wrt_metric} in order to construct improved protocols for the $U_7^\beta$ gate.

\begin{lemma}
\label{lemma:U7_ortho_wrt_metric}
    On a quantum code over $\mathbb{F}_8$ constructed from a matrix satisfying Eq.~\eqref{eq:U7_ortho_wrt_metric}, the physical $U_7^{\beta\mathbf{\Gamma}}$ gate implements the logical gate $\overline{U_7^{\beta\boldsymbol{\tau}}}$.
\end{lemma}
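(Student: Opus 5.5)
We compute the phase picked up by a codeword basis state under the physical gate. A logical computational-basis state with stabilizer randomization is a uniform superposition over $\ket{\sum_{a=1}^m z_a \bg_a}$ with $z_1,\dots,z_k$ fixed (the logical values) and $z_{k+1},\dots,z_m$ ranging over $\mathbb{F}_q$. So it suffices to show that for every $z\in\mathbb{F}_q^m$ the phase is $(-1)^{\tr(\beta\sum_{a\le k}\tau_a z_a^7)}$: it must depend only on the logical coordinates and equal the phase of $\overline{U_7^{\beta\boldsymbol\tau}}$. Here $q=8$, but we will not really use $s=3$ except for the Frobenius bookkeeping. Hence $\tr$ is $\tr_{\F_q/\F_2}$, and $\Gamma_i,\tau_a\in\F_q$.

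Writing $x_i=\sum_a z_a g_{a,i}$, the total phase exponent is $\sum_i \tr(\beta\Gamma_i x_i^7)$. By linearity of trace this equals $\tr\bigl(\beta\sum_i\Gamma_i x_i^4x_i^2x_i\bigr)$. In characteristic $2$ the Frobenius gives $x_i^4=\sum_a z_a^4 g_{a,i}^4$ and $x_i^2=\sum_b z_b^2 g_{b,i}^2$. Therefore
\begin{equation*}
\sum_i\Gamma_i x_i^7=\sum_{a,b,c} z_a^4 z_b^2 z_c\sum_i \Gamma_i g_{a,i}^4 g_{b,i}^2 g_{c,i}.
\end{equation*}
This is the key step: expanding $x^7$ as $x^4x^2x$ rather than by a multinomial expansion avoids any carrying issues, consistent with Lemma~\ref{lemma:lucas_theorem}. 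Applying the hypothesis Eq.~\eqref{eq:U7_ortho_wrt_metric}, every inner sum vanishes except when $a=b=c\le k$, where it equals $\tau_a$. The exponent then collapses to $\tr\bigl(\beta\sum_{a\le k}\tau_a z_a^7\bigr)=\sum_{a\le k}\tr(\beta\tau_a z_a^7)$.

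Finally we identify this with the logical action. The logical basis state labelled $(z_1,\dots,z_k)$ acquires phase $\prod_{a\le k}(-1)^{\tr(\beta\tau_a z_a^7)}$, independent of the stabilizer coefficients. The codespace is therefore preserved, and the action is exactly $\otimes_a U_7^{\beta\tau_a}$ on the logical qudits. The main point needing care is the ordered triple sum: the condition is imposed for all ordered $(a,b,c)$, not symmetrized. We must not collapse permutations, since the twisted product is not symmetric. Because the hypothesis is stated for every ordered triple, the expansion above uses it directly. No Clifford-correction subtlety arises here, since the stronger condition kills even Clifford-level cross terms.
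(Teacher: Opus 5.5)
Your proposal is correct and follows the paper's proof: compute the phase on each codeword component $\sum_a z_a\bg_a$, use the twisted three-orthogonality to kill every term except $a=b=c\le k$, and conclude that the phase depends only on the logical coefficients, so it lifts to $\overline{\otimes_a U_7^{\beta\tau_a}}$. The only difference is cosmetic. You obtain the expansion from the Frobenius factorization $x^7=x^4x^2x$, which produces the ordered triple sum directly. The paper instead uses the multinomial theorem with Lucas-type parity and groups terms by multiplicity. The two expansions agree term by term.
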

\begin{proof}
Write a classical codeword of $\cG$ as a linear combination of its rows: $\boldsymbol{f}=\sum_{a=1}^m u_a\mathbf{g}_a\in\F_q^n$ for $u_a\in \F_q$, $a=1,\dots,m$. The $i$-th entry of this codeword is $f_i=\sum_{a=1}^m u_a g_{a,i}$ and so $f_i^7=(\sum_{a=1}^m u_a g_{a,i})^7$. Applying Eq.~\eqref{eq:U7_ortho_wrt_metric} and the multinomial formula (where note that all even multinomial coefficients vanish, and all odd multinomial coefficients reduce to $1$, since we work in a binary extension field), one has 
\begin{multline}
\sum_{i=1}^n \Gamma_i f_i^7=\sum_{a=1}^m\sum_{i=1}^n \Gamma_i(u_a g_{a,i})^7+\sum_{\substack{1\le a,b\le m\\ a\neq b}}\sum_{i=1}^n \Gamma_i [(u_ag_{a,i})^6(u_bg_{b,i})+(u_ag_{a,i})^5(u_bg_{b,i})^2+(u_ag_{a,i})^4(u_bg_{b,i})^3]\\ 
+\sum_{\substack{1\le a,b,c\le m \\ a,b,c\text{ pairwise distinct}}}\sum_{i=1}^n\Gamma_i (u_ag_{a,i})^4(u_bg_{b,i})^2(u_cg_{c,i})=\sum_{a=1}^k \tau_a u_a^7.
\label{eq:sum_seventh_power}
\end{multline}
Notice that many of the terms in the expression vanish by applying Eq.~\eqref{eq:U7_orthogonal} and contribute trivially. 
The effect of applying $U_7^{\beta\mathbf{\Gamma}}$ on the product state $|\boldsymbol{f}\ra=\otimes_{i=1}^n |f_i\ra$ is thus
\begin{equation*}
\otimes_{i=1}^n U_7^{\beta\Gamma_i}|\boldsymbol{f}\ra=(-1)^{\sum_{i=1}^n\tr(\beta \Gamma_i f_i^7)}|\boldsymbol{f}\ra=(-1)^{\tr(\beta\sum_{i=1}^n \Gamma_i f_i^7)}|\boldsymbol{f}\ra=(-1)^{\tr(\beta\sum_{a=1}^k \tau_a u_a^7)}|\boldsymbol{f}\ra
\end{equation*}
The logical action on $\overline{|\bu\ra}$ is thus
\begin{align*}
\otimes_{i=1}^n U_7^{\beta\Gamma_i}\overline{|\bu\ra}&=\sum\limits_{\bg\in\cG_0}\otimes_{i=1}^n U_7^{\beta\Gamma_i}\left|\sum\limits_{a=1}^k u_a\bg_a+\bg\right\ra\\
&=\sum\limits_{\bg\in\cG_0}(-1)^{\tr\left(\beta\sum_{a=1}^k \tau_a u_a^7\right)}\left|\sum\limits_{a=1}^k u_a \bg_a + \bg\right\ra\\
&=(-1)^{\tr(\beta\sum_{a=1}^k \tau_a u_a^7)}\;\overline{|\bu\ra}\\
&=\overline{\otimes_{i=1}^k U_7^{\beta\tau_i}}\;\overline{|\bu\ra},
\end{align*}
giving the result.
\end{proof}

\label{sec:U7_in_F8_protocols}
Throughout, when we say $\CCZ$, we mean the regular $3$-qubit $\CCZ$ gate unless otherwise specified. We emphasize that the $U_7$ gate in $\mathbb{F}_8$ is equivalent to a single such $\CCZ$ gate on the three qubits of the $\mathbb{F}_8$ qudit; see Section~\ref{sec:U7_gate}.
\begin{construction}[$8\CCZ\to 2\CCZ$ at $d=2$]
\label{con:8CCZ_to_2CCZ}
We construct an $8CCZ\to2CCZ\;(d=2)$ protocol derived from the $\F_8$ Reed-Solomon code. We choose the two rows $\{x^2,x\}$ to form the $X$ logical generators, and the $1$ row to form the $X$ stabilizer. Explicitly, we write $\F_8$ as $\mathbb{F}_2(\alpha)$, where $\alpha$ is an element satisfying $\alpha^3 = \alpha+1$, the matrix defining the qudit code is
\begin{equation}
\begin{matrix}
\ev(x^2)\\ \ev(x)\\ \ev(1)
\end{matrix}\quad
\begin{pmatrix}
0 & \alpha^2 & \alpha^4 & \alpha^6 & \alpha^8 & \alpha^3 & \alpha^5 & 1\\
0 & \alpha & \alpha^2 & \alpha^3 & \alpha^4 & \alpha^5 & \alpha^6 & 1\\\hline
1 & 1 & 1 & 1 & 1 & 1 & 1 & 1\\
\end{pmatrix} 
\label{eq:RS_8_to_2}
\end{equation}
These three rows $\{x^2,x,1\}$ satisfy Eq.~\eqref{eq:U7_orthogonal} with $k=2$. One can verify this by hand: notice that $x^7=1,\;\forall x\in\F_8^\times$ and hence $\sum_{x\in\F_8}(x^2)^7=\sum_{x\in\F_8}(x^1)^7=1$ and $\sum_{i=1}^8 (1)^7=8\cdot 1=0$. One can also verify that $\sum_{i=1}^8(g_{a,i})^4(g_{b,i})^2(g_{c,i})^1=0$ for $a,b,c\in\{0,1,2\}$ not all equal using $\sum_{i=1}^7\alpha^{bi}=0$ for $\gcd(7,b)=1$ (in this case $\alpha^b$ is also a primitive element of this field). For example,
$\sum_{i=1}^8 (g_{1,i})^6 (g_{2,i})^1=\sum_{i=1}^7 (\alpha^{2i})^{6} (\alpha^i)^1=\sum_{i=1}^7 \alpha^{13i}=0$. We generalize this construction to multi-control-$Z$ distillation in Con.~\ref{con:multi-control-Z-d2}, and an algebraic proof can be found there.
\end{construction}
Let us show concretely how the above matrix can be binarized to obtain a distillation protocol for the qubit $\CCZ$ gate. A self-dual normal basis for $\mathbb{F}_8$ is
\begin{equation}
\label{eq:F8_self_dual_normal_basis}
\alpha_0=\alpha^2+\alpha+1,\quad \alpha_1=\alpha + 1=\alpha_0^2,\quad \alpha_2=\alpha^2+1=\alpha_0^4,
\end{equation}
where $\alpha^3 = \alpha + 1$.
Each $\GF(8)$ qudit entry $\gamma$ in the generator matrix can be expanded into a $3 \times 3$ binary matrix as:
\begin{equation}
\begin{pmatrix}
\tr(\gamma\alpha_0\alpha_0) & \tr(\gamma\alpha_0\alpha_1) & \tr(\gamma\alpha_0\alpha_2)\\
\tr(\gamma\alpha_1\alpha_0) & \tr(\gamma\alpha_1\alpha_1) & \tr(\gamma\alpha_1\alpha_2)\\
\tr(\gamma\alpha_2\alpha_0) & \tr(\gamma\alpha_2\alpha_1) & \tr(\gamma\alpha_2\alpha_2)
\end{pmatrix}.
\label{eq:qudit_to_qubit_expansion}
\end{equation}
The resulting expanded qubit code of 
$\llbracket8,2,2\rrbracket_8$ is as follows. As always, the upper rows (in this case six) are $X$ logicals, and the latter rows (in this case three) are $X$ stabilizers.
\begin{equation}
\left(\begin{smallmatrix}
000&111&110&011&001&101&010&100\\
000&100&111&110&011&001&101&010\\
000&101&010&100&111&110&011&001\\[2pt]
000&001&111&101&110&010&011&100\\
000&011&100&001&111&101&110&010\\
000&111&101&110&010&011&100&001\\[2pt]\hline\\
100&100&100&100&100&100&100&100\\
010&010&010&010&010&010&010&010\\
001&001&001&001&001&001&001&001
\end{smallmatrix}\right)
\label{eq:RS_F8_binary}
\end{equation}
Applying $\CCZ$ to every adjacent triplet of qubits (columns) yield two logical $\CCZ$ gates, supported on rows $1,2,3$ and rows $4,5,6$ respectively.

\begin{remark}
There is an $8\CCZ\to 2\CCZ\;(d=2)$ protocol~\cite[Fig.~16]{chamberland2022building} constructed from three different binary codes, each removing one of the three logical rows of the $[[8,3,2]]$ Reed-Muller code. We find this protocol to be inequivalent to our Eq.~\eqref{eq:RS_F8_binary}. In fact, when concatenating with the $8T\to \CCZ\;(d=2)$ first-level factory, we find their resulting protocol to be equivalent to the $64T\to 2\CCZ\;(d=4)$ protocol from~\cite{Haah2018} (the equivalency can be proved similarly to our App.~\ref{sec:decreasing_monomial_code}). Their error suppression $2944p^4$ is different from what is obtained by concatenating our code with the $8T \to \CCZ$ code, which is $3136p^4$. Notice that the error coefficient is worse in this case, but need not be generally so.
\end{remark}

\begin{construction}[$9\CCZ\to 1\CCZ$ at $d=3$]
\label{con:9-to-1-CCZ}
The $9\to1\;(d=3)$ protocol is obtained by padding the row $x$ with a one and rows $x^2,1$ with a zero, see below. Padded $x^2$ is treated as logical, and $x,1$ are stabilizers.
\begin{equation}
\left(\begin{array}{cccccccccc}
0 & 0 & \alpha^2 & \alpha^4 & \alpha^6 & \alpha^8 & \alpha^3 & \alpha^5 & 1\\\hline
1 & 0 & \alpha & \alpha^2 & \alpha^3 & \alpha^4 & \alpha^5 & \alpha^6 & 1\\
0 & 1 & 1 & 1 & 1 & 1 & 1 & 1 & 1\\   
\end{array}\right)
\label{eq:projective_RS_equation}
\end{equation}
\end{construction}

\begin{remark}
    Another way to obtain a protocol with these parameters is via a projective Reed-Solomon code. The nine points of $\bbP_{\F_8}$ are $(0:1)$ and $\{(1:\gamma)\;|\;\gamma\in \F_8^\times\}$. One can construct a matrix as the evaluation of the homogeneous polynomials $XY$ (logical), $Y^2$, $X^2$ on these nine $\bbP_{\F_8}$ points\footnote{As we mentioned in Sec.~\ref{sec:affine_projective_space}, for a homogeneous function $F$ of degree $d$, one has $f^h(\lambda x_0,\dots,\lambda x_n)=\lambda^d\cdot f^h(x_0,\dots,x_n)$. So scaling by $\lambda\in\F_8^\times$ of a projective point leads to a scalar multiplication of the corresponding column by $\lambda$. However, since $\lambda^7=1$, the scaling has no effect in Eq.~\ref{eq:U7_ortho_wrt_metric}.}:
\begin{equation}
\begin{matrix}
\ev(XY)\\ \ev(Y^2)\\ \ev(X^2)
\end{matrix}\quad
\begin{pmatrix}
0 & 0 & \alpha & \alpha^2 & \alpha^3 & \alpha^4 & \alpha^5 & \alpha^6 & 1\\\hline
1 & 0 & \alpha^2 & \alpha^4 & \alpha^6 & \alpha^8 & \alpha^3 & \alpha^5 & 1\\
0 & 1 & 1 & 1 & 1 & 1 & 1 & 1 & 1\\   
\end{pmatrix}
\label{eq:projective_RS_equation_homo_eq_deg_2}
\end{equation}
\end{remark}

\begin{construction}[$14\CCZ\to 4\CCZ$ at $d=2$]
We can double the above $\F_8$ RS construction as follows to obtain a $14\to4\;(d=2)$ protocol
\begin{equation}
\left(\begin{array}{ccccccc|ccccccc}
\alpha^2 & \alpha^4 & \alpha^6 & \alpha^8 & \alpha^3 & \alpha^5 & 1 & 0&0&0&0&0&0&0\\
\alpha & \alpha^2 & \alpha^3 & \alpha^4 & \alpha^5 & \alpha^6 & 1 & 0&0&0&0&0&0&0\\
0&0&0&0&0&0&0 & \alpha^2 & \alpha^4 & \alpha^6 & \alpha^8 & \alpha^3 & \alpha^5 & 1\\
0&0&0&0&0&0&0 & \alpha & \alpha^2 & \alpha^3 & \alpha^4 & \alpha^5 & \alpha^6 & 1\\\hline
1 & 1 & 1 & 1 & 1 & 1 & 1 & 1 & 1 & 1 & 1 & 1 & 1 & 1\\
\end{array} \right)
\label{eq:doubled_RS_equation}
\end{equation}
\end{construction}

\begin{construction}[$16\CCZ\to 2\CCZ$ at $d=3$]\label{con:16to2CCZ}
A $16\to2\;(d=3)$ protocol can be obtained by doubling the projective RS code
\begin{equation}
\left(\begin{array}{cccccccc|cccccccc}
0 & \alpha & \alpha^2 & \alpha^3 & \alpha^4 & \alpha^5 & \alpha^6 & 1 & 0&0&0&0&0&0&0 & 0\\
0 & 0&0&0&0&0&0&0 & \alpha & \alpha^2 & \alpha^3 & \alpha^4 & \alpha^5 & \alpha^6 & 1 & 0\\\hline
1 & \alpha^2 & \alpha^4 & \alpha^6 & \alpha^8 & \alpha^3 & \alpha^5 & 1 & 0&0&0&0&0&0&0 & 0\\
0 & 0&0&0&0&0&0&0 & \alpha^2 & \alpha^4 & \alpha^6 & \alpha^8 & \alpha^3 & \alpha^5 & 1 & 1\\
0 & 1 & 1 & 1 & 1 & 1 & 1 & 1 & 1 & 1 & 1 & 1 & 1 & 1 & 1 & 0\\
\end{array} \right)
\label{eq:doubled_padded_RS_equation}
\end{equation}
\end{construction}

\begin{construction}[$(6k+2)\CCZ\to 2k\CCZ$ at $d=2$]\label{con:6k+2CCZto2kCCZ}
Inspired by the $(3k+8)T\to kT$ protocol from \cite{bravyi2015doubled} and the $(6k+2)T\to k$CCZ protocol from \cite{campbell2017unified} (especially the latter), here we give a $(6k+2)CCZ\to 2k CCZ$ protocol.
\begin{equation}
\left(\begin{array}{cccccc}
L & M & 0 & 0 & \cdots & 0\\
L & S & M & 0 & \cdots & 0\\
L & S & S & M & \cdots & 0\\
\vdots & \vdots & \vdots & \vdots & \ddots & \vdots\\
L & S & S & S & \dots & M\\\hline
1\;1 & \mathbf{1}_6 & \mathbf{1}_6 & \mathbf{1}_6 & \cdots & \mathbf{1}_6
\end{array}
\right)
\label{eq:asymp_one_third}
\end{equation}
where $L=\begin{pmatrix}0&1\\ 0& 1\end{pmatrix}$, $M=\begin{pmatrix}\alpha^2&\alpha^4&\alpha^6&\alpha^8&\alpha^{10}&\alpha^{12}\\\alpha&\alpha^2&\alpha^3&\alpha^4&\alpha^5&\alpha^6\end{pmatrix}$, $S=\begin{pmatrix}1&1&1&1&1&1\\1&1&1&1&1&1\end{pmatrix}$ and $\mathbf{1}_6=(1\;1\;1\;1\;1\;1)$.
\end{construction}

\begin{construction}[$21\CCZ\to 3\CCZ$ at $d=3$]
\label{con:21CCZ-to-3CCZ}
Considering gluing three projective lines $X=0$, $Y=0$, $Z=0$ in $\bbP^2_{\F_8}$ along a triangle. That is, we consider $21$ evaluation points
$$\{(0:1:\gamma)\;|\;\gamma\in \F_8^\times\} \cup \{(\gamma:0:1)\;|\;\gamma\in \F_8^\times\} \cup \{(1:\gamma:0)\;|\;\gamma\in\F_8^\times\}\;.$$
and we consider constructing a stabilizer matrix whose columns correspond to these evaluation points. The matrix is constructed as follows, with three logical $X$ rows and three stabilizer $X$ rows (and $21$ columns):
\begin{equation}
\begin{array}{c|ccc}
      & (0:1:\gamma) & (\gamma:0:1) & (1:\gamma:0)  \\
\hline
\ev(YZ^2) & \gamma^2 & 0 & 0 \\
\ev(ZX^2) & 0 & \gamma^2 & 0 \\
\ev(XY^2) & 0 & 0 & \gamma^2 \\\hline
\ev(X) & 0 & \gamma & 1 \\
\ev(Y) & 1 & 0 & \gamma \\
\ev(Z) & \gamma & 1 & 0.
\end{array}
\end{equation}
One checks that this satisfies Eq.~\eqref{eq:U7_orthogonal}. The distance of this code is three; indeed, one checks that it is at least three by observing that no two stabilizer columns are equal up to scalar multiplication.\footnote{Note that strictly speaking this verifies only that the $Z$-distance is at least three. However, the $Z$-distance is the relevant distance for distillation because one can twirl away $X$ errors~\cite{bravyi2012magic} (one may also check that the $X$-distance is greater than the $Z$-distance any way).}
\end{construction}

\begin{construction}[$64\CCZ\to 8\CCZ$ at $d=4$]
\label{con:64CCZ-to-8CCZ}
Here we construct a protocol using affine monomial codes (see the end of Sec.~\ref{sec:ring_field}) defined over the quotient ring $$\F_8[x,y]/(x^8-x,\;y^8-y).$$
There are $8^2=64$ points on the affine grid; denote them as $(x,y)\in \F_8^2$.
All the $X$-logicals and stabilizers we choose for the distillation matrix are in the form of evaluation of monomials $x^a y^b$, $0\le a,b\le 6$ over the affine grid $\F_8^2$.

To satisfy Eq.~\eqref{eq:U7_orthogonal}, we choose the $X$ stabilizers as $1,x,y,x^2,y^2$ (whose dual code is the hyperbolic code of designed distance $4$)\footnote{One can see that $xy$ cannot be an $X$ stabilizer (in the metric of all-one), because raised to its seventh-power, all coordinates of $x^7y^7$ sum up to one.}. We are able to find eight compatible logicals by solving for the maximum independent set in a certain hypergraph, constructed as follows.

We start by searching for all exponents\footnote{Notice here we can restrict ourselves to both $a,b$ being positive. This is because, we want $\sum_{(x,y)\in\F_8^2} (x^a y^b)^7=1$. The left-hand side can be rewritten into $\left(\sum_{x\in \F_8} x^{7a}\right)\cdot \left(\sum_{y\in\F_8} y^{7b}\right)$ and is nonzero only when both $a,b>0$.} $(a,b),\; 1\le a,b\le 6$ such that $x^a y^b$ together with the five chosen stabilizers, satisfy Eq.~\eqref{eq:U7_orthogonal} with $k=1$. There are $29$ such exponents $(a,b)$. Think of each of these exponents, as well as the five stabilizer exponents, as vertices. We then add a hyperedge to every triplet and pair of vertices that would cause a violation of Eq.~\ref{eq:U7_ortho_wrt_metric}. For example, for a triplet of exponents $(a_1,b_1),(a_2,b_2),(a_3,b_3)$, we add a hyperedge if $\sum_{(x,y)\in\F_8^2} (x^{a_1} y^{b_1})^1 (x^{a_2} y^{b_2})^2 (x^{a_3} y^{b_3})^4\neq 0$. These (hyper)edges specify which simultaneous choice of exponents are forbidden. Therefore, finding a maximum number of logicals is equivalent to finding a maximum independent set within this forbidden hypergraph. While finding a maximum independent set in a hypergraph is NP-hard in general, it is tractable in this small instance via a depth-first search. As a result, we find three maximum independent sets all of size $8$:
\begin{align}
\{(1,2),\;(1,5),\;(2,5),\;(2,6),\;(4,2),\;(4,6),\;(5,1),\;(5,5)\},\\
\{(1,3),\;(2,2),\;(2,4),\;(3,1),\;(3,4),\;(4,2),\;(4,3),\;(4,4)\},\label{eq:64CCZ_symmetric_choice}\\
\{(1,5),\;(2,1),\;(2,4),\;(5,1),\;(5,2),\;(5,5),\;(6,2),\;(6,4)\}.
\end{align}
\begin{remark}
    The logical exponent set in Eq.~\eqref{eq:64CCZ_symmetric_choice} is symmetric with respect to $x$ and $y$, and in this case, we can use the footprint bound (see App.~\ref{sec:decreasing_monomial_code}) to show that the $X$-distance of the protocol is lower bounded by $16$.
\end{remark}
\end{construction}

\subsection{Norm gate protocols}

Besides $\tr(\cdot)$, the norm function $\Nm(\cdot)$ also maps $\GF(2^s)$ elements to $\{0,1\}$; it takes a particularly simple form in $\GF(2^s)$: $\Nm(\gamma):=\prod_{i=0}^{s-1}\gamma^{2^i}=\gamma^{2^s-1}$. Note that $\Nm(\gamma)=0$ if and only if $\gamma=0$ and else $\Nm(\gamma)=1$, which follows from the fact that the multiplicative group of a finite field is cyclic.
We define the norm gate as follows,
\begin{equation}
|\gamma\ra\mapsto (-1)^{\Nm(\gamma)}|\gamma\ra=(-1)^{\gamma^{2^s-1}}|\gamma\ra.
\end{equation}
That is to say, $|\gamma\ra$ gains a phase of $(-1)^0=1$ when $\gamma=0$ and $-1$ when $\gamma\neq 0$. One can see that the norm gate, as an $s$-qubit gate, is equivalent to the $\text{C}^{s-1}\text{Z}$ gate, up to Paulis, regardless of the way the qudit is decomposed into qubits. Indeed, to implement the norm gate on the qubit level (up to a global phase), expand $\gamma=\sum_{i=1}^s b_i\alpha_i$. One first applies qubit-wise $X$ gates on the $s$ qubits, such that $b_i\mapsto b_i\oplus 1$, then a $\text{C}^{s-1}\text{Z}$ gate, and finally the $X$ gates again. Then, only in the case of $b_1=\cdots=b_s=0$ does the intermediate state gain a phase of $-1$.

Associating $u_1,\cdots,u_m$ with each row from $G$, then the computational basis state
$|\sum_{i=1}^m u_i\bg_i\ra$ corresponding to a codeword of the classical code gains a phase of $(-1)^{p(u_1,\cdots,u_m)}$, where the phase polynomial reads
\begin{align}
p(u_1,\cdots,u_m) &= \sum_{j=1}^n \left(\sum_{i=1}^mu_i g_{i,j}\right)^{2^s-1}.
\end{align}
To expand this, we use the multinomial expansion\footnote{In the following equation, the sum is over tuples of non-negative integers $(e_i)_{i=1}^m$ that sum to $2^s-1$.}
\begin{equation}
(a_1+a_2\cdots+a_m)^{2^s-1}=\sum\limits_{e_1+\cdots +e_m=2^s-1}\binom{2^s-1}{e_1,e_2,\cdots,e_m} a_1^{e_1}\cdots a_m^{e_m}
\end{equation} and Lemma~\ref{lemma:lucas_theorem}, which states that the multinomial coefficient
\begin{equation}
\binom{2^s-1}{e_1,e_2,\cdots,e_m}=\begin{cases}\text{odd} & \text{no carrying when adding }e_1+\cdots +e_m=2^s-1 \text{ in base }2\\\text{even} & \text{otherwise}\end{cases}.\footnote{Recall that saying that there is no carrying when adding $e_1 + \ldots e_m = 2^s-1$ in base $2$ is equivalent to saying that there is no overlap in the support of the binary expansion of the $e_i$. Note that the binary expansion of $2^s-1$ is simply $11\ldots 1$.}
\end{equation}
We therefore have
\begin{align}
p(u_1, \ldots, u_m)&=\sum_{\substack{e_1+\ldots + e_m = 2^s-1\\\supp(\bin(e_i))\text{ pairwise disjoint}\\ \sqcup_{i=1}^{m}\supp(\bin(e_i))=\{1,\cdots,s\}}} \left(\prod_{i=1}^m u_i^{e_i}\right)\left[\sum_{j=1}^n \prod _{i=1}^m g_{i,j}^{e_i}\right],\label{eq:norm_gate_phase_poly}
\end{align}
where $\bin(\cdot)$ denotes the binary expansion of the integer $e_i$, and $\supp(\bin(\cdot))$ denotes the position of the ones in the binary expansion.

For the logical action to also be a norm gate on each of the $k$ logical qudits, the phase polynomial should simplify to $\sum_{i=1}^k u_i^{2^s-1}$.

One may check that this is ensured by the following condition, which we call \emph{twisted $s$-orthogonality}
\begin{equation}
\label{eq:norm_gate_condition}
\sum_{i=1}^n (g_{a_1, i})^{2^{0}}(g_{a_2, i})^{2^{1}}\cdots (g_{a_s, i})^{2^{s-1}}=\begin{cases} 1 & \text{if }1\le a_1=a_2=\cdots=a_s\le k\\0 & \text{otherwise}\end{cases}
\end{equation}
Note that the $U_7$ gate is the norm gate in $\mathbb{F}_8$, and indeed we can see that this condition specializes to the $U_7$ gate's transversality condition (see Eq.~\ref{eq:U7_orthogonal}) when we set $s=3$.

\begin{construction}[$2^s \text{C}^{s-1}\text{Z}\to 2\text{C}^{s-1}\text{Z}$ at distance $2$]
\label{con:multi-control-Z-d2}
Take the Reed-Solomon code over $\F_{2^s}$ and use $x^2,x^1$ as $X$ logicals, and $1$ as the $X$ stabilizer. Let us show that these three rows satisfy the the above norm-gate orthogonality condition Eq.~\eqref{eq:norm_gate_condition}.

Call $\alpha$ a primitive element of this field. First note that 
\begin{equation}
\sum_{i=1}^{2^s-1}\alpha^{ij}=\begin{cases}1& \text{if }j\equiv 0\mod{2^s-1}\\0 & \text{otherwise}\end{cases}
\end{equation}
This is to say, the entry-wise sum of the evaluation vector $\ev(x^j)\in\F_{2^s}^{2^s}$ is one if and only if $j\equiv 0\mod{2^s-1}$.
Therefore, the logical rows $x^2$ and $x^1$ each satisfy $\sum_{i=1}^n (g_{a,i})^{2^0}(g_{a,i})^{2^1}\cdots(g_{a,i})^{2^{s-1}}=\sum_{i=1}^n (g_{a,i})^{2^s-1}=1$ for $a=1,2$. 

For the other conditions in Eq.~\eqref{eq:norm_gate_condition}, suppose $\ev(x^2)$ is raised entry-wise to the $e_1$-th power, $\ev(x^1)$ to the $e_2$-th power, and $\ev(1)$ to the $e_3$-th power, where $e_1+e_2+e_3=2^s-1$, and $\supp(\bin(e_1))\sqcup\supp(\bin(e_2))\sqcup\supp(\bin(e_3))=\{1,\dots,s\}$. Multiplying these and summing, we are simply considering the sum of the evaluation vector $\ev(x^{2e_1+e_2})$. As above, we know that this is zero unless $2^s-1 \mid 2e_1+e_2$. Notice that $2e_1+e_2=e_1+(e_1+e_2)\le2(2^s-1)$ and the equality holds if and only if $e_1=2^s-1$ and $e_2=e_3=0$, which has already been discussed. Also $2e_1+e_2=0$ if and only if $e_3=2^s-1$ and $e_1=e_2=0$, which again has already been discussed. The only remaining possibility to consider is $2e_1+e_2 = 2^s-1$. One can actually see that the already-discussed case of $(e_1, e_2, e_3) = (0,2^s-1,0)$ is the only possibility here. Indeed, aside from that case, $e_1$ must be non-zero. Let $i$ be the position corresponding to the smallest power of $2$ that is supported in the binary expansion of $e_1$. Since $\supp(e_1)\cap\supp(e_2) = \emptyset$, and since $2e_1$ has the binary expansion of $e_1$ shifted by one, the binary expansion of $2e_1+e_2$ is not supported at position $i$, meaning that $2e_1+e_2 = 2^s-1 = 11\ldots 1_2$ is impossible.

\end{construction}
\begin{construction}[$(2^s+1) \text{C}^{s-1}\text{Z}\to 1\text{C}^{s-1}\text{Z}$ at distance $3$]\label{con:d3multicontrol}
One can pad the above Reed-Solomon to its projective version just as in Construction~\ref{con:9-to-1-CCZ}. That is, the $\ev(x^2)$ row is padded with a $0$ and treated as a logical, the $\ev(x)$ row is padded with a $1$ and treated as a stabilizer, and the $\ev(1)$ row is padded with a $0$ and treated as a stabilizer. The satisfaction of Equation~\ref{eq:norm_gate_condition} and the distance follows from the same argument there.
\end{construction}

\subsection{\texorpdfstring{Qubit $\CS$ protocols from \texorpdfstring{$\mathbb{F}_4$}{}-linear codes}{Qubit CS protocols from GF(4)-linear codes}}
\label{sec:CS_distillation_protocols}

Here, we only consider codes over $\F_4=\{0,1,\omega,\omega^2\}$. As a reminder, we have $\omega^2=\omega+1$ and $\omega^3=1$, $\tr(1)=0=\tr(0)$, and $\tr(\omega)=\tr(\omega^2)=1$. We can expand an $\F_4$ qudit using two qubits in the self-dual basis $\{\omega,\omega^2\}$, and consider applying a $\CS$ gate between these two qubits.
\begin{equation}
\gamma=b_1\omega+b_2\omega^2,\quad\CS|\gamma\ra:=\CS|b_1\ra|b_2\ra=i^{b_1b_2}|\gamma\ra.
\end{equation}
One can extract $b_1,b_2$ as $b_1=\tr(\gamma\omega)$ and $b_2=\tr(\gamma\omega^2)$ using the self-duality of the basis. Then, $b_1b_2=(\gamma\omega+\gamma^2\omega^2)(\gamma\omega^2+\gamma^2\omega^4)=\gamma^3+\tr(\gamma)$. Now, note that in this expression, $\gamma^3, \tr(\gamma) \in \{0,1\}$ are summed as elements of $\mathbb{F}_4$. Since they are exponentiated with base $i$, we need to convert the addition to $\mathbb{Z}_4$ addition. With $\oplus$ denoting the addition in $\mathbb{F}_4$, and $+$ denoting the addition in $\mathbb{Z}_4$, one checks that $\gamma^3 \oplus \tr(\gamma) = \gamma^3-\tr(\gamma)\pmod 4$.
Therefore,
\begin{equation}
\CS:|\gamma\ra\mapsto i^{\gamma^3 - \tr(\gamma)}|\gamma\ra,
\end{equation}
with the subtraction performed in $\mathbb{Z}_4$.
The condition for physical $\CS$ to implement logical $\CS$ is
\begin{equation}
\label{eq:CS-to-CS}
\sum_{i=1}^n g_{a,i}\;g_{b,i}\;g_{c,i}=\begin{cases}1 & \text{if }1\le a=b=c\le k\\ 0 & \text{otherwise}\end{cases},
\end{equation}
in terms of the matrix $G$ over $\F_4$ that defines the protocol.
The proof is in App.~\ref{sec:CS_transversality_condition}; we show two ways of deriving the condition. The first way in App.~\ref{sec:embedded_code} converts the calculation to one of generalized triorthogonality in a modified binary code and is most easy to understand. The second way in App.~\ref{sec:CS_direct_calculation} is much more involved but also more insightful, since we directly compute the phase polynomial over $\Z_4$ (instead of $\F_4$).

Notice that this condition is the same as for the $\F_4$-qudit CCZ, i.e., three-orthogonality. Therefore, the following protocols presented for $\CS$-to-$\CS$ distillation can also be used for $\F_4$-qudit CCZ distillation. 

We now present our first construction for $\CS$-to-$\CS$ distillation using codes over $\mathbb{F}_4$.
\begin{construction}[$4\CS\to 1\CS$ at $d=2$]
\label{con:4-to-1}
The Reed-Solomon code over $\F_4$ naturally satisfies the above three-orthogonality condition.
\begin{equation}
\label{eq:4-to-1-RS}
\begin{pmatrix}
0&1&\omega&\omega^2\\\hline
1&1&1&1
\end{pmatrix}
\end{equation}
\end{construction}
It turns out that any of our protocols that takes $\CS$ as input can be turned into a protocol that takes $T$ (and $T^\dagger$) as input, as follows. Given a qubit code, imagine we are about to apply physical $CS$ gates on disjoint pairs of physical qubits, such as for our binarized codes over $\mathbb{F}_4$. There is a way to turn these two-qubit gates into transversal $T/T^\dagger$ gates on an embedded code~\cite{webster2023transversal}, constructed as follows. For each such pair, add one ancilla qubit initialized in $|0\ra$, and perform $\CNOT$ gates from the two qubits onto this ancilla (see the circuit below). Then, when acting with $T\otimes T\otimes T^\dagger$ on this embedded code, it gives a phase of $x_1 + x_2 + 7(x_1\oplus x_2) = 2x_1x_2\mod 8$. This means that a $\CS$ gate has been performed on the two non-ancillary qubits.
\begin{center}
\begin{quantikz}[row sep={0.8cm,between origins}, column sep={0.7cm,between origins}]
\lstick{$x_1$} & \ctrl{2} & \qw       & \gate{T}      & \qw       & \ctrl{2} & \qw \\
\lstick{$x_2$} & \qw      & \ctrl{1}  & \gate{T}      & \ctrl{1}  & \qw      & \qw \\
\lstick{$\ket{0}$} & \targ{}   & \targ{}    & \gate{T^\dagger} & \targ{}    & \targ{}   & \qw
\end{quantikz}
\;=\;
\begin{quantikz}[row sep={0.7cm,between origins}, column sep={0.7cm,between origins}]
\lstick{$x_1$} & \ctrl{1}  & \qw \\
\lstick{$x_2$} & \gate{S}  & \qw
\end{quantikz}
\end{center}
Using this idea, it is easy to determine the $X$ logical/stabilizer matrix of the embedded code, simply by stabilizer propagation. Note that the logical action of physical $T/T^\dagger$ on the embedded code directly translates back to the logical action on the original code. It should be clear that the distance of the embedded code is lower bounded by that of the original code, because the weight after the stabilizer propagation is non-decreasing.

Let us demonstrate the idea using our Reed-Solomon example:
\begin{equation}
\label{eq:gadgetize_CS_to_3T}
\begin{pmatrix}
0&1&\omega&\omega^2\\\hline
1&1&1&1
\end{pmatrix} 
\xrightarrow{\text{binarize}}
\begin{pmatrix}
0 0 & 1 0 & 0 1 & 1 1\\
0 0 & 0 1 & 1 1 & 1 0\\\hline
1 0 & 1 0 & 1 0 & 1 0\\
0 1 & 0 1 & 0 1 & 0 1
\end{pmatrix}
\xrightarrow{\text{gadgetize}}
\begin{pmatrix}
0 0 0 & 1 0 1 & 0 1 1 & 1 1 0\\
0 0 0 & 0 1 1 & 1 1 0 & 1 0 1\\\hline
1 0 1 & 1 0 1 & 1 0 1 & 1 0 1\\
0 1 1 & 0 1 1 & 0 1 1 & 0 1 1
\end{pmatrix}
\end{equation}
One can use~\eqref{eq:phase_poly_expansion} to verify that $TTT^\dagger TTT^\dagger TTT^\dagger TTT^\dagger$ on this code, i.e., transversal $T$ with metric 
$$\bGamma=(1,1,-1,1,1,-1,1,1,-1,1,1,-1),$$
indeed gives logical $\CS$ (up to Clifford corrections).

Doubling the above $4\to 1$ construction gives a $6\text{CS}\to 2\text{CS}\;(d=2)$ protocol.
$$\begin{pmatrix}
1&\omega&\omega^2&0&0&0\\
0&0&0&1&\omega&\omega^2\\\hline
1&1&1&1&1&1
\end{pmatrix}$$
Adding the all-one stabilizer row to the second logical row, one can see that this protocol is the $k=2$ special case of the following construction.

\begin{construction}[$(2k+2)\CS\to k\CS$ at $d=2$]
\label{con:2k+2-to-k-CS}
Again inspired by \cite[Sec.~IV.E]{campbell2017unified}, one can design a family of distance-two protocols of asymptotic overhead $2$.
\begin{equation}
\left(\begin{array}{cccccc}
0\;1 & \omega\;\omega^2 & 0\;0 & 0\;0 & \cdots & 0\;0\\
0\;1 & 1\;1 & \omega\;\omega^2 & 0\;0 & \cdots & 0\;0\\
0\;1 & 1\;1 & 1\;1 & \omega\;\omega^2 & \cdots & 0\;0\\
\vdots & \vdots & \vdots & \vdots & \ddots & \vdots\\
0\;1 & 1\;1 & 1\;1 & 1\;1 & \dots & \omega\;\omega^2\\\hline
1\;1 & 1\;1 & 1\;1 & 1\;1 & \cdots & 1\;1
\end{array}
\right)
\label{eq:2k+2CS-to-kCS}
\end{equation}

\end{construction}

Similar to \cite[Lemma~7]{nezami2022classification}, we show in App.~\ref{sec:optimality_d=2_CS-to-CS} that $2$ is asymptotically optimal for $d=2$ $\CS$-to-$\CS$ protocols derived from $\F_4$-linear triorthogonal codes.

One can concatenate these schemes with themselves to obtain higher distance protocols. For example, concatenating the $4\to 1$ scheme with itself leads to a $16\CS\to 1\CS$ protocol at $d=4$, and more generally, one can have a $4^t\CS\to 1\CS$ at distance $d=2^t$. On the other hand, the situation is more subtle when $k>1$ protocols are involved in the concatenation. For example, one might naively think that first distilling $4\CS$ states from the $10\text{CS}\to 4\text{CS}$ protocol, then further distilling them using the $4\to 1$ protocol would yield a $10\to 1$ protocol at distance $4$. This is not true in general ($d=2$ in this case, in fact), because the errors on the outputs of the $k>1$ protocols are correlated. Protocols with $k>1$ can, however, be used with care in concatenation. For example, a $24 \to 2;(d=4)$ protocol is possible by beginning with four $6 \to 2$ protocols, and then using two copies of the $4 \to 1$ protocol, where each run of the $6\to 2$ protocol is routed to the two second-level protocols. We now list some possible schemes: some conatenated, and some not.

\begin{construction}[$16\CS\to 1\CS$ at $d=4$, more generally, $4^t\CS\to 1\CS$ at $d=2^t$]\label{con:12cs1csd3}
One can either concatenate the $4\to1\;(d=2)$ protocol (construction~\ref{con:4-to-1}) with itself, or use the two-variable Reed-Muller code over $\F_4$ to obtain a distance-four $16\CS\to 1\CS$ protocol. For the latter construction, take $xy$ as the $X$ logical, and $y^3,y^2,y,x,1$ as $X$ stabilizers. The two methods give equivalent protocols. In fact, it turns out that this is a general connection. The $4^t \to 1\;(d=2^t)$ protocol obtained by iteratively concatenating $4 \to 1$ with itself is always equivalent to a protocol constructed using a decreasing monomial code; we provide the exact construction and proof in App.~\ref{sec:decreasing_monomial_code}.
\end{construction}
\begin{construction}[$12\CS\to 1\CS$ at $d=3$]
\label{con:12CS-to-1CS}
We first provide the matrix and then explain how we obtained it.
\begin{equation}
\label{eq:12-to-1-CS-d3}
\begin{pmatrix}
0 & 0 & 1 & 0 & 0 & 1 & 0 & \omega^2 & 1 & \omega & \omega^2 & \omega\\\hline
0 & 0 & 0 & 0 & 1 & 1 & 1 & 1 & 1 & 1 & 1 & 1\\
0 & 1 & 1 & 1 & \omega & \omega^2 & 0 & 0 & 0 & 1 & 1 & 1\\
1 & 1 & \omega & \omega^2 & 0 & 0 & 1 & \omega & \omega^2 & 1 & \omega & \omega^2\\
\end{pmatrix}
\end{equation}
The $X$ stabilizer space is obtained as follows. One first writes down the $21$ points of the projective space $\bbP^2_{\F_4}$ as the columns. That is, we consider the $21$ points of $\bbP^2_{\F_4}$, written canonically as 
\begin{equation}
  \{(1:y:z):y,z\in\F_4\}
  \mathbin{\cup}
  \{(0:1:z):z\in\F_4\}
  \mathbin{\cup}
  \{(0:0:1)\},
  \label{eq:canonical-P2-F4}
\end{equation}
and consider a matrix in $\F_4^{3 \times 21}$ whose rows are $\ev(X)$, $\ev(Y)$ and $\ev(Z)$ on these points. Of course, given that points in the projective space are only defined up to scalar multiplication by non-zero elements of $\F_4$, we have only defined this matrix up to scalar multiplication of any column by a non-zero element of $\F_4$. It turns out that any choice works, however.\footnote{Any choice works because Equation~\ref{eq:CS-to-CS} remains true even after re-scaling a column by a non-zero element of $\F_4$ (since for any $\eta \in \F_4^\times$, we have $\eta^3 = 1$). Note that the condition for $U_7$-orthogonality in $\F_8$ is special for the same reason.} At this point, we have a $3 \times 21$ matrix over $\F_4$, which is a parity-check matrix for a code of distance $3$, since no two distinct points $(x:y:z)$ and $(x':y':z')$ in $\bbP^2_{\F_4}$ can be related by a scalar, by definition.

We next consider the nine points in $\bbP^2_{\F_4}$ lying on the Hermitian curve over $\F_4$ in Hermitian space, that is, the $9$ points $(x:y:z)$ satisfying $Y^2Z+YZ^2 = X^3$ (we introduce this curve in detail later in Sec.~\ref{sec:intro_to_AG_codes}). Removing the corresponding $9$ columns from the matrix yields a $3 \times 12$ matrix over $\F_4$ whose rows are $3$-orthogonal.\footnote{When we say the rows of a matrix are $3$-orthogonal, we mean that those rows satisfy Eq.~\ref{eq:CS-to-CS} with $k=0$. It turns out that the original $3 \times 21$ matrix is $3$-orthogonal, and the $3 \times 9$ sub-matrix formed by the $9$ points on the Hermitian curve is $3$-orthogonal. In turn, the $3 \times 12$ matrix that we have formed by removing those $9$ points is $3$-orthogonal.} 

Given these stabilizer rows, we wish to add as many logical rows as possible, that is, we wish to complete the code to a maximal triorthogonal space~\cite{Haah2018}. To do this, we can consider a matrix whose rows are the vectors $(g_{i,1}g_{j,1}, \ldots, g_{i,n}g_{j,n})$ for every pair of (not necessarily distinct) indices $(i,j)$ labelling two stabilizer rows. The kernel of this matrix contains vectors $(s_1, \ldots, s_n)$ that can be added as logical rows, as long as they satisfy $\sum_{i=1}^n s_i^3 \neq 0$. We find that the resulting kernel has dimension $6$, including the three-dimensional $X$-stabilizer space itself. It also turns out that one cannot add more than one logical row in this case, because adding one logical row adds in more constraints, which reduces the kernel to just the three-dimensional $X$-stabilizer space, that is, we have found a maximal triorthogonal space. One choice of the logical (which satisfies $\sum_{i=1}^ns_i^3 = 1$) is chosen and displayed in the resulting matrix. The protocol has distance $3$ since the $X$-stabilizer rows are a parity-check matrix for a code of distance $3$, as discussed.

\end{construction}

One may wonder what would have happened if we removed a different set of points from the original $21$. For all $n < 12$, we enumerate all the possible sets of $n$ points in the projective space and repeat the procedure. In each case, we are not able to add a logical row, and so we cannot find an $n \to 1\;(d=3)$ $\CS$-to-$\CS$ protocol with these methods, although this does not rule out that one exists (we constrained ourselves to three-dimensional stabilizer spaces).

The kernel method above can be applied iteratively in order to add multiple logical rows: randomly select one codeword from the appropriate kernel space whose triple self-intersection is one, build a new set of linear equations and solve for the kernel. We repeat until there is no valid logical row.

The following two constructions are found by mixing the iterative kernel search together with geometric intuitions.

\begin{construction}[$16\CS\to 2\CS$ at $d=3$]\label{con:16cs2csd3}
We begin by considering the regular $\F_4$ Reed-Muller codes in two variables; the $16$ points of the affine plane $\A^2_{\F_4}$ are taken as evaluation points. The evaluations of $1,x,y$ naturally serve as stabilizers, and $xy$ is a logical row. However, no other monomial rows can be added as a logical row. It turns out, however, that it is possible to add a further row by considering linear combinations of monomials.
Our kernel search, which searches over the $\F_4$-\emph{linear combinations} over the monomials (spanned by $\{x^2,y^2,x^3,y^3\}$ together with the existing rows in this case), gives many possible solutions, including the example shown, which is $\ev(y^3+\omega x^3)$. It further turns out that, no matter which extra logical row is picked, the new kernel reduces to the three-dimensional stabilizer space, i.e., we cannot add a third logical row.
\begin{equation}
\begin{matrix}
\ev(y^3+\omega x^3)\\
\ev(xy)\\
\ev(x)\\
\ev(y)\\
\ev(1)
\end{matrix}\quad
\begin{pmatrix}
0 & 1 & 1 & 1 & \omega & \omega^2 & \omega^2 & \omega^2 & \omega & \omega^2 & \omega^2 & \omega^2 & \omega & \omega^2 & \omega^2 & \omega^2\\
0 & 0 & 0 & 0 & 0 & \omega^2 & 1 & \omega & 0 & 1 & \omega & \omega^2 & 0 & \omega & \omega^2 & 1\\\hline
0 & 0 & 0 & 0 & \omega & \omega & \omega & \omega & \omega^2 & \omega^2 & \omega^2 & \omega^2 & 1 & 1 & 1 & 1\\
0 & \omega & \omega^2 & 1 & 0 & \omega & \omega^2 & 1 & 0 & \omega & \omega^2 & 1 & 0 & \omega & \omega^2 & 1\\
1 & 1 & 1 & 1 & 1 & 1 & 1 & 1 & 1 & 1 & 1 & 1 & 1 & 1 & 1 & 1
\end{pmatrix}
\label{eq:16CS-2CS-affine}
\end{equation}

\end{construction}

\begin{construction}[$21\CS\to 4\CS$ at $d=3$]
\label{con:21CS-to-4CS}
We begin by writing down all the $21$ points of $\bbP^2_{\F_4}$ in~\eqref{eq:canonical-P2-F4} as columns to form a set of $X$ stabilizers. The rows can be viewed as $\ev(X)$, $\ev(Y)$ and $\ev(Z)$.\footnote{As always, because we are evaluating these polynomials over the points in $\mathbb{P}_{\mathbb{F}_4}^2$, the matrix will only be well-defined up to multiplications of each column by elements in $\mathbb{F}_4^\times$, because doing so does not affect Eq.~\eqref{eq:CS-to-CS}, since $\eta^3 = 1$ for all $\eta \in \mathbb{F}_4^\times$.}
The kernel has dimension $15$ over $\F_4$, and can be interpreted as the $15$ degree-four (quartic) monomial evaluations.
Among those, $\ev(X^4)=\ev(X),\;\ev(Y^4)=\ev(Y),\;\ev(Z^4)=\ev(Z)$ are the stabilizers. 
Since the projective plane can be viewed as three affine charts glued together, the previous construction can be seen as being evaluated on the $Z=1$ chart; this naturally leads us to choose $\ev(XYZ^2)$ and $\ev(Z(Y^3+\omega X^3))$ as the two logical rows to match that construction. The new kernel is spanned by $\{XZ^3, XY^3, YX^3, YZ^3\}$ besides the existing rows.
We find through brute-force iterative search that we can add at most two more logical rows; the most symmetric solution is presented below:
\begin{equation*}
\begin{matrix}
\ev(X(Z^3+\omega Y^3))\\
\ev(Y(X^3+\omega Z^3))\\
\ev(Z(Y^3+\omega X^3))\\
\ev(XYZ^2)\\
\ev(X)\\
\ev(Y)\\
\ev(Z)
\end{matrix}\;\;
\begingroup
\setlength{\arraycolsep}{3.5pt}
\begin{pmatrix}
0 & 1 & 1 & 1 & \omega & \omega^2 & \omega^2 & \omega^2 & \omega & \omega^2 & \omega^2 & \omega^2 & \omega & \omega^2 & \omega^2 & \omega^2 & 0 & 0 & 0 & 0 & 0 \\
0 & 0 & 0 & 0 & \omega & 1 & 1 & 1 & \omega^2 & \omega & \omega & \omega & 1 & \omega^2 & \omega^2 & \omega^2 & 0 & \omega & \omega & \omega & 0 \\
0 & \omega^2 & 1 & \omega & 0 & 1 & \omega & \omega^2 & 0 & 1 & \omega & \omega^2 & 0 & 1 & \omega & \omega^2 & 0 & \omega & \omega^2 & 1 & 0 \\
0 & 0 & 0 & 0 & 0 & 1 & \omega^2 & \omega & 0 & \omega & 1 & \omega^2 & 0 & \omega^2 & \omega & 1 & 0 & 0 & 0 & 0 & 0 \\ \hline
1 & 1 & 1 & 1 & 1 & 1 & 1 & 1 & 1 & 1 & 1 & 1 & 1 & 1 & 1 & 1 & 0 & 0 & 0 & 0 & 0 \\
0 & 0 & 0 & 0 & \omega & \omega & \omega & \omega & \omega^2 & \omega^2 & \omega^2 & \omega^2 & 1 & 1 & 1 & 1 & 1 & 1 & 1 & 1 & 0 \\
0 & \omega & \omega^2 & 1 & 0 & \omega & \omega^2 & 1 & 0 & \omega & \omega^2 & 1 & 0 & \omega & \omega^2 & 1 & 0 & \omega & \omega^2 & 1 & 1 \\
\end{pmatrix}
\endgroup
\end{equation*}

\end{construction}

\begin{construction}[$16\CS\to 1\TOF\#$ at $d=3$]\label{con:16cs1TOF}
Similar to the generalized triorthogonality for binary codes, one can relax the intersection of logical rows in three-orthogonality for $\CS$ protocols. The resulting logical action is usually a hypergraph magic gate. In the present instance, we use the two-variable RM code over $\F_4$, taking $xy,x^2,y^2$ as $X$ logicals, and $x,y,1$ as $X$ stabilizers. The logical action of applying physical $\CS$ gates transversally can be computed using the techniques described in App.~\ref{sec:embedded_code}. On the binary level, it is a $\CS$ on qubits 1,2 together with an $\F_4$-qudit CCZ on qubits 1 to 6 (cf. Fig.~\ref{fig:F4_qudit_CCZ}(a)). One can then measure the first qubit in $Z$ basis, and upon Clifford correction, the remaining circuit on qubits 2 to 6 is Clifford-equivalent to a $\TOF\#$ gate. Another way to see this is to set the first logical qubit to the zero state before applying the physical $\CS$ gates, such that all $\CCZ$ gates and $\CS$ gate involving this logical qubit have trivial effect.

Below we show the binary matrix describing the protocol for qubits. Applying the $\CS$ gate to adjacent pairs of physical qubits yields the logical gate $\CCZ_{013}\CCZ_{014}\CCZ_{023}$, which one checks is Clifford equivalent to $\TOF\#\sim_c\CCZ_{0,1,4}\CCZ_{0,2,3}$ by acting with $\CNOT_{43}$.
\begin{equation}
\setcounter{MaxMatrixCols}{32}
\setlength{\arraycolsep}{4pt}
\left(\begin{smallmatrix}
    0&0&0&0&0&0&0&0&0&0&1&1&0&1&1&0&0&0&0&1&1&0&1&1&0&0&1&0&1&1&0&1\\
    0&0&0&1&1&1&1&0&0&0&0&1&1&1&1&0&0&0&0&1&1&1&1&0&0&0&0&1&1&1&1&0\\
    0&0&1&1&1&0&0&1&0&0&1&1&1&0&0&1&0&0&1&1&1&0&0&1&0&0&1&1&1&0&0&1\\
    0&0&0&0&0&0&0&0&0&1&0&1&0&1&0&1&1&1&1&1&1&1&1&1&1&0&1&0&1&0&1&0\\
    0&0&0&0&0&0&0&0&1&1&1&1&1&1&1&1&1&0&1&0&1&0&1&0&0&1&0&1&0&1&0&1\\[1pt]\hline\\
    0&0&1&1&0&1&1&0&0&0&1&1&0&1&1&0&0&0&1&1&0&1&1&0&0&0&1&1&0&1&1&0\\
    0&0&1&0&1&1&0&1&0&0&1&0&1&1&0&1&0&0&1&0&1&1&0&1&0&0&1&0&1&1&0&1\\
    0&0&0&0&0&0&0&0&1&1&1&1&1&1&1&1&0&1&0&1&0&1&0&1&1&0&1&0&1&0&1&0\\
    0&0&0&0&0&0&0&0&1&0&1&0&1&0&1&0&1&1&1&1&1&1&1&1&0&1&0&1&0&1&0&1\\
    1&0&1&0&1&0&1&0&1&0&1&0&1&0&1&0&1&0&1&0&1&0&1&0&1&0&1&0&1&0&1&0\\
    0&1&0&1&0&1&0&1&0&1&0&1&0&1&0&1&0&1&0&1&0&1&0&1&0&1&0&1&0&1&0&1
\end{smallmatrix}\right)
\end{equation}
\end{construction}

\begin{construction}[$8\CS\to 1\TOF\#$ at $d=2$]\label{con:8cstoTOF}
One can also obtain an $8\CS\to 1\TOF\#$ ($d=2$) protocol from the Hermitian curve $y^2+y=x^3$ over $\F_4$ by choosing $y,x^2,x$ as $X$ logicals and $1$ as $X$ stabilizer. The logical intersection pattern (on the binary level) is the same as the above RM construction. Hence, one can use the same method to obtain $\TOF\#$. Again, we write down the binary matrix giving the protocol at the qubit level, where one checks that the application of the $\CS$ gate to adjacent pairs of physical qubits yields the same logical gate as in the previous construction.
\begin{equation}
\left(\begin{smallmatrix}
        0&0&0&1&1&0&1&1&1&0&1&1&1&0&1&1\\
        0&0&0&0&0&1&0&1&1&1&1&1&1&0&1&0\\
        0&0&0&0&1&1&1&1&1&0&1&0&0&1&0&1\\
        0&0&0&0&1&1&1&1&0&1&0&1&1&0&1&0\\
        0&0&0&0&1&0&1&0&1&1&1&1&0&1&0&1\\[1pt]\hline\\
        1&0&1&0&1&0&1&0&1&0&1&0&1&0&1&0\\
        0&1&0&1&0&1&0&1&0&1&0&1&0&1&0&1
    \end{smallmatrix}\right)
\end{equation}
\end{construction}

\begin{construction}[$8\CCS\to 1\CCS$ at $d=2$]\label{con:8ccsto1ccsd2}
The Reed-Solomon code over $\F_8$ can also be used for $\CCS$-to-$\CCS$ distillation.\footnote{We have already seen that this code admits transversal $\CCZ$, as a subcode of~\eqref{eq:RS_8_to_2}.}
\begin{equation}
\begin{pmatrix}
0 & \alpha & \alpha^2 & \alpha^3 & \alpha^4 & \alpha^5 & \alpha^6 & 1\\\hline
1 & 1 & 1 & 1 & 1 & 1 & 1 & 1\\
\end{pmatrix} 
\label{eq:RS_8_to_1_CCS}
\end{equation}
That is, this matrix can be binarized (using the self-dual normal basis for $\mathbb{F}_8$, see Eq.~\eqref{eq:F8_self_dual_normal_basis}), and the action of $\CCS$ on the $8$ sets of $3$ physical qubits yields a logical $\CCS$ gate on the three logical qubits.

To see the $\CCS$ transversality, imagine attaching the following embedded code gadget (a $k=3$ classical simplex code) to each triplet of qubits expanding an $\F_8$ qudit.
\begin{equation}
    |x_1\ra|x_2\ra|x_3\ra\underbrace{|0\ra^{\otimes 4}}_{\text{gadget qubits}}\mapsto |x_1\ra |x_2\ra|x_3\ra \underbrace{|x_1\oplus x_2\ra |x_1\oplus x_3\ra | x_2\oplus x_3\ra |x_1\oplus x_2\oplus x_3\ra}_{\text{gadget qubits}}.
\end{equation}
Applying $\sqrt{T} \sqrt{T} \sqrt{T} \sqrt{T}^\dagger \sqrt{T}^\dagger  \sqrt{T}^\dagger \sqrt{T}$ leads to a $\CCS$ gate on $|x_1\ra|x_2\ra|x_3\ra$, i.e., this state gains a phase of $i^{x_1x_2x_3}$ because of the following identity.
\begin{equation}
    4x_1x_2x_3=(x_1\oplus x_2\oplus x_3)-(x_1\oplus x_2)-(x_1\oplus x_3)-(x_2\oplus x_3)+x_1+x_2+x_3\;.
\end{equation}
One may use the self-dual normal basis for $\F_8$ in Eq.~\eqref{eq:F8_self_dual_normal_basis} to binarize the qudit code in Equation~\eqref{eq:RS_8_to_1_CCS} before gadgetizing it with the simplex code.
Then, on the resulting binarized and gadgetized $[[56,3,2]]$ qubit code, one can use the framework of Sec.~\ref{sec:generalized_triorthogonality} (but with $\text{mod }16$ cutoff rather than $\text{mod }8$ to calculate the induced logical action, noting that the logical action of $\CCS$ on the binarized code is the same as the logical action of $\sqrt{T}\sqrt{T}\sqrt{T}\sqrt{T}^\dagger\sqrt{T}^\dagger\sqrt{T}^\dagger\sqrt{T}$ on the binarized and gadgetized code. It can be calculated that this logical action is $\CCS_{123}\CS^\dagger_{12}\CS^\dagger_{13}\CS^\dagger_{23}S_1 S_2S_3$ (with no action between any stabilizer row and logical row). This logical action is Pauli equivalent to a $\CCS_{123}^\dagger$ (up to a global phase):
$$(X_1X_2X_3)\CCS_{123}^\dagger(X_1X_2X_3)=-i \CCS_{123}\CS^\dagger_{12}\CS^\dagger_{13}\CS^\dagger_{23}S_1 S_2S_3\;.$$

\end{construction}
\section{Protocols from Algebraic Geometry Codes}\label{sec:AG_codes}

\subsection{Introduction to algebraic curves}
\label{sec:intro_to_curves}

Here, we intend to provide an intuitive but non-rigorous introduction to algebraic curves and how they are used to define error-correcting codes. We leave the more rigorous treatment to App.~\ref{sec:curves_formal_definitions}, where one will find the bulk of the definitions. Our presentation in this section aims to provide some motivation for those abstract definitions later.

Let $K=\F_{2^s}$ throughout this section, and denote by $\bar{K}=\bar{\F}_{2^s}=\bar{\F}_2$ the algebraic closure of $K$.\footnote{The algebraic closure of a field \(K\) is \(K\) extended by the roots of all polynomials with coefficients in \(K\). That is, for any polynomial \(p \in K[T]\), there is a factorization \(p(T) = a(T-\alpha_1)\dots (T-\alpha_{\deg p})\) where each \(\alpha_i \in \bar{K}\). By considering the definition of $\mathbb{F}_{2^s}$, one can see that the algebraic closure of $\mathbb{F}_{2^s}$ is identical to that of $\mathbb{F}_2$ for all $s \geq 1$; that is, $\bar{\mathbb{F}}_{2^s} = \bar{\mathbb{F}}_2$.}
Informally, a curve is an algebraic set with one degree of freedom. In this paper, the curves are usually presented by one equation $f(x,y)=0$ in two affine variables; they are so-called \emph{plane curves}. 
All the curves we consider are presented in Table~\ref{table:maximal_curves_CCZ_only}. 
Later, when we say a curve $\chi$ is `defined by' an equation $f$ (or multiple equations when there are more variables), we mean $\chi$ `is the zero set of' that equation(s).
Points on the curve $\chi$ with all their coordinates\footnote{For \((x_1, x_2, \dots, x_m) \in \chi\), we refer to the \(i\)-th entry in the tuple as the \(i\)-th coordinate. When clear from context, we will simply refer to each coordinate by a unique variable e.g. \((x,y,z)\).} in $K$ are called \emph{$K$-rational points}, or simply \emph{rational points} when the field $K$ is clear from context.

To give an explicit example, consider the affine plane curve
\begin{equation}
\cH: y^2+y=x^3\quad\text{over }\F_4=\{0,1,\omega,\omega^2\}~.
\end{equation}
This is the $q=2$ Hermitian curve (in the form $y^q+y=x^{q+1}$ over $\F_{q^2}$), which also happens to be an elliptic curve\footnote{For us, an elliptic curve will be one with `genus' $g=1$; we will define the genus shortly.} in this case. It has eight $\F_4$-rational points:
\begin{align}
\label{eq:F4_Hermitian_points}
x=0:&\; (0,0), (0,1),\\
x=1:&\; (1,\omega), (1,\omega^2),\\
x=\omega:&\;(\omega,\omega), (\omega,\omega^2),\\
x=\omega^2:&\;(\omega^2,\omega), (\omega^2, \omega^2).
\end{align}

These eight $\F_4$-rational points $(\alpha,\beta)$ in the affine plane $\A^2_{\bar{K}}=\{(x,y)\;|\;x,y\in\bar{K}\}$ all satisfy $\beta^2+\beta=\alpha^3$; we will later refer to them as $P_{\alpha,\beta}$. Later, we shorthand $\A^2_{\bar{K}}$ as $\A^2$ for simplicity. There is one additional $\F_4$-rational point at infinity, denoted $Q_\infty$, obtained after projective closure, which we discuss momentarily.
This distinguished point $Q_\infty$ will be the point used in the construction of a one-point code.

One can embed the affine space $\A^2$ into the projective space $\bbP^2=\{(x:y:z)\neq(0:0:0)\}/(x:y:z)\sim (\lambda x:\lambda y:\lambda z)$ by identifying the point $(x,y)\in\A^2$ with the equivalence class of $(x:y:1)\in\bbP^2$. 
If an affine plane curve is defined by $f(x,y)=0$, then its projective closure in $\bbP^2$ is defined by the homogenization
\begin{equation}
f^h(x,y,z)=z^\ell\cdot f(x/z,y/z),\quad \ell=\deg f.
\end{equation}
The above Hermitian curve example $\cH$ has projective closure
\begin{equation}
\label{eq:F4_Hermitian_curve}
\cH': Y^2 Z + Y Z^2 = X^3~.\footnote{Note that capitalized variables are used to emphasize the fact that they are variables in the projective space, whereas lower-case variables denote that they take values in the affine space.}
\end{equation}
Projective points $(x:y:z)$ with $z=0$ are \emph{points at infinity}. 
In this example, there is exactly one such $\F_4$-rational point, namely
\begin{equation}
Q_\infty=(0:1:0).
\end{equation}
Together with this unique point at infinity, this gives nine $\F_4$-rational points on $\cH'$ in total.

\subsubsection{Genus of a smooth plane curve}
Let $\chi\subseteq \bbP^2$ be a projective plane curve defined by a homogeneous polynomial $f^h(X,Y,Z)\in K[X,Y,Z]$.
A point $P=(a:b:c)$ on $\chi$ is \emph{singular} if all three partial derivatives\footnote{The derivative of a polynomial \(\sum_i a_i x^i\) is defined to be the formal derivative \(\sum_i i\, a_i x^{i-1}\), where $i\,a_i$ denotes the summation of $i$ copies of $a_i$. The partial derivative in \(x_i\) of a multivariate polynomial in \((x_1, \dots, x_n)\) is computed by treating it as a univariate polynomial in \(x_i\).} of \(f^h\) vanish at $P$, i.e., $f^h_X(P)=f^h_Y(P)=f^h_Z(P)=0$. Otherwise $P$ is called \emph{smooth} or \emph{nonsingular}.
The curve is \emph{smooth} or \emph{nonsingular} if all of its points over $\bar{K}$ are smooth. Otherwise, it is \emph{singular}. For an affine curve defined by $f(x,y)=0$, the same condition says that a point $(a,b)$ is \emph{smooth} if $f_x(a,b)\neq 0$ or $f_y(a,b)\neq 0$. As an example, one can show by explicit calculation that the Hermitian curve example in Eq.~\ref{eq:F4_Hermitian_curve} is smooth.\footnote{Given $F = Y^2Z + YZ^2 + X^3$, we have $f^h_X = X^2, \; f^h_Y = Z^2, \; f^h_Z = Y^2$, noting that $2 = 0$ in extension fields of $\mathbb{F}_2$, and these functions cannot all be zero in the projective plane.} In particular, it is smooth at the point at infinity $Q_{\infty}$. Its affine version is also smooth.

In fact, almost all of the curves we consider, listed in Table~\ref{table:maximal_curves_CCZ_only}, and their projective versions, are smooth. There is one exception: the $y^2+y=x^5$ curve. Despite the affine curve being smooth ($f_y=1$), its projective closure $Y^2Z^3 + Y Z^4=X^5$ is singular at the point at infinity $(0:1:0)$. While smooth projective curves are easiest to make statement about, we will be able to treat this curve separately by noting that it has the form $y^h + \mu y = f(x)$ with $h = p^e > 1$, and satisfies the assumptions of Thm.~\ref{thm:stichtenoth_artin_schreier}; see~\cite[Prop. 6.4.1]{Stichtenoth}.\footnote{One may alternatively treat this curve by considering its smooth version, that is, its smooth projective model~\cite{wikipedia_resolution_of_singularities}\cite[Ch.~7]{fulton2008algebraic}. This is equivalent to considering the function field of the curve; we defer the discussion on this to App.~\ref{sec:curves_formal_definitions}.}

The genus (plural: genera), which we denote by $g$, is an important property of a given curve, and is a non-negative integer. Forgoing a formal definition for now, the genus characterizes the geometric complexity of a curve. For example, the projective line $\mathbb{P}^1$ has genus $g=0$, whereas elliptic curves have genus $g=1$. It is a deep fact that, as well as characterizing its topology, the genus also controls how many rational points a curve may have. For example, the Hasse-Weil bound~\cite[Sec. V]{Stichtenoth} states that for a smooth, irreducible\footnote{An irreducible plane curve has a defining polynomial that is irreducible, i.e., cannot be factored into two or more non-constant polynomials.} projective curve, one has
\begin{equation*}
    |\#\chi(\mathbb{F}_q) - (q+1)| \leq 2g\sqrt{q},
\end{equation*}
where $\#\chi(\mathbb{F}_q)$ is the number of $\mathbb{F}_q$-rational points on $\chi$.

The genus of a smooth projective plane curve defined by a degree $d$ polynomial is
\begin{equation}
g=\frac{1}{2}(d-1)(d-2),
\end{equation}
and a singular plane curve has a genus smaller than this value (see the \href{https://en.wikipedia.org/wiki/Pl\%C3\%BCcker_formula}{Plücker formula}).
In particular, for projective curves of the form $y^h+\mu y=f(x)$ with $h=p^e>1$ satisfying the assumptions in Thm.~\ref{thm:stichtenoth_artin_schreier}~\cite[Prop.~6.4.1]{Stichtenoth}, the genus equals
\begin{equation}
\frac{1}{2}(h-1)(\deg f-1).
\end{equation}

\subsubsection{Curves over finite fields with many points}
\label{sec:curves_with_many_points}
We aim to construct distillation protocols for $\mathbb{F}_q$-qudit CCZ states using curves over $\mathbb{F}_q$. We want to have curves with many rational points over $\mathbb{F}_q$, since later we will want to puncture away points (see Prop.~\ref{prop:puncturing-distillation-protocol}) to generate logical qudits while retaining good parameters. In this direction, let us consider curves with many rational points.

A curve of genus $g$ is called $\mathbb{F}_q$-maximal, or simply maximal, if it attains the Hasse-Weil upper bound:
\begin{equation}
\#\chi(\F_q)= q+1+2g\sqrt{q}.
\end{equation}
For genus $g=0$, the projective line $\mathbb{P}^1$ is maximal. If $g>0$, maximality may only be attained if $q$ is square. A tightening of the Hasse-Weil upper bound exists, called the Serre bound~\cite[Thm.~5.3.1]{Stichtenoth}:
\begin{equation}
\#\chi(\F_q)\le q+1+g\cdot\lfloor 2\sqrt{q}\rfloor
\end{equation}
A curve attaining this bound is called a \emph{defect-zero} curve. In particular, a maximal curve is a defect-zero curve. We will only work with defect-zero curves, and we will now move towards a classification of defect-zero curves of small genus over the field sizes of interest to us.
This will lead us to the curves in Table~\ref{table:maximal_curves_CCZ_only}.
Note that the field sizes of interest to us are small powers of \(2\), $q = 4,8,16,32,64$.

We start by constraining the possible genera for maximal curves over $\mathbb{F}_4$, $\mathbb{F}_{16}$ and $\mathbb{F}_{64}$. Indeed, the following facts are true for maximal curves over $\mathbb{F}_{q^s}$.
\begin{enumerate}
    \item Any smooth irreducible curve of genus $g>q(q-1)/2$ cannot be maximal~\cite{Ihara1982SomeRO};\footnote{We need not worry about non-smooth or reducible curves. In the first case, it is possible to smooth the curve~\cite{wikipedia_resolution_of_singularities}\cite[Ch.~7]{fulton2008algebraic}, and in the second case the curve may be decomposed into irreducible pieces.}
    \item There is a unique maximal curve over $\mathbb{F}_{q^2}$ of genus $g=q(q-1)/2$ up to isomorphism, namely the Hermitian curve $y^q + y = x^{q+1}$~\cite{Rck1994ACO};
    \item If $g<q(q-1)/2$ then $g\le \lfloor (q-1)^2/4\rfloor$~\cite{fuhrmann1996genus};
    \item For $q=2^s$, the second-largest genus is $q(q-2)/4$, which is achieved by the curve $\sum_{i=1}^{s} y^{q/2^i}=x^{q+1}$~\cite{max_curve_char_2}.
\end{enumerate}
It then follows that the possible genera for maximal curves in $\F_4$ are $\{0,1\}$, and in $\F_{16}$ are $\{0,1,2,6\}$. There are defect-zero curves over these fields with all these genera~\cite{van2000tables}, and they are listed in Table~\ref{table:maximal_curves_CCZ_only}.\footnote{It is unknown whether there are many than one up to isomorphism, outside of the genus $g=0$ case (the projective line), and the case of the Hermitian curves.}
For maximal curves over $\F_{64}$, we summarize the known constructions for certain genera in Table~\ref{table:F64_maximal_curves}. However, for constructing distillation protocols for qubits, we restrict ourselves to genus $0$ and $1$, in order to keep the resulting qubit distillation protocols reasonably small. We list the corresponding curves for $g=0$ and $1$ in Table~\ref{table:maximal_curves_CCZ_only}.

Moving onto $\F_8$ and $\F_{32}$, we are not aware of defect-zero curves other than the elliptic curves (those with $g=1$) and the Klein quartic curve over $\F_8$~\cite{van2000tables} (which has genus $g=3$).
The defect-zero elliptic curves that we use for these small alphabets ($s\le 6$), as summarized in our Table~\ref{table:maximal_curves_CCZ_only}, come from~\cite{elliptic_char_2} and they happen to have binary coefficients in their defining equations.

As it will be helpful later, we comment that for the elliptic curves of each type, the number of rational points, including the point at infinity, is equal to $2^s-\omega_1^s-\omega_2^s+1$, where the complex numbers $\omega_1, \omega_2$ are called Frobenius eigenvalues, as displayed in Table~\ref{Table:elliptic_curves_char2}.
\begin{table}[ht]
\centering
\begin{tabular}{|c|c|c|}
\hline
Type & Equation           & $\omega_1,\;\omega_2$   \\\hline
I    & $y^2+y=x^3+x^2+1$  & $1\pm i$                \\
II   & $y^2+xy=x^3+x^2+x$ & $\frac{1}{2}(1\pm i\sqrt{7})$  \\
III  & $y^2+y=x^3$        & $\pm i\sqrt{2}$         \\
IV   & $y^2+xy=x^3+x$     & $\frac{1}{2}(-1\pm i\sqrt{7})$ \\
V    & $y^2+y=x^3+x^2$    & $-1\pm i$               \\\hline
\end{tabular}
\caption{The five types of nonsingular elliptic curves over $\F_2$ \cite{elliptic_char_2}, and the corresponding Frobenius eigenvalues. The number of $\F_{2^s}$-rational points on the curve is $2^s-\omega_1^s-\omega_2^s+1$ (including the point at infinity).}
\label{Table:elliptic_curves_char2}
\end{table}

\subsection{Introduction to one-point codes}
\label{sec:intro_to_AG_codes}
Here, we introduce the most elementary way of constructing a (classical) algebraic geometry code from an algebraic curve: the one-point code construction.
Let us provide motivation for this construction by considering how algebraic geometry codes extend Reed-Solomon codes; this intuition was how Goppa first discovered algebraic geometry codes~\cite{goppa1983algebraico}. 

Typically, we think of a (generalized) Reed-Solomon code as defined by a set of points $\boldsymbol{\alpha} \subseteq \mathbb{F}_q$, where $|\boldsymbol{\alpha}| = n \leq q$, and some integer $k \leq n$. The Reed-Solomon code is then formed by taking every univariate polynomial over $\mathbb{F}_q$ of degree less than $k$, and evaluating this polynomial at the points $\boldsymbol{\alpha}$. One codeword is the evaluation of one polynomial, and it turns out that we obtain a code of length $n$ and dimension $k$ over $\mathbb{F}_q$. Note that for a ``full-length'' Reed-Solomon code, one takes $n=q$ (and $\boldsymbol{\alpha} = \mathbb{F}_q$).

An initially strange, but ultimately valid way to talk about polynomials of degree less than $k$ is to talk about the polynomials which ``blow up'' fewer than $k$ times at infinity (the polynomial $x+a$ blows up once at infinity, the polynomial $x^2+ax+b$ blow up twice at infinity, etc.). Slightly more formally, the polynomials of degree less than $k$ are those polynomials whose \textit{pole order} at infinity is less than $k$. The object that we are secretly considering is the projective line over $\mathbb{F}_q$ (which has $q+1$ $\mathbb{F}_q$-rational points, namely $\mathbb{F}_q \cup \{Q_\infty\}$), and we are considering the set of polynomials defined on this curve with pole order less than $k$ at infinity. We then form a code by evaluating these polynomials at some choice of (non-infinite points) $\boldsymbol{\alpha} \subseteq \mathbb{F}_q$ on the projective line.

With this intuition, passing from Reed-Solomon codes to more general algebraic geometry codes is natural. Indeed, the one-point code construction over an algebraic curve calls for us to pick a point $Q$ on the curve and construct codewords by considering polynomials with pole order below some prescribed value at $Q$. A codeword is then exactly the evaluation of one of these polynomials at some prescribed points on the curve. While the Reed-Solomon case of $g=0$ is very transparent, passing to higher genera $g$ allows us to construct longer codes over the same field $\mathbb{F}_q$, because raising the genus $g$ yields curves with more points over $\mathbb{F}_q$.

There are two immediate questions. First, which point $Q$ to pick, and second, how can one compute the pole order of a polynomial at $Q$?
For the first question, we take $Q$ to be $(0:1:0)$, which can be verified to be a point on the projective closures of all the curves we present in Table~\ref{table:maximal_curves_CCZ_only}. Since this is a point at infinity, we write $Q_\infty$ instead of $Q$ for this special point, like we did for the Reed-Solomon/projective line example above.
For the second question, given an irreducible smooth projective curve, the pole order of a function at $Q_\infty$ can be computed using \emph{intersection theory}.\footnote{The one curve we consider whose projective closure is not smooth is $y^2+y = x^5$, as mentioned. However, in this case, one can use the smooth projective model of the curve~\cite{wikipedia_resolution_of_singularities}\cite[Ch.~7]{fulton2008algebraic}, or use the places of the function field. We defer this to App.~\ref{sec:curves_formal_definitions}.}
We will elucidate the general procedure behind this calculation through examples, starting with an example of the Hermitian curve $\cH: \; y^2+y=x^3$ over $\F_4$. Note that all content between now and Thm.~\ref{thm:stichtenoth_artin_schreier} requires the projective curve to be smooth.

Suppose $\chi$ is an affine plane curve defined by an irreducible equation $f(x,y)=0$. Its \emph{affine coordinate ring} is $K[\chi]=K[x,y]/(f)$ (see Sec.~\ref{sec:ring_field} for the preliminary material on ring theory).
The \emph{function field} of $\chi$ is the fraction field $K(\chi)=\text{Frac}(K[\chi])$.\footnote{Note that $K[\chi]$ is an integral domain, following from the fact that $f$ is irreducible.}
Elements of $K(\chi)$ are called \emph{rational functions} on $\chi$.
One may view them as quotients of polynomial functions, subject to the relation $f(x,y)=0$, where the denominator is not identically zero.
For example, on the curve $y^2+y=x^3$ and its function field, one can replace any expression containing $y^2$ by $y+x^3$, thus reducing the degree of $y$ to $\le 1$.

On the projective closure of the curve, the rational functions are quotients of homogeneous polynomials of the same degree, again subject to the equation defining the curve, and the denominator not being identically zero.
A simple example of a rational function is $X/Z$, where $(X:Y:Z)$ are homogeneous coordinates on $\bbP^2$.

The projective plane is three affine charts\footnote{The notion of a chart will be familiar to those have worked with manifolds, where some patch of an unfamiliar topology like projective space may be mapped using familiar coordinates: in this case affine coordinates.} glued together; we define our affine curves in the $Z=1$ affine chart.
Thus, on the projective closure of the curve, the affine coordinates $x$ and $y$ can be viewed as the rational functions
\begin{equation}
    x=\frac{X}{Z},\quad y=\frac{Y}{Z}.
\end{equation}

Take the projective closure of $\cH$, which is $\cH':\; Y^2Z+YZ^2=X^3$. We seek the pole orders of $x$ and $y$ at $Q_\infty$ on the curve $\mathcal{H}'$. We can calculate these using intersections of $\mathcal{H}'$ with lines in $\bbP^2$.
Consider the line $\cL:\;X=0$. Substituting $X=0$ into $\cH'$ gives $YZ(Y+Z)=0$, so the intersection points of $\mathcal{H}'$ and $\mathcal{L}$ are $(0:0:1)=P_{0,0}$, $(0:1:1)=P_{0,1}$, $(0:1:0)=Q_\infty$, each occurring with multiplicity $1$. This situation is denoted
\begin{equation}
    \cH'\cdot\cL=P_{0,0}+P_{0,1}+Q_\infty,
\end{equation}
What we have on the right-hand side is an example of \textit{divisor notation}. 
For our purposes, a \emph{divisor} is simply a bookkeeping device telling us at which points some function $f$ has zeros and poles. Slightly more formally, the divisor of a function $f$, denoted $(f)$, is an integer linear combination of points/places,\footnote{For the purposes of this section, the words point and place may be used interchangeably, since we only consider points/places of degree one, that is, rational places. For the more general notions, see App.~\ref{sec:curves_formal_definitions}.} where the coefficient of a place is the multiplicity of its zero at the place (if positive) or minus the order of its pole (if negative). In this case, we use $\mathcal{H}' \cdot \mathcal{L}$ to denote the divisor of the function $YZ(Y+Z)$, which has three zeros, each of multiplicity one, and no poles.

Intuitively, it is clear that if functions $f,g$ have zeros of multiplicities $o_1,o_2$ at a point $P$, then the multiplication of the two functions $fg$ has a zero of multiplicity $o_1+o_2$ at $P$. A similar statement holds for poles and their orders, leading us to the statement $(fg)=(f)+(g)$.

Next, consider the line at infinity $\mathcal{N}:\;Z=0$. Substituting $Z=0$ into $\cH'$ gives $X^3=0$, so $\mathcal{N}$ meets $\cH'$ only at $Q_{\infty}$, with multiplicity $3$. Now we have
\begin{equation}
    \cH'\cdot\mathcal{N}=3Q_{\infty}.
\end{equation}
Since $x=X/Z$, its divisor is obtained by subtracting the intersection divisor with $\mathcal{N}:\;Z=0$ from that with $\mathcal{L}:\;X=0$:
\begin{equation}
(x)=\left(\frac{X}{Z}\right)= (X) - (Z) =\cH'\cdot\mathcal{L}-\cH'\cdot\mathcal{N} =P_{0,0}+P_{0,1}-2Q_{\infty}.
\end{equation}
We have found that $x$ has a pole of order $2$ at $Q_\infty$.

One can do a similar calculation for $(y)$ by considering the intersection of $\cH'$ with $\mathcal{M}:\;Y=0$, which is $\cH'\cdot \mathcal{M}=3P_{0,0}$. Hence,
\begin{equation}
(y)=\left(\frac{Y}{Z}\right)=3P_{0,0}-3Q_\infty,
\end{equation}
and $y$ has a pole of order $3$ at $Q_\infty$.

This motivates the notion of the valuation of a function at a given place. Considering, for example, the place $Q_\infty$, for a nonzero rational function $f$, define $v_{Q_\infty}(f)$ to be its order of vanishing at $Q_\infty$, with poles recorded as negative values. In our example,
\begin{equation}
v_{Q_{\infty}}(x)=-2,\quad v_{Q_\infty}(y)=-3.
\end{equation}
We also have $v_{Q_\infty}(x^i y^j)=-(2i+3j)$.

More generally, for any point $P$ on the curve and any nonzero rational function, the integer $v_P(f)$ measures the order of vanishing or pole of $f$ at $P$. The divisor of $f$ is 
\begin{equation}
(f)=\sum_P v_P(f)P.
\end{equation}

Given a divisor $D$, write $D\ge 0$ if all its integer coefficients are nonnegative, and write $\supp(D)$ for the set of points with a nonzero coefficient in $D$.
For our purpose, the degree $\deg(D)$ is the sum of the coefficients.\footnote{In general, if $D=\sum_P n_PP$, then $\deg (D)=\sum_P n_P \cdot \deg(P)$, but for our purpose, all points appearing in a divisor are places of degree one, i.e., we only consider places $P$ with $\deg(P) = 1$. The formal definition of the degree of a place can be found in App.~\ref{sec:curves_formal_definitions}.}

Given a divisor $G$, its Riemann-Roch space is the set of functions with divisor $\ge -G$.
\begin{equation}
\cL(G)=\{f\in K(\chi)^{\times}: (f)+G\ge 0\}\cup \{0\}
\end{equation}
One can see that the Riemann-Roch is a vector space over the field $K$, and its dimension over \(K\) is denoted $\ell(G):=\dim\cL(G)$.
For the one-point codes we are particularly interested in, we take $G=rQ_\infty$, so the Riemann-Roch space is
\begin{equation}
\cL(rQ_\infty)=\{f\in K(\chi)^\times:\;v_{Q_\infty}(f)\ge -r,\; v_P(f)\ge 0,\; \forall P\neq Q_\infty\}\cup\{0\}~.
\end{equation}
This consists of all rational functions having pole order at most $r$ at $Q_\infty$ and \emph{no poles} (that is, the functions are regular) away from $Q_\infty$. Stepping back for a moment, in our elementary Reed-Solomon example, the underlying projective curve was the projective line, and the Riemann-Roch space used to define the Reed-Solomon code of dimension $k$ was the vector space of polynomials whose pole order at the point at infinity, $Q_\infty$, was less than $k$. In other words, we had $G = (k-1)Q_\infty$. In that case, the Riemann-Roch space was a $k$-dimensional vector space, as was the resulting code.

Just as in the Reed-Solomon case, the dimension of the Riemann-Roch space turns out to control the dimension of the corresponding code, and so we need tools to calculate the dimension of the Riemann-Roch space.\footnote{The dimension of the code will equal the dimension of the Riemann-Roch space when the evaluation map is injective, which we always take to be the case. The evaluation map is injective when the degree of the divisor defining the Riemann-Roch space is less than the number of points at which we evaluate the functions, that is, $\deg(G) < n$. This is the natural analogue of the familiar statement in the Reed-Solomon case, where we have that a polynomial of degree $<n$ is uniquely specified by its evaluations at $n$ points.}
In full generality, this can be hard to do precisely, since for a given pole order $r$, there might not exist a rational function that has a pole of order $r$ at $Q_\infty$, so the dimension may be smaller than one would otherwise expect.
This is the situation that $\mathcal{L}(rQ_\infty) = \mathcal{L}((r-1)Q_\infty)$ for some positive integer $r$, in which case $r$ is called a \textit{gap}; proving the non-existence of these gaps is beyond the scope of this introduction. For our purposes, we just utilize the following corollary of Riemann-Roch theorem~\cite[Cor.~2.58]{AG_codes_handbook}.

Let $D$ be a divisor on a smooth projective curve of genus $g$ and let $\deg(D)>2g-2$. Then, the dimension of the corresponding Riemann-Roch space is
\begin{equation}
\label{eq:Riemann-Roch_space_dim}
\ell(D)=\deg(D)-g+1~.
\end{equation}
This formula may be understood through the lens of the Weierstrass gap theorem which states that the number of gaps is equal to the genus $g$ of the curve, and all the integers $r\ge 2g$ are non-gaps. For example, consider our above $\mathcal{H}'$ case, which had genus $g=1$. There must be $g=1$ gaps, but all the integers greater than or equal to $2$ are non-gaps, meaning that the only gap is $1$ itself. This gives us our conclusion on the dimension of the Riemann-Roch space in this case:
\begin{equation*}
    l(rQ_\infty) = \begin{cases}
        1 &\text{, if } r = 0\\
        r &\text{, if } r > 0
    \end{cases}.
\end{equation*}

We now have the tools to calculate the dimension of certain Riemann-Roch spaces arising from one-point code constructions over curves given in Table~\ref{table:maximal_curves_CCZ_only}. Moreover, using the intersection theory tools we have developed, we can establish the pole orders of $x$ and $y$ at $Q_\infty$, and thus argue for the existence of monomial bases of the Riemann-Roch spaces of the form $\{x^iy^j\}$, for certain combinations of $(i,j)$. For all curves in our Table~\ref{table:maximal_curves_CCZ_only} except $y^2+y=x^5$, one can apply the above intersection theory approach to establish the pole order of $x$ and $y$ at $Q_\infty$. Indeed, to do this, we first take all the eligible monomials that could be in the Riemann-Roch space by considering their pole orders at $Q_\infty$. In some cases (see the example in the next paragraph for the Klein quartic) we need to further filter out the ones that have poles at some point other than $Q_\infty$.\footnote{These are not allowed in the Riemann-Roch space $\mathcal{L}(rQ_\infty)$, because the definition of this Riemann-Roch space only allows its members to have poles at $Q_\infty$.} To form the monomial basis for the Riemann-Roch space, we also need to reduce the monomials by the equation defining the curve (e.g. replace $y^2$ by $y+x^3$ in the $\cH'$ example above, thus decreasing the degree of $y$ below $2$), thus guaranteeing linear independence between the monomials. To show that the monomials span the Riemann-Roch space, take $r=2g$ and do dimension counting; the number of gaps must equal the genus of the curve.

To further illustrate the above procedure, consider the Klein quartic curve $x^3 y + y^3 + x=0$ over $\F_8$~\cite[Example~2.34, 2.75, 2.76]{AG_codes_handbook}.
Let $\chi: Y^3Z+X^3Y+Z^3X=0$ be its projective closure.
One can see $P=(1:0:0)$, $Q_\infty=(0:1:0)$, $R=(0:0:1)$ are on $\chi$; among them, $P$ and $Q_\infty$ are points at infinity because $z=0$. Let $\mathcal{L}$ be the line with equation $X=0$. Substituting $X=0$ into $\chi$, one is left with $Y^3 Z=0$, and therefore $\mathcal{L}$ intersects $\chi$ in points $R$ and $Q_\infty$; the intersection divisor reads $\chi\cdot\mathcal{L}=3R+Q_\infty$.
Similarly, one can consider the intersection of $\mathcal{M}:Y=0$ or $\mathcal{N}:Z=0$ with $\chi$, which give $\chi\cdot\mathcal{M}=3P+R$ and $\chi\cdot\mathcal{N}=3Q_\infty+P$. 
Therefore, $(X/Z)=3R-P-2Q_\infty$, $(Y/Z)=R+2P-3Q_\infty$, and $(x^iy^j)=-(2i+3j)Q_\infty+(2j-i)P+(3i+j)R$. If we consider Riemann-Roch spaces defined by the place $Q_\infty$, i.e., a divisor $G=rQ_\infty$, 
the functions in $\cL(rQ_\infty)$ should satisfy $(f)+rQ_\infty\ge 0$. For some $x^iy^j$ to be in the Riemann-Roch space, the coefficient of $P$ in $(x^iy^j)$ must be non-negative, meaning $i\le 2j$. 
Now we know that $\{x^i y^j\;|\; 0\le 2i+3j\le r,\; 0\le i\le 2j\}$ are in $\cL(rQ_\infty)$, take $r=2g=6$, we have four functions $1,y,xy,y^2$ with poles in $Q_\infty$ of order $0,3,5,6$, respectively. Hence
they are independent and since $\ell(6Q_\infty)=6-g+1=4$, they form a basis of $\cL(6Q_\infty)$. 
If $i>3$ in $x^iy^j$, since $i\le 2j$, we can replace $x^3y$ by $y^3+x$ and decrease $i$. Therefore, we should take $0\le i<3$ for basis.
For any $r\ge 6$, it is clear that we can find $i,j$ such that $2i+3j=r$ and $0\le i<3,\;0\le i\le 2j$. Therefore, $\{x^i y^j\;|\; 0\le 2i+3j\le r,\;0\le i<3,\; 0\le i\le 2j,\}$ form a basis of $\cL(rQ_\infty)$.

One can repeat the above procedure to the two elliptic curves with a cross term $xy$ in our Table~\ref{table:maximal_curves_CCZ_only}, and establishes that $\{x^i y^j\;|\; j\in\{0,1\},\; 2i+3j\le r\}$ forms a basis of $\cL(rQ_\infty)$. For the rest of the curves in our Table~\ref{table:maximal_curves_CCZ_only}, including $y^2+y=x^5$ (whose projective closure is not smooth), the following theorem (we only care about $p=2$) establishes a monomial basis for them. This result is useful for actually constructing codes, because we can write down a basis for the Riemann-Roch space corresponding to the code, and therefore a basis for the code as the evaluations of those basis functions.

\begin{theorem}\cite[Prop.~6.4.1]{Stichtenoth} 
\label{thm:stichtenoth_artin_schreier}
Let $K$ be a field of characteristic $p>0$. Consider a function field $F=K(x,y)$ defined by $y^h+\mu y=f(x)\in K[x]$, where $h=p^e>1$, and $\mu\in K^\times$. Assume that $\deg f=:m>0$ is coprime to $p$, and that all roots of $y^h+\mu y=0$ are in $K$. 
Then, there is a unique place at infinity, $Q_{\infty}$, of degree one.\footnote{For the purposes of this section, it suffices to only think about rational places, which have degree one.}
Moreover, the elements $x^iy^j$ with $0\le i$, $0\le j\le h-1$,\footnote{It is natural to restrict the degree of $y$ to less than $h$, because $y^h$ can be replaced with $y+f(x)$.} $h i+mj\le r$ form a basis of the space $\cL(rQ_\infty)$ over $K$.
The genus is $g=(h-1)(m-1)/2$.
\end{theorem}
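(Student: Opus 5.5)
The plan is to follow the standard Artin--Schreier analysis, working in the function field $F=K(x,y)$ over the rational function field $K(x)$. First I will show that $F/K(x)$ has degree $h$ and that the pole $P_\infty$ of $x$ in $K(x)$ is totally ramified in $F$. Let $Q$ be a place of $F$ above $P_\infty$, with ramification index $e(Q|P_\infty)$. Then $v_Q(x)=-e(Q|P_\infty)$. Since $v_Q(f(x))=-m\,e(Q|P_\infty)<0$, the equation $y^h+\mu y=f(x)$ forces $v_Q(y)<0$, hence $h\,v_Q(y)=v_Q(y^h+\mu y)=-m\,e(Q|P_\infty)$. Because $\gcd(h,m)=1$ (as $h=p^e$ and $p\nmid m$), $h$ must divide $e(Q|P_\infty)$. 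Combined with $e(Q|P_\infty)\le [F:K(x)]\le h$, this gives $e=h=[F:K(x)]$. Thus there is a unique place $Q_\infty$ above $P_\infty$; it has degree one (since $e f = h$ forces $f=1$), and it satisfies $v_{Q_\infty}(x)=-h$ and $v_{Q_\infty}(y)=-m$. This also shows that $T^h+\mu T-f(x)$ is irreducible over $K(x)$.

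Next comes the basis. The functions $x^iy^j$ have no poles at affine places, because $x$ and $y$ are integral over $K[x]$; so their only possible pole is at $Q_\infty$, of order $hi+mj$. For $0\le j\le h-1$ the values $hi+mj$ are pairwise distinct, since $hi+mj=hi'+mj'$ implies $h\mid m(j-j')$, hence $j=j'$. By the strict triangle inequality for valuations, these monomials are therefore linearly independent over $K$, and those with $hi+mj\le r$ lie in $\mathcal{L}(rQ_\infty)$.

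For spanning, I will use that the integral closure of $K[x]$ in $F$ is $K[x,y]$. Take any $z\in\mathcal{L}(rQ_\infty)$. It is regular at every affine place, so it lies in that integral closure and can be written as $z=\sum_{j<h}a_j(x)y^j$. The valuations of the nonzero terms are distinct, so $v_{Q_\infty}(z)$ equals the minimum of them, and every term must satisfy $h\deg a_j+mj\le r$.

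The main obstacle is proving that $K[x,y]$ is integrally closed. I would do this by checking that the affine curve is smooth: the partial derivative $\partial/\partial y$ of $y^h+\mu y-f(x)$ is $\mu\neq0$, and smooth affine coordinate rings are integrally closed. An alternative is a discriminant argument, since the discriminant of $T^h+\mu T-f$ is a nonzero constant. If that route stalls, I would fall back on dimension counting via Riemann--Roch, once the genus is known.

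Finally, for the genus: the sequence of non-gaps at $Q_\infty$ is the semigroup generated by $h$ and $m$, and its number of gaps is $(h-1)(m-1)/2$ by the classical Sylvester count. Since the number of gaps at a place equals $g$ (the Weierstrass gap theorem, or Riemann--Roch applied with $r$ large), we get $g=(h-1)(m-1)/2$. Alternatively, the genus follows from Hurwitz's formula with the different exponent $(h-1)(m+1)$ at $Q_\infty$.
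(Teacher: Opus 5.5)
Your proposal is correct, and after the first step it follows a genuinely different route from the one behind the paper's citation.

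The paper gives no proof of its own. It cites \cite[Prop.~6.4.1]{Stichtenoth}, and in Appendix~\ref{sec:metric_artin_schreier} it only quotes total ramification of $P_\infty$ and derives $v_{Q_\infty}(x)=-h$, $v_{Q_\infty}(y)=-m$ by exactly your valuation argument.

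The cited source then works in the opposite order to you:
\begin{itemize}
\item It uses that $F/K(x)$ is Galois with elementary abelian group; this is what the hypothesis that all roots of $T^h+\mu T$ lie in $K$ is for.
\item Via Artin--Schreier theory it shows that $Q_\infty$ is the only ramified place, with different exponent $(h-1)(m+1)$, and gets $g$ from Hurwitz.
\item Only then does it obtain the basis, from linear independence of the monomials plus a dimension count against $\ell(rQ_\infty)$. This is essentially your fallback.
\end{itemize}

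You instead prove spanning directly from the explicit integral closure $K[x,y]$ of $K[x]$. The discriminant of $T^h+\mu T-f$ is $\pm\mu^h\in K^\times$, which settles this at once without any appeal to regularity. You then read off $g$ from Sylvester's count for $\langle h,m\rangle$ and the Weierstrass gap theorem. The gap theorem applies because the degree-one place $Q_\infty$ already forces $K$ to be the full constant field.

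Each route buys something:
\begin{itemize}
\item Yours avoids differents and wild ramification entirely, and never uses the roots-in-$K$ hypothesis.
\item The cited route additionally yields the ramification data and $(dx)=(2g-2)Q_\infty$, which the paper needs for its metric computation. In your setting this is also immediate. The unit discriminant means no finite place ramifies, so $dx$ has no affine zeros or poles, and $\deg(dx)=2g-2$ then forces $(dx)=(2g-2)Q_\infty$.
\end{itemize}
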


We may now consider applying Theorem~\ref{thm:stichtenoth_artin_schreier} to the curves we consider. In the $m>h$ case, $Q_\infty=(0:1:0)$ because the projective closure has degree $m$. Indeed, writing $f(x) = \sum_{i=0}^ma_ix^i$ for $a_m \neq 0$, the projective closure is $Y^hZ^{m-h} + \mu Y Z^{m-1} = a_mX^m + \sum_{i=0}^{m-1}a_iX^iZ^{m-i}$, which is clearly solved by $(0:1:0)$. %
The case $h=m$ is ruled out by the fact that $m$ and $p$ are coprime. The final case to consider is that of $h>m$. In this case, $(1:0:0)$ is the unique point at infinity.
\\[3pt]

Now, the one-point codes we are interested in take the following form. Given a smooth projective curve over $K$, or a curve satisfying the conditions of Thm.~\ref{thm:stichtenoth_artin_schreier}, take $G=rQ_\infty$ and let $D=P_1+P_2+\cdots+P_n$ be the sum of all $K$-rational points on the affine curve ($\supp\;G\;\cap\;\supp\;D=\emptyset$).
The algebraic geometry (AG) code $C_\cL(D,G)$ associated with the divisors $D$ and $G$ is defined to be the evaluations of elements of the Riemann-Roch space
\begin{equation}
C_\cL(D,G):=\{(f(P_1),\dots,f(P_n))\;|\; f\in\cL(G)\}\subseteq\F_q^n.
\end{equation}
Let us give an explicit example again using $\cH'$. Here $D$ is the sum of the eight affine $\F_4$-rational points listed in Eq.~\eqref{eq:F4_Hermitian_points}. The code $C_\cL(D, 3Q_\infty)$ consists of the evaluation of $\{1,x,y\}$:
\begin{equation}
\begin{array}{c|cccccccc}
      & P_{0,0} & P_{0,1} & P_{1,\omega} & P_{1,\omega^2} & P_{\omega,\omega} & P_{\omega,\omega^2} & P_{\omega^2,\omega} & P_{\omega^2,\omega^2} \\
\hline
\mathrm{ev}(y) & 0 & 1 & \omega & \omega^2 & \omega & \omega^2 & \omega & \omega^2 \\
\mathrm{ev}(x) & 0 & 0 & 1 & 1 & \omega & \omega & \omega^2 & \omega^2 \\
\mathrm{ev}(1) & 1 & 1 & 1 & 1 & 1 & 1 & 1 & 1
\end{array}
\end{equation}

The dimension of a code $C_\cL(D,G)$ is determined as follows.
The evaluation map $\ev_D:\cL(G)\to K^n$ has kernel\footnote{A function $f\in\cL(G)$ lies in $\ker(\ev_D)$ when $f(P_i)=0$ for every $i$, equivalently $v_{P_i}(f)\ge 1$ for every $i$. Since $\supp(G)\cap \supp(D)=\emptyset$, this is saying $(f)+G-D\ge 0$. Thus $\ker(\ev_D)=\cL(G-D)$.} $\cL(G-D)$. Hence,
\begin{equation}
\label{eq:ev_code_dimension}
k=\dim C_\cL(D,G)=\ell(G)-\ell(G-D).
\end{equation}
It turns out in general that if $\deg(A)<0$ for any divisor $A$, we have $\ell(A) = 0$, and therefore if $\deg(G)<n$%
, then $\deg(G-D)<0$, so $\ell(G-D)=0$, from which it follows that the evaluation map is injective. Therefore, $k=\ell(G)$.

If, moreover $\deg(G)>2g-2$,
$\ell(G) = \deg(G)-g+1$ and thus
\begin{equation}
k=\deg(G)-g+1,\quad (2g-2<\deg(G)<n).
\end{equation}
For the one-point code $G=rQ_\infty$, we have
\begin{equation}\label{eq:one-point-code-dim}
k=r-g+1,\quad (2g-2<r<n).
\end{equation}
We state without proof the following distance lower bounds:
\begin{enumerate}
\item \cite[Cor.~2.2.3]{Stichtenoth} $C_\cL(D,G)$ has distance $d\ge n-\deg(G)$~.
\item \cite[Thm.~2.2.7,~2.2.8]{Stichtenoth} The dual code $C_\cL(D,G)^\perp$ has distance $d^\perp \ge \deg(G)-(2g-2)$.
\end{enumerate}

Since the divisor of the product of two functions is the sum of the individual divisors, \((fg) = (f) + (g)\), one can see that one-point codes satisfy a multiplication property.
That is, for two codewords of a one-point code $(c_1,\cdots,c_n)\in C_\cL(D,rQ_\infty)$ and $(c_1',\cdots,c_n')\in C_\cL(D,r'Q_\infty)$, their entrywise-product is also contained in a one-point code.
\begin{equation}\label{eq:one-point-multiplication}
(c_1c_1',\cdots,c_nc_n')\in C_\cL(D,(r+r')Q_\infty).
\end{equation}
This is convenient for our purpose of constructing e.g., triorthogonal spaces. Puncturing such a space allows us to obtain qudit CCZ distillation protocols, as we will see shortly.

\subsection{Qudit distillation protocols from punctured one-point codes}
\label{sec:qudit-CCZ-one-point-codes}

We now show how to construct distillation protocols from one-point codes by puncturing.
Recall the triorthogonality conditions from Sec.~\ref{sec:protocols}.
Our procedure will take a space satisfying one of these triorthogonality conditions in some trivial way (i.e. \(k=0\)) and puncturing to obtain a non-trivial basis.
A map $F:\F_q^n\to \F_q$ is called \emph{multi-additive} if if is additive in each argument separately: $\forall j$,
$$F(x_1,\ldots,x_j+x_j',\dots,x_r)=F(x_1,\dots,x_j,\ldots,x_r)+F(x_1,\dots,x_j',\ldots,x_r).$$
Let $p=\text{char}\;\F_q$. 
If $s,t,u$ are powers of $p$, then
\begin{equation}
\lambda(x,y,z) = x^s y^t z^u\;,
\end{equation}
is multi-additive, equivalently $\F_p$-trilinear, because $(x_1+x_2)^{p^i}=x_1^{p^i}+x_2^{p^i}$ for non-negative integers $i$ and $x_1,x_2\in\F_q$. 
It need not be $\F_q$-linear in any argument. In the following, we will restrict \(\lambda\) to always be a map of this form.

For a metric $\bGamma\in\F_q^n$, define a map $\Lambda_{\bGamma} \colon \F_q^n \times \F_q^n \times \F_q^n \to \F_q$,
\begin{equation}
\label{eq:Lambda_ortho_wrt_Gamma}
\Lambda_{\bGamma}(\bg_1,\bg_2,\bg_3)=\sum_{i=1}^n \Gamma_i \lambda(g_{1,i},g_{2,i},g_{3,i})\;.
\end{equation}
We call an $\F_q$-linear code $C\subseteq \F_q^n$ $\Lambda_{\bGamma}$-orthogonal if 
\begin{align}\label{eq:generalized-triortho}
     \quad \Lambda_{\bGamma}(\bg_a, \bg_b, \bg_c) = 0\;,\quad\forall\;\bg_a,\bg_b,\bg_c \in C.
\end{align}
If $(g_a)_{a=1}^m$ is an $\F_q$-basis of $C$, it suffices to check all $m^3$ basis triples, since
$$\Lambda_{\bGamma}\!\left(\sum_a\alpha_ag_a,\sum_b\beta_bg_b,\sum_c\gamma_cg_c\right)=\sum_{a,b,c}\alpha_a^s\beta_b^t\gamma_c^u\Lambda_{\bGamma}(g_a,g_b,g_c).$$
Thus the condition depends on the space $C$, rather than on a choice of generator matrix. The qudit-CCZ and the twisted-triorthogonal conditions for $U_7$ of Sec.~\ref{sec:protocols} are instances of this construction.

It is typically easier to first construct a triorthogonal space and then puncture it to obtain a basis satisfying the triorthogonality conditions from Sec.~\ref{sec:protocols}.

\begin{proposition}[Punctured triorthogonal spaces]
\label{prop:puncturing-distillation-protocol}
    Let $C\subseteq \F_q^n$ be an $[n,m]_{q}$ $\F_q$ linear code that is $\Lambda_{\bGamma}$-triorthogonal. 
    Let $I=\{i_1,\dots,i_k\}\subseteq [n]$, put $J=[n]\backslash I$, and assume:
    \setlist{nolistsep}
    \begin{enumerate}
        \item the restriction of $C$ to the columns $I$, denoted $C|_I$, is the full space $C|_I=\F_q^k$; and
        \item puncturing on $I$ in injective on $C$, equivalently no nonzero codeword of $C$ is supported fully in $I$.
    \end{enumerate}
    After reordering the columns $[I]$ first (permute the coordinates of $\bGamma$ accordingly) and changing the row basis, a generator matrix of $C$ has the form $\tilde{G}=\left(\begin{array}{c|c}I_k & G_1\\\hline 0 & G_0\end{array}\right)$.
    The punctured matrix $G:=\begin{pmatrix}G_1\\\hline G_0\end{pmatrix}\in \F_q^{m \times (n-k)}$ has rank $m$.
    Define an $[[n-k,k]]_q$ CSS code by taking the $X$-stabilizers to be the $\text{rowspan}(G_0)$, and the $X$-logical representatives to be $G_1$ (the $Z$-stabilizers are $\text{rowspan}(G^\perp)$). 
    Call $(\bg_a)_{a=1}^m$ the rows of $G$, then they satisfy the following relation
    \begin{align}
    \label{eq:Lambda_after_puncture}
        \Lambda_{\bGamma|_J}(\bg_a,\bg_b,\bg_c) = \begin{cases}
            -\Gamma_a\;, & 1\le a = b = c \le k, \\
            0\;, & \text{otherwise}.
        \end{cases}
    \end{align}
    Let $d(C)$ and $d(C^\perp)$ be the minimum distance of $C$ and $C^\perp$, respectively.
    Then the qudit $X$ and $Z$ distances of the CSS code satisfy
    \begin{equation}
        d_X\ge d(C)-k\;,\quad d_Z\ge d(C^\perp)-k\;.
    \end{equation}
    A sufficient condition for assumption 2 is $k < d(C)$.
\end{proposition}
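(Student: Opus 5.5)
The proof is mostly linear algebra in four steps: the systematic form, the rank claim, the punctured orthogonality relation, and the distance bounds.

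\emph{Systematic form and rank.} By assumption 1, the projection $\pi_I\colon C\to\F_q^k$ is surjective. Choose $\bh_1,\dots,\bh_k\in C$ with $\pi_I(\bh_a)=\mathbf{e}_a$. The kernel $\ker\pi_I$ is the set of codewords vanishing on $I$; it has dimension $m-k$, and we take a basis $\bh_{k+1},\dots,\bh_m$ of it. Together these $m$ vectors form a basis of $C$. With the columns of $I$ placed first, the resulting generator matrix is $\tilde G$ in the stated block form.

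To see that $G$ has rank $m$, note that $G$ is the puncturing $\pi_J$ applied to this basis. By assumption 2, $\pi_J$ is injective on $C$, so the images of the basis remain independent. For the sufficient condition: a nonzero codeword supported inside $I$ has weight at most $k$, which is impossible when $k<d(C)$.

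\emph{Orthogonality after puncturing.} Write $\bh_a=(\text{$I$-part},\bg_a)$. Apply triorthogonality \eqref{eq:generalized-triortho} to the triple $\bh_a,\bh_b,\bh_c$ and split the sum over coordinates into the $I$ part and the $J$ part:
\[
0=\sum_{j=1}^{k}\Gamma_{i_j}\lambda\bigl((\bh_a)_{i_j},(\bh_b)_{i_j},(\bh_c)_{i_j}\bigr)+\Lambda_{\bGamma|_J}(\bg_a,\bg_b,\bg_c).
\]
On coordinate $i_j$, the $I$-part of $\bh_a$ is $\delta_{aj}$ when $a\le k$ and $0$ when $a>k$. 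Since $\lambda(x,y,z)=x^sy^tz^u$ with positive exponents, $\lambda$ vanishes whenever one argument is $0$, and $\lambda(1,1,1)=1$. Hence the $I$-sum equals $\Gamma_{i_a}$ exactly when $a=b=c\le k$, and is $0$ otherwise. This gives \eqref{eq:Lambda_after_puncture}, where $\Gamma_a$ denotes the metric entry at the $a$-th punctured column and char $2$ makes the sign irrelevant. The code-parameter statement $[[n-k,k]]_q$ then follows from Proposition~\ref{prop:matrix_to_code}, since $G$ has full rank $m$.

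\emph{Distances.} For $d_X$: a nontrivial $X$-logical is an element of $\mathrm{rowspan}(G)\setminus\mathrm{rowspan}(G_0)$. Any such element is $\pi_J(\mathbf{c})$ for some codeword $\mathbf{c}\in C$ with $\pi_I(\mathbf{c})\ne0$, so $\mathbf{c}\neq 0$. Deleting at most $k$ coordinates gives $\mathrm{wt}(\pi_J\mathbf{c})\ge \mathrm{wt}(\mathbf{c})-k\ge d(C)-k$.

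For $d_Z$: take $\mathbf{z}\in\F_q^{J}$ orthogonal to $\mathrm{rowspan}(G_0)$ but not to $\mathrm{rowspan}(G)$. Orthogonality to $G_0$ means $\langle \mathbf{z},\bg_a\rangle=0$ for $a>k$. Define $\mathbf{y}_I\in\F_q^I$ by $y_{i_a}=-\langle \mathbf{z},\bg_a\rangle$ for $a\le k$. Then $\mathbf{y}=(\mathbf{y}_I,\mathbf{z})$ is orthogonal to every $\bh_a$: for $a\le k$ the pairing is $y_{i_a}+\langle \mathbf{z},\bg_a\rangle=0$, and for $a>k$ the $I$-part of $\bh_a$ is zero. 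So $\mathbf{y}\in C^\perp$. It is nonzero because $\mathbf{z}\neq0$, which holds since $\mathbf{z}$ is non-orthogonal to some row. Therefore $\mathrm{wt}(\mathbf{z})\ge \mathrm{wt}(\mathbf{y})-k\ge d(C^\perp)-k$.

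The only delicate point is this lift of $Z$-logicals to $C^\perp$; everything else is bookkeeping.
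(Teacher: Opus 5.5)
Your proof is correct and follows essentially the same route as the paper: the systematic form from assumption 1, full rank of $G$ from injectivity of puncturing, the relation \eqref{eq:Lambda_after_puncture} by applying triorthogonality to the rows of $\tilde G$ and splitting the sum over $I$ and $J$, and the distance bounds by puncturing (for $d_X$) and by completing a dependency among columns of $G_0$ with at most $k$ identity columns (for $d_Z$). Your explicit lift of $\mathbf z$ to $\mathbf y\in C^\perp$ is that same $d_Z$ argument, stated more carefully than the paper does; the only quibble is your aside that characteristic $2$ makes the sign irrelevant, which is unnecessary because the proposition is stated for general $\F_q$ and your derivation already yields $-\Gamma_{i_a}$ directly.
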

\begin{proof}
    Assumption 1 gives the systematic form; $\text{rank}(G)=m$ follows from assumption 2, and hence $\text{rank}(G_0)=m-k$.

    For $a\le k$, row $a$ of $\tilde{G}$ is $(\be_a\;|\;\bg_a)$ where $\be_a$ is the elementary vector that is one in the $a$-th coordinate and zero otherwise, while rows with $a>k$ are zero on $I$. Eq.~\eqref{eq:Lambda_after_puncture} can be established by extending \(\Lambda_{\bGamma|_J}\) acting on rows of \(G\) to \(\Lambda_{\bGamma}\) acting on rows of \(\tilde{G}\) and utilizing $\Lambda_{\bGamma}$-triorthogonality of the space.

    $d_X$ is lower bounded by the distance of the code spanned by the rows of \(G\), $d(\text{rowspan}(G))$, and puncturing $k$ coordinates lowers the weight of a codeword by at most $k$, establishing $d_X\ge d(C)-k$. For $d_Z$, by Prop.~\ref{prop:matrix_to_code}, $d_Z\ge d((\text{rowspan}(G_0))^\perp)$. Since the dual code distance can be interpreted as the least number of columns that are linearly dependent, and the columns of $I_k$ are linearly independent among themselves, one has $d_Z\ge d((\text{rowspan}(G_0))^\perp)\ge d(C^\perp)-k$.

    One can see that a sufficient condition for assumption 2 to hold is $k < d(C)$, since a codeword of a distance \(d\) code is uniquely recoverable from erasure of \(d-1\) coordinates.
\end{proof}
\begin{remark}
\label{remark:zeros_in_metric}
If $C$ is $\Lambda$-orthogonal w.r.t. some $\bGamma$ that contains some zero coordinates. Let $Z=\{i\;|\;\Gamma_i=0\}$ and $z=|Z|$. Then $C$ punctured at $Z$, denoted as $\tilde{C}=\text{Punc}_Z(C)$. 
$\tilde{C}$ is $\Lambda_{\text{Punc}_Z(\bGamma)}$-triorthogonal.
If puncturing is injective, equivalently no nonzero codeword of $C$ is supported fully in $Z$, then $\dim\tilde{C}=\dim C$. A sufficient condition for this assumption is $z<d(C)$; this will hold in our one-point codes below.
We also have $d(\tilde{C})\ge d(C)-z$, $d(\tilde{C}^\perp)\ge d(C^\perp)$. 

After the puncturing of size $k$ from $\tilde{C}$ in the above proposition, we have a $[[n-z-k,k]]_q$ CSS code with $d_X\ge d(C)-z-k$, $d_Z\ge d(C^\perp)-k$. Since the lower bound of $Z$ distance is unaffected, and $Z$ distance is the bottleneck (for all our protocols), puncturing $Z$ allows us to pay $z$ less magic input resources to distillation without compromising the distance. Therefore, after constructing $\bGamma$, we always delete the zero coordinates.
\end{remark}    

We will now show that the one-point codes are $\Lambda_{\bGamma}$-triorthogonal spaces for some $\bGamma$ that is not all-zero.
\begin{lemma}[One-point codes are \(\lambda\)-triorthogonal spaces]\label{lemma:one-point-triortho-space}
    Let $\chi$ be a smooth projective curve of genus $g$ over $\F_q$ ($\text{char}\;\F_q=p$). Let $D=P_1+\dots+P_n$ be a sum of distinct $\F_q$-rational points on $\chi$, and let $Q\notin \supp(D)$ be rational. Assume $n>g$ and put $H=(n+g-2)Q$. There exists a nonzero $\bGamma\in C_\cL(D,H)^\perp$ with $\wt(\bGamma)\ge n-g$. In particular, $\bGamma$ has at most $g$ zero entries.
    For $\Lambda_{\bGamma}(\bg_a,\bg_b,\bg_c)=\sum_{i=1}^n \Gamma_i (g_{a,i})^s (g_{b,i})^t (g_{c,i})^u$ where $s,t,u$ are powers of $p=\text{char}\;\F_q$. If $(s+u+t)r\le n+g-2$, then $C_\cL(D,rQ)$ is a $\Lambda_{\bGamma}$-triorthogonal space.
\end{lemma}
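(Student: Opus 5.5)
We split the argument into the existence of $\bGamma$ and the triorthogonality check, which reduces to the multiplication property of one-point codes.

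\textbf{Existence of $\bGamma$.} Since $\deg H=n+g-2<n$, the evaluation map on $\cL(H)$ is injective. If moreover $n+g-2>2g-2$, i.e.\ $n>g$, Eq.~\eqref{eq:one-point-code-dim} gives
\[
\dim C_\cL(D,H)=n+g-2-g+1=n-1 .
\]
Hence $C_\cL(D,H)^\perp$ is one-dimensional, and we take $\bGamma$ to be a nonzero generator. By the dual distance bound quoted earlier (Stichtenoth Thm.~2.2.7), every nonzero word of $C_\cL(D,H)^\perp$ has weight at least $\deg H-(2g-2)=n-g$. So $\wt(\bGamma)\ge n-g$, and $\bGamma$ has at most $g$ zero entries.

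(If one prefers to avoid the dual bound, use the residue description: $C_\cL(D,H)^\perp=C_\Omega(D,H)$, so $\bGamma$ consists of the residues of a differential $\eta$ with $(\eta)\ge H-D$. Since $\deg(\eta)=2g-2$, $\eta$ can vanish at no more than $\deg(\eta)-\deg(H-D)=g$ of the $P_i$. Either route gives the bound.)

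\textbf{Triorthogonality.} By the remark after Eq.~\eqref{eq:generalized-triortho}, it suffices to check triples $\bg_a,\bg_b,\bg_c$ drawn from $C_\cL(D,rQ)$ itself. Write $\bg_a=\ev(f_a)$, $\bg_b=\ev(f_b)$, $\bg_c=\ev(f_c)$ with $f_a,f_b,f_c\in\cL(rQ)$. Then
\[
\Lambda_{\bGamma}(\bg_a,\bg_b,\bg_c)=\sum_i\Gamma_i\,\bigl(f_a^sf_b^tf_c^u\bigr)(P_i).
\]
The key observation is that $f_a^s f_b^t f_c^u\in\cL((s+t+u)rQ)$. This holds because valuations are additive, $(fg)=(f)+(g)$: the product has no poles outside $Q$, and its pole order at $Q$ is at most $(s+t+u)r$. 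Raising to the $p$-power exponents $s,t,u$ is harmless here, since these are genuine powers of functions in the function field. This is the one-point multiplication property, Eq.~\eqref{eq:one-point-multiplication}, applied with multiplicities.

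Since $(s+t+u)r\le n+g-2$, we have $\cL((s+t+u)rQ)\subseteq\cL(H)$. Thus the evaluation vector of $f_a^sf_b^tf_c^u$ lies in $C_\cL(D,H)$. Because $\bGamma\in C_\cL(D,H)^\perp$, the weighted sum above is zero, which is exactly $\Lambda_{\bGamma}$-triorthogonality.

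\textbf{Main obstacle.} The delicate point is the first step: pinning down that the dual of $C_\cL(D,H)$ is exactly one-dimensional and that its generator has weight at least $n-g$. This requires the hypothesis $n>g$, so that $\deg H>2g-2$ and Riemann--Roch gives the exact dimension, together with the dual distance (or residue) bound. The triorthogonality step is then routine.
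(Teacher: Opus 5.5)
Your route is the paper's: Riemann--Roch plus the dual-distance bound to produce $\bGamma$, then $f_a^sf_b^tf_c^u\in\cL((s+t+u)rQ)\subseteq\cL(H)$ for triorthogonality. The second half is fine. The first step, however, rests on a false claim.

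The inequality $\deg H=n+g-2<n$ holds only when $g\le 1$. For $g\ge 2$ (e.g.\ the Klein quartic or $y^4+y=x^5$) you may therefore not invoke injectivity of the evaluation map or Eq.~\eqref{eq:one-point-code-dim}. Your conclusion that $C_\cL(D,H)^\perp$ is one-dimensional is false in general. The correct count is $\dim C_\cL(D,H)=\ell(H)-\ell(H-D)=n-1-\ell(H-D)$, and $\ell(H-D)$ can be positive because $\deg(H-D)=g-2\ge 0$.

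For a concrete counterexample, take $y^4+y=x^5$ over $\F_{16}$ ($g=6$) with $D$ the $64$ affine points. Then $(x^{16}+x)=D-64Q_\infty$, so $H-D=68Q_\infty-D\sim 4Q_\infty$. Hence $\ell(H-D)=\ell(4Q_\infty)=2$ (basis $1,x$), and the dual is three-dimensional.

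The repair is one line, and it is exactly what the paper does. You only need $\dim C_\cL(D,H)\le \ell(H)=n-1<n$, which already gives a nonzero dual codeword. The weight bound $\deg H-(2g-2)=n-g$ holds for \emph{every} nonzero word of $C_\cL(D,H)^\perp$, so any nonzero $\bGamma$ there works. Your residue-counting argument, which never used the dimension, gives the same bound.

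Consequently the ``main obstacle'' you name is not one: exact one-dimensionality is neither true nor needed. The hypothesis $n>g$ is used only to ensure $\deg H>2g-2$, so that Riemann--Roch gives $\ell(H)=n-1$.
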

\begin{proof}
    Since $\deg H=n+g-2\ge 2g-2$, the Riemann-Roch theorem, Eq.~\eqref{eq:Riemann-Roch_space_dim}, gives $\ell(H)=\deg H-g+1=n-1$. Hence $\dim C_\cL(D,H)\le n-1$ by Eq.~\eqref{eq:ev_code_dimension}, so its dual code contains a nonzero codeword $\bGamma$. By~\cite[Thm.~2.2.7,~2.2.8]{Stichtenoth}, the dual code $C_\cL(D,H)^\perp$ has distance $\ge \deg(H)-(2g-2)=n-g$, and thus $\wt(\bGamma)\ge n-g$.

    By the multiplication property of one-point codes Eq.~\eqref{eq:one-point-multiplication}, and the containment relationship: $C_\cL(D,r'Q)\subseteq C_\cL(D,rQ)$ if $r'\le r$. We have $$((g_{a,1})^s(g_{b,1})^t(g_{c,1})^u,\dots,(g_{a,n})^s(g_{b,n})^t(g_{c,n})^u))\in C_\cL(D,(s+t+u)rQ)\subseteq C_\cL(D,H),$$
    for all $\bg_a,\bg_b,\bg_c\in C_\cL(D,rQ)$. It follows that $\bGamma \in C_\cL(D,H)^\perp\subseteq C_\cL(D,(s+t+u)rQ)^\perp$, and therefore, $\Lambda_{\bGamma}(\bg_a,\bg_b,\bg_c)=0$.
\end{proof}
We are now ready to use the defect-zero curves in Table~\ref{table:maximal_curves_CCZ_only} to construct qudit-CCZ distillation protocols. 
For all these curves, define $D$ as sum of distinct $\F_q$-rational points disjoint from $Q=(0:1:0)$.
The explicit calculation of $\bGamma$ for them can be found in App.~\ref{sec:calculate_metric}.
As a summary of the calculations there, $\bGamma$ can be taken to be all-one for all curves in Tab.~\ref{table:maximal_curves_CCZ_only} except for the Klein quartic and elliptic curves over $\F_{2^s}$, $s>2$. For the latter, one will see later that some coordinates of $\bGamma$ can be zero. 
By remark~\ref{remark:zeros_in_metric}, we can remove those places from $D$.

\begin{corollary}[punctured one-point CSS code parameters]
\label{cor:punctured_code_parameters}
Let $\chi$ be a smooth projective curve of genus $g$ over $\F_q$.
Let $D=P_1+\dots+P_n$ be a sum of distinct $\F_q$-rational points on $\chi$, and let $Q\notin \supp(D)$ be rational.
Suppose $C=C_L(D,rQ)$ is $\Lambda_{\bGamma}$-triorthogonal, where $2g-2<r<n$. Let $z$ be the number of zero entries of $\bGamma$. If
$$k\le r-g+1,\quad z+k<n-r,$$
then one can delete the $z$ zero coordinates and choose $k$ remaining coordinates to puncture so as to obtain a CSS code $[[n-z-k,\;k,\;(d_X,d_Z)]]_q$ with
$$d_X\ge n-z-r-k,\qquad d_Z\ge r-2g+2-k.$$

\end{corollary}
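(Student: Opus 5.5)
The plan is to chain Remark~\ref{remark:zeros_in_metric} with Proposition~\ref{prop:puncturing-distillation-protocol}, feeding in the standard one-point code parameters from Sec.~\ref{sec:intro_to_AG_codes}. Since $2g-2<r<n$, Eq.~\eqref{eq:one-point-code-dim} gives $\dim C=m=r-g+1$. The Goppa bounds give $d(C)\ge n-r$ and $d(C^\perp)\ge r-(2g-2)$. First, let $Z$ be the zero set of $\bGamma$, with $|Z|=z$. Because $z<n-r\le d(C)$, puncturing at $Z$ is injective. Hence $\tilde C=\mathrm{Punc}_Z(C)$ has length $n-z$, dimension $m$, is $\Lambda_{\mathrm{Punc}_Z(\bGamma)}$-triorthogonal, and satisfies $d(\tilde C)\ge n-r-z$ and $d(\tilde C^\perp)\ge d(C^\perp)\ge r-2g+2$. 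The latter holds because a dual codeword of the punctured code extends by zeros to a dual codeword of $C$.

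Next I would apply Proposition~\ref{prop:puncturing-distillation-protocol} to $\tilde C$ with some $k$-subset $I$ of the remaining $n-z$ coordinates. Assumption~2 follows from the stated sufficient condition $k<d(\tilde C)$, which holds since $z+k<n-r$ gives $k<n-r-z\le d(\tilde C)$. The proposition then yields an $[[n-z-k,k]]_q$ code with $d_X\ge d(\tilde C)-k\ge n-z-r-k$ and $d_Z\ge d(\tilde C^\perp)-k\ge r-2g+2-k$, which are exactly the claimed bounds.

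The main obstacle is assumption~1: we need a $k$-subset $I$ such that $\tilde C|_I=\F_q^k$. I would argue this by linear algebra. Take a generator matrix of $\tilde C$, of size $m\times(n-z)$ and rank $m$. It has $m$ linearly independent columns, i.e.\ an information set. Since $k\le m=r-g+1$, choose $I$ to be any $k$ of those independent columns. The restriction map $\tilde C\to\F_q^{I}$ is then surjective, because its image has dimension equal to the rank of those columns, namely $k$. This works once full rank is known, and full rank is exactly what the injectivity argument above guarantees.
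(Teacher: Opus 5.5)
Your proposal is correct and follows the same route as the paper's proof. You compute $\dim C=r-g+1$, $d(C)\ge n-r$ and $d(C^\perp)\ge r-2g+2$ from Riemann--Roch and the Goppa bounds, delete the zero coordinates of $\bGamma$ via Remark~\ref{remark:zeros_in_metric} (using $z<n-r\le d(C)$), and then apply Proposition~\ref{prop:puncturing-distillation-protocol}. You also make explicit two steps the paper leaves implicit: that $d(\tilde C^\perp)\ge d(C^\perp)$ by zero-extension, and that assumption~1 of the proposition can be met by choosing $I$ from an information set, which is exactly where the hypothesis $k\le r-g+1$ is used.
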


\begin{proof}
The Riemann-Roch theorem and previous distance lower bounds give
$$\dim C=r-g+1,\quad d(C)\ge n-r,\quad d(C^\perp)\ge r-2g+2.$$
The inequality $z+k<n-r$ guarantees puncturing to be injective.
After deleting the zero coordinates of $\bGamma$, the code still has dimension $r-g+1$, and one can further chooses $k$ columns to puncture. Prop.~\ref{prop:puncturing-distillation-protocol} and remark~\ref{remark:zeros_in_metric} give the result.
\end{proof}
\input{maximal_curve_CCZ_only}

For the purpose of distilling qudit-CCZ gates, take $\lambda(x,y,z)=xyz$ in Eq.~\eqref{eq:Lambda_ortho_wrt_Gamma}. 
Lemma~\ref{lemma:one-point-triortho-space} applies when $3r\le n+g-2$. Thus one is free to choose $r_*=\left\lfloor \frac{n+g-2}{3} \right\rfloor$.
Assume $n>5g$, which is the case for defect-zero curves over binary fields. Then $r_*$ satisfies $r_*>2g-2$, and Cor~\ref{cor:punctured_code_parameters} applies.
For any admissible $k$, we get a $(n-z-k)\to k$ qudit-CCZ distillation protocol with distance bounded below by $r_*-2g+2-k$ (the $d_X$ lower bound is no smaller than this).

In fact, when working with the curves in Table~\ref{table:maximal_curves_CCZ_only}, we find that sometimes one can take a bigger $r_*$, especially when $g$ becomes larger. In Lemma~\ref{lemma:one-point-triortho-space}, one can take $H$ such that $\dim C_\cL(D,H)^\perp=1$ and take $\bGamma$ to be the nonzero codeword of $C_\cL(D,H)^\perp$. We do not know how to write down a closed formula for $H$, because this depends on where the gaps are. One must have $\deg H\le n+2g-1$ (see the next subsection for the equality case), however. This is because $C_\cL(D,(n+2g-1)Q_\infty)=\F_q^n$: one has $\dim C_\cL(D,G)=\ell(G)-\ell(G-D)$ and take $G=(n+2g-1)Q_\infty$ in this case. Since $\deg(G)>\deg(G-D)=n+2g-1-n>2g-2$, by Riemann-Roch, one has $\ell(G)=(n+2g-1)-g+1=n+g$ and $\ell(G-D)=(2g-1)-g+1=g$.

\begin{remark}
\label{remark:metric_for_U7}
One can also use one-point codes to distill $U_7$ gates. A naive way to do so is to take $\lambda(x,y,z)=x^4y^2z$ in Eq.~\eqref{eq:Lambda_ortho_wrt_Gamma}, and this interpreting the $U_7$ orthogonality condition Eq.~\eqref{eq:U7_ortho_wrt_metric} as seven-orthogonality (instead of twisted orthogonality). As we commented in \ref{sec:U7_protocols}, this is highly inefficient, because one actually never needs to test the overlap of more than three rows. This suboptimality is the reason why we choose to not include a table of $U_7$ gate distillation protocols parameters.\\
Nevertheless, we do want to comment on the fact that metric can make the parameters of this naive seven-orthogonality construction better.
For example, by considering a generalized RS code over $\F_{32}$ ($U_7$ in this field is Clifford equivalent to $\TOF\#$), the naively punctured $31\TOF\#\to1\TOF\# \;(d=5)$ protocol can be made $29$TOF\#$\to1$TOF\# ($d=5$) using metric. One can also use metrics to obtain $15\to1\;(d=3)$ and $8\to1\;(d=2)$ TOF\#$\to$TOF\# protocols.
\end{remark}

\subsection{Distillation without puncturing}\label{sec:distillation_without_puncturing}
We now give some exceptional distillation protocols that do not appear to be puncturings of some $\Lambda$-triorthogonal spaces.
We constrain ourselves to one-point code with monomial basis in this subsection and leave a precise understanding of the most general phenomenon to further work.

We call a monomial evaluation row a \emph{natural logical row}, if it has non-zero self-intersection and has zero $\Lambda_{\bGamma}$ intersection with the other rows. 
An example would be $\ev(x)$ in the $[[4,1,2]]_4$ Reed-Solomon code, where $\ev(1)$ serves as the stabilizer row. More generally, let $q=2^s$ with $s$ even, then $(2^2-1)$ divides $(2^s-1)=q$. Let $a=\frac{q-1}{3}$, then $\ev(x^a)$ naturally has non-zero triple self-intersection: $\sum_{\alpha\in\F_q}(\alpha^{(q-1)/3})^3=\sum_{\alpha\in\F_q}\alpha^{q-1}=\sum_{\alpha\in\F_q^\times}1=1$. Take the $X$ stabilizers to be generated by $\ev(1),\ev(x),\dots,\ev(x^{a-1})$. This gives a $[[q,1,(d_X=q-a,\;d_Z=a+1)]]_q$ quantum Reed-Solomon code with transversal qudit-CCZ. In fact, when $s$ is even, one can construct a $[[4,1,(3,2)]]_q$ generalized Reed-Solomon code with transversal qudit-CCZ: choose four distinct points $\alpha_1,\dots,\alpha_4\in\F_q$, and define $\Gamma_i=\left(\prod_{j\neq i}(\alpha_i-\alpha_j)\right)^{-1}$; use $(\alpha_i)_{1\le i\le 4}$, which is $\ev(x)$, as the logical row and all-ones as the stabilizer row, together with the metric $(\Gamma_i)_{1\le i\le 4}$.

Besides Reed-Solomon, other AG codes can demonstrate the same behavior, although our understanding of them is case-by-case.

The one-point code over the Hermitian curve $y^4+y=x^5$ over $\F_{16}$ is of genus $6$ and has $64$ affine $\F_{16}$-rational points. One also has $v_{Q_\infty}(x)=-4$ and $v_{Q_\infty}(y)=-5$, thus $\{x^iy^j\;|\; 0\le i, 0\le j\le 3,\;4i+5j\le r\}$ form a basis of $\cL(rQ_\infty)$, cf. Thm.~\ref{thm:stichtenoth_artin_schreier}. 
For this one-point code, we have $\dim C_\cL(D,74Q_\infty)^\perp=1$ and the only codeword (up to scalar multiplication) in $C_\cL(D,74Q_\infty)^\perp$ is all-one (cf. App.~\ref{sec:metric_artin_schreier}); we also have $\dim C_\cL(D,75 D_\infty)=0$ by our comments before Remark~\ref{remark:metric_for_U7} because $75=n+2g-1$.
Thus, we can take $C_\cL(D, 24Q_\infty)$ to be the stabilizer space and $\bGamma=\mathbf{1}$. 
We find that $\ev(y^5)$ of pole order $25$ at $Q_\infty$ serves as a natural logical row: $\sum_{(x,y)\in D} (y^5)^3=1$. 
\begin{remark}
\label{remark:one_point_code_with_natural_logical_row}
In general, for qudit-CCZ distillation, if we use $C_\cL(D,r_0 Q)$ as $X$ stabilizers, and a basis $\{\bg_1,\dots,\bg_h\}$ of $C_\cL(D,r_1 Q)\backslash C_\cL(D,r_0 Q)$ serving as natural logical rows, i.e.
$\Lambda_{\bGamma}(\bg_a,\bg_b,\bg_c)\neq 0$ if and only if $1\le a=b=c\le h$.
Then, after deleting the $z$ zero coordinates of $\bGamma$ and puncturing $k-h$ columns (with the injectivity assumption), one obtains an $(n-z-(k-h))\to k$ qudit-CCZ distillation protocol with 
$$d_X\ge n-r_1-z-(k-h),\qquad d_Z\ge r_0-2g+2-(k-h).$$
A sufficient condition for such a non-zero $\bGamma$ to exist is $\dim C_\cL(D,(2r_0+r_1)Q)^\perp\ge 1$.
\end{remark}

For $U_7$ on Reed-Solomon codes, one can get a natural logical row when $s$ is divisible by $3$, since then $(2^3-1)$ divides $(2^s-1)$. Let $a=\frac{q-1}{7}$, then $\ev(x^a)$ naturally serves as a logical row and choosing the span of $\ev(1),\ev(x),\dots,\ev(x^{a-1})$ as stabilizers gives a $[[q,1,(q-a,a+1)]]_q$ quantum Reed-Solomon code with transversal $U_7$ gate.
Our $\F_8$ Reed-Solomon construction in~\ref{con:8CCZ_to_2CCZ} is an example, where $\ev(x^1)$ is a natural logical row. However, it turns out that in this case, $\ev(x^2)$ can also be added to the logical subspace. This hints at the fact that, naively following the degree (interpreting the twisted-three orthogonality as seven orthogonality) gives only suboptimal protocols for $U_7$ gate, and we believe this is the case for distilling gates whose orthogonality conditions involve twisting.

For this reason, we developed the forbidden hypergraph method in Con.~\ref{con:64CCZ-to-8CCZ}. We impose the Ansatz of rows being evaluations of monomials for this numerical method to be tractable. 
The next construction based on Klein-quartic $Y^3Z+X^3Y+Z^3X=0$ over $\F_8$ is another demonstration of this method:
\begin{construction}[$24\CCZ\to 4\CCZ$ at $d=3$]
\label{con:24CCZ-to-4CCZ-d3}
For the $X$ stabilizers, we write down its columns as the $24$ points in $\bbP^2_{\F_8}$ that lie on the Klein quartic curve $Y^3 Z+ X^3 Y + Z^3 X=0$, which is the evaluation of $X$, $Y$, $Z$. Again, since the columns are not related to each other by a scalar, the dual code distance is at least three.
For the $X$-logicals, we choose $X^1 Y^2 Z^5$, $X^2 Y^5 Z^1$, $X^5 Y^1 Z^2$ and $X^5 Y^5 Z^5$. They are obtained similar to Con.~\ref{con:64CCZ-to-8CCZ}, that is to solve for a maximum independent set in a certain forbidden hypergraph. However, we put a restriction on the total degree of the monomials, that is $a+b+c$ in $X^a Y^b Z^c$ to be $a+b+c\equiv 1\!\!\mod{7}$. This is because, normalization is required when mixing the evaluation of homogeneous polynomials (in projective space; a monomial is automatically a homogeneous polynomial) of different degree. This is because $(X:Y:Z)$ and $(\alpha X : \alpha Y : \alpha Z)$ define the same projective point. To normalize a homogeneous polynomial of total degree $d$ at each point, one can divide the evaluation by $\alpha^d$, where $\alpha$ is the first coordinate of the projective point that is nonzero\footnote{This is not the only way of doing normalization. Say in our case, since $X+Y+Z\neq 0$ on any of the points lying on the Klein quartic, we can choose to normalize the evaluation of $X^a Y^b Z^c$ by $(X+Y+Z)^d$ where $d=a+b+c$.}. As we commented before, the twisted three-orthogonality is special for $\F_8$ in the sense that scaling a column of the distillation matrix by $\gamma\in\F_8^\times$ does not change the intersection pattern. Therefore, normalizing by the same scalar at a projective point implies that the total degree of the homogeneous polynomials should be the same modulo seven.

After committing to $X,Y,Z$ serving as the stabilizers, we search over all monomials $X^a Y^b Z^c$ where $0\leq a,b,c\le 6$ and $a+b+c\equiv 1\!\!\mod{7}$ that each individually form a valid logical with stabilizers $X,Y,Z$. There are $16$ such candidates. Then we build a forbidden hypergraph just like in Con.~\ref{con:64CCZ-to-8CCZ} and then solve for the maximum independent set, the unique solution of which is the exponent set $$(a,b,c)=\{(1,2,5),\;(2,5,1),\;(5,1,2),\;(5,5,5)\}.$$
\end{construction}

\subsection{Protocols based on trace codes}
\label{sec:trace_codes}
Let $K=\F_{q^m}$ and $k=\F_q$. For a $K$-linear code $C\subseteq K^n$, its \emph{subfield subcode} and \emph{trace code} are
\begin{equation}
C|_{\F_q}:=C\cap k^n,\quad \tr(C)_{K/k}:=\{ (\tr_{K/k}(c_i))_{i=1}^n\;|\; (c_1,\dots,c_n) \in C\}\subseteq k^n
\end{equation}
Delsarte's theorem relates the two by duality
\begin{equation}
    (C|_k)^\perp=\tr_{K/k}(C^\perp),\quad \text{hence } \tr_{K/k}(C)^\perp = C^\perp|_k \;.
\end{equation}
A consequence is that, if $\dim_K C=r$, then $r\le \dim_k \tr_{K/k}(C)\le \min\{n, mr\}$. We refer the readers to \cite[Ch.~9]{Stichtenoth} for proofs.

Our purpose here is to obtain distillation protocols with $T$ inputs, thus we will only consider the case of tracing down to $k=\F_2$. 

Let $q=2^s$. For $\bc\in\mathbb F_q^n$, put
$T(\bc)=\{\tr_{\F_q/\F_2}(\lambda \bc);|\;\lambda\in\F_q\}\subseteq\F_2^n$, i.e., trace is applied coordinatewise. 
If $(\alpha_i)_{i=1}^s$ is an $\F_2$-basis of $\F_q$, then
$$T(\bc)=\text{span}_{\F_2}\{\tr_{\F_q/\F_2}(\alpha_i \bc);|\; 1\le i\le s\}.$$
Consequently, if $\bc_1,\dots,\bc_r$ generate an $\F_q$-linear code $C$, then
$$\tr(C):=\tr(C)_{\F_q/\F_2}=\text{span}_{\F_2}\{\tr_{\F_q/\F_2}(\alpha_i \bc_a)\;|\;1\le i\le s,\; 1\le a\le r\}.$$
These $sr$ rows need not be independent.
As an example, for $\bc=(c_1,\dots,c_n)$, the two subspaces $T((c_1^{2^t},\dots,c_n^{2^t}))$ and $T(\bc)$ are equivalent, because $\tr(\lambda c_j^{2^t})=\tr(\lambda^{2^{s-t}}c_j)$, and $\lambda\mapsto \lambda^{2^{s-t}}$ is an automorphism on $\F_q$. Thus the (evaluation of) functions $f,f^2,f^4,\dots$ generate the same subspace after taking trace.

The reason we consider trace code is because a cyclic code can be obtained as trace code of Reed-Solomon code~\cite[Prop.~9.2.4]{Stichtenoth}, and the punctured Reed-Muller code is a cyclic code~\cite[Ch.~13, Th.~11]{theoryEC} \footnote{The binary $\RM(r,m)^*$ is the trace code of $\ev_{\F_q^\times} \text{span}_{\F_q} \{x^a\;|\;\wt(a)\le r\}$ where $q=2^m$, which is strictly speaking not a Reed-Solomon code, but a monomial evalution code on $\F_q$.}. Motivated by the $64T\to 2\CCZ$ at $2944p^4$ protocol constructed using Reed-Muller code~\cite{Haah2018}, we next give a $64T\to 2\CCZ$ protocol at $2720p^4$ based on the trace code of Hermitian code.

\begin{example}[Trace code of the Hermitian code]
Let $D$ be the set of affine $\F_{q^2}$-rational points on the Hermitian curve $y^q+y=x^{q+1}$, then $D$ contains $q^3$ points.
For $x^i y^j$ with $0\leq i\le q^2-1$ and $0\le j\le q-1$, one has
\begin{equation}
\sum_{P\in D}\ev_P(x^i y^j)=\begin{cases}1, & \text{if }i=q^2-1,\;j=q-1,\\ 0, & \text{otherwise.}\end{cases}
\end{equation}
This is because for a fixed $x\in \F_{q^2}$, the $q$ solutions of $y^q+y=x^{q+1}$ form a affine translate of subfield $y_0+\F_q$. For $0\le j\le q-1$, $\sum_{u\in\F_q}(y_0+u)^j=\begin{cases}1, & j=q-1,\\ 0, & 0\le j<q-1.\end{cases}$. 
The sum over $D$ therefore vanishes unless $j=q-1$. In that case it becomes $ \sum_{x\in\F_{q^2}}x^i = \begin{cases} 0,&0\le i<q^2-1,\\1,&i=q^2-1.\end{cases}$.

\end{example}
\begin{construction}[$64T\to 2\CCZ$ at $2720 p^4$]\label{con:64tto2ccz}
Here we give a $64T\to 2CCZ$ protocol, for which the probability of any error is $2720p^4$. This is based on the trace code of the Hermitian one-point code derived from $y^4+y=x^5$ over $\F_{16}$. This improves over the error rate of the known $64T \to 2CCZ$ protocol, which is $2944p^4$ (although both protocols have a per-state average error probability equal to $2368p^4$). 

From Th.~\ref{thm:stichtenoth_artin_schreier}, one can see that the monomial basis for $\cL(12Q_\infty)$ is $\{1, x, y, x^2, xy, y^2, x^3\}$. Moreover, recall that for trace codes, if row $x$ and $y$ are chosen, then $x^2$ and $y^2$ are redundant. Consider the trace code of $\{1,x,y,xy,x^3\}$, the only odd triple intersections occur between trace codes of row $xy$ and $x^3$; treat them as the logical operators and $\{1,x,y\}$ as stabilizers.

Only $xy$ and $x^3$ intersects non-trivially because among all the monomials $x^i y^j$ where $0\le i\le 15, 0\le j\le 3$, only $\sum_{P\in D} \ev_P(x^{15}y^3)=1$ and $(xy)^1 (xy)^2 (x^3)^4=x^{15} y^3$. 
The exact intersection between their trace subspaces can be calculated as (for $\lambda,\mu,\nu\in\F_{16}$):
$$\sum_{P\in D} \tr(\lambda xy(P)) \tr(\mu xy(P)) \tr(\nu x^3(P))=
\tr\!\left((\lambda\mu^2+\lambda^2\mu)\nu^4\right).$$

Using the self-dual basis $(\alpha^3, \alpha^{12}, \alpha^7, \alpha^{13})$ for $\F_{16}$ defined by $\alpha^4=\alpha+1$ and expanding each qudit row $\bc$ into four qubit rows as $\tr(\alpha_i \bc)$, we obtain the odd triple-intersections on the eight logical generators derived from $xy,x^3$ is as follows. 
\begin{center}
\begin{quantikz}[row sep={0.4cm, between origins}, column sep={0.4cm, between origins}]
\lstick{$x_1$} & \ctrl{5} & \ctrl{6} & \ctrl{7} & \ctrl{6} & \ctrl{7} &  &  &  &  &  &  &  &  &  &  &  & \\
\lstick{$x_2$} &  &  &  &  &  & \ctrl{4} & \ctrl{6} & \ctrl{5} & \ctrl{6} & \ctrl{6} &  &  &  &  &  &  & \\
&  &  &  &  &  &  &  &  &  &  & \ctrl{4} & \ctrl{5} & \ctrl{5} &  &  &  & \\
&  &  &  &  &  &  &  &  &  &  &  &  &  & \ctrl{3} & \ctrl{3} & \ctrl{4} & \\
\lstick{$x_3$} & \control{} & \control{} & \control{} &  &  & \control{} & \control{} &  &  &  & \control{} & \control{} &  & \control{} &  &  & \\
\lstick{$x_4$} & \control{} &  &  & \control{} &  & \control{} &  & \control{} & \control{} &  &  &  & \control{} &  & \control{} & \control{} & \\
\lstick{$x_5$} &  & \control{} &  & \control{} & \control{} &  &  & \control{} &  & \control{} & \control{} &  &  & \control{} & \control{} &  & \\
\lstick{$x_6$} &  &  & \control{} &  & \control{} &  & \control{} &  & \control{} & \control{} &  & \control{} & \control{} &  &  & \control{} & \\
\end{quantikz}
\end{center}
Next, we purge two logicals (third and fourth wire), and the CCZ circuit on the rest six logicals is Clifford equivalent to two decoupled CCZs; in terms of phase polynomial, it differs by only degree $\le 2$ terms to $x_1(x_3+x_4+x_6)(x_5+x_4+x_6)+x_2(x_4+x_3+x_5)(x_6+x_3+x_5)$. This can be seen by observing $x_3x_4+x_3x_5+x_3x_6+x_4x_5+x_5x_6=(x_3+x_4+x_6)(x_5+x_4+x_6)+x_4+x_6$, and similarly for the CCZs involving $x_2$. 
\end{construction}

\begin{construction}[$64T\to \F_4$-qudit CCZ at $d=4$]\label{con:64ttof4qudit}
Set the first two wires to $|0\ra$ and so that all the CCZs associated with them are removed. What is left, i.e., the last six CCZ gates on the last six logicals, form an $\F_4$ qudit CCZ gate (cf.~\ref{fig:F4_qudit_CCZ}).
\end{construction}

\phantomsection
\section*{Acknowledgments}
\addcontentsline{toc}{section}{Acknowledgments}

The authors thank Jonathan Moussa for pointing out to us the 8CCZ-to-2CCZ distance two protocol from~\cite{chamberland2022building}.
The authors acknowledge inspiring discussions with Shraddha Singh.

The authors acknowledge the use of large language models, in particular ChatGPT 5.6 Sol, and Claude Opus 5, to generate code for the calculation of error rates, and spacetime footprints, of concatenated protocols. The code was checked and tested by the authors, and the authors take responsibility for all content.

A.G. acknowledges funding from the Swiss State Secretariat for Education, Research and Innovation (SERI) under contract No. 20QU-1\_225224. A.G. is affiliated with the Institute of Theoretical Physics of ETH Zurich.
C.A.P. is currently a Simons-CIQC postdoctoral fellow at the Simons Institute for the Theory of Computing, supported by NSF QLCI Grant 2016245. A.W. acknowledges support from the MIT Department of Physics, from the MIT-IBM Watson AI Lab, and from NSG grant PHY-2325080. A.W. is affiliated with the Center for Theoretical Physics — a Leinweber Institute
Massachusetts Institute of Technology, Cambridge, MA. This pre-print is assigned number MIT-CTP/6086.

\printbibliography

\appendixtitleon
\appendixtitletocon
\begin{appendices}
\input{appendix}
\end{appendices}

\end{document}

%% file: macrosetup.tex
\newcommand{\A}{\mathbb{A}}

\newcommand{\N}{\mathbb{N}}
\newcommand{\F}{\mathbb{F}}

\newcommand{\bbP}{\mathbb{P}}
\newcommand{\Q}{\mathbb{Q}}

\newcommand{\Z}{\mathbb{Z}}

\newcommand{\cG}{\mathcal{G}}
\newcommand{\cL}{\mathscr{L}}
\newcommand{\cH}{\mathcal{H}}

\newcommand{\cO}{\mathcal{O}}

\newcommand{\ra}{\rangle}
\newcommand{\la}{\langle}

\newcommand{\ba}{\mathbf{a}}

\newcommand{\bc}{\mathbf{c}}
\newcommand{\be}{\mathbf{e}}
\newcommand{\bg}{\mathbf{g}}
\newcommand{\bs}{\mathbf{s}}
\newcommand{\bu}{\mathbf{u}}

\newcommand{\bx}{\mathbf{x}}

\newcommand{\bl}{\boldsymbol\ell}
\newcommand{\bGamma}{\boldsymbol{\Gamma}}

\newcommand{\bin}{\text{bin}}
\newcommand{\wt}{\text{wt}}

\newcommand{\diag}{\text{diag}}

\newcommand{\GL}{\text{GL}}
\newcommand{\GF}{\text{GF}}

\newcommand{\RM}{\text{RM}}

\newcommand{\CNOT}{\text{CNOT}}

\newcommand{\CCZ}{\text{CCZ}}
\newcommand{\CCS}{\text{CCS}}
\newcommand{\CS}{\text{CS}}
\newcommand{\CZ}{\text{CZ}}
\newcommand{\res}{\text{res}}

\newcommand{\supp}{\text{supp}}

\newcommand{\tr}{\text{tr}}
\newcommand{\Nm}{\text{Nm}}
\newcommand{\TOF}{\text{TOF}}
\newcommand{\rk}{\text{rank}}
\newcommand{\ev}{\mathrm{ev}}
\newcommand{\LM}{\text{LM}}

\renewcommand{\epsilon}{\ensuremath\varepsilon}

%% file: maximal_curve_CCZ_only.tex
\begin{table}[htbp]
\centering
\begingroup
\small
\renewcommand{\arraystretch}{1.5}
\setlength{\tabcolsep}{4pt}

\setcellgapes{2.5pt}
\makegapedcells

\begin{tabularx}{\linewidth}{|c|c|c|c|P{0.24\linewidth}|Y|}
\hline
\multirow{1}{*}{Field}
& \multirow{1}{*}{$g$}
& \multirow{1}{*}{$\#\chi(\F_q)$}
& \multirow{1}{*}{$\Delta n$}
& \multirow{1}{*}{Curve}
& \multicolumn{1}{|c|}{Protocols $(k,d)$} \\
\hline

\multirow{2}{*}{$\F_4$}
  & $0$ & $5$ & $0$ & projective line
      & $(k,3-k)$
  \\ \cline{2-6}
  & $1$ & $9$ & $0$ & $y^2+y=x^3$
      & $(k,3-k)$
  \\ \hline

\multirow{3}{*}{$\F_8$}
  & $0$ & $9$ & $1$ & projective line
      & $(k,4-k)$
  \\ \cline{2-6}
  & $1$ & $14$ & $2$ & $y^2+xy=x^3+x^2+x$
      & $(1,3),\; (3,2)$
  \\ \cline{2-6}
  & $3$ & $24$ & $1$ & $y^3+x^3y=x$
      & $(2,4),\; (4,3),\; (6,2)$
  \\ \hline

\multirow{4}{*}{$\F_{16}$}
  & $0$ & $17$ & $0$ & projective line
      & $(k,7-k)$
  \\ \cline{2-6}
  & $1$ & $25$ & $1$ & $y^2+y=x^3+x^2+1$
      & \makecell[c]{$(k,8-k),\;1\le k\le 5$,\\ $(7,2)$}
  \\ \cline{2-6}
  & $2$ & $33$ & $1$ & $y^2+y=x^5$
      & \makecell[c]{$(k,9-k),\;1\le k\le 5$,\\ $(6,4),\; (7,3),\; (9,2)$}
  \\ \cline{2-6}
  & $6$ & $65$ & $0$ & $y^4+y=x^5$
      & \makecell[c]{$(k,15-k),\;1\le k\le 7,\;\; (k,16-k),\;k\in\{8,9\}$,\\ $(k,17-k),\;10\le k\le 12,\;\; (k,18-k),\;k\in\{13,14\}$,\\ $(15,4),\;(16,3),\; (18,2)$}
  \\ \hline

\multirow{2}{*}{$\F_{32}$}
  & $0$ & $33$ & $1$ & projective line
      & $(k,12-k)$
  \\ \cline{2-6}
  & $1$ & $44$ & $2$ & $y^2+xy=x^3+x$
      & \makecell[c]{$(k,14-k),\;1\le k\le 11$,\\ $(13,2)$}
  \\ \hline

\multirow{2}{*}{$\F_{64}$}
  & $0$ & $65$ & $0$ & projective line
      & $(k,23-k)$
  \\ \cline{2-6}
  & $1$ & $81$ & $0$ & $y^2+y=x^3$
      & \makecell[c]{$(k,27-k),\;1\le k\le25,$\\ $(26,2)$}
  \\ \hline
\end{tabularx}

\endgroup

\caption{The defect-zero curves (cf. Sec.~\ref{sec:curves_with_many_points}) we use to define qudit CCZ distillation protocols, and the parameters of the resulting qudit CCZ distillation protocols. $g$ denotes the genus of the curve, $\#\chi(\F_q)$ the number of rational points on the curve, see Sec.~\ref{sec:intro_to_curves} for an introduction. 
The discussion on the transformation from a curve to a qudit CCZ distillation protocol is contained in Sections~\ref{sec:intro_to_AG_codes} to~\ref{sec:distillation_without_puncturing}.
For the protocols, a tuple $(k,d)$ implies an $(\#\chi(\F_q)-\Delta n-k) \to k$ distillation protocol, with distance $d$, taking qudit CCZ states as input and output. 
$\Delta n=1+z-h$, where $z$ is the number of zero in metric (cf. Cor.~\ref{cor:punctured_code_parameters}), and $h$ is the number of natural logical row (cf. remark~\ref{remark:one_point_code_with_natural_logical_row}). For the distances, lower bounds come from Cor.~\ref{cor:punctured_code_parameters}, and upper bounds are found by QDistRnd \cite{QDistRnd_paper, QDistRnd_repo}. For each $k$, we randomly puncture $k$ columns a hundred times and pick the largest obtained distance. When the upper bound is larger than the lower bound, we brute-force certify if the upper bound holds. 
}
\label{table:maximal_curves_CCZ_only}
\end{table}

%% file: appendix.tex
\section[Orthogonality condition for the CS gate]{Orthogonality condition for the $\CS$ gate}
\label{sec:CS_transversality_condition}
In this appendix, we focus on codes over $\mathbb{F}_4$, deriving necessary and sufficient conditions for a transversal $\CS$ to be a valid logical gate on their binarized versions. The corresponding protocols were covered in Section~\ref{sec:CS_distillation_protocols}. To be clear, the physical gate will be the $\CS$ gate applied on the pairs of qubits forming each $\mathbb{F}_4$-qudit.
\subsection{Method 1: Embedded code}
\label{sec:embedded_code}
After binarization, an entry $\gamma \in \mathbb{F}_4$ appearing in the matrix defining the $\mathbb{F}_4$ code, appears in the binarized $X$ logical/stabilizer matrix as $M(\gamma):=\begin{pmatrix}\tr(\gamma \omega^2) & \tr(\gamma) \\ \tr(\gamma) & \tr(\gamma \omega) \end{pmatrix}$. Here, recall that we are representing $\mathbb{F}_4$ as $\{0,1,\omega, \omega^2\}$, where $\omega^2 = \omega + 1$, and $\{\omega, \omega^2\}$ forms a self-dual basis (note that $\omega^3 = 1$). Further, after the embedded code gadget described in Section~\ref{sec:CS_distillation_protocols}, its contribution becomes
\begin{equation}\begin{pmatrix}\tr(\gamma \omega^2) & \tr(\gamma) & \tr(\gamma\omega^2)+\tr(\gamma) \\ \tr(\gamma) & \tr(\gamma \omega) & \tr(\gamma)+\tr(\gamma\omega) \end{pmatrix}=\begin{pmatrix}\tr(\gamma \omega^2) & \tr(\gamma) & \tr(\gamma\omega) \\ \tr(\gamma) & \tr(\gamma \omega) & \tr(\gamma \omega^2) \end{pmatrix}.\end{equation}
To be explicit, the new $X$ logical / stabilizer matrix becomes
\begingroup
\small
\setlength{\arraycolsep}{2pt}
\begin{equation*}
\begin{pmatrix}
\bg_1 \\ \cdots \\\hline\cdots \\ \bg_m
\end{pmatrix}=
\begin{pmatrix}
g_{11} & \cdots & g_{1n} \\
& \cdots & \\\hline
& \cdots & \\
g_{m1} & \cdots & g_{mn}
\end{pmatrix}
\rightarrow
\begin{pmatrix}
\tr(g_{11} \omega^2) & \tr(g_{11}) & \tr(g_{11} \omega) & \cdots & \tr(g_{1n}  \omega^2) & \tr(g_{1n}) & \tr(g_{1n} \omega)\\
\tr(g_{11}) & \tr(g_{11} \omega) & \tr(g_{11} \omega^2) & \cdots & \tr(g_{1n}) & \tr(g_{1n} \omega) & \tr(g_{1n} \omega^2)\\
\cdots & \cdots & \cdots & \cdots & \cdots & \cdots \\\hline
\cdots & \cdots & \cdots & \cdots & \cdots & \cdots \\
\tr(g_{m1} \omega^2) & \tr(g_{m1}) & \tr(g_{m1}\omega) & \cdots & \tr(g_{mn} \omega^2) & \tr(g_{mn}) & \tr(g_{mn}\omega)\\
\tr(g_{m1}) & \tr(g_{m1} \omega) & \tr(g_{m1}\omega^2) & \cdots & \tr(g_{mn}) & \tr(g_{mn} \omega) & \tr(g_{mn}\omega^2)
\end{pmatrix}
\quad
\begin{matrix}
x_1 \\ y_1 \\ \cdots \\ \cdots \\ x_m \\ y_m
\end{matrix},
\end{equation*}
\endgroup
\normalsize
where on the right-hand side, we associate the binary variables $(x_1,y_1,\cdots,x_m, y_m)$ to each row for computing the triorthogonal condition.

First, note that the weight of each row is even, i.e., 
\begin{equation*}
\sum_{i=1}^n \left(\tr(g_{ai} \omega^2) + \tr(g_{ai}) + \tr(g_{ai} \omega)\right) \mod{2}=\sum_{i=1}^n \tr\left(g_{ai} (\omega^2+1+\omega)\right)\mod 2=0,
\end{equation*}
where we use the fact that $\Z_2$ addition is the same as $\F_2$ addition.

Next, let us determine the double intersections. 
\begin{enumerate}[itemsep=1mm]
\item The coefficient of $x_a y_a$ modulo two is
\begin{align*}
&\sum_{i=1}^n \left(\tr(g_{ai} \omega^2)\tr(g_{ai}) + \tr(g_{ai}) \tr(g_{ai} \omega) + \tr(g_{ai} \omega) \tr( g_{ai} \omega^2) \right) 
= \sum_{i=1}^n\; \big{[}(g_{ai} \omega^2 + (g_{ai})^2 \omega)(g_{ai}+ (g_{ai})^2)\\
&+ (g_{ai} + (g_{ai})^2) (g_{ai} \omega + (g_{ai})^2 \omega^2) + 
(g_{ai} \omega + (g_{ai})^2 \omega^2) (g_{ai} \omega^2 + (g_{ai})^2 \omega)\big{]} 
= \sum_{i=1}^n (g_{ai})^3
\end{align*}
\item The coefficient of $x_a y_b$ modulo two ($a\neq b$) is $\sum_{i=1}^n \tr(g_{ai} (g_{bi})^2 \omega^2)$.
\end{enumerate}
For the triple intersections ($a\neq b$ or $a,b,c$ pairwise distinct):
\begin{enumerate}[itemsep=1mm]
\item The coefficient of $x_a y_a x_b$ modulo two is $\sum_{i=1}^n \tr(g_{ai} (g_{bi})^2 \omega^2)$.
\item The coefficient of $x_a y_a y_b$ modulo two is $\sum_{i=1}^n \tr(g_{ai} (g_{bi})^2 \omega)$.
\item The coefficient of $x_a x_b x_c$ modulo two is $\sum_{i=1}^n \tr(g_{ai} g_{bi} g_{ci})$.
\item The coefficient of $x_a x_b y_c$ modulo two is $\sum_{i=1}^n \tr(g_{ai} g_{bi} g_{ci} \omega)$.
\item The coefficient of $x_a y_b y_c$ modulo two is $\sum_{i=1}^n \tr(g_{ai} g_{bi} g_{ci} \omega^2)$.
\item The coefficient of $y_a y_b y_c$ modulo two is $\sum_{i=1}^n \tr(g_{ai} g_{bi} g_{ci})$.
\end{enumerate}
We know that for $1\le a\neq b \le m$ and $a,b$ not both in $\{1, \ldots, k\}$, we need the coefficient of $x_a y_b$, $x_a y_a x_b$, $x_a y_a y_b$ to be even. This enforces $\sum_{i=1}^n g_{ai} (g_{bi})^2 = 0$. Similarly, for $1\le a,b,c\le m$ pairwise-distinct and $a,b,c$ not all in $\{1, \ldots, k\}$, the coefficient of $x_a x_b x_c,\dots,y_a y_b y_c$ needs to be even, so $\sum_{i=1}^n g_{ai} g_{bi} g_{ci} =0$.

The necessary and sufficient condition for the non-Clifford action induced from physical transversal $\CS$ to be only on the logical space is thus also three-orthogonality, i.e., the same as for qudit CCZ.

If we further require the logical action to be disjoint $\CS$ (up to Clifford corrections), we can more compactly write the condition as
\begin{equation}
\sum_{i=1}^n g_{ai}\; g_{bi}\; g_{ci} = \begin{cases}1 & \text{if }1\le a = b = c\le k\\ 0 & \text{otherwise}\end{cases}.
\end{equation}

\subsection{Method 2: Direct calculation}
\label{sec:CS_direct_calculation}

We have established that $\CS|u\ra=i^{u^3-\tr(u)}|u\ra$, $u\in\GF(4)$ in the main text. Note that both $u^3$ (norm of $u$) and $\tr(u)$ only take on values in $\{0,1\}$. The part that gives the phase $i^{-\tr(u)}$ is Clifford, and corresponds to $\CZ_{12} S^\dagger_1 S^\dagger_2$. Since we allow Clifford correction between stabilizer and logical rows, we can ignore the $i^{-\tr(u)}$ part and only focus on applying transversal $i^{u^3}$. 

Write the $X$ logical and stabilizer rows as $\bg_a$, $a\in[m]$, and the first $k$ are logical rows. A classical codeword can be written as $\sum_{a=1}^m u_a\bg_a$ with $u_a\in\GF(4)$, and the phase it gained after applying transversal $i^{u^3}$ is $i^{p(u_1,\cdots,u_m)}$. The phase polynomial reads
\begin{equation}
\label{eq:GF(4)_phase_poly}
p(u_1,\cdots,u_m)=\sum_{i=1}^n \left(\sum_{a=1}^m u_a g_{ai}\right)^3.
\end{equation}
The inner summation, i.e., $\sum_{a=1}^m$ is over $\GF(4)$, so that anything plus itself is zero. The outer summation, i.e., $\sum_{i=1}^n$ is over $\Z$; this is well defined since each individual term $\in\{0,1\}$. To avoid confusion, we will use $+$ for $\Z$ addition, and $\hat{+}$ for $\GF(4)$ addition (or $\oplus$ if the two summands are $\{0,1\}$). Unless stated otherwise, the summation symbol $\sum$ will refer to regular addition over $\mathbb{Z}$.

Let us first prove a lemma:
\begin{lemma}
For $a_1,\dots, a_m \in \GF(4)$, then
\begin{equation*}
(a_1\hat{+}\cdots\hat{+}a_m)^3\equiv  \sum_{i=1}^m a_i^3 + 2\sum_{1\le i<j\le m} a_i^3 a_j^3 + \sum_{1\le i<j\le m}\tr(a_i^2 a_j) + 2\sum_{1\le i<j<k\le m}\tr(a_i a_j a_k)\mod{4}.
\end{equation*}
\end{lemma}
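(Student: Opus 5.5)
The plan is to prove the identity by induction on $m$. The key input is a two-term version, namely for $a,b\in\GF(4)$,
\begin{equation*}
(a\hat{+}b)^3\equiv a^3+b^3+2a^3b^3+\tr(a^2b)\mod 4 .
\end{equation*}
Before starting, I will fix how the right-hand side is read. Each of $a_i^3$, $\tr(\cdot)$ and $a_i^3a_j^3$ takes values in $\{0,1\}\subset\F_4$ and is lifted to $\Z$. Note also that $\tr(a_i^2a_j)$ is not symmetric in $i,j$, so the ordered convention $i<j$ matters. The remaining sums are taken in $\Z$ modulo $4$.

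\textbf{Step 1: the two-term case.} I would check this by cases using the multiplicative structure. If $a=0$ or $b=0$, both sides reduce trivially. If $a,b\neq 0$, then $a^3=b^3=1$, and writing $b=a\zeta$ with $\zeta\in\F_4^\times$ gives $\tr(a^2b)=\tr(a^3\zeta)=\tr(\zeta)$. When $\zeta=1$ the left side is $0$ and the right side is $1+1+2+0\equiv0$. When $\zeta\neq1$ the sum $a\hat{+}b=a(1+\zeta)$ is nonzero, so the left side is $1$, while the right side is $4+\tr(\zeta)=4+1\equiv1$. In fact nine cases suffice, and they can be read off directly.

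\textbf{Step 2: the induction step.} Set $A=a_1\hat{+}\cdots\hat{+}a_{m-1}$ and apply Step 1 to $(A,a_m)$:
\begin{equation*}
(A\hat{+}a_m)^3\equiv A^3+a_m^3+2A^3a_m^3+\tr(A^2a_m)\mod 4 .
\end{equation*}
The term $A^3$ is handled by the induction hypothesis. The term $2A^3a_m^3$ only matters modulo $2$, and modulo $2$ we have $A^3\equiv\sum_i a_i^3+\sum_{i<j}\tr(a_i^2a_j)$ by the hypothesis. This yields $2\sum_{i<m}a_i^3a_m^3+2\sum_{i<j<m}\tr(a_i^2a_j)\,a_m^3$.

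For $\tr(A^2a_m)$, I would expand with Frobenius additivity: $A^2=\hat{\sum}_i a_i^2$, so $\tr(A^2a_m)=\bigoplus_{i<m}\tr(a_i^2a_m)$ as an $\F_2$ sum. Then I would convert this XOR to an integer sum via Eq.~\eqref{eq:xor_to_sum} truncated mod $4$:
\begin{equation*}
\bigoplus_i t_i\equiv\sum_i t_i-2\sum_{i<j}t_it_j\mod 4 .
\end{equation*}
The linear part gives exactly the new terms $\sum_{i<m}\tr(a_i^2a_m)$.

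\textbf{Step 3: matching the order-$2$ terms (main obstacle).} It remains to show the mod-$2$ identity
\begin{equation*}
\sum_{i<j<m}\tr(a_i^2a_j)\,a_m^3+\sum_{i<j<m}\tr(a_i^2a_m)\tr(a_j^2a_m)\equiv\sum_{i<j<m}\tr(a_ia_ja_m)\mod 2 .
\end{equation*}
After this, the $2\sum\tr(a_ia_ja_k)$ terms with $k=m$ appear and the induction closes. I expect this to be the main obstacle. I would prove it termwise, i.e. for each pair $i<j$:
\begin{equation*}
\tr(a_i^2a_j)\,c^3+\tr(a_i^2c)\tr(a_j^2c)=\tr(a_ia_jc)\quad\text{in }\F_2 ,
\end{equation*}
with $c=a_m$. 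If $c=0$, both sides vanish. Otherwise, $c^3=1$ and, by the scaling invariance $x\mapsto x/c$ (using $c^3=1$), we may take $c=1$. The identity then reads $\tr(x^2y)+\tr(x^2)\tr(y^2)=\tr(xy)$. Here $\tr(x^2)\tr(y^2)=\tr(x)\tr(y)$, and I would verify the rest either by expanding $\tr(z)=z+z^2$ or by checking the $16$ cases. If the termwise identity fails, I would instead compare the total mod-$2$ parts of both sides as functions symmetric under the Frobenius map.
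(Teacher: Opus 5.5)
Your proposal is correct and follows essentially the paper's proof: you induct through the two-term identity, convert $\tr(A^2a_m)=\bigoplus_i\tr(a_i^2a_m)$ to an integer sum mod $4$, and close with the $\F_2$ identity $\tr(a_i^2c)\tr(a_j^2c)=\tr(a_ia_jc)\oplus\tr(a_i^2a_j)c^3$, which does hold (your expansion with $\tr(z)=z+z^2$ confirms it, so no fallback is needed). One side remark of yours is wrong, though nothing in your argument depends on it: $\tr(a_i^2a_j)$ is symmetric in $\F_4$, since $\tr(a^2b)=\tr((a^2b)^2)=\tr(ab^2)$, and the paper uses exactly this fact.
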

\begin{proof}
We prove this lemma by induction, and we will be using the following identity to switch between addition in $\GF(4)$ and $\Z_4$ (Eq.~\eqref{eq:xor_to_sum} modulo $4$):
\begin{equation}
\label{eq:GF(4)_Z4_summation}
\bigoplus_{i=1}^{m'} b_i \equiv \sum_{i=1}^{m'} b_i + 2\sum_{1\le i<j\le m'}b_i b_j \mod{4},
\end{equation}
where $b_1,\dots, b_{m'}\in\{0,1\}$.

We show the base case for $m=2$:
\begin{align*}
(a_1\hat{+}a_2)^3 &= a_1^3\hat{+}a_2^3\hat{+}\tr(a_1^2 a_2)\\
&\equiv a_1^3 + a_2^3 +\tr(a_1^2 a_2) + 2a_1^3 a_2^3 + 2 a_1^3 \tr(a_1^2 a_2) + 2a_2^3\tr(a_1^2 a_2) \mod{4}\\
&\equiv a_1^3+a_2^3 + 2 a_1^3 a_2^3 + \tr(a_1^2 a_2) \mod{4},
\end{align*}
where in the last step, using $a_i^4=a_i$, we have $a_1^3\tr(a_1^2 a_2)=a_1^3(a_1^2 a_2 + a_1 a_2^2)=\tr(a_1^2 a_2)$; similarly, $a_2^3 \tr(a_1^2 a_2)=\tr(a_1^2 a_2)$.

Assuming the formula is true for $m$, let us prove it for $m+1$. Call $A_m := a_1\hat{+}\cdots\hat{+}a_m$ and $A_{m+1}:= A_m \hat{+} a_{m+1}$.
\begin{equation*}
(\underbrace{a_1\hat{+}\cdots\hat{+}a_m}_{A_m} \hat{+} a_{m+1})^3 \equiv A_m^3 + a_{m+1}^3 + 2 A_m^3 a_{m+1}^3 +\tr(A_m a_{m+1}^2)\mod{4},
\end{equation*}
where we used $\tr(a^2 b)=\tr(a b^2)$. Now, what we want to show is
\begin{align*}
A_{m+1}^3 &\equiv  \sum_{i=1}^{m+1} a_i^3 + 2\sum_{1\le i<j\le m+1} a_i^3 a_j^3 + \sum_{1\le i<j\le m+1}\tr(a_i^2 a_j) + 2\sum_{1\le i<j<k\le m+1}\tr(a_i a_j a_k)\mod{4}
\end{align*}
but this is equivalent to
\begin{align*}
    A_{m+1}^3&\equiv A_m^3 + a_{m+1}^3 + 2\sum_{i=1}^m a_i^3 a_{m+1}^3 + \sum_{i=1}^m \tr(a_i^2 a_{m+1}) + 2\sum_{1\le i<j\le m}\tr(a_i a_j a_{m+1}) \mod{4}
\end{align*}
It thus suffices to show that
\begin{equation*}
2 A_m^3 a_{m+1}^3 +\tr(A_m a_{m+1}^2)  \equiv 2\sum_{i=1}^m a_i^3 a_{m+1}^3 + \sum_{i=1}^m \tr(a_i^2 a_{m+1}) + 2\sum_{1\le i<j\le m}\tr(a_i a_j a_{m+1}) \mod{4}
\end{equation*}
Indeed,
\begin{align*}
\text{LHS} &\equiv 2\left(\sum_{i=1}^m a_i^3\right) a^3_{m+1} + 2\left(\sum_{1\le i<j\le m} \tr(a_i^2 a_j) \right) a^3_{m+1} + (\tr(a_1 a_{m+1}^2)\hat{+}\cdots\hat{+}\tr(a_{m} a_{m+1}^2)) \mod{4}\\
&\equiv 2\sum_{i=1}^m a_i^3 a_{m+1}^3 + 2\sum_{1\le i<j\le m}\tr(a_i^2 a_j)a^3_{m+1} + \sum_{i=1}^m \tr(a_i a_{m+1}^2) + 2\sum_{1\le i<j\le m}\tr(a_i a_{m+1}^2)\tr(a_j a_{m+1}^2) \mod{4} 
\end{align*}
We then use
\begin{align*}
\tr(a_i a_{m+1}^2)\tr(a_j a_{m+1}^2) &=(a_i a_{m+1}^2\hat{+} a_i^2 a_{m+1})(a_j a_{m+1}^2\hat{+} a_j^2 a_{m+1})\\
&= a_ia_ja_{m+1} \hat{+} a_ia_j^2 a_{m+1}^3 \hat{+} a_i^2 a_j a_{m+1}^3 \hat{+} a_i^2 a_j^2 a_{m+1}^2\\
&= \tr(a_i a_j a_{m+1})\hat{+}\tr(a_i a_j^2)a_{m+1}^3\\
&\equiv \tr(a_i a_j a_{m+1})+\tr(a_i a_j^2) a_{m+1}^3 +2 \tr(a_ia_ja_{m+1})\tr(a_i a_j^2)a_{m+1}^3 \mod{4}
\end{align*}
Substituting this in, we may conclude:
\begin{align*}
\text{LHS} &\equiv 2\sum_{i=1}^m a_i^3 a_{m+1}^3 + 2\sum_{1\le i<j\le m}\tr(a_i^2 a_j)a^3_{m+1} + \sum_{i=1}^m \tr(a_i a_{m+1}^2)\\\nonumber
&\quad + 2\sum_{1\le i<j\le m}\tr(a_i a_j a_{m+1}) + 2\sum_{1\le i<j\le m} \tr(a_i^2 a_j)a_{m+1}^3 \mod{4} \\
&\equiv 2\sum_{i=1}^m a_i^3 a_{m+1}^3 + \sum_{i=1}^m \tr(a_i a_{m+1}^2) + 2\sum_{1\le i<j\le m}\tr(a_i a_j a_{m+1})  \mod{4},
\end{align*}
as required.
\end{proof}
With this lemma, we can continue calculating the phase polynomial:
\begin{align}
p(u_1,\cdots,u_m) &= \sum_{i=1}^n \left(\sum_{a=1}^m u_a g_{ai}\right)^3\\
&\equiv \sum_{i=1}^n \Bigg[\sum_{a=1}^m u_a^3 (g_{ai})^3 + 2\sum_{1\le a<b\le m} u_a^3 u_b^3 (g_{ai})^3 (g_{bi})^3 + \sum_{1\le a<b\le m}\tr(u_a^2 u_b (g_{ai})^2 g_{bi})\\
&\quad\quad + 2\sum_{1\le a<b<c\le m} \tr(u_a u_b u_c g_{ai} g_{bi} g_{ci})\Bigg] \mod{4}\\
&\equiv \sum_{a=1}^m u_a^3\left[\sum_{i=1}^n(g_{ai})^3\right] + 2\sum_{1\le a<b\le m} u_a^3 u_b^3 \left[\sum_{i=1}^n(g_{ai})^3 (g_{bi})^3\right]\label{eq:phase_poly_GF(4)}\\\nonumber
&\quad\quad \sum_{1\le a<b\le m}\sum_{i=1}^n\tr(u_a^2 u_b (g_{ai})^2 g_{bi}) + 2\sum_{1\le a<b<c\le m}\sum_{i=1}^n \tr(u_a u_b u_c g_{ai} g_{bi} g_{ci}) \mod{4}
\end{align}
Notice that all additions here were over the integers. Next, to derive the triple intersection conditions over $\GF(4)$, we need to turn the summation into $\GF(4)$ addition given fixed $a,b,c$. We can again use Eq.~\eqref{eq:GF(4)_Z4_summation} for this purpose.
\begin{align*}
&2\sum_{1\le a<b<c\le m}\sum_{i=1}^n \tr(u_a u_b u_c\; g_{ai} g_{bi} g_{ci})\\
\equiv\quad & 2\sum_{1\le a<b<c\le m}\left[\tr(u_a u_b u_c\; g_{a1} g_{b1} g_{c1})\hat{+}\cdots\hat{+}\tr(u_a u_b u_c\; g_{an} g_{bn} g_{cn})\right]\mod{4}\\
\equiv\quad & 2\sum_{1\le a<b<c\le m}\tr(u_a u_b u_c [g_{a1} g_{b1} g_{c1}\hat{+}\cdots\hat{+}g_{an} g_{bn} g_{cn}]) \mod{4}
\end{align*}
$i$ to the power of $2\,\tr(u_a u_b u_c\gamma)$ is non-Clifford if $\gamma\neq 0$. In that case, binarizing $u_a,u_b,u_c$ leads to $\CCZ$ gates across the three pairs of logical qubits.

The term with $u_a^3$ is a $\CS$ gate on the pair of qubits constituting the qudit, so instead of $\sum_{i=1}^n (g_{ai})^3\mod{4}$, we only need $\sum_{i=1}^n (g_{ai})^3\mod{2}=\hat\sum_{i=1}^n (g_{ai})^3$ to tell us the non-Clifford logical action.
The middle two terms in Eq.~\eqref{eq:phase_poly_GF(4)} that involves $u_a,u_b$ require more work:
\begin{equation*}
2\sum_{1\le a<b\le m} u_a^3 u_b^3 \left[\sum_{i=1}^n(g_{ai})^3 (g_{bi})^3\right] \equiv 2\sum_{1\le a<b\le m} u_a^3 u_b^3 \left[(g_{a1})^3(g_{b1})^3 \hat{+}\cdots\hat{+} (g_{an})^3(g_{bn})^3\right] \mod{4}
\end{equation*}
\begin{align*}
\sum_{1\le a<b\le m} \sum_{i=1}^n\tr(u_a^2 u_b (g_{ai})^2 g_{bi}) &\equiv \sum_{1\le a<b\le m} \Bigg[\tr(u_a^2 u_b [(g_{a1})^2 g_{b1}\hat{+}\cdots\hat{+}(g_{an})^2 g_{bn}])\\
&+ 2\sum_{1\le i<j\le n} \tr(u_a^2 u_b (g_{ai})^2 g_{bi})\tr(u_a^2 u_b (g_{aj})^2 g_{bj})\Bigg]\mod{4}\\
&\equiv \sum_{1\le a<b\le m} \Bigg[ \tr(u_a^2 u_b [(g_{a1})^2 g_{b1}\hat{+}\cdots\hat{+}(g_{an})^2 g_{bn}])\\
&+ 2 \sum_{1\le i<j\le n} \left[\tr(u_a^2 u_b g_{ai} g_{aj} (g_{bi} g_{bj})^2) + \tr(g_{bi} g_{aj} (g_{ai} g_{bj})^2) u_a^3 u_b^3\right] \Bigg]\mod{4},
\end{align*}
where we used Eq.~\eqref{eq:GF(4)_Z4_summation} in the first step, and in the second we used
\begin{align*}
&2\sum_{1\le i<j\le n} \tr(u_a^2 u_b (g_{ai})^2 g_{bi})\tr(u_a^2 u_b (g_{aj})^2 g_{bj})\\
\equiv\; & 2 \sum_{1\le i<j\le n} \left[\tr(u_a^2 u_b g_{ai} g_{aj} (g_{bi} g_{bj})^2) \hat{+} \tr(g_{bi} g_{aj} (g_{ai} g_{bj})^2) u_a^3 u_b^3\right] \mod{4}\\
\equiv\; & 2 \sum_{1\le i<j\le n} \left[\tr(u_a^2 u_b g_{ai} g_{aj} (g_{bi} g_{bj})^2) + \tr(g_{bi} g_{aj} (g_{ai} g_{bj})^2) u_a^3 u_b^3\right] \mod{4}
\end{align*}
Combining terms containing $u_a^3 u_b^3$ we get
\begin{align*}
&2\left[ u_a^3 u_b^3  \hat{\sum_{1\le i\le n}}(g_{ai})^3 (g_{bi})^3 + \hat{\sum_{1\le i<j\le n}} \tr(g_{bi} g_{aj} (g_{ai} g_{bj})^2) u_a^3 u_b^3\right]\\
=\;&2\left[ u_a^3 u_b^3  \hat{\sum_{1\le i\le n}}(g_{ai})^3 (g_{bi})^3 \hat{+} \hat{\sum_{1\le i<j\le n}} \tr(g_{bi} g_{aj} (g_{ai} g_{bj})^2) u_a^3 u_b^3\right]\\
=\;&2 u_a^3 u_b^3 \left[\hat{\sum_{1\le i\le n}}(g_{ai})^2 g_{bi}\right]^3
\end{align*}
Merging terms containing $u_a^2 u_b$ leads to
\begin{align*}
\sum_{1\le a<b\le m} \tr(u_a^2 u_b \left[\hat{\sum_{1\le i\le n}} (g_{ai})^2 g_{bi} \right]) + 2 \tr(u_a^2 u_b \left[\hat{\sum_{1\le i<j\le n}}  g_{ai} g_{aj} (g_{bi} g_{bj})^2\right]) \mod{4}
\end{align*}
The second term is Clifford, as will be seen next when we expand $\tr(u_a^2 u_b\gamma)\mod{4}$. 

Call $\gamma:=\hat\sum_{1\le i\le n} (g_{ai})^2 g_{bi}$; we show
\begin{equation}
\label{eq:ua_ub_coeff}
2u_a^3 u_b^3 \gamma^3 + \tr(u_a^2 u_b \gamma) \mod{4}
\end{equation}
is non-Clifford (in the third-level) if $\gamma \neq 0$.
Let $u_a=a_1\omega+a_2\omega^2$, $u_b=b_1\omega+b_2\omega^2$ where $a_1,a_2,b_1,b_2\in\{0,1\}$. 
\begin{align*}
2u^3_a u^3_b \gamma^3   \equiv\; & 2 (a_1+a_2-a_1a_2)(b_1+b_2-b_1b_2)\gamma^3  \mod{4}\\
\equiv\; & 2(a_1+a_2)(b_1+b_2)\gamma^3 - 2(a_1b_1b_2+a_2b_1b_2+a_1a_2b_1+a_1a_2b_2)\gamma^3 + 2a_1a_2b_1b_2\gamma^3 \mod{4}
\end{align*}
\begin{align*}
&\tr(u_a^2 u_b \gamma)\mod{4}\\
\equiv\; & (a_1\omega^2+a_2\omega)(b_1\omega+b_2\omega^2)\gamma \hat{+} (a_1\omega+a_2\omega^2)(b_1\omega^2+b_2\omega)\gamma^2 \mod{4}\\
\equiv\; & a_1b_1\tr(\gamma)\hat{+}a_1b_2\tr(\omega\gamma)\hat{+} a_2b_1\tr(\omega^2\gamma)\hat{+}a_2b_2\tr(\gamma) \mod{4}\\
\equiv\; & a_1b_1\tr(\gamma) + a_1b_2\tr(\omega\gamma) + a_2b_1\tr(\omega^2\gamma) + a_2b_2\tr(\gamma) + 2a_1a_2b_1b_2\underbrace{(\tr(\gamma)\tr(\gamma)+\tr(\omega\gamma)\tr(\omega^2\gamma))}_{\gamma^3}\\\nonumber
& + 2a_1b_1b_2\tr(\gamma)\tr(\omega\gamma) + 2a_1a_2b_1\tr(\gamma)\tr(\omega^2\gamma)+2a_1a_2b_2\tr(\omega\gamma)\tr(\gamma)+2a_2b_1b_2\tr(\omega^2\gamma)\tr(\gamma)\mod{4}
\end{align*}
Therefore, the potential $\CCZ$ part of Eq.~\eqref{eq:ua_ub_coeff} is
\begin{multline*}
2a_1b_1b_2[\tr(\gamma)\tr(\omega\gamma)-\gamma^3] + 2a_1a_2b_1[\tr(\gamma)\tr(\omega^2\gamma)-\gamma^3]+\\2a_1a_2b_2[\tr(\omega\gamma)\tr(\gamma)-\gamma^3]+2a_2b_1b_2[\tr(\omega^2\gamma)\tr(\gamma)-\gamma^3]\mod{4}
\end{multline*}
For this part to disappear, we must have $\tr(\gamma)\tr(\omega\gamma)=\gamma^3=\tr(\gamma)\tr(\omega^2\gamma)$, which holds only when $\gamma=0$.

\subsection{\texorpdfstring{Concatenated $[[4,1,2]]_4$ and decreasing monomial code}{Concatenated 4-qudit code and decreasing monomial code}}

\label{sec:decreasing_monomial_code}
We show that the $m$-fold concatenation of the $[[4,1,2]]_4$ code (Eq.~\eqref{eq:4-to-1-RS}) can be interpreted as a monomial code, where the evaluation is over the affine space $\F_4^m=\{(x_1,x_2,\dots,x_m)\;|\;x_1,\dots,x_m\in\F_4\}$. The $X$ logical of the code is (the evaluation of) $x_1x_2\dots x_m$, and the $Z$ logical is $x_1^2x_2^2\dots x_m^2$. The $X$ (resp. $Z$) stabilizers are (the evaluation of) all the monomials that are below the $X$ (resp. $Z$) logical $x_1x_2\dots x_m$ (resp. $x_1^2x_2^2\cdots x_m^2$) in the \href{https://en.wikipedia.org/wiki/Monomial_order}{lexicographic order} $x_m\succ x_{m-1}\succ\dots\succ x_1$: write $x^{\boldsymbol{\alpha}}:=x_1^{\alpha_1}\cdots x_m^{\alpha_m}$, where $\mathbf{\alpha}=(\alpha_1,\dots,\alpha_m)$ and $\alpha_i\in\Z_{\ge 0}$, then $x^{\boldsymbol{\alpha}}\succ x^{\boldsymbol{\beta}}$ if the right-most non-zero entry in the difference $\boldsymbol{\alpha}-\boldsymbol{\beta}\in \Z^m$ is positive\footnote{Here the order is defined for monomials in $\F_4[x_1,\dots,x_m]$ rather than $\F_4[x_1,\dots,x_m]/\la x_1^4-x_1,\dots,x_m^4-x_m\ra$. That is to say, the exponent vectors live in $\Z^m_{\ge 0}$. This is a well-defined monomial order, i.e., if monomials $u\prec v$ and $w$ is any other monomial, then $uw\prec vw$.}. This is to say, the concatenated $[[4,1,2]]$ code can be interpreted as a decreasing monomial code.

One can easily verify the commutation relationship by noticing that the overlap of the two evaluation vectors $\ev(x^{\boldsymbol{\alpha}})$ and $\ev(x^{\boldsymbol{\beta}})$, where $\alpha_i,\beta_i\in\{0,1,2,3\}$ is $1$ only when $\alpha_i+\beta_i\in\{3,6\}$ for all $i\in[m]$, and zero otherwise. 
This can be seen by expanding the overlap as $\sum_{P\in\F_4^m} x^{\boldsymbol{\alpha}}(P)x^{\boldsymbol{\beta}}(P)=\prod_{i=1}^m \left(\sum_{t\in\F_4} t^{\alpha_i+\beta_i}\right)$. One can also use the above overlap calculation to verify that the choice of $X$ logical and stabilizers indeed form a triorthogonal space of $k=1$.

We show the above correspondence with decreasing monomial codes by induction. The base case $[[4,1,2]]$ is easy to verify -- the $X$ logical is $\ev(x)$ and the $X$ stabilizer is $\ev(1)$. To facilitate understanding, we show the $m=2$ case by explicitly writing down the $X$-logical and $X$-stabilizers of the $[[16,1,4]]_4$ code (see below). Call the two variables $x_1,x_2$, and evaluation is over the affine space $\F_4^2$. With the concatenation approach, the logical row is $x_1x_2$. There shall be five $X$ stabilizer rows, one is $x_2$, and the other four come from the $\ev(1)=(1,1,1,1)$ stabilizer of the $[[4,1,2]]_4$ code, one can write them as $\delta_0(x), \delta_1(x), \delta_\omega(x), \delta_{\omega^2}(x)$, where the delta function $\delta_\gamma(x)=\begin{cases}1 & \text{if }x=\gamma\\ 0 & \text{if }x\neq \gamma\end{cases}$ can be expressed as a polynomial: $\delta_\gamma(x)=1-(x-\gamma)^3$, $\gamma\in\F_4$. 

\begin{table}[ht]
\small
\setlength{\tabcolsep}{0.5pt}
\begin{tabular}{c|cccccccccccccccc}
&\( (0,0) \)&\( (0,1) \)&\( (0,\omega) \)&\( (0,\omega^2) \)&\( (1,0) \)&\( (1,1) \)&\( (1,\omega) \)&\( (1,\omega^2) \)&\( (\omega,0) \)&\( (\omega,1) \)&\( (\omega,\omega) \)&\( (\omega,\omega^2) \)&\( (\omega^2,0) \)&\( (\omega^2,1) \)&\( (\omega^2,\omega) \)&\( (\omega^2,\omega^2)\)\\\hline
\(\ev(x_1x_2) \)&\( 0\)&\(0\)&\(0\)&\(0\)&\(0\)&\(1\)&\(\omega\)&\(\omega^2\)&\(0\)&\(\omega\)&\(\omega^2\)&\(1\)&\(0\)&\(\omega^2\)&\(1\)&\(\omega\)\\\hline
\(\delta_{0}(x_1) \)&\(1\)&\(1\)&\(1\)&\(1\)&\(0\)&\(0\)&\(0\)&\(0\)&\(0\)&\(0\)&\(0\)&\(0\)&\(0\)&\(0\)&\(0\)&\(0\)\\
\(\delta_{1}(x_1) \)&\(0\)&\(0\)&\(0\)&\(0\)&\(1\)&\(1\)&\(1\)&\(1\)&\(0\)&\(0\)&\(0\)&\(0\)&\(0\)&\(0\)&\(0\)&\(0\)\\
\(\delta_{\omega}(x_1) \)&\(0\)&\(0\)&\(0\)&\(0\)&\(0\)&\(0\)&\(0\)&\(0\)&\(1\)&\(1\)&\(1\)&\(1\)&\(0\)&\(0\)&\(0\)&\(0\)\\
\(\delta_{\omega^2}(x_1) \)&\(0\)&\(0\)&\(0\)&\(0\)&\(0\)&\(0\)&\(0\)&\(0\)&\(0\)&\(0\)&\(0\)&\(0\)&\(1\)&\(1\)&\(1\)&\(1\)\\
\(\ev(x_2) \)&\(0\)&\(1\)&\(\omega\)&\(\omega^2\)&\(0\)&\(1\)&\(\omega\)&\(\omega^2\)&\(0\)&\(1\)&\(\omega\)&\(\omega^2\)&\(0\)&\(1\)&\(\omega\)&\(\omega^2\)
\end{tabular}
\end{table}

One observes that $\text{span}_{\F_4}\{\delta_0(x), \delta_1(x), \delta_\omega(x), \delta_{\omega^2}(x)\}=\text{span}_{\F_4}\{1-x^3, 1-(x-1)^3, 1-(x-\omega)^3, 1-(x-\omega^2)^3\}=\text{span}_{\F_4}\{1,x,x^2,x^3\}$ from $\delta_\gamma(x)=1-(x-\gamma)^3$. Therefore, one can equivalently write the $X$ stabilizers as (the evaluation of) $1,x_1,x_1^2,x_1^3, x_2$, which are indeed all the monomials below the $X$ logical $x_1x_2$.

Now suppose we iteratively concatenate $[[4,1,2]]_4$ with itself $m$ times; the last step is to encode each of the $\GF(4)$-qudits of the $[[4^{m-1},1,2^{m-1}]]_4$ code into the $[[4,1,2]]_4$ code. Therefore, the new $X$ logical is obtained by replacing each of its entry $\gamma\in\GF(4)$ with the vector $\gamma\cdot(0,1,\omega,\omega^2)$, where $(0,1,\omega,\omega^2)=\ev(x)$ is the $X$ logical of the $[[4,1,2]]_4$ code. 
Therefore, the $X$ logical $\ev(x_1x_2\cdots x_{m-1})$ gets encoded into $\ev(x_1x_2\cdots x_m)$, now being the $X$ logical of the $[[4^{m},1,2^m]]_4$ code. 

We can similarly apply induction to the $X$ stabilizers, which we denote $S_m$ for the $[[4^{m},1,2^m]]_4$ code. We need to show that $S_m$ is formed by $\text{Fun}(x_1,\dots,x_{m-1})$ which consists of all functions $\F_4^{m-1}\to F_4$ on the variables $x_1,\dots,x_{m-1}$, and $x_m\cdot S_{m-1}$. Here, $S_{m-1}$ consists of (the evaluation of) all monomials $\prec x_1x_2\ldots x_{m-1}$, which is the set of $X$ stabilizers of the $[[4^{m-1},1,2^{m-1}]]_4$ code, by the inductive hypothesis. Note that $\text{Fun}(x_1,\dots,x_{m-1})\cup (x_m\cdot S_{m-1})$ indeed form \emph{all} the monomials $\prec x_1x_2\cdots x_{m}$. 

There are two equivalent generating sets of $\text{Fun}(x_1,\dots,x_{m-1})$: the first is by all the monomials generated by $x_1, \ldots, x_{m-1}$, the second is the delta functions $\{\delta_P(x_1,\dots,x_{m-1})\;|\;P\in\F_4^{m-1}\}$. The delta functions precisely provide the $4^{m-1}$ local stabilizers of the form $(0,\cdots,0,1,1,1,1,0\cdots,0)$ (weight four) coming from encoding each of the $\GF(4)$-qudits of the $[[4^{m-1},1,2^{m-1}]]_4$ code into the $[[4,1,2]]_4$ code. The other part of the contribution to $S_m$, namely $x_m\cdot S_{m-1}$, arises from encoding each of the entry of an $X$ stabilizer of the $[[4^{m-1},1,2^{m-1}]]_4$ using the logical row $(0,1,\omega,\omega^2)$ of $[[4,1,2]]_4$.

Although the distances of the code are natural to see from the concatenation, namely $[[4^m,1,d_X=3^m\;/\; d_Z=2^m]]$, we can alternatively prove the distances via the footprint bound \cite{footprint_bound,geil2008evaluation}, which says the following. Fix any monomial order '$\prec$' (which should be valid with respect to multiplication, i.e., if monomials $u\prec v$ and $w$ is any other monomial, then $uw\prec vw$), we can define the \emph{leading monomial} $\LM(f)$ for a given polynomial $f\in \F_q[x_1,\dots,x_m]$ (take $q=4$ for our case) as the largest monomial appearing in $f$ with respect to '$\prec$'. For an ideal $J$ of polynomials, define its \emph{footprint} $\Delta_\prec(J)$ to be the set of those monomials which are not the leading monomial of any polynomial in $J$. Equivalently, $\Delta_\prec(J)$ contains the monomials that are not contained in the ideal $\la \LM(J)\ra$ (here we use $\la \cdot\ra$ instead of $(\cdot)$ to denote the ideal generated by the polynomials surrounded by the brackets), where $\LM(J)=\{\LM(f)\;|\;f\in J\}$. The footprint bound says that if $\Delta_\prec(J)$ is finite, then $|\Delta_\prec(J)|\ge |V(J)|$, where $V(J):=\{x\in \F_q^m\;|\;f(x)=0,\;\forall f\in J\}$. 

Knowing the number of zeros of $f$ on the affine space $\F_q^m$ allows us to determine the weight of $\ev(f)$, since the weight counts the number of coordinates where the evaluation is nonzero. Call $V(f)$ the set of zeros of $f$, then $\wt(\ev(f))=q^m-|V(f)|=q^m-|V(\la f\ra)|$ since $V(f)=V(\la f\ra)$.
Moreover, let $I:=\la x_1^q-x_1,\dots,x_m^q-x_m\ra$; its zero locus\footnote{Given an ideal $I$ of a polynomial ring, its zero locus $V(I)$ are exactly the points in the affine space $\mathbb{F}_q^m$ on which every polynomial in $I$ vanishes.} $V(I)$ is the full affine space $\F_q^m$, and therefore, $|V(\la f\ra)|=|V(\la f, x_1^q-x_1,\dots,x_m^q-x_m\ra)|$.

Now we are ready to apply the footprint bound to determine the $X$ distance, which is the lowest weight among all evaluation vectors $\{\ev(x_1x_2\cdots x_m+s)\;|\;s\in S_m\}$. The leading monomial of all these polynomials $f=x_1x_2\cdots x_m+s$ is $\LM(f)=x_1x_2\cdots x_m$. We have
\begin{equation}
\label{eq:monomial_weight}
\wt(\ev(f))=4^m-|V(\la f,x_1^4-x_1,\dots,x_m^4-x_m\ra)|\ge 4^m-|\Delta_\prec(\la f,x_1^4-x_1,\dots,x_m^4-x_m\ra)|,
\end{equation}
where the footprint bound is applied in the second step.

Recall that the footprint $\Delta_\prec(\la f, x_1^4-x_1,\dots,x_m^4-x_m\ra)$ contains those monomials that are not in $$\la \LM(\la f, x_1^4-x_1,\dots,x_m^4-x_m\ra)\ra \supseteq \la \LM(f),\LM(x_1^4-x_1),\dots,\LM(x_m^4-x_m)\ra=
\la x_1x_2\cdots x_m, x_1^4,\dots,x_m^4 \ra.$$ Therefore, the footprint is contained in th set of monomials $x^{\boldsymbol{\beta}}$ with $0\le \beta_i<4$ that are not divisible by $x_1 x_2\dots x_m$, of which there are $4^m-3^m$. Eq.~\eqref{eq:monomial_weight} then implies $\wt(\ev(f))\ge 3^m$, and consequently $d_X\ge 3^m$. One can similarly show that $d_Z\ge 2^m$. Moreover, the $X$ logical (resp. $Z$ logical) $x_1x_2\cdots x_m$ (resp. $\prod_{i=1}^m (1+x_i+x_i^2)$) achieve these bounds.

\textbf{Clifford aspects of the binarized \boldmath $[[4^m,1,2^m]]_4$ code.}
The binarized code has parameters $[[2\cdot 4^m, 2, d_X=4\cdot3^{m-1}\;/\;d_Z=2^m]]_2$, which are the same parameters as two patches of surface code. We will now show how this code admits implementations of the full Clifford group in full qudit distance, which equals the qubit distance in this case. Note that we already know for this code that the logical $\CS$ between the two logical qubits may be implemented using qudit-transversal physical $\CS$, thereby retaining the qudit distance.

As for the Cliffords, we first note that individual $S$ gates can be done with full distance on the two logical qubits, using the identity $X_2 \CS_{12} X_2 \CS_{12}=S_1$.
Next, we show that all $\CNOT$-type circuits between the two logical qubits can also be obtained in a qudit-transversal way (together with qudit permutations). It is observed in~\cite{koh2026entangling} that, for an $\mathbb{F}_4$-qudit code, for any $\gamma\in \mathbb{F}_4$,  $|\gamma\ra_L\to |\alpha \gamma\ra_L$ can be achieved through scaling each physical $\mathbb{F}_4$-qudit by $\alpha\in\F_4^\times$. To complete the set of $\CNOT$-type circuits, one only needs to find an implementation of the logical Frobenius transform $|\gamma\ra_L\mapsto |\gamma^2\ra_L$. Note that in the self-dual basis that we use for binarization, $\{\omega, \omega^2\}$, the Frobenius transform on one $\mathbb{F}_4$-qudit just corresponds to swapping the two qubits. For our $[[4^m,1,2^m]]_4$ code, we show that the logical Frobenius transform can be achieved through physical Frobenius transforms (swapping the two physical qubits making up each physical qudit), and permuting the physical $\GF(4)$-qudits, as we now show through induction. In other words, the two logical qubits can be swapped by simply re-labeling the physical qubits.

For $[[4,1,2]]_4$, whose $X$-stabilizer is $(1,1,1,1)$ and $X$-logical is $\bl_X=(0,1,\omega,\omega^2)$. After individual physical Frobenius transforms, one swaps the third and fourth Galois qudit. The $X$ stabilizer is preserved, and the $X$ logical also transforms as desired: $$\gamma \bl_X=(0,\gamma,\gamma\omega,\gamma\omega^2)\mapsto (0,\gamma^2,\gamma^2\omega^2,\gamma^2\omega)\mapsto (0,\gamma^2,\gamma^2\omega,\gamma^2\omega^2)=\gamma^2\bl_X.$$

Now suppose we know how to do the logical Frobenius transform on $[[4^{m-1},1,2^{m-1}]]_4$ using physical Frobenius and permutation. Then for the $[[4^m,1,2^m]]_4$ code, think of it as encoding each physical Galois-qudit of $[[4,1,2]]_4$ using a $[[4^{m-1},1,2^{m-1}]]_4$ code. We first do the Frobenius transform on each of the four patches of $[[4^{m-1},1,2^{m-1}]]_4$ using the induction hypothesis, then swap the third and fourth patch.

With the full CNOT-type circuits and individual $S$ achieved with full qudit distance, we can now use the Hadamard teleportation gadget in \cite[Fig.~13]{koh2026entangling} to get individual Hadamard. Therefore, the full Clifford group on the two logical qubits can be done in full qudit distance, which equals the qubit distance in this case.

\subsection[Optimality of asymptotic overhead two in d=2 CS-to-CS distillation]{Optimality of asymptotic overhead two in $d=2$ $\CS$-to-$\CS$ distillation}

\label{sec:optimality_d=2_CS-to-CS}

We show that a $\F_4$-linear code satisfying the triorthogonal constraints for $\CS$-to-$\CS$ distillation (Eq.~\eqref{eq:CS-to-CS}) with $d\geq 2$ must satisfy $n\ge 2k$. Moreover, if there is an $X$ stabilizer of full support (weight $n$), we can strengthen the bound to $n\ge 2k+2$; in this case our Con.~\ref{con:2k+2-to-k-CS} is asymptotically optimal.

We only need to use the two-logical-one-stabilizer zero-overlap condition to show $n\ge 2k$. Interestingly, our proof is different from \cite{nezami2022classification} for $T$-to-$T$ distillation; in fact, our proof can serve as an alternative proof for their statement.

The convenience of the two-logical-one-stabilizer zero-overlap condition lies in the independence of the choice of basis for the logical and stabilizer space, respectively. This is because the overlap expression $T(\bg_a, \bg_b, \bg_c)=\sum_{i=1}^n{g_{ai}} g_{bi} g_{ci}$ is a trilinear form.

We prove the following lemma by induction.
\begin{lemma}\label{lem:dim_bound_triple_overlap}
Consider two $\F_q$-linear spaces $L,S\subset \F_q^n$, where $S$ is non-zero. $S$ has support at every point, that is, given a basis for $S$, there is no coordinate $i \in [n]$ at which every basis element is zero.\footnote{This property may also be referred to by saying that $S$ ``has no zero column''. In this description, we imagine a generator matrix for $S$, that is, a matrix whose rows form a basis for $S$. This property means that the generator matrix for $S$ has no zero column.} If $L$ and $S$ satisfy $T(\bl_1,\bl_2,\bs)=0$ for any $\bl_1,\bl_2\in L, \bs\in S$, then $n\ge 2\cdot \dim L$.
\end{lemma}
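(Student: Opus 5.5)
The plan is to induct on $n$, peeling off the support of one stabilizer vector at a time and using the fact that a totally isotropic subspace of a nondegenerate symmetric bilinear form is at most half-dimensional.

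\textbf{Step 1: the isotropy bound on $\supp(\bs)$.} Fix any nonzero $\bs\in S$, let $W=\supp(\bs)$ and $w=|W|\ge 1$. On $\F_q^W$ consider the symmetric bilinear form $B_{\bs}(\bx,\by)=\sum_{i\in W}s_i x_i y_i$. It is nondegenerate, since its Gram matrix is $\diag(s_i)_{i\in W}$ with every $s_i\neq 0$.

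Let $\pi_W$ be the coordinate projection onto $W$. The hypothesis $T(\bl_1,\bl_2,\bs)=0$ for all $\bl_1,\bl_2\in L$ says exactly that $\pi_W(L)$ is totally isotropic for $B_{\bs}$. Then $\pi_W(L)\subseteq \pi_W(L)^{\perp_{B_{\bs}}}$, and nondegeneracy gives $\dim \pi_W(L)^{\perp}=w-\dim\pi_W(L)$. Hence $\dim\pi_W(L)\le w/2$. This argument uses no alternating or quadratic structure, so it is valid in characteristic $2$, which is our case.

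\textbf{Step 2: reduce to the complement.} Let $L'=\ker(\pi_W|_L)$, the vectors of $L$ vanishing on $W$. Then $\dim L=\dim\pi_W(L)+\dim L'$. Every vector of $L'$ is supported in $W^c=[n]\setminus W$, where $|W^c|=n-w<n$.

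If $W^c=\emptyset$, then $L'=0$ and we are done: $\dim L\le w/2=n/2$.

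Otherwise, put $S'=\pi_{W^c}(S)$ and $L''=\pi_{W^c}(L')$. We check the three hypotheses on $\F_q^{W^c}$:
\begin{itemize}
\item $\dim L''=\dim L'$, because the vectors of $L'$ vanish on $W$, so projecting them to $W^c$ loses nothing.
\item $S'$ has support at every coordinate of $W^c$, because $S$ has support at every coordinate of $[n]$. In particular $S'$ is nonzero.
\item For $\bl_1,\bl_2\in L'$ and $\bs'\in S$, the trilinear sum over $[n]$ equals the sum over $W^c$, since $\bl_1$ vanishes on $W$. So $T(\pi_{W^c}\bl_1,\pi_{W^c}\bl_2,\pi_{W^c}\bs')=T(\bl_1,\bl_2,\bs')=0$.
\end{itemize}

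\textbf{Step 3: apply induction.} The induction hypothesis on length $n-w$ gives $\dim L'\le (n-w)/2$. Combining with Step 1,
\[
\dim L\le \frac{w}{2}+\frac{n-w}{2}=\frac{n}{2}.
\]
The base case is covered by the $W^c=\emptyset$ branch, for example when $n=1$.

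\textbf{Main obstacle.} The only delicate point is Step 1, namely the isotropic-subspace bound in characteristic $2$ with a non-identity diagonal form. I would address it directly through the rank–nullity argument above, rather than citing a Witt-index result, which in characteristic $2$ is usually stated for quadratic forms.

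The remaining bookkeeping is to make sure the reduced instance keeps the full-support hypothesis on $S$. This holds because we project $S$ itself onto $W^c$, rather than intersecting $S$ with $\F_q^{W^c}$.
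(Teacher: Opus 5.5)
Your proposal is correct and follows essentially the same route as the paper. Both argue by induction on $n$: pick a nonzero $\bs\in S$, bound the part of $L$ seen on $P=\supp(\bs)$ by $|P|/2$ using the invertible diagonal form, and apply the induction hypothesis to the vectors of $L$ vanishing on $P$ together with $S$ restricted to the complement. The paper phrases the bound as $ADA^\top=0$ with $D=\diag(s_i)$ invertible, which is exactly your isotropy and rank--nullity argument. Your split of $L$ into $\pi_W(L)$ and $\ker(\pi_W|_L)$ is a basis-free version of the paper's row-echelon split of $L$ into $L_1$ and $L_0$.
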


\begin{proof}
To facilitate understanding, we identify $L$ and $S$ with their generator matrices, that is, matrices whose rows form an $\mathbb{F}_4$-basis for their respective spaces (the choice of basis does not matter). We call $k=\rk(L)$.

We proceed via induction on $n$. For $n=1$, the only option for $S$ to be non-zero is $S=(\gamma)$ for $\gamma\in \F_4^\times$ and hence $L=(0)$, so $k=0$. For $n=2$, if $k=2$, then we can write $L=\begin{pmatrix}1&0\\0&1\end{pmatrix}$, and no nonzero solution of $S$ satisfies the overlap zero constraint.
Note that $k=1$ for $n=2$ is possible,  though: consider $L=S=(1\;1)$.

In the general case, assume the induction hypothesis holds for all $n'$ smaller than $n$, which is the number of columns of $L$ and $S$.
Take an arbitrary nonzero row $\bs$ from $S$, and let $P=\supp(\bs)$. Turn $L$ into a reduced row-echelon form on columns $P$ through row operations on $L$ (row operations on $L$ preserve $T(\bl_1,\bl_2,\bs)=0$). This partitions $L$ into row space $L_0$, which vanishes on $P$, and $L_1$, which contains the pivots.

We now consider the following lemma.
\begin{lemma}
Suppose we have $\bs\in\F_q^m$ and $s_i\neq 0,\;\forall i\in [m]$ and a matrix $A$ with entries in $\F_q$ having $m$ columns, such that for any two rows $\ba_i,\ba_j$ of $A$, we have $T(\ba_i,\ba_j,\bs)=\sum_{l=1}^m a_{il}a_{jl}s_{l}=0$. Then $\rk(A)\le \lfloor\frac{m}{2}\rfloor$.
\end{lemma}
\begin{proof}
The vanishing overlap condition can be written equivalently as $A D A^T=0$, where $D=\diag(s_1,\dots,s_m)$. Therefore, $A (AD)^T=0$, which tells us that $\text{im}((AD)^T) \subseteq \ker(A)$. This implies that $\rk(AD)=\rk((AD)^T)\le \dim(\ker(A))=m-\rk(A)$. Since $D$ is invertible, we have $\rk(AD)=\rk(A)$. It follows that $\rk(A)\le m-\rk(A)$.
\end{proof}
Let us apply this lemma with $m$ being $|P|$, $A$ being $L_1[P]$, and $\bs$ being $\bs[P]$, where appending $[P]$ refers to the restriction of a matrix/row-vector to the columns indexed by a set $P$. We get $\rk(L_1[P]) = \rk(L_1) \leq |P|/2$.

If $P=[n]$, then the result follows. Otherwise, put $\bar{P} = [n]\setminus P$, we would next like to apply the inductive hypothesis of Lemma~\ref{lem:dim_bound_triple_overlap} to $L_0[\bar{P}]$ and $S[\bar{P}]$, which we may do since they have $n-|P|<n$ columns. In order to do this, let us verify the assumptions of the inductive hypothesis. Since $S$ has no zero columns, so does $S[\bar{P}]$. We need to verify the assumptions before applying the induction hypothesis to $L_0[\bar{P}]$ and $S[\bar{P}]$ (taking the columns $\bar{P}=[n]\backslash P$ from the matrices $L_0,S$). They both have $n-|P|<n$ columns. Since $S$ contains no all-zero column, so does $S[\bar{P}]$, and also we have that $S[\bar{P}]$ is a non-zero space. Recall that $L_0$ vanishes on $P$, and thus $T(\bl_1[\bar{P}],\bl_2[\bar{P}],\bs'[\bar{P}])=T(\bl_1,\bl_2,\bs')=0$ for any $\bl_1,\bl_2\in L_0,\;\bs'\in S$. The induction hypothesis gives $\rk(L_0[\bar{P}]) = \rk(L_0)\le (n-|P|)/2$, where $\rk(L_0[\bar{P}]) = \rk(L_0)$ follows from the fact that $L_0[P]$ is the zero matrix.

Since $L_0,L_1$ are obtained via row operations from the full-rank matrix $L$, we have $\rk(L)=\rk(L_0)+\rk(L_1)\le n/2$.
\end{proof}

To apply Lemma~\ref{lem:dim_bound_triple_overlap}, we consider $L=L_X$ and $S=S_X$, the $\mathbb{F}_4$-linear spaces spanned by the rows of the logical part of the matrix (the upper $k$ rows), and the stabilizer part (the remaining rows), where all these rows together form a matrix for a code with $k = \dim L_X$ logical qudits.\footnote{It is interesting that Lemma~\ref{lem:dim_bound_triple_overlap} applies in a stronger setting, however, since it does \emph{not} assume that $L\cap S=\{0\}$, or impose any triorthogonal constraint on $S$ alone.} Note also that $L_X,S_X$ forming the $X$-logicals and $X$-stabilizers of a quantum code of distance $\ge 2$ implies that no column of $S_X$ is all-zero, that is, $S_X$ has full support.\footnote{Strictly speaking, distance $\geq 2$ is possible with $S_X$ having a zero column, as long as $L_X$ has the same column being zero. However, in this case, the all zero column may simply be removed from both spaces, which leaves all parameters unchanged, except the length of the code is lower, and we may apply our result to this new code, thus lower-bounding the length of the original code.} From here, we may conclude $n \geq 2k$.

In the present $\mathbb{F}_4$ case, we can further show that, if $\exists \,\bs\in S_X$ with full support, then $n\ge 2k+2$ (no induction is needed). Note first that we can rescale $\bs$ to all-one, because scaling a column by $\gamma\in\F_4^\times$ preserves the triorthogonality condition (since $\gamma^3=1$). Then, considering the matrix $A \coloneq \begin{pmatrix}L_X\\\mathbf{1}_n\end{pmatrix}$, the presence of the all-ones row gives us $AA^\top = 0 \Rightarrow \text{im}(A^T) \subseteq \ker(A) \Rightarrow \rk(A^T) = \rk(A) \leq n-\rk(A) \Rightarrow 2\rk(A) \leq n$. The conclusion then follows from the fact that $\rk(A) = k+1$, where $k = \rk(L_X)$.

\section{Algebraic curves: definitions and applications}\label{sec:algebraic_curves}
\subsection{Advanced preliminaries / Formal Algebraic Geometry Code definitions}
\label{sec:curves_formal_definitions}

We now make some of the notions in Sec.~\ref{sec:AG_codes} more precise and introduce basic notions such as function fields, extension and ramification, differentials and residue theorem, so that we can understand the dual codes better.
It is recommended to read Sec.~\ref{sec:intro_to_curves} and~\ref{sec:intro_to_AG_codes} before reading this section.

A particularly special type of ideal is called a maximal ideal. An ideal $I$ of $R$ is called a \emph{maximal ideal} if $I\neq R$ and the only ideals
containing $I$ are $I$ and $R$. It is possible to create a field by quotienting any ring by a maximal ideal; in fact, $R/I$ is a field if and only if $I$ is a maximal ideal of $R$. Another important type of ideal is a prime ideal. An ideal $I$ of $R$ is called a prime ideal if $I \neq R$ and whenever $ab \in I$, we have $a \in I$ or $b \in I$. A maximal ideal is a prime ideal.

Let us now explain the notion of \emph{localization} of a ring. Let $R$ be a ring. Consider a multiplicatively closed set $S \subset R$, that is, a subset $S$ for which $1 \in S$ and $a,b \in S \Rightarrow ab \in S$. One defines an equivalence relation on $R \times S$:
\begin{equation}
(a,s)\sim(a',s')\Leftrightarrow \exists u\in S \text{ s.t. }u(as'-a's)=0\;.
\end{equation}
We denote the equivalence class of a pair $(a,s)\in R\times S$ by $\frac{a}{s}$.
The set of all equivalence classes
\begin{equation}
S^{-1}R=\left\{\frac{a}{s}\;|\;a\in R,\;s\in S\right\}
\end{equation}
is called the \emph{localization} of $R$ \emph{at the multiplicatively closed set} $S$. It is a ring together with the addition and multiplication 
\begin{equation}
\frac{a}{s}+\frac{a'}{s'}:=\frac{as'+a's}{ss'}\;,\quad \frac{a}{s}\cdot \frac{a'}{s'}:=\frac{aa'}{ss'}\;.
\end{equation}
A particularly important example is localization of $R$ at a prime\footnote{Since a maximal ideal is prime, one can define localization at a maximal ideal. This plays an important role in the function fields, because a place (introduced later) is a maximal ideal of certain valuation ring. See Eq.~\eqref{eq:localize_A_at_m_P} for an example of localization at a maximal ideal.} ideal $P$ of $R$.
Let $S = R\setminus P$, one can verify that $S$ is multiplicatively closed, since $a\notin P$ and $b\notin P$ implies $ab\notin P$. One usually denotes the localization $S^{-1}R$ by $R_P$. This localization $R_P$ has exactly one maximal ideal $\{\frac{a}{s}\;|\;a\in P,\;s\notin P\}$. A ring is called \emph{local} if it has exactly one maximal ideal. Therefore, the localization $R_P$ is a local ring.\\[3pt]

Every function field $F$ over $K$ can be regarded as a finite field extension of a rational functional field $K(x)$. Throughout, think of $K$ as a finite field $\F_q$ and is therefore perfect. 
Assume that $K$ is the full constant field of $F$. For a plane model $f(x,y)=0$, one has $F=K(x,y)$ subject to this relation.

\textbf{Local parameters and valuations. }
Let $P$ be a smooth point. A \emph{local parameter} at $P$ is a function $t$ that vanishes \emph{to first order} at $P$. Equivalently, $t$ generates the maximal ideal $\mathfrak{m}_P$ of the local ring $\cO_P$. In particular, every nonzero rational function $f\in K(\chi)^\times$ can be written locally as $f=t^m u$ where $m\in Z$ and $u$ is a unit in $\cO_P$, meaning that $u(P)\neq 0$. Define $v_P(f)=m$. Depending on $v_P(f)$ greater or smaller than $0$, we say $f$ has a zero or a pole at $P$, and $|v_P(f)|$ is the multiplicity. $v_P(f)=0$ means $f(P)\neq 0$ and is finite.

The valuation satisfies
\begin{equation}
\label{eq:valuation_rules}
v_P(fg)=v_P(f)+v_P(g),\quad v_P(f+g)\ge \min\{v_P(f),v_P(g)\}.
\end{equation}
A corollary is the strict triangle inequality~\cite[Lemma~1.1.11]{Stichtenoth}:
\begin{equation}
\label{eq:strict_triangle_ineq}
v_P(f+g)=\min\{v_P(f),\;v_P(g)\}\text{ if }v_P(f)\neq v_P(g).
\end{equation}

For a smooth affine or projective plane curve, one can write down the local parameters as follows.

Given a smooth affine plane curve $f(x,y)=0$ and a point $P=(a,b)$. One of $x-a$ or $y-b$ is usually a local parameter, but not always both. Since $f_xdx+f_ydy=0$, we have $f_x(a,b)(x-a)+f_y(a,b)(y-b)+\text{higher terms}=0$.
If $f_y(a,b)\neq 0$, then locally $y$ is a function of $x$, and $t=x-a$ is a local parameter.
Similarly, if $f_x(a,b)\neq 0$ then $t=y-b$ is a local parameter. $P$ being smooth implies that at least one of $f_x(a,b), f_y(a,b)$ is nonzero.

As an example, consider the point $P=(0,0)$ on $y^2+y=x^3$. Here $f_x(0,0)=0$ and $f_y(0,0)=1$. So $x$ is a local parameter, not $y$. Indeed, near $P$, $y=x^3+$higher-order terms. 

For this curve, let us also show how to apply the valuation rules to re-derive $v_{Q_\infty}(x)=-2$ and $v_{Q_{\infty}}(y)=-3$ where $Q_\infty=(0:1:0)$ (cf. Sec~\ref{sec:intro_to_AG_codes}). Since the $y$ coordinate of $Q_\infty$ is nonzero, we work in the $Y=1$ affine chart. Define $u=X/Y$ and $v=Z/Y$, then the curve equation becomes $v+v^2=u^3$. Since $\partial (v+v^2+u^3)/\partial v=1$, $u$ is a local parameter at $Q_\infty$ and $v_{Q_\infty}(u)=1$. From $v(1+v)=u^3$, since $1+v$ is a unit, i.e., $v_{Q_\infty}(1+v)=0$, we get $v_{Q_\infty}(v)=3$. Therefore, $x=X/Z=u/v$ and $y=Y/Z=1/v$ has valuation $v_{Q_\infty}(x)=1-3=-2$ and $v_{Q_\infty}(y)=-3$, respectively.

For a smooth projective curve, suppose $P=(a:b:c)\subset \bbP^2$. One of $a,b,c$ is nonzero, suppose $c\neq 0$ and we work in the $Z=1$ affine chart: $x=X/Z$, $y=Y/Z$, $P=(a/c, b/c)$. 
Then the candidate affine local parameters are: $x-\frac{a}{c}=\frac{cX-aZ}{cZ}$ and $y-\frac{b}{c}=\frac{cY-bZ}{cZ}$. Which one to use again depends on the partial derivative.

\textbf{Places and degree. }
Let $F/K$ be the function field of a curve over the constant field $K$.

In the function-field literature, a \emph{place} of $F/K$ is often defined valuation-theoretically. 
Namely, a \emph{valuation ring} of $F/K$ is a proper subring $\cO\subset F$ such that $K\subseteq \cO$ and, for every $0\neq z\in F$, either $z\in\cO$ or $z^{-1}\in\cO$. 

For a nonsingular curve, the valuation rings that occur are discrete valuation rings. Such a ring has a unique maximal ideal and is a local ring. One often calls this maximal ideal the \emph{place}.

Equivalently, a place $P$ gives a discrete valuation $v_P:F^\times \to \Z$ such that $$\cO_P=\{f\in F\;|\; v_P(f)\ge 0\}$$
is the corresponding valuation ring, and $$P:=\mathfrak{m}_P=\{f\in F\;|\; v_P(f)>0\}$$ is its maximal ideal. 
The elements of $\cO_P$ are the functions that are regular, i.e., have no pole, at $P$, while the elements of $\mathfrak{m}_P$ are the functions vanishing at $P$.

The \emph{residue field} of $P$ is $F_P:=\cO_P/\mathfrak{m}_P$ and the \emph{degree} of the place is $\deg(P)=[F_P:K]$.

The $K$-rational points of a nonsingular curve are exactly the places of degree one. The above valuation-theoretic definition agrees with this more geometric picture as follows.

Suppose the affine curve $\chi$ is defined by $f(x,y)=0$, and let $A=K[x,y]/(f)$ be its affine coordinate ring.
If the point $(\alpha,\beta) \in \mathbb{A}^2$ is on the curve, i.e., $f(\alpha,\beta)=0, \;\alpha,\beta\in K$, then its corresponding maximal ideal of $A$ is $\mathfrak{m}_P=(x-\alpha,y-\beta)$, the ideal generated by the polynomials \(x-\alpha\) and \(y-\beta\). Indeed, $\mathfrak{m}_P$ is maximal in $A$ because $A/\mathfrak{m}_P\cong K$ is a field.
The corresponding valuation ring is the local ring (localization of $A$ at the maximal ideal $\mathfrak{m}_P$)
\begin{equation}
\label{eq:localize_A_at_m_P}
\cO_P=A_{\mathfrak{m}_P}\cong K[x,y]_{(x-\alpha,y-\beta)}/(f)=\left\{\frac{g(x,y)}{h(x,y)}\;|\;g,h\in K[x,y]/(f),\;h(\alpha,\beta)\neq0\right\}.
\end{equation}
Its maximal ideal is $\mathfrak{m}_P\cO_P=(x-\alpha,y-\beta)A_{\mathfrak{m}_P}$.
The residue map is evaluation at $P$: $\frac{g(x,y)}{h(x,y)}\mapsto \frac{g(\alpha,\beta)}{h(\alpha,\beta)}$.

For the places at infinity. Take $Q_\infty=(0:1:0)$ on $y^2+y=x^5$ as an example. To write its local ring in ideal form, use the affine chart $Y\neq 0$. Put $u=\frac{X}{Y}$, $v=\frac{Z}{Y}$. In terms of the original affine functions , $x=\frac{X}{Z}=\frac{u}{v}$, $y=\frac{Y}{Z}=\frac{1}{v}$. The projective equation becomes $v^3+v^4=u^5$.
The point $Q_\infty$ corresponds to $u=0,\;v=0$, but the curve equation is still singular at $(u,v)=(0,0)$.
For the first blow-up, use $v=uz$, thus $u^5+u^3z^3+u^4z^4=0\Rightarrow u^2+z^3+uz^4=0$. $Q_\infty$ corresponds to $(u,z)=(0,0)$, and the curve equation is still singular here.
For the second blow-up, use $u=zw$, and $z^2w^2+z^3+z^5w=0\Rightarrow w^2+z+z^3w=0$. $Q_\infty$ corresponds to $(z,w)=(0,0)$, and since $\frac{\partial (w^2+z+z^3w)}{\partial z}=1$, the curve equation is smooth here and $w$ is a local parameter. In particular, $v_{Q_\infty}(w)=1$.
We have $z=\frac{w^2}{1+wz^2}$. Since $1+wz^2$ is a unit, we have $v_{Q_\infty}(z)=2$.
Tracing back the previous changes of variables, we have $z=\frac{v}{u}=\frac{1/y}{x/y}=\frac{1}{x}$ and $w=\frac{u}{z}=\frac{x/y}{1/x}=\frac{x^2}{y}$.
The valuation is $\cO_{Q_\infty}=K[z,w]_{(z,w)}/(w^2+z+wz^3)$, and the maximal ideal is $(z,w)\cO_{Q_\infty}=(\frac{1}{x},\frac{x^2}{y})\cO_{Q_\infty}$.

Here we also give an example of places of higher degree, since they are useful in multiplication in the finite field, which is the magic we are distilling. Again, we use the Hermitian curve example, $y^2+y=x^3$ over $\F_4$. This curve has no degree-two places: all of its $\F_{4^2}$-rational points are already defined over $\F_4$. We therefore give an example of a degree-three place. Such a place can be written as an ideal $Q=(p(x),y+q(x))$, where $p(x)\in \F_4[x]$ is irreducible of degree $3$, and $q(x)\in\F_4[x]$ satisfies $q(x)^2+q(x)\equiv x^3 \mod p(x)$ since $y=q(x)$ in this ideal. Then $A/Q\cong \F_4[x]/(p(x))$ and thus $\deg(Q)=3$. An explicit example is $Q=(x^3+x^2+1,y+x^2+x)$. Geometrically, if $\alpha$ is a root of $x^3+x^2+1$ in $\F_{4^3}$, then this degree-three place represents the Frobenius orbit (the Frobenius map here is $z\mapsto z^4$) $$(\alpha,\alpha^2+\alpha),\;(\alpha^4,\alpha^8+\alpha^4),\;(\alpha^{16}, \alpha^{32}+\alpha^{16}).$$

Readers interested only in the reduction of large field multiplication into smaller ones may proceed directly to App.~\ref{sec:binary_field_multiplication}. The remainder of this subsection is needed for the metric $\bGamma$ calculation (cf. Lemma~\ref{cor:punctured_code_parameters}) in App.~\ref{sec:calculate_metric}.

\textbf{Function field extension and ramification~\cite[Sec.~3]{Stichtenoth}. } 
Every function field over $K$ (a perfect field, e.g., a finite field) can be regarded as a finite field extension $F$ of a rational function field $K(x)$.
Over the field extension $K(x,y)/K(x)$, think of the defining equation of the curve as a polynomial of $y$ with coefficients in $K(x)$. For example, for the elliptic curve $y^2+xy=x^3+x^2+x$, rewrite the equation as $f(y)=y^2+xy-(x^3+x^2+x)$. Informally, a point above (\emph{lies over}) a given $x=\alpha$ is ramified when $f(y)$ has a double root in $y$. In our example, the point $P=(0,0)$ above $x=0$ is ramified, with a \emph{ramification index} $e(P\;|\;x=0)=2$, because $f(y)=y^2=0$ has a double root. 
More generally, let $P$ be a place of $F$ above a place $Q$ of $K(x)$. The ramification index $e(P|Q)$ is defined by $v_P(h)=e(P|Q)\cdot v_Q(h)$, $h\in K(x)^\times$; $Q$ is unramified if $e(P|Q)=1$, and ramified if $e(P|Q)=1$.
In this example, the point at infinity $Q_\infty=(0:1:0)$ is also ramified and lies over $P_\infty$ (the pole of $x$ in $K(x)$). One has $v_{P_\infty}(x)=-1$, since $x$ has a simple pole at infinity. Using intersection theory, we can obtain $v_{Q_\infty}(x)=-2$, which is a pole of order $2$. Therefore, the ramification index is $e(Q_\infty\;|\;P_\infty)=2$.

\textbf{Differentials and residues. }
We only state the definitions and important theorems here. Examples of these notions appear in Sec.~\ref{sec:calculate_metric}, where we calculate the metric for elliptic curves and Klein quartic.

Here we restrict ourselves to a degree-one place $P$; this is enough for our metric calculation later. A \emph{Weil differential} is an object that can locally be written in the form 
\begin{equation}
\omega=\left(\sum_{i\ge m} a_i t^i\right)dt,
\end{equation}
where $t$ is a local parameter at $P$.

Define $v_P(\omega)=m$, and the \emph{residue} of $\omega$ at $P$ is the coefficient of $t^{-1}dt$, i.e., $\res_P(\omega)=a_{-1}$. This definition is independent of the choice of local parameter.

The \emph{residue theorem} says that for every Weil differential $\omega$, $\res_P(\omega)=0$ for almost all places $P$, and
\begin{equation}
\sum_{P}\res_P(\omega)=0.
\end{equation}

The divisor of a nonzero differential is $(\omega)=\sum_P v_P(\omega)P$.
For a divisor $A$, define
\begin{equation}
\Omega(A)=\{\omega\;|\;(\omega)\ge A\}\cup\{0\}.
\end{equation}

\textbf{Differential AG codes and duality}~\cite[Thm.~2.2.8-Cor.~2.2.11]{Stichtenoth}.
Let $D=P_1+\dots+P_n$ be a sum of distinct $K$-rational places, and let $G$ be a divisor whose support is disjoint from $\supp(D)$. The evaluation AG code is
\begin{equation}
C_{\cL}(D,G)=\{(f(P_1),\dots,f(P_n))\;|\;f\in\cL(G)\}.
\end{equation}
The differential AG code is
\begin{equation}
C_\Omega(D,G)=\{(\res_{P_1}(\omega),\dots,\res_{P_n}(\omega))\;|\;\omega\in\Omega(G-D)\}.
\end{equation}
The residue theorem implies $C_\Omega(D,G)\subset C_\cL(D,G)^\perp$ as follows. If $f\in\cL(G)$ and $\omega\in \Omega(G-D)$, recall
\begin{equation}
\cL(G)=\{f\in K(\chi)^\times\;|\;(f)+G\ge 0\}\cup\{0\},\quad \Omega(G-D)=\{\omega\;|\;(\omega)\ge G-D\}\cup\{0\},
\end{equation}
then $(f\omega)\ge -D$, i.e., $f\omega$ has possible simple poles only at the points $P_i$. 

Since $\res_{P_i}(f\omega)=f(P_i)\res_{P_i}(\omega)$, we have
\begin{equation}
\sum_{i=1}^n f(P_i)\res_{P_i}(\omega)=\sum_P \res_P(f\omega)=0.
\end{equation}
In fact, dimension counting using the Riemann-Roch theorem gives the equality
\begin{equation}
\label{eq:dual_code}
C_\Omega(D,G)=C_\cL(D,G)^\perp.
\end{equation}
Now suppose there exists a differential $\eta$ such that
\begin{equation}
v_{P_i}(\eta)=-1\quad\text{and}\quad\Gamma_i:=\res_{P_i}(\eta)\neq0,\;\forall i.
\end{equation}
Then every differential in $\Omega(G-D)$ can be written as $h\eta$ for some $h\in\cL(D-G+(\eta))$. Therefore,
\begin{equation}
C_\cL(D,G)^\perp=C_\Omega(D,G)=\bGamma\cdot C_\cL(D,D-G+(\eta)),
\end{equation}
where $\bGamma\cdot(a_1,\cdots,a_n)=(\Gamma_1 a_1,\dots,\Gamma_n a_n)$.

\input{F64_maximal_curves}

\subsection{Calculation of metric}
\label{sec:calculate_metric}

\subsubsection{Klein quartic codes}
Recall that the Klein quartic cruve $x^3y+y^3+x=0$ over $\F_8$ has two points at infinity: $Q_\infty=(0:1:0)$ and $P=(1:0:0)$. Call $R=(0:0:1)$, and recall from the calculation below Thm.~\ref{thm:stichtenoth_artin_schreier} that 
\begin{equation}
\label{eq:Klein_quartic_x_y}
(x)=3R-P-2Q_\infty\;,\quad (y)=R+2P-3Q_\infty\;.
\end{equation}

We construct the one-point codes by taking $G=rQ_\infty$ in $C_\cL(D,G)$. Let $D_0$ be the sum of the $21$ affine rational places excluding $P_{0,0}=R=(0:0:1)$. Denote these $21$ points by $P_{\alpha,\beta}$ for $\alpha\in\F_8^\times,\beta\in \F_8$ that satisfy $\alpha^3\beta+\beta^3+\alpha=0$. We explore the two possibilities that $D=D_0+R$ and $D=D_0+R+P$. 

Random puncturing (a hundred trials) gives the following distillation protocol parameters
\begin{align}
D=D_0 + R\; :&\quad 20\to 1\;(d=4),\quad 18\to 3\;(d=3),\quad 16\to 5\;(d=2)\;;\\
D=D_0 + R + P \;:&\quad 21\to 2\;(d=4),\quad 19\to 4\;(d=3),\quad 17\to6\;(d=2)\;. 
\end{align}

\textbf{Metric for \boldmath $C_\cL(D_0+R,rQ_\infty)$.}
The metric is the single codeword (up to multiplication by a scalar) of $C_\cL(D,23Q_\infty)^\perp=C_\Omega(D,23Q_\infty-D)=(\res_{P_{0,0}}(\omega),\dots)$, where $\omega$ satisfies $$(\omega)\ge 23Q_\infty-D=23Q_\infty-D_0-R.$$
Let $h(x)=\prod_{\alpha\in A}(x-\alpha)$ where $A$ is the set of $x$-coordinates that actually occur among the affine rational points. One can verify that $A=\F_8$, i.e., $\forall \alpha\in \F_8$, $\exists\beta\in\F_8$ such that $\alpha^3\beta+\beta^3+\alpha=0$ by taking $\beta=\alpha^5\eta$ for $\eta$ that satisfies $\eta^3+\eta+1=0$. Therefore, $$h(x)=x^8+x.$$ 
We seek $\omega$ in the form $\frac{x^i y^j\; dx}{h(x)}$ for integer $i,j$. Eq.~\eqref{eq:Klein_quartic_x_y} implies that 
\begin{equation}
(x^iy^j)=-(2i+3j)Q_\infty + (2j-i)P+(3i+j)R,
\end{equation}
and we still need to calculate $(h(x))$ and $(dx)$.

At each affine point $P_{\alpha,\beta}$ with $\alpha\neq 0$, the function $x-\alpha$ is a local parameter because $F_y=y^2+x^3\neq 0$ (assume $F_y(P_{\alpha,\beta})=0$, then the curve equation gives $\alpha=0$, contradiction). Writing $t=x-\alpha$, we have $$h(x)=x^8+x=(\alpha+t)^8+\alpha+t=t^8+t=t(1+t^7),$$
so $h(x)$ has a simple zero at each point of $D_0$.

At $R$, we have $v_R(x)=3$ from Eq.~\eqref{eq:Klein_quartic_x_y}, so $h(x)=x^8+x=x(1+x^7)$ and $1+x^7$ is a unit at $R$. Hence $v_R(h(x))=v_R(x)=3$.

At the two points at infinity, $x$ has a pole. Recall from Eq.~\eqref{eq:valuation_rules} that $$v(f+g)=\min\{v(f),\;v(g)\}\text{ whenever }v(f)\neq v(g).$$
Since $v_P(x)=-1$, we have $v_P(x^8+x)=v_P(x^8)=8v_P(x)=-8$. Simiarly, since $v_{Q_\infty}(x)=-2$, $v_{Q_\infty}(x^8+x)=-16$. Thus 
\begin{equation}
(h(x))=D_0+3R-8P-16Q_\infty\;.
\end{equation}

Now we compute $(dx)$. Differentiating the curve equation $F(x,y)=x^3y+y^3+x=0$ gives
\begin{equation}
\label{eq:dx_dy}
F_x\;dx+F_y\;dy=0\;\Rightarrow\; F_x\;dx=F_y\;dy,\quad F_x=x^2y+1,\quad F_y=y^2+x^3.
\end{equation}

At every affine point other than $R$, one have $F_y\neq 0$, so $x-\alpha$ is a local parameter, since $dx=d(x-\alpha)$, $dx$ has valuation $0$.

At $P=(1:0:0)$, since $v_P(x)=-1$, $\frac{1}{x}$ is a local parameter. We have $du=\frac{dx}{x^2}$ and hence $dx=x^2\;du$. Therefore $v_P(dx)=v_P(x^2)=-2$.

At $R$, since $v_R(y)=1$, $y$ is a local parameter. $du=dy=\frac{F_x}{F_y}dx=\frac{x^2y+1}{y^2+x^3}dx$.
We have $v_R(y^2+x^3)=\min\{v_R(y^2),\;v_R(x^3)\}=2$, and $v_R(x^2y+1)=\min\{2v_R(x)+v_R(y),\;0\}=0$. 
Thus $v_R(dx)=v_R(\frac{y^2+x^3}{x^2y+1})=v_R(y^2+x^3)-v_R(x^2y+1)=2$. In fact, one can also use $u=\frac{x}{y^2}$ as a local parameter at $R$, then $du=\frac{y^2\;dx-2xy\;dy}{y^4}=\frac{dx}{y^2}$. Hence $dx=y^2\;du$ and $v_R(dx)=v_R(y^2)=2$.

The subtle point is $Q_\infty=(0:1:0)$. Here $v_{Q_\infty}(x)=-2$ and $v_{Q_\infty}(y)=-3$ and so $u=\frac{x}{y}$ is a local parameter. $du=\frac{x\;dy-y\;dx}{y^2}=\frac{x^3}{y(y^2+x^3)}dx$. 
$v_{Q_\infty}(y^2+x^3)$ requires care to compute -- since $v_{Q_\infty}(y^2)=v_{Q_\infty}(x^3)=-6$, cancellation may occur. 
Dividing the affine equation $x^3y+y^3+x=0$ by $y$ yields $x^3+y^2+\frac{x}{y}=0$.
Therefore, $v_{Q_\infty}(y^2+x^3)=v_{Q_\infty}(x/y)=1$.
Consequently, $v_{Q_\infty}(dx)=v_{Q_\infty}(y)+v_{Q_\infty}(y^2+x^3)-3v_{Q_\infty}(x)=4$. Thus
\begin{equation}
\label{eq:dx}
(dx)=2R-2P+4Q_\infty\;.
\end{equation}

Hence 
\begin{equation}
\label{eq:omega_divisor}
(\omega)=\left(\frac{x^i y^j\;dx}{h(x)}\right)=(3i+j-1)R+(2j-i+6)P+(20-2i-3j)Q_\infty-D_0.
\end{equation}
Requiring $(\omega)\ge 23Q_\infty-D_0-R,$ yields a unique solution $i=1, j=-2$.

Let us now calculate the residue of $\omega=\frac{xdx}{y^2 (x^8+x)}=\frac{dx}{y^2 (x^7+1)}$. 

At $P_{\alpha,\beta}$ where $\alpha\neq 0$, use the local parameter $t=x-\alpha$, then $\omega=\frac{1}{t}\cdot \frac{x}{y^2(t^7+1)} \cdot dt$, so
\begin{equation}
\res_{P_{\alpha,\beta}}=\frac{\alpha}{\beta^2},\;\alpha\neq 0.
\end{equation}

At $R=(0:0:1)$, recall that $u=\frac{x}{y^2}$ is a local parameter and $du=\frac{dx}{y^2}$. Therefore
$$\omega=\frac{xdx}{y^2 (x^8+x)}=\frac{du}{x^7+1}.$$
Since $v_R(x)=3>0$, $x^7+1$ is a unit.
The Laurent expansion of $\omega$ in the local parameter $u$ has no $u^{-1}du$ term. Therefore
\begin{equation}
\res_R(\omega)=0.
\end{equation}

\textbf{Metric for \boldmath $C_\cL(D_0+R+P,rQ_\infty)$.}
Let us first comment on the evaluation of the basis functions
\begin{equation}
\cL(rQ_\infty)=\{x^iy^j\;|\;i,j\in\Z_{\ge 0},\;2i+3j\le r,\;i\le 2j\}
\end{equation}
at the infinity $P$. The condition $i\le 2j$ forces that $v_P(x^iy^j)=2j-i\ge 0$. For those $i,j$ such that $v_P(x^iy^j)>0$, the evaluation is zero since the order of vanishing is positive. One needs some extra computation for $i=2j$ though. For example, to evaluate $x^2y$ at $P$: notice that dividing the affine equation $x^3y+y^3+x=0$ by $x$ gives $x^2y+\frac{y^3}{x}+1=0$. $v_{P}(y^3/x)=7>0$ and thus $\ev_P(y^3/x)=0$. Therefore, 
\begin{equation}
\label{eq:eval_x^2y_at_P}
\ev_P(x^2y)=\ev_P(y^3/x+1)=\ev_P(1)=1.
\end{equation}

The metric is the single codeword of $C_\cL(D_0+R+P,25Q_\infty)^\perp$. We still seek $\omega$ in the form of $\frac{x^iy^j\;dx}{h(x)}$, and that Eq.~\eqref{eq:omega_divisor} still holds.
Imposing $(\omega)\ge 25Q_\infty-D_0-R-P$
yields a unique solution $i=1,\;j=-3$. The residue of $\omega=\frac{x\;dx}{y^3 (x^8+x)}$ at each place is
\begin{equation}
\label{eq:metric_Klein_quartic_23}
\res_{P_{\alpha,\beta}}(\omega)=\frac{\alpha}{\beta^3} \text{ for } \alpha\neq 0,\quad \res_{P}(\omega)=1,\quad \res_R(\omega)=1.
\end{equation}
We only show $\res_P(\omega)=1$ and leave the rest to the readers. We calculate in Eq.~\eqref{eq:eval_x^2y_at_P} that $\ev_P(x^2y)=1$, therefore, $\ev_P(x^6y^3)=1$. At $P$, $u=\frac{1}{x}$ is a local parameter.
$$\omega=\frac{x\;dx}{y^3(x^8+x)}=\frac{u^6}{y^3(1+u^7)}\cdot\frac{du}{u}\;,$$
and $\ev_P(1+u^7)=1+\ev_P(u^7)=1$ since $v_P(u^7)=7>0$. Also, $\frac{u^6}{y^3}=\frac{1}{x^6y^3}=1$. Hence $\res_P(\omega)=\ev_P\left(\frac{u^6}{y^3(1+u^7)}\right)=1$.

\subsubsection{Elliptic curve codes}
\label{sec:elliptic_curve}
Elliptic curves have genus $g=1$, by Riemann-Roch, the dimension of $\cL(r\Q_\infty)$ is $\ell(rQ_\infty)=r$ for $r>0$. Take $r=n-1$, then the metric is the single codeword in $C_\cL(D,(n-1)Q_\infty)^\perp=C_\Omega(D,(n-1)Q_\infty)=\{(\res_{P_1}\omega,\dots,\res_{P_n}\omega)\}$ for a differential $\omega$ whose divisor satisfies $(\omega)\ge (n-1)Q_\infty-D$. An explicit choice is $\omega=\frac{dx}{h(x)}$, where $h(x)=\prod\limits_{\alpha\in A}(x-\alpha)$, and $A$ is the set of $x$-coordinates that occur among the $K$-rational affine points of the curve.
We distinguish between two cases from Table.~\ref{Table:elliptic_curves_char2}:

\textbf{Type I, III, V.} The curve has no cross term $xy$ and is of the form $y^2+y=x^3+a_2 x^2+a_6$.
Here $(dx)=0$ and each $\alpha\in A$ gives two affine points of the curve (if $y=\beta$ is a solution, then $y=\beta+1$ is another), so $(h(x))=D-nQ_\infty$.
thus $(\omega)\ge (n-1)Q_\infty-D$.
At each place $P_i=(\alpha,\beta)\in\bbP_F$ where $\alpha\in A$, $(x-\alpha)$ is a local paramter and $\res_{P_i}\omega=1/h'(\alpha)=\prod\limits_{\alpha'\in A,\; \alpha'\neq\alpha}(\alpha-\alpha')^{-1}$.

\textbf{Type II, IV}. The curve has a cross term and is of the form $y^2+xy=x^3+a_2x^2+x$. Let $H(x)=\prod_{\alpha\in A\backslash \{0\}}(x-\alpha)$, then $h(x)=xH(x)$. One calculates that $(\frac{dx}{x})=0$ (in fact, $(dx)=(x)=2P-2Q$ where $Q=(0:1:0)$ and $P=(0:0:1)$ is the unique affine point above $x=0$. So $(\omega)=(n-1)Q_\infty - D + P\ge (n-1)Q_\infty-D$.
At each place $P_i=(\alpha,\beta)\in\bbP_F$ where $\alpha\in A\backslash\{0\}$, $\res_{P_i}\omega=1/h'(\alpha)=\prod\limits_{\alpha'\in A,\; \alpha'\neq\alpha}(\alpha-\alpha')^{-1}$.
At $x=0$ where $F_y=0$ but $F_x=1$, $y$ is a local parameter instead of $x$.
Differentiating the curve equation gives $x\;dy=(x^2+y+1)\;dx$, so $\frac{dx}{h(x)}=\frac{dx}{x}\cdot\frac{1}{H(x)}=\frac{dy}{y+1}\cdot \frac{1}{H(x)}$.
Since $H(0)=\prod_{\alpha\in A\backslash\{0\}}(-\alpha)\neq 0$ and $y+1=1$ at $P$, $\omega$ is regular at $P$ and hence $\res_P(\omega)=0$.
For the three-orthogonality purpose, we can remove $P$ from $D$ since the metric here is $0$; this reduces the code length by one.

\subsubsection[One-point codes]{One-point codes from \cite[Prop.~6.4.1]{Stichtenoth}}
\label{sec:metric_artin_schreier}

Consider a function field $F=K(x,y)$ with $y^h+\mu y=f(x)\in K[x]$ over $K=\F_{2^s}$ for $\mu\in K^\times$, where $h=p^e>1$ and $\deg f=:0>0$ is coprime to $p$, and that all roots of $y^h+\mu y=0$ are in $K$. 

\cite[Prop.~6.4.1]{Stichtenoth} shows that $P_\infty=(x)$ is the only place of $K(x)$ that ramifies in $F/K(x)$, and $Q_\infty$ is unique place of $F/K$ above $P_\infty$. Moreover, $Q_\infty|P_\infty$ is totally ramified with $e(Q_\infty|P_\infty)=h$. Since $x$ has a simple pole at $P_\infty$, we thus have $v_{Q_\infty}(x)=-h$.
Since $v_{Q_\infty}(y^h+\mu y)=v_{Q_\infty}(y^h)=h,\; v_{Q_\infty}(y)=v_{Q_\infty}(f(x))=m\cdot v_{Q_\infty}(x)$, we have $v_{Q_\infty}(y)=-m$. Finally, $(dx)=(2g-2)Q_\infty$.

Let $A\subseteq K$ be the set of $x$-coordinates that occur among the $K$-rational affine points of the curve, and $h(x)=\prod_{\alpha\in A}(x-\alpha)$. Since all roots of $y^h+\mu y=0$ lie in $K$, if there is one solution for $y$, then there are exactly $h$ solutions in $K$, and $(h(x))=D-nQ_\infty$, where $D=\sum_{\alpha}\sum_{\substack{\beta\in K\\ \beta^q+\mu\beta=f(\alpha)}}P_{\alpha,\beta}$ is the sum of all affine $K$-rational places and there are $n=h|A|$ many.
Put $\omega=\frac{dx}{h(x)}$, then its divisor $$(\omega)=(2g-2)Q_\infty - (D-nQ_\infty)=(n+2g-2)Q_\infty - D.$$
Since $F_y=\mu\neq 0$, $x-\alpha$ is a local parameter at $P_{\alpha,\beta}$, and $\res_{P_{\alpha,\beta}}(\omega)=\frac{1}{h'(\alpha)}$.

In the special case that $A=K$, one can show that the metric is all-one because $h(x)=x^{|K|}-x$ and $h'(x)=|K|x^{|K|-1}-1=1$.
This is the case for, e.g., the Hermitian curve $y^q+y=x^{q+1}$ over $K=\F_{q^2}$, since the defining equation is effectively $\tr_{\F_{q^2}/\F_q}(y)=\text{Nm}_{\F_{q^2}/\F_q}(x)$ and both sides can take all values in $\F_q$.

\subsection{Multiplication in binary extension fields}
\label{sec:binary_field_multiplication}

The general recipe to do multiplication in $\F_{2^n}$ using a curve of genus $g$ is to choose $Q$ to be a degree-$n$ place on the curve and $D$ to be a divisor of degree $n+g-1$ that has non-overlapping support with $Q$.
Here, we restrict ourselves to $D$ being a single place of degree $n+g-1$.
The place $Q$ is chosen so that the evaluation at $Q$, $\ev_Q:\cL(D)\to \cO_Q/Q\cong\F_{2^n}$ is bijective. 
Let $\{h_1,\dots,h_n\}$ be a basis of $\cL(D)$.
One can then use $\xi_i=\ev_Q(h_i)\in\F_{2^n},\; i=1,\dots,n$ as a basis of the field $\F_{2^n}$. 
Suppose we want to multiply $a,b\in \F_{2^n}$ with the basis expansion
$a=a_1\xi_1+\cdots+a_n\xi_n$ and $b=b_1\xi_1+\cdots+b_n\xi_n$.
We identify them with the functions $$f_a=a_1h_1+\cdots+a_n h_n\in\cL(D),\quad f_b=b_1h_1+\cdots+b_nh_n\in\cL(D).$$ 
The product $f_af_b\in\cL(2D)$ can be expanded as \footnote{Had we picked arbitrary functions $h_1,\dots,h_n$, their pairwise products $h_i h_j$ may span a space of dimension as large as roughly $n(n+1)/2$. Here, by choosing $h_1,\dots,h_n$ as a basis of the Riemann-Roch space $\cL(D)$, the pairwise product space is much smaller because $\cL(D)\cdot \cL(D)\subseteq \cL(2D)$.}
\begin{equation}
\label{eq:f_af_b_expansion}
f_af_b=\sum_{j=1}^{2n+g-1}c_j h_j.
\end{equation}
One can recover the product $ab\in \F_{2^n}$ by evaluating at $Q$, i.e., $ab=\ev_Q(f_a f_b)=\sum_{j=1}^{2n+g-1} c_j \ev_Q(h_j)$.
The only CCZ operations lie in determining $c_j\in\F_2$ from the unknowns $a_i,b_i\in \F_2$; all other operations, such as multiplying by constants $\ev_Q(h_j)$ and adding things up, interpolation, are Clifford only.

To determine $c_j$, we first evaluate on $N$ auxiliary places $P_1,\dots,P_N$ on the curve and then interpolate back. These places are picked so that $\deg(P_1)+\dots+\deg(P_N)=2n+g-1$, and the following map $\varphi$ is injective.
\begin{align}
\varphi:\cL(2D)&\to \prod_{i=1}^N{\F_{2}^{\deg(P_i)}}\cong \F_2^{2n+g-1},\\
f&\mapsto (\ev_{P_1}(f),\dots,\ev_{P_N}(f))^\top.
\end{align}
Then clearly 
\begin{equation}
\varphi(f_af_b)=(\ev_{P_1}(f_a)\cdot\ev_{P_1}(f_b),\dots,\ev_{P_N}(f_a)\cdot\ev_{P_N}(f_b))^\top.
\end{equation}
In other words, at individual places, say $P_i$, we are doing multiplication in $\F_{2^{\deg(P_i)}}$. 
Therefore, if $\mu_2(n)$ represents the number of CCZ gates required for multiplication in $\F_{2^n}$, then this evaluate-then-interpolate procedure requires $\mu_2(\deg(P_1))+\dots+\mu_2(\deg(P_N))$ CCZs.

Note that $\ev_{P_i}(f)\in \F_{2^{\deg(P_i)}}$, but we can binarize this into $\deg(P_i)$ bits.
With this binarization applied to every place, we can write $\varphi(f_af_b):=(m_1,\dots,m_{2n+g-1})^\top$ for $m_i\in\F_2$.

$\varphi$ is an $\F_2$-linear map because evaluating the sum of two functions at a place is equivalent to evaluating each and then adding up the results.
Therefore, given $f_af_b=\sum_{j=1}^{2n+g-1}c_j h_j$, we have
\begin{equation}
\label{eq:phi_f_af_b}
\varphi(f_af_b)=\sum_{j=1}^{2n+g-1} c_j\varphi(h_j)=G (c_1,\dots,c_{2n+g-1})^\top,
\text{ where }G:=\begin{pmatrix}\vert & \quad &\vert\\
\varphi(h_1) & \cdots & \varphi(h_{2n+g-1})\\
\vert & \quad & \vert
\end{pmatrix}.
\end{equation}
One can recover $(c_1,\dots,c_{2n+g-1})^\top$ in Eq.~\eqref{eq:f_af_b_expansion} as $G^{-1}(m_1,\dots,m_{2n+g-1})^\top$.

One needs to verify that $G$ is invertible. We will do so explicitly in the end for our following example of
$\F_{2^5}$ multiplication using the elliptic curve $y^2+y=x^3$ over $\F_2$.
This curve has three $\F_2$-rational places (degree one): 
$$P_{00}=(x,y),\quad P_{01}=(x,y+1),\quad P_\infty,$$
and three degree-two places:
$$R_1=(x+1),\quad R_2=(x^2+x+1,\;y+x),\quad R_3=(x^2+x+1,\;y+x+1).$$
We choose the degree-$5$ place $Q=(x^5+x^4+x^2+x+1,\; y+x^4+x^2+x)$.
Furthermore, we choose a degree-$(n+g-1)=5$ place $D=(f(x),\;y+q(x))$ where,
\begin{equation}
f(x)=x^5+x^3+1,\quad q(x)=x^3+x^2+x.
\end{equation}
One can verify that $f(x)$ is irreducible over $\F_2$ and indeed $y^2+y=q(x)^2+q(x)\equiv x^3\mod{f(x)}$.

Instead of the (simplest) procedure described above, where each auxiliary place is used only once, we follow \cite{CENK2010} which suggests using some places multiple times. In our example, we use the place $P_{01}$ twice and all other places only once. This gives a $14$ CCZ implementation of $\F_{2^5}$ multiplication (to be explained later), which is not optimal since $13$ is known to be possible (and is the minimal for unitary synthesis)~\cite{chudnovsky1988algebraic}. Nonetheless, we find it suitable for pedagogical purposes.

We can write down a basis of $\cL(D)$ as follows. The functions are of the form $\frac{A(x)y+B(x)}{f(x)}$. Note that we do not need to include $y^2$ terms because they can be reduced via $y^2=y+x^3$. The division by $f(x)$ is because, by definition, $\cL(D)$ contains functions with poles only on $D$. Moreover, $\cL(D)$ should not contain functions with poles on $D'=(f(x),\;y+\bar{q}(x))$ where 
\begin{equation*}
\bar{q}(x):=q(x)+1=x^3+x^2+x+1.
\end{equation*}
$D'$ is the other degree-five place, in fact, the conjugate place of $D$, because $\bar{q}(x)^2+\bar{q}(x)=q(x)^2+q(x)\equiv x^3\mod{f(x)}$.
For the function $\frac{A(x)y+B(x)}{f(x)}$ to vanish on $D'$, we thus require
$$A(x)\bar{q}(x)+B(x)\equiv 0\mod{f(x)}.$$
As for a concrete basis of $\cL(D)$, we can take $A(x)$ to be $1,x^1,x^2,x^3$ and let the corresponding $B(x)$ be $A(x)\bar{q}(x)\mod{f(x)}$. To be more explicit:
\begin{align*}
h_1=1, \quad h_2=\frac{y+x^3+x^2+x+1}{f(x)},&\quad h_3=\frac{xy+x^4+x^3+x^2+x}{f(x)},\\\quad h_4=\frac{x^2y+x^4+x^2+1}{f(x)},&\quad h_5=\frac{x^3y+x+1}{f(x)}.
\end{align*}
Next, let us find $h_6,\dots,h_{10}$, such that together with $h_1,\dots,h_5$, they form a basis of $\cL(2D)$.
Similarly, the functions take the form $\frac{A(x)y+B(x)}{f(x)^2}$, and again, they need to vanish on $D'$.
\begin{equation*}
A(x)\bar{Q}(x)+B(x)\equiv 0\mod{f(x)^2},
\end{equation*}
where $y\equiv \bar{Q}(x)\mod{f(x)^2}$ needs to satisfy
\begin{align*}
\label{eq:Q_bar_mod_f}
\bar{Q}(x)\equiv \bar{q}(x)&=y\mod{f(x)},\\
y^2+y\equiv \bar{Q}(x)^2+\bar{Q}(x)&\equiv x^3\mod{f(x)^2}.
\end{align*}
The first equality implies $\bar{Q}(x)=\bar{q}(x)+f(x)s(x)$ for some $s(x)$. Substituting this form into the second equality and solving for $s(x)$ yields $s(x)\equiv x\mod{f(x)}$. Therefore,
$$\bar{Q}(x)\equiv x^6+x^4+x^3+x^2+1\mod{f(x)^2}.$$

A convenient choice of basis is again by taking $A(x)$ to be $x^4,\dots,x^8$ and set $B(x)\equiv A(x)\bar{Q}(x)\mod{f(x)^2}$; concretely:
\begin{align*}
h_6=\frac{x^4y+x^8+x^7+x^4+1}{f(x)^2}&,\quad h_7=\frac{x^5y+x^9+x^8+x^5+x}{f(x)^2},\\
h_8=\frac{x^6y+x^9+x^2+1}{f(x)^2},\quad h_9=&\frac{x^7y+x^6+x^3+x+1}{f(x)^2},\quad h_{10}=\frac{x^8y+x^7+x^4+x^2+x}{f(x)^2}.
\end{align*}

Now, proceeding to the evaluation-then-interpolation part.
We need to determine the matrix $G$ from Eq.~\eqref{eq:phi_f_af_b}.
Since we are going to use the place $P_{01}$ twice, we need to slightly modify the $\F_2$-linear map $\varphi$.
\begin{align}\varphi:\cL(2D)\;\to &\;(\F_{2^{\deg(P_{01})}})^{\times 2}\times \F_{2^{\deg(P_{00})}}\times \F_{2^{\deg(P_\infty)}}\times \F_{2^{\deg(R_1)}}\times \F_{2^{\deg(R_2)}}\times \F_{2^{\deg(R_3)}}\nonumber\\
=&\;\F_2^2\times \F_2\times \F_2\times \F_4\times \F_4\times \F_4\cong \F_2^{10}\\
f\;\mapsto &\; (\varphi_{P_{01}}(f),\ev_{P_{00}}(f),\ev_{P_\infty}(f),\ev_{R_1}(f),\ev_{R_2}(f),\ev_{R_3}(f))^\top,
\end{align}
where $\varphi_{P_{01}}(f):=(\alpha_0,\alpha_1)$ which are the first two coefficients in expanding $f$ in a local parameter $t$, i.e., $f=\alpha_0+\alpha_1t+\alpha_2t^2+\cdots$, $\alpha_i\in \F_{2^{\deg(P_{01})}}$, at the place $P_{01}$.

A valid choice of local parameter at $P_{01}=(x,y+1)$ is $t=x$ because $F_y=1\neq 0$. Moreover, to multiply $\varphi_{P_{01}}(f_a)=(\alpha_0,\alpha_1)$ and $\varphi_{P_{01}}(f_b)=(\beta_0,\beta_1)$, locally,
\begin{equation*}
f_a=\alpha_0+\alpha_1x+O(x^2),\quad f_b=\beta_0+\beta_1x+O(x^2).
\end{equation*}
So $f_af_b=\alpha_0\beta_0+(\alpha_0\beta_1+\alpha_1\beta_0)x+O(x^2)$.
Calculating $\varphi_{P_{01}}=(\alpha_0\beta_0,\alpha_0\beta_1+\alpha_1\beta_0)$ requires three CCZs, instead of four; this is an elementary example of Karatsuba multiplication.

For the degree-two places, multiplication in $\F_4=\{0,1,\omega,\omega^2\}$ (where $\omega^2=\omega+1$) takes also three CCZs only by Karatsuba: 
\begin{align*}
(\alpha_0+\alpha_1\omega)(\beta_0+\beta_1\omega)&=(\alpha_0\beta_0+\alpha_1\beta_1)+(\alpha_1\beta_0+\alpha_0\beta_1+\alpha_1\beta_1)\omega\\
\gamma_1=\alpha_0\beta_0,\; \gamma_2&=\alpha_1\beta_1,\;\gamma_3=(\alpha_0+\alpha_1)(\beta_0+\beta_1),
\end{align*}
and $\varphi_{R_i}(f_af_b)=(\gamma_1+\gamma_2, \gamma_1+\gamma_3)$.

Multiplication in $\F_2$ takes one CCZ, so in total, $3+1+1+3\times 3=14$ CCZs are used.

Finally, one can verify that the $G$ matrix (cf. Eq.~\ref{eq:phi_f_af_b}) below is invertible.\\[3pt]

\begin{minipage}[c]{0.45\textwidth}
    \centering
    {\tiny
    \begin{tabular}{c|c|c|c|c|c|c}
        \hline
        $j$ & $\varphi_{P_{01}}(h_j)$ & $h_j(P_{00})$ & $h_j(P_\infty)$ & $h_j(R_1)$ & $h_j(R_2)$ & $h_j(R_3)$ \\ \hline
        $1$ & $(1,0)$ & 1 & 1 & 1 & 1 & 1 \\ 
        $2$ & $(0,1)$ & 1 & 0 & $\omega$ & 1 & $\omega^2$\\
        $3$ & $(0,0)$ & 0 & 0 & $\omega$ & $\omega$ & 1\\
        $4$ & $(1,0)$ & 1 & 0 & $\omega^2$ & $\omega$ & $\omega^2$\\
        $5$ & $(1,1)$ & 1 & 0 & $\omega$ & $\omega$ & 0\\
        $6$ & $(1,0)$ & 1 & 0 & $\omega$ & $\omega^2$ & $\omega$\\
        $7$ & $(0,1)$ & 0 & 0 & $\omega$ & 1 & $\omega^2$\\
        $8$ & $(1,0)$ & 1 & 0 & $\omega^2$ & $\omega^2$ & 0\\
        $9$ & $(1,1)$ & 1 & 0 & $\omega$ & 0 & 1\\
        $10$ & $(0,1)$ & 0 & 0 & $\omega$ & 0 & $\omega$\\ \hline
    \end{tabular}
    }
\end{minipage}
\begin{minipage}[c]{0.45\textwidth}
\begin{equation*}
G=\left(
\begin{smallmatrix}
1 & 0 & 0 & 1 & 1 & 1 & 0 & 1 & 1 & 0\\
0 & 1 & 0 & 0 & 1 & 0 & 1 & 0 & 1 & 1\\
1 & 1 & 0 & 1 & 1 & 1 & 0 & 1 & 1 & 0\\
1 & 0 & 0 & 0 & 0 & 0 & 0 & 0 & 0 & 0\\
1 & 0 & 0 & 1 & 0 & 0 & 0 & 1 & 0 & 0\\
0 & 1 & 1 & 1 & 1 & 1 & 1 & 1 & 1 & 1\\
1 & 1 & 0 & 0 & 0 & 1 & 1 & 1 & 0 & 0\\
0 & 0 & 1 & 1 & 1 & 1 & 0 & 1 & 0 & 0\\
1 & 1 & 1 & 1 & 0 & 0 & 1 & 0 & 1 & 0\\
0 & 1 & 0 & 1 & 0 & 1 & 1 & 0 & 0 & 1\\
\end{smallmatrix}\right)
\end{equation*}
\end{minipage}

\section{Building Distillation Circuits}\label{sec:building_distillation_circuits}

The majority of this manuscript is dedicated to leveraging the Galois qudit formalism to find novel distillation protocols. Since many of these protocols are quite small, we discuss concrete circuit realizations within the same formalism in order to aid practical implementation. In particular, we will discuss how qubit distillation protocols may be realized on fewer qubits that one naively expects using the transformation of~\cite{GOSC}, and we will show how the same transformation applies to our Galois qudit protocols.

\subsection{Distillation protocols and Pauli-based computation (PBC)}
Specializing what was discussed in Section~\ref{subsec:diagonal_gates_ch}, a third-level diagonal phase gate on $k$ qubits can be represented by a degree at most three phase polynomial modulo $8$. To be more concrete, $\CCZ_{1,2,3},\;\CS_{1,2},\;T_1$ correspond to $4x_1x_3x_3,\;2x_1x_2$, and $x_1\!\!\mod{8}$, respectively, and similarly for $\CZ,\;S,\;Z$.
Such a phase polynomial can be rewritten as a weighted sum of linear Boolean functions
\begin{equation}
\label{eq:phase_poly_sum_XOR}
P(\bx)=\sum_{\bu\in \F_2^k\backslash\{0\}} a_\bu (u_1 x_1\oplus u_2 x_3 \oplus \cdots \oplus u_k x_k), 
\end{equation}
where the coefficients $a_\bu$ are integers modulo $8$. Here, note that $\oplus$ denotes the summation of integers modulo $2$, whereas $\sum$ denotes regular integer addition unless otherwise stated.

As an example, the $\CCZ$ gate 
\begin{equation*}
4x_1x_2x_3\equiv x_1+x_2+x_3+7(x_1\oplus x_2)+7(x_1\oplus x_3)+7(x_2\oplus x_3)+(x_1\oplus x_2 \oplus x_3),
\end{equation*}
has coefficients $(a_{001},a_{010},a_{011},\cdots,a_{111})=(1,1,7,1,7,7,1)$, where $+$ denotes regular integer addition unless otherwise stated.

In this form, an odd coefficient $a_\bu$ corresponds to a $T$ gate (up to $S$ and $Z$) on the CNOT-transformed bit $(u_1 x_1\oplus u_2 x_3 \oplus \cdots \oplus u_k x_k)=\bigoplus_{i\in\supp(\bu)}x_i$ \footnote{\cite{Litinski_2019} calls this a $\pi/8$ Pauli-product rotation $\exp\left[i\pi \left(\bigoplus_{i\in\supp(\bu)}x_i)/8\right)\right]$.}, and we conclude from this identity that the $\CCZ$ gate may be unitarily synthesized from $7$ $T$ gates, and Cliffords.

In certain cases, the number of odd coefficients can be reduced by adding some phase polynomial identities, and such identities\footnote{A phase polynomial identity in the form of Eq.~\eqref{eq:phase_poly_sum_XOR} corresponds to a Reed-Muller codeword where the $\bu$ coordinate is set to $(a_\bu\!\!\mod{2})$. \cite[Fig.~2]{Litinski_2019} provides an example of phase polynomial identity.} are codewords of Reed-Muller codes $\RM(k-4,k)$ punctured at the all-zero coordinate~\cite{T_count_RM}. Minimizing the ancilla-free $T$ count for such phase polynomial gates is therefore equivalent to decoding punctured Reed-Muller codes: a notoriously hard problem computationally.

Nevertheless, given Eq.~\ref{eq:phase_poly_sum_XOR} for a third-level diagonal phase gate, \cite{campbell2017unified} observes that its non-Clifford part is equivalently represented by a so-called binary gate-synthesis matrix $A$ which has $k$ rows: the vector $\bu$ occurs as a column precisely when $a_\bu$ is odd. $A$ can be imagined as the logical part of a binary $X$-generator matrix; see Definition~\ref{def:X_gen_matrix}. Associating binary variables $x_1, \ldots, x_k$ to the rows of $A$, one may perform generalized triorthogonality calculations --- see Sec.~\ref{sec:generalized_triorthogonality} --- and go on to construct distillation protocols. In the case that the $k$-qubit third-level diagonal phase gate is $\CCZ$-only, one only needs at most two qubits to convert this gate synthesis matrix into a $d=2$ distillation protocol.
\begin{lemma}~\cite{campbell2017unified}
Let $A\in\F_2^{k\times t}$ be a full-rank gate-synthesis matrix for a $\CCZ$-only unitary $U$. Define
\begin{equation}
    \delta=\begin{cases}
        0, & t\text{ even and }\mathbf{1}_t^\top\notin\text{rowspan}(A),\\
        2, & t\text{ even and }\mathbf{1}_t^\top\in\text{rowspan}(A),\\
        1, & t\text{ odd},
    \end{cases}\quad\quad\quad\quad
    G=\begin{pmatrix}A \;\;\mathbf{0}_{k\times \delta}\\[1pt]\hline
    \rule{0pt}{13pt}
    \mathbf{1}_{t+\delta}^\top\end{pmatrix}.
\end{equation}
Then $G$ defines a $[[t+\delta,k,d\ge 2]]$ code on which transversal $T$ realizes $U$ up to Clifford corrections.
\end{lemma}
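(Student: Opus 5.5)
We will verify three things about $G$ using the generalized triorthogonality expansion Eq.~\eqref{eq:phase_poly_expansion} with the all-ones metric: that $G$ is full rank, so it is a valid $X$-generator matrix in the sense of Definition~\ref{def:X_gen_matrix}; that the non-Clifford part of the phase polynomial depends only on $x_1,\dots,x_k$ and equals that of $U$; and that the $Z$-distance is at least $2$.

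\textbf{Rank.} The rows of $A$ padded with $\delta$ zeros are independent, since $A$ has full rank. We must also check that $\mathbf{1}_{t+\delta}$ is not in their span. If $\delta\geq 1$, any combination of the padded logical rows vanishes on the last $\delta$ coordinates while $\mathbf{1}$ does not, so independence holds. If $\delta=0$, then by definition $\mathbf{1}_t\notin\text{rowspan}(A)$. So $G$ has $k+1$ independent rows.

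\textbf{Logical action.} Attach $x_1,\dots,x_k$ to the logical rows and $z$ to the stabilizer row. The phase polynomial mod $8$ is
\[
\sum_i \wt(\bg_i)x_i-2\sum_{i<j}\wt(\bg_i\wedge\bg_j)x_ix_j+4\sum_{i<j<l}\wt(\bg_i\wedge\bg_j\wedge\bg_l)x_ix_jx_l
\]
plus terms involving $z$. Padding with zeros does not change any overlap among logical rows. The $x$-only part is therefore exactly the phase polynomial of $A$, which is $U$ up to Cliffords, because for a gate-synthesis matrix the $x$-only polynomial reproduces $U$'s phase polynomial modulo terms with even coefficients on parities.

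The terms involving $z$ are obtained by intersecting with $\mathbf{1}$, which returns the other rows unchanged:
\begin{itemize}
\item $z$ alone contributes $(t+\delta)z$;
\item $zx_i$ contributes $-2\wt(\bg_i)$;
\item $zx_ix_j$ contributes $4\wt(\bg_i\wedge\bg_j)$.
\end{itemize}
Non-Clifford action off the logical wires requires these to vanish modulo the Clifford tolerances. We need:
\begin{itemize}
\item $t+\delta$ even, which the three-case definition of $\delta$ guarantees, giving at most a $Z$/$S$ on the stabilizer wire;
\item $\wt(\bg_i)$ even, so that $zx_i$ carries coefficient $0\bmod 4$ and is at most a $\CZ$ correction;
\item $\wt(\bg_i\wedge\bg_j)$ even, so that there is no $\CCZ$ involving $z$.
\end{itemize}

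\textbf{Main obstacle.} The last two conditions must be derived from the hypothesis that $U$ is $\CCZ$-only. Since $U$ is $\CCZ$-only, its phase polynomial has no odd $T$-type or $\CS$-type part. That is, the linear coefficients $\wt(\bg_i)\bmod 8$ and pairwise coefficients $\wt(\bg_i\wedge\bg_j)\bmod 4$ that $A$ induces must give only Clifford-or-trivial gates, up to the Clifford ambiguity in choosing $A$. Concretely:
\begin{itemize}
\item $T$-freeness on wire $i$ forces $\wt(\bg_i)$ even.
\item Freedom from $\CS$ between wires $i$ and $j$ forces $\wt(\bg_i\wedge\bg_j)$ even.
\item An odd triple term $\wt(\bg_i\wedge\bg_j\wedge\bg_l)$ is what produces the $\CCZ$s.
\end{itemize}
I would make this precise through the relation between the columns of $A$ and the odd coefficients $a_\bu$ in Eq.~\eqref{eq:phase_poly_sum_XOR}. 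Any residual single-wire $S$ or $T$-parity ambiguity is absorbed because $A$ is only defined through the parities of the $a_\bu$. This is the step needing the most care.

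\textbf{Distance.} A weight-one $Z$ error on column $c$ anticommutes with the stabilizer row $\mathbf{1}$, since that row is nonzero everywhere. Hence no weight-one vector is orthogonal to the stabilizers, and by Prop.~\ref{prop:matrix_to_code} the $Z$-distance is at least $2$. The parameters are then $[[t+\delta,k,\geq 2]]$.
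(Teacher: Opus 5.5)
Your proposal is correct. It is the standard Campbell--Howard argument written in the paper's generalized-triorthogonality bookkeeping; the paper itself gives no proof beyond the citation.

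The step you flag closes in one line. Write $L(\bx)=\sum_{j=1}^t \bc_j\cdot\bx$ for the columns $\bc_j$ of $A$. The gate-synthesis property then gives $L\equiv 4f-2\sum_{\bu}b_{\bu}(\bu\cdot\bx)\pmod 8$, where $f$ is the cubic Boolean function of $U$. Hence the $z$-dependent part of the phase, $(t+\delta)z-2zL(\bx)$, reduces mod $8$ to $(t+\delta)z+4z\,\ell(\bx)$ with $\ell$ an $\F_2$-linear parity of $\bx$, which is manifestly Clifford. Equivalently, comparing the unique multilinear coefficients gives $\wt(\bg_i)$ and $\wt(\bg_i\wedge\bg_j)$ even.
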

The following is an example of the $\delta=1$ case.
\begin{example}[$12T\to \TOF\#$ at $d=2$~\cite{campbell2017unified}]
Consider a $\TOF\#$ gate $\CCZ_{123}\CCZ_{145}$ whose phase polynomial is 
\begin{multline*}
4x_1x_2x_3+4x_1x_4x_5\equiv x_1+x_5+(x_2\oplus x_4) + (x_3\oplus x_4) - (x_1\oplus x_5) - (x_4\oplus x_5) - (x_1\oplus x_2\oplus x_4) \\- (x_1\oplus x_3 \oplus x_4) - (x_2\oplus x_3\oplus x_4)+(x_1\oplus x_2\oplus x_3 \oplus x_4) + (x_1\oplus x_4\oplus x_5) \pmod 8.
\end{multline*}
Notice that, in the above, the right-hand side contains only terms with odd integer coefficients, namely $1$ and $7$. This leads us to the following generalized triorthogonal matrix, but note that in the general case when passing to the matrix, one ignores the even integer coefficient terms.
\begin{equation}
G=\begin{pmatrix}
1& 0& 0& 0& 1& 0& 1& 1& 0& 1& 1& 0\\
0& 0& 1& 0& 0& 0& 1& 0& 1& 1& 0& 0\\
0& 0& 0& 1& 0& 0& 0& 1& 1& 1& 0& 0\\
0& 0& 1& 1& 0& 1& 1& 1& 1& 1& 1& 0\\
0& 1& 0& 0& 1& 1& 0& 0& 0& 0& 1& 0\\\hline
1&1&1&1&1&1&1&1&1&1&1&1
\end{pmatrix}
\quad
\begin{matrix}
x_1\\
x_2\\
x_3\\
x_4\\
x_5\\
\quad
\end{matrix}
\end{equation}
For example, the third column of the logical part $(0,1,0,1,0)^T$ corresponds to $(x_2\oplus x_4)$ in the phase polynomial above.
Sometimes, one can avoid Clifford corrections by applying $T^{\Gamma_1}\otimes\cdots\otimes T^{\Gamma_{12}}$ to the physical qubits. In this example, solving a linear equation in $\bGamma$ over $\Z_8$~\cite[App.~H.2]{koh2026entangling} shows that $\bGamma=(1,1,1,1,7,7,7,7,7,1,1,7)$ does the job. 
\end{example}

\subsubsection{Galois qudit gates as measurements}
We will now proceed to the discussion of running our distillation protocols as circuits based on Pauli-based computation (PBC). We will show how all of the most important tricks and techniques for running and compressing qubit magic state distillation protocols apply naturally to the protocols based on codes over binary extension fields.

Many quantum error-corrected architectures leverage Pauli-based computation \cite{Bravyi_2016} as a natural gate-set for logical operations 
\cite{GOSC, he2025extractorsqldpcarchitecturesefficient, yoder2025tourgrossmodularquantum, Xu2025}. Since for any $n$-qudit vector of field elements $u \in \mathbb{F}_{2^s}^n$ there is a natural definition of $X(u)$ and $Z(u)$ as $nm$-qubit Pauli matrices, generalization of this Galois qudit formalism to Pauli based computation initially looks trivial. However, especially when considering magic state distillation protocol implementation, we often care about measurements of the qubit operators corresponding to $X(\alpha u)$, for all $\alpha \in \mathbb{F}_{2^s}$, as a collection. The result of all these measurements is a map $\mathcal{M}: \mathbb{F}_{2^s} \to \mathbb{F}_2$, where $\mathcal{M}(\alpha)$ is the measurement outcome of $X(\alpha u)$. This map is clearly linear, motivating the following definition~\cite{wills2026review}.

\begin{definition} For some $u_x, u_z \in \mathbb{F}_{2^s}^n$, we call the $n$-qudit \textbf{measurement generated by } $i^{\tr(u_x \cdot u_z)}  X(u_x)Z(u_z)$ the measurement of all qubit operators corresponding to $i^{\tr(u_x \cdot u_z)}  X(\alpha u_x)Z(\alpha u_z)$ for all $\alpha \in \mathbb{F}_{2^s}$. We say the measurement \textbf{has outcome $\gamma \in \mathbb{F}_{2^s}$} if the outcome of the qubit measurement $i^{\text{tr}(u_x \cdot u_z)}  X(\alpha u_x)Z(\alpha u_z)$ is equal to $\text{tr}(\alpha \gamma)$. Given an $n$-qudit state $\ket{\psi}$, we say that $\ket{\psi}$ has eigenvalue $\gamma \in \mathbb{F}_{2^s}$ under $i^{\tr(u_x \cdot u_z)}  X(u_x)Z(u_z)$ if the outcome of the measurement generated by $i^{\tr(u_x \cdot u_z)}  X(u_x)Z(u_z)$ on $\ket{\psi}$ has the outcome $\gamma$ with certainty.
\end{definition}
While \cite{Bravyi_2016} considered only magic state preparation and Pauli measurement, future works \cite{GOSC, yoder2025tourgrossmodularquantum} have extended Pauli-based computation to also include gates of the form $\exp(i \phi Z) = \sum_{a \in \{0,1\}} \exp(i \phi (-1)^a) \ket{a}\bra{a}$, which naturally generalizes to other Pauli bases and multiple qubits. As an example, we now work towards expressing the gate $U_7^\beta = \sum_{\eta \in \mathbb{F}_{2^s}} \exp( i \pi \text{tr}(\beta \eta^7)) \ket{\eta}\bra{\eta}$ in a similar manner. The same procedure will allow us to execute Galois qudit gates in the PBC framework, which will go on to be used for running our distillation protocols based on binary extension fields in a small amount of space (on a smaller number of logical Galois qudits than one naively expects).

First, consider the following method for realizing $\ket{\psi} \to \exp\left(i \phi \frac{1+P}{2}\right)\ket{\psi}$, for some $n$-qubit Pauli $P$:
\begin{enumerate}
\item Initialize a qubit ancilla in the $\ket{+}$ state.
\item Apply the joint measurement $Z \otimes P$ to $\ket{+}\otimes \ket{\psi}$ to get outcome $\alpha \in \{0,1\}$.
\item Apply the map $\ket{a} \to e^{i \phi (a\oplus \alpha)}\ket{a}$ to the ancilla.
\item Measure the ancilla in the $X$ basis to obtain outcome $\gamma \in \{0,1\}$.
\item Apply $P$ to the $\ket{\psi}$ register when $\gamma = 1$.
\end{enumerate}

Here it was natural to consider $\exp\left(i \phi \frac{1+P}{2}\right)$ rather than $\exp\left(i \phi P\right)$ since the eigenvalues of $\frac{1+P}{2}$ live in $\mathbb{F}_2$, corresponding to the computational basis states of a qubit. This motivates the notation $\hat P = \frac{1+P}{2}$ as a Hermitian operator on $n$ qubits with eigenvalues in $\mathbb{F}_2$. We generalize to $n$-qudit operators $\hat P$ for $P = i^{\text{tr}(u_x \cdot u_z)}  X(u_x)Z(u_z)$ with eigenvalues in $\mathbb{F}_{2^s}$, by implementing an operation $ \exp( i \pi \text{tr}( \hat P^7))$:

\begin{enumerate}
\item Initialize a qudit ancilla in the $\ket{+}$ state.
\item Apply the measurement generated by $Z \otimes P$ to $\ket{+}\otimes \ket{\psi}$ to get outcome $\alpha \in F_{2^s}$.
\item Apply the map $\ket{\eta} \to \exp( i \pi \text{tr}((\eta - \alpha)^7)) \ket{\eta}$ to the ancilla.
\item Measure the ancilla in the $X$ basis to obtain outcome $\gamma \in F_{2^s}$.
\item If $P = i^{\text{tr}(u_x \cdot u_z)}  X(u_x)Z(u_z)$,  apply $i^{\text{tr}(\gamma^2 u_x \cdot u_z)}  X(\gamma u_x)Z(\gamma u_z)$ to the $\ket{\psi}$ register.
\end{enumerate}

As a specialization, we see that the gate $U_7^{\beta}$ can be written as $\exp(i \pi \text{tr}(\beta \hat Z(1)^7))$, and is generalized to a larger family of operators $U_7^{\beta}[g] := \exp(i \pi \text{tr}( \beta \hat Z(g)^7))
$ for $g \in \mathbb{F}_{2^s}^n$, which is the set of all conjugations of $U_7^{\beta}$ by multi-qudit CNOT circuits. In particular:
\begin{align}
U_7^{\beta}[g] = \sum_{u \in \mathbb{F}_{2^s}^n}  \exp\left(i \pi \text{tr}\left( \beta \left( \sum_{i=1}^n g_a u_a \right)^7\right)\right) \ket{u}\bra{u}.
\end{align}

\subsubsection{Realization of Distillation Protocols on Fewer Qudits}

Suppose a matrix $G \in \mathbb{F}_{q}^{m\times n}$ (say, a matrix satisfying some triorthogonality property) is an $X$-generator matrix for a quantum error-correcting code (see Definition~\ref{def:X_gen_matrix}) with $k$ logical qudits and a transversal $U^\beta_7$ gate (for example). Traditionally, a magic state distillation protocol based on the code takes the following form: 
\begin{enumerate}
\item Initialize $n$ qudits into a logical $\ket{+}^{\otimes k}$ state within the codespace, 
\item Apply $U^\beta_7$ on all $n$ qudits,
\item Measure the $X$ stabilizers on the code and either post-select or decode,
\item Unencode from the code to obtain the final distilled $\left( U^\beta_7 \ket{+}\right)^{\otimes k}$.
\end{enumerate}

However, for the case of $\mathbb{F}_2$, it is known that there exists a quantum circuit supported on only $m$ qubits (where $m$ is the number of rows in the $X$-generator matrix) that realizes the protocol \cite{Litinski_2019}. In this section we show that the same idea works for qudits, and given a matrix $G \in \mathbb{F}_q^{m \times n}$, the corresponding distillation protocol may be realized on $m$ Galois qudits.

\begin{proposition} Suppose $G \in \mathbb{F}_{q}^{m\times n}$ is an $X$-generator matrix for a code with $k$ logical qudits. Consider the following quantum circuit:
\begin{enumerate}
\item Initialize $m$ qudits in the state $\ket{+}^{\otimes m}$.
\item For $i = 1$ through $n$, apply faulty twirled $U_7^\beta[ g_i ]$ gates,\footnote{The notion of twirling for Galois qudits is again entirely analogous to the qubit case, see for example~\cite{bravyi2012magic}, and allows us to only have to worry about $Z$ errors on the inputted magic states, and not $X$ errors.} where $g_i$ is the $i$'th column of $G$. 
\item Measure qudits $k+1$ through $m$ in the $X$ basis.
\end{enumerate}
 Then, the measurement results correspond to the $X$ stabilizers of the CSS code, and qudits 1 through $k$ contain the final distilled $\left( U^\beta_7 \ket{+}\right)^{\otimes k}$.
\end{proposition}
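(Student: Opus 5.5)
Without loss of generality I will assume $G$ is in the form where its first $k$ rows are the logical rows and rows $k+1,\dots,m$ are stabilizer rows. The plan is to compare the $m$-qudit circuit with the standard $n$-qudit protocol by tracking the ideal (noiseless) evolution in the computational basis of the $m$ qudits. After step 1 the state is $q^{-m/2}\sum_{u\in\F_q^m}\ket{u}$. The first point is that $U_7^\beta[g_i]$ applies the phase $(-1)^{\tr(\beta(\sum_a g_{a,i}u_a)^7)}$. The exponent is exactly the phase that the physical gate $U_7^\beta$ on qudit $i$ imparts to the codeword $\sum_a u_a \bg_a$, whose $i$-th entry is $f_i=\sum_a u_a g_{a,i}$.

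After all $n$ gates, the accumulated phase is therefore $(-1)^{\tr(\beta\sum_i f_i^7)}$. By Lemma~\ref{lemma:U7_ortho_wrt_metric}, or more precisely by Eq.~\eqref{eq:sum_seventh_power} with trivial metric and $\tau_a=1$, this equals $(-1)^{\tr(\beta\sum_{a\le k}u_a^7)}$. The state then factorizes as $\bigotimes_{a\le k}U_7^\beta\ket{+}\otimes\ket{+}^{\otimes(m-k)}$. In the ideal case, measuring qudits $k+1,\dots,m$ in the $X$ basis therefore gives outcome $0$ deterministically, and the first $k$ qudits carry $(U_7^\beta\ket{+})^{\otimes k}$. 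This is essentially the isomorphism $\ket{u}\mapsto\sum_{\bg\in\cG_0}|\sum_{a\le k}u_a\bg_a+\bg\ra$ between the $m$-qudit register and the code.

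Next comes the error analysis, which is the substantive part and the one I expect to be the main obstacle. After twirling (the Galois analogue of Proposition~\ref{prop:diagonal_twirl}), a faulty $U_7^\beta[g_i]$ is the ideal gate followed by a random Pauli of the form $X$ conjugated by the gate. In the $n$-qudit picture this is a $Z(\gamma e_i)$ error on physical qudit $i$. In the $m$-qudit picture the gate is a function of the single field value $\ell_i(u)=\sum_a g_{a,i}u_a$, so the corresponding error is the generalized $Z$ operator $Z(\gamma g_i)$, acting on $\ket{u}$ by $(-1)^{\tr(\gamma\,\ell_i(u))}$.

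To justify this, I would use the PBC realization described just before the proposition. Implementing $U_7^\beta[g_i]$ through an ancilla consuming a magic state, a $Z$-type error $\gamma$ on the ancilla shifts the measured outcome; equivalently, it enters as $\exp$ of $\tr(\beta(\hat Z(g_i)-\gamma')^7)$. Expanding the difference shows the error is a diagonal gate of lower level. The point to check is that, after twirling, it reduces to exactly $Z(\gamma g_i)$ up to conjugation by the ideal gate. I will verify this by writing the erroneous gate as $Z(\gamma e)\,U_7^\beta$ on the ancilla before the gate teleportation and commuting the $Z$ through the measurement generated by $Z\otimes\hat Z(g_i)$.

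Since every error is diagonal in $u$, it commutes with all the remaining gates and can be pushed to the end. The total error is then $Z(\sum_i\gamma_i g_i)=Z(G\boldsymbol\gamma)$ for an error vector $\boldsymbol\gamma\in\F_q^n$. The $X$-basis measurement on qudit $a>k$ returns the $a$-th coordinate $(G\boldsymbol\gamma)_a=\la\bg_a,\boldsymbol\gamma\ra$, which is precisely the syndrome of the $X$ stabilizer $\bg_a$ for the physical error $Z(\boldsymbol\gamma)$. The residual error on qudits $1,\dots,k$ is $Z$ with entries $\la\bg_a,\boldsymbol\gamma\ra$, which is the logical action of $Z(\boldsymbol\gamma)$ on the code. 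This matches the standard protocol outcome by outcome, including the postselection statistics. The final claim then follows from Proposition~\ref{prop:matrix_to_code}.
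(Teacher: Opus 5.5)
Your argument is correct, but it is organized differently from the paper's. The paper never evaluates the compressed circuit directly. Instead it introduces the isometry $\mathcal{E}\ket{u}=\ket{\sum_{a=1}^m u_a\bg_a}$ from the whole $m$-qudit register into the larger code defined by the $Z$ checks alone. (The map you mention, $\ket{u}\mapsto\sum_{\bg\in\cG_0}\ket{\sum_{a\le k}u_a\bg_a+\bg}$, starts from $k$ qudits rather than from the $m$-qudit register.) The paper then proves three intertwining facts: $\mathcal{E}\ket{+}^{\otimes m}$ is the encoded $\ket{+}^{\otimes k}$; $\mathcal{E}^\dagger(U_7^\beta)_{[i]}\mathcal{E}=U_7^\beta[g_i]$; and $X(\bg_a)$ acts on the image of $\mathcal{E}$ as $X$ on qudit $a$. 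The compressed circuit is therefore literally the pullback of the standard $n$-qudit protocol, and the conclusion is inherited from transversality of the code.

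You instead stay in the $m$-qudit frame. You evaluate the accumulated phase with Eq.~\eqref{eq:sum_seventh_power}, then push the $Z$ faults to the end by hand to read off the syndromes $\langle\bg_a,\boldsymbol\gamma\rangle$ and the residual logical error. Your route makes the fault and postselection bookkeeping explicit, which the paper leaves implicit. On the other hand, it needs the exact twisted three-orthogonality with $\tau_a=1$, a hypothesis absent from the literal statement though clearly intended. The intertwining argument needs no such identity: it transfers any diagonal operation on physical qudit $i$, e.g.\ faulty or metric-weighted gates, with no further computation.

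One step should be fixed: your justification, via the ancilla gadget, that a twirled faulty $U_7^\beta[g_i]$ equals $Z(\delta g_i)U_7^\beta[g_i]$. A $Z$-type fault on the ancilla does not shift the outcome $\alpha$ of the measurement generated by $Z\otimes\hat Z(g_i)$, since it commutes with that measurement. The shifted argument in $\tr(\beta(\hat Z(g_i)-\alpha)^7)$ is a known, compensated byproduct, not the fault. Had a fault entered that way, the error would generally be a diagonal Clifford rather than a Pauli, because $(\eta-c)^7-\eta^7$ contains exponents of binary weight two such as $\eta^3,\eta^5,\eta^6$.

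What the fault actually does is shift the outcome of the ancilla's final $X$-basis measurement. That shifts the Pauli correction $Z(\gamma g_i)$, producing exactly $Z(\delta g_i)$ on the register. Simpler still, you can skip the gadget: the same one-line phase computation you already did gives $Z(\delta g_i)U_7^\beta[g_i]=\mathcal{E}^\dagger(Z(\delta)U_7^\beta)_{[i]}\mathcal{E}$. So a $Z(\delta e_i)$ fault in the $n$-qudit protocol corresponds exactly to $Z(\delta g_i)$ in yours. Relatedly, the twirl is over the operators $UX^aU^\dagger$, but the fault that survives it is $Z$-type, as your next sentence correctly says.
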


This result enables a realization of magic state distillation on fewer qubits, but perhaps more importantly enables a straightforward interpretation of $G$ as a quantum circuit. The first $k$ rows denote the output qudits, and the latter $m-k$ rows denote $X$-check qudits. Since the input magic states have been twirled, we can assume that only $Z$ errors are present, so only the $X$ checks are necessary. The columns of $G$ denote multi-qudit Pauli degrees of freedom into which $U_{7}^\beta$ are injected via $U_7^{\beta}[g_i]$.

\begin{proof} Recall that the error correction code defined by $G$ is given by CSS(X, $\mathcal{G}_0$; Z, $\mathcal{G}^\perp$), where $\mathcal{G}_0$ is the $\mathbb{F}_q$-span of the latter $m-k$ rows of $G$, and $\mathcal{G}$ is the $\mathbb{F}_q$-span of all rows of $G$. Consider the code defined \emph{just} by the $Z$ checks, giving $m -k$ additional logical qudits since the $X$ stabilizers are ignored. This code is spanned by vectors
\begin{align}
\ket{\bar u} = \left| \sum_{a=1}^m u_a  g^a \right\rangle
\end{align}
for $u \in \mathbb{F}_q^{m}$, since the $Z$ checks of the form $Z(v)$ for $v \in \mathcal{G}^\perp$ satisfy $\sum_i v_i \cdot u_a g^a_i = 0$, implying $Z(v)\ket{\bar u} = \ket{\bar u}$. Now, let $\mathcal{E} :\mathbb{C}^{\mathbb{F}_q^{m}} \to \mathbb{C}^{\mathbb{F}_q^{n}}$ be an encoding isometry acting as $\mathcal{E}\ket{u} = \ket{\bar u}$. (This isometry could be realized by padding with $n-m$ many $\ket{0}$ states and then running a CNOT circuit.) 

The traditional form of the distillation circuit is then expressed as follows. Recall that, for $u' \in \mathbb{F}_q^k$, a computational basis for the full CSS(X, $\mathcal{G}_0$; Z, $\mathcal{G}^\perp$) code can be written as
\begin{align}
\ket{\bar{u'}} = \frac{1}{\sqrt{|\mathcal{G_0}|}} \sum_{g \in \mathcal{G}_0} \left| \sum_{a=1}^k u'_a g^a + g \right\rangle,
\end{align}
so the state $\ket{+}^{\otimes k}$ encoded into CSS(X, $\mathcal{G}_0$; Z, $\mathcal{G}^\perp$) can be expressed as
\begin{align}
\frac{1}{\sqrt{q^k}}  \sum_{u' \in \mathbb{F}_q^k} \ket{\bar{u'}}   &= \frac{1}{\sqrt{q^k}}  \sum_{u' \in \mathbb{F}_q^k} \frac{1}{\sqrt{|\mathcal{G_0}|}}  \sum_{g \in \mathcal{G}_0} \left| \sum_{a=1}^k u'_a g^a + g \right\rangle \\
&= \frac{1}{\sqrt{q^m}}  \sum_{u \in \mathbb{F}_q^m} \left| \sum_{a=1}^m u_a  g^a \right\rangle\\
&= \frac{1}{\sqrt{q^m}}  \sum_{u \in \mathbb{F}_q^m} \ket{\bar u}\\
&= \mathcal{E} \frac{1}{\sqrt{q^m}}  \sum_{u \in \mathbb{F}_q^m} \ket{u}\\
&= \mathcal{E} \ket{+}^{m}.
\end{align}

Let $(U_7^\beta)_{[i]}$ denote a unitary on $n$ qudits, acting as $U_7^\beta$ on qudit $i$ and as identity elsewhere. Then, we derive:
\begin{align}
\mathcal{E}^\dagger (U_7^\beta)_{[i]} \mathcal{E} \ket{u} &= \mathcal{E}^\dagger (U_7^\beta)_{[i]}  \ket{\bar u} \\
  &=  \mathcal{E}^\dagger (U_7^\beta)_{[i]} \left| \sum_{a=1}^m u_a  g^a \right\rangle \\
  &= \mathcal{E}^\dagger \exp\left( i \pi \text{tr}\left( \beta \left[ \sum_{a=1}^k u_a g^a_i  \right]^7 \right) \right) |\bar u\rangle\\
  &=  \exp\left( i \pi \text{tr}\left( \beta \left[ \sum_{a=1}^k u_a g^a_i  \right]^7 \right) \right) | u\rangle. \\
  &= U_7^{\beta}[g_i]  | u\rangle.
\end{align}
Consequently:
\begin{align}
\mathcal{E}^\dagger \bigotimes_{i=1}^n U_7^\beta \mathcal{E} &= \prod_{i=1}^n \mathcal{E}^\dagger (U_7^\beta)_{[i]} \mathcal{E}\\
&= \prod_{i=1}^n U_7^{\beta}[g_i].
\end{align}

Putting these pieces together, we see how the state at the second step of the traditional circuit is given by:
\begin{align}
 \bigotimes_{i=1}^n U_7^\beta \cdot \mathcal{E} \ket{+}^{m} = \mathcal{E} \cdot \prod_{i=1}^n U_7^{\beta}[g_i] \cdot \ket{+}^m,
\end{align}
which is of course the state $\otimes_{a=1}^k U_7^\beta \ket{+}$ encoded into CSS(X, $\mathcal{G}_0$; Z, $\mathcal{G}^\perp$).

Finally, consider an $X$ stabilizer corresponding to an element $g \in \mathcal{G}_0$. There exists a $v \in \mathbb{F}_{q}^m$ with support only on entries $v_{k+1}$ through $v_m$ such that $g = \sum_{a=1}^m v_a g^a_i$. The stabilizer $X(g)$ has the action on an encoded state:
\begin{align}
\mathcal{E}^\dagger X(g) \mathcal{E} \ket{u} &= \mathcal{E}^\dagger X(g) \left| \sum_{a=1}^m u_a  g^a \right\rangle \\
&= \mathcal{E}^\dagger \left| \sum_{a=1}^m u_a  g^a + g \right\rangle \\
&= \mathcal{E}^\dagger \left| \sum_{a=1}^m (u+ v)_a  g^a \right\rangle \\
&= \ket{u + v} \\
&= X(v) \ket{i}.
\end{align}
As a result, for $a = k+1$ through $m$, the $X$ stabilizer corresponding to row $g^a$ is equivalent to Pauli $X$ on the $a$'th qubit in the code space that is the image of $\mathcal{E}$. We can conclude that (1) for an arbitrary state $\ket{\psi}$ on $m$ qudits, measurements of qubits $a = k+1$ through $m$ in the $X$ basis are equivalent to measuring $X(g^a)$ on $\mathcal{E} \ket{\psi}$, and (2) for an $m$ qudit state $\ket{\psi}$ such that $\mathcal{E}\ket{\psi}$ is in the codespace of CSS(X, $\mathcal{G}_0$; Z, $\mathcal{G}^\perp$), qubits $k+1$ through $m$ of $\ket{\psi}$ are all in the $\ket{+}$ state, so the logical information must be on qubits 1 through $k$.
\end{proof}

\begin{remark}
    Note that, for certain special distillation factories based on Galois qudits, the protocol may be run on even fewer than $m$ qudits by using the qudit analogue of the magic state distillation compression technique from~\cite{xu2026distillingmagicstatesbicycle}. To use this compression technique, one needs the check rows of the matrix $G$ (the latter $m-k$ rows) to be writeable, after some row operations, into a form where some check rows have disjoint support. Using row operations over $\mathbb{F}_q$, rather than $\mathbb{F}_2$, the same conclusions hold for our Galois qudit protocols. For example, considering Construction~\ref{con:16to2CCZ}, with matrix
    \begin{equation*}
\left(\begin{array}{cccccccc|cccccccc}
0 & \alpha & \alpha^2 & \alpha^3 & \alpha^4 & \alpha^5 & \alpha^6 & 1 & 0&0&0&0&0&0&0 & 0\\
0 & 0&0&0&0&0&0&0 & \alpha & \alpha^2 & \alpha^3 & \alpha^4 & \alpha^5 & \alpha^6 & 1 & 0\\\hline
1 & \alpha^2 & \alpha^4 & \alpha^6 & \alpha^8 & \alpha^3 & \alpha^5 & 1 & 0&0&0&0&0&0&0 & 0\\
0 & 0&0&0&0&0&0&0 & \alpha^2 & \alpha^4 & \alpha^6 & \alpha^8 & \alpha^3 & \alpha^5 & 1 & 1\\
0 & 1 & 1 & 1 & 1 & 1 & 1 & 1 & 1 & 1 & 1 & 1 & 1 & 1 & 1 & 0\\
\end{array} \right),
\end{equation*}
after applying the above methodology, the protocol may be run on $5$ qudits. However, because the first and second check rows are disjoint, one qudit used for the checks may be re-used, and in fact the protocol may be run on only $4$ qudits.
\end{remark}

\section{Time, Space, and Error Calculations}\label{sec:time_footprint_calculations}

In this appendix, we provide detail on how we benchmark our various concatenated schemes, both their performance metrics --- time rate, spatial footprint, volumetric rate --- and their per-state average errors. The corresponding results were shown in Section~\ref{subsec:footprint_results}.

\subsection{Time and Space Calculations}\label{subsec:spacetime_calcs}

To motivate our time and space calculations, let us begin by explaining our implementation approach for a given concatenated chain of protocols. We use, as a running example, the ${(64\T\to 2\CCZ}$ \cite{Haah2018}$)\cdot (8\CCZ \to 2\CCZ)$ concatenation. First, for the overall setup, note that magic state distillation protocols like $64\T \to 2\CCZ$ that have multiple outputs usually have correlations on the errors of their outputs. Because of this, in order to ensure that the concatenation has the full distance (in this case $8$), if a protocol with multiple outputs is used in a chain, we must use multiple of any protocols following it (this does not affect situations where the given protocol is the last protocol in the chain). In the present example, we must run two of the $8\CCZ \to 2\CCZ$ protocols. In a three-level situation, where one run of the first-level protocol produces $k_A$ outputs, and one run of the second-level protocol produces $k_B$ outputs, we must run the second-level protocol $k_A$ times, and the third-level protocol $k_Ak_B$ times.

In terms of the actual setup in the quantum computer, our ${(64\T\to 2\CCZ}$ \cite{Haah2018}$)\cdot (8\CCZ \to 2\CCZ)$ has some logical qubits reserved for running one copy of the $64\T\to2\CCZ$ protocol repeatedly, and logical qubits are reserved for two copies of the $8\CCZ\to2\CCZ$. The spatial footprint of the concatenated scheme is therefore taken to be the spatial footprint of $64\T\to2\CCZ$, plus twice the spatial footprint of $8\CCZ \to 2\CCZ$. We will come on to how the spatial footprint of an individual level, like $64\T \to 2\CCZ$ is calculated, momentarily. Let us note now that this methodology is used for all concatenated schemes. That is, there is one copy of the top-level factory that is run serially, feeding possibly multiple copies of lower-level factories. In particular, top-level states are injected serially. This is one implementation choice that we fix in this work for the sake of simplicity: others are possible.

Let us now consider the spatial footprint of an individual level, like $64\T\to2\CCZ$ or $8\CCZ\to2\CCZ$. For most protocols, this is simply the number of rows in the binary matrix describing the protocol (the number of qubit $X$ logicals and $X$ stabilizer generators). For protocols like $64\T\to2\CCZ$ that are naturally defined over qubits, it is well-known that the protocol may be run on this many logical qubits~\cite{GOSC}. In Appendix~\ref{sec:building_distillation_circuits}, we showed that the same ideas naturally transfer to our case of Galois qudit codes. That is, given a matrix $G \in \mathbb{F}_{2^s}^{m \times n}$ describing the distillation protocol, the protocol may be run on a number of logical Galois qudits equaling the number of rows, $m$, in the matrix. When run at the level of qubits, this means its spatial footprint is $s\cdot m$. As an example, the $8\CCZ\to2\CCZ$ protocol is described by a matrix over $\mathbb{F}_8$ with $3$ rows --- see Construction~\ref{con:8CCZ_to_2CCZ} --- and therefore the spatial footprint of this protocol is $9$.

There is one more consideration when calculating the spatial footprint: that of compression. Certain protocols may be run on even fewer logical qubits than the number of rows in their binary matrices~\cite{xu2026distillingmagicstatesbicycle}. For example, the matrix for the Haah-Hastings $64\T\to2\CCZ$ protocol has $17$ rows, but by writing its $X$ stabilizer generators in a particular way, and ordering its columns appropriately, it becomes clear that it may be run on only $10$.\footnote{It also turns out that our novel $64\T\to2\CCZ$ $(d=4)$ protocol --- see Construction~\ref{con:64tto2ccz} --- with the smaller any-error coefficient can be compressed from $15$ to $14$ qubits. Again, it turns out that this protocol is uncompetitive relative to the corresponding Haah-Hastings protocol for the main distillation tasks that we benchmark, and so we need not consider this.} We briefly discussed in Appendix~\ref{sec:building_distillation_circuits} how the compression technique may also be applied at the level of Galois qudits to the matrices of $\mathbb{F}_q$ describing Galois qudit codes, thus allowing for some compression. Improvements in the spatial footprint appear small, however, and so we forgo this for simplicity. The only compressed footprint that will appear in our competitive protocols will be the $64\T \to 2\CCZ$ Haah-Hastings protocol mentioned. Now, when this Haah-Hastings protocol appears at the first level of a concatenated scheme, we will always used its compressed footprint, that is, we always imagine the factory being run in its compressed form. Accordingly, for example, the spatial footprint of the $(64\T\to2\CCZ)\cdot(8\CCZ\to2\CCZ)$ factory is taken to be $10 + 2\times9 = 28$. On the other hand, if the Haah-Hastings protocol appears at any lower level of a concatenated scheme like, for example, $(14\T\to2\T)\cdot(64\T\to2\CCZ)$, we may or may not consider each $64\T\to2\CCZ$ to be compressed. We will consider multiple setups where the protocol is run in its compressed, and its uncompressed form; we will return to this subtlety when we elucidate the error calculations in Section~\ref{subsec:error_calcs}. As a final note on spatial footprints, note that the (non-distillation) synthesis conversions can be run on $3$ logical qubits in the case of $3\T\to\CS$~\cite{howard2017application}, and $4$ logical qubits in the case of $2\CS\to\CCZ$~\cite{beverland2020lower_bounds}.

We now turn to the time calculations. We aim to calculate the ``time rate'' of a concatenated scheme. As mentioned in Section~\ref{subsec:footprint_results}, the time rate is defined to be the number of average retained output states per unit time for a given scheme. We also recall that the unit of time is chosen to be one $\ket{\T}$ state injection, since our setup requires the injection of top-level states one-by-one, and we ignore the time taken to perform Cliffords, since these are architecture-specific considerations, and we wish to be as architecture-agnostic as possible. We also recall that the time taken to inject one $\ket{\CS}$ state is taken to be two units of time. Ignoring post-selection, the time-rate of a given post-selection scheme is very easily calculated, since all output states are retained. For example, one run of the $(64\T\to2\CCZ)\cdot(8\CCZ\to 2\CCZ)$ concatenation injects $64 \times 8 = 512$ $\ket{\T}$ states sequentially,\footnote{Recall that, because the output states of the top-level factory have correlated errors, we must run two of the $8\CCZ\to2\CCZ$ lower-level factory. The top-level factory must therefore be run $64$ times in total to fill out those factories inputs.} as is therefore imagined as taking $512$ units of time, when ignoring post-selection. One run of this concatenated scheme produces $2\times 2 = 4$ $\ket{\CCZ}$ states and so, ignoring post-selection, the time rate of this concatenated scheme is $4/512 = 1/128\approx 0.0078$. As another example that consider $\ket{\CS}$ state injection, consider a scheme $(16\CS\to2\CS)\cdot (6\CS\to2\CS)\cdot (2\CS \to 1\CCZ)$, where we recall that $2\CS\to1\CCZ$ is simply a synthesis scheme, not distillation. In this case, to overcome the problem of correlated errors in the top factory, two copies of the second level $6\CS \to 2\CS$ distillation scheme are run in parallel. Each of the two outputs from each of the two second-level factories are fed to separate $2\CS \to 1\CCZ$ protocols, meaning that four of those bottom-level protocols are run in parallel.\footnote{Strictly speaking, because this bottom-level protocol is just synthesis, not distillation, one does not have to worry about correlated errors, and so one could run only two of these in parallel in this case. However, for simplicity in our calculations, we treat these synthesis protocols in the same way as the distillation protocols: separating the outputs of former levels into multiple copies of them.} The spatial footprint of this protocol is thus taken to be $10 + 2\times 6 + 4\times 4 = 38$. For its time rate, one run of this protocol requires the top-level $16\CS\to2\CS$ protocol to be run $12$ times serially, so that the two middle $6\CS\to2\CS$ protocols may each be run twice, ultimately outputting $4$ distilled $\ket{\CCZ}$ states. Because the injection of one $\ket{\CS}$ state is deemed to take two units of time, the time-rate of this protocol, ignoring post-selection, is taken to be $\frac{4}{2*(16*12)} = \frac{1}{96} \approx 0.0104$. More generally, the time-rate of a concatenated scheme, ignoring post-selection, may be easily calculated by multiplying the number of outputs at each level, multiplying the number of inputs at each level, dividing the former by the latter, and further dividing by two if the injection is that of $\ket{\CS}$ states.

Taking post-selection into account makes little difference to the time rates at the error rates we care about ($p = 10^{-3}$ and $10^{-6}$), for competitive protocols. Nevertheless, let us consider how the time rate may be calculated to first order in $p$. Higher order terms may be found via similar reasoning; we find via extensive testing that these have no meaningful effect for our competitive protocols, and at our values of $p$. To find the first order correction to the time rate, we note that retries of the protocol may be necessary in a concatenated chain because of error detections in any level. However, when considering the first order corrections to the time rate, we need only consider the error detections in the first distillation protocols of the chain, since lower levels detect errors with probabilities $\Omega(p^2)$. Therefore, to obtain the first-order correction to the time rate, we may multiply the time rate ignoring post-selection by the success probability of the first distillation protocol in the chain. Going back to our example $(64\T\to2\CCZ)\cdot (8\CCZ\to2\CCZ)$, the success probability of one run of the $64\T\to2\CCZ$ protocol, to first order in $p$, is $1-64p$. As such, the time rate, to first order in $p$, of this concatenated scheme is $\frac{1}{128}(1-64p) = \frac{1}{128}-\frac{1}{2}p$.

We consider two further examples now for completeness. Suppose that we are injecting $\ket{\CS}$ states, and our top-level protocol is $4\CS \to 1\CS$. We consider a uniform error model on injected $\ket{\CS}$ states, that is, the two qubits of the injected $\ket{\CS}$ states suffer $Z$ errors independently, each with probability $p$. With this, one may show that the $4\CS \to 1\CS$ protocol has success probability $1-8p + O(p^2)$.\footnote{It is important to note here how we binarize the protocol. Throughout, we binarize our $\mathbb{F}_4$ codes in the self-dual normal basis $\{\omega, \omega^2\}$ --- see Section~\ref{sec:CS_distillation_protocols} --- and our $\mathbb{F}_8$ codes in the self-dual normal basis given in Equation~\eqref{eq:F8_self_dual_normal_basis}. Given a self-dual basis $(\alpha_i)_{i=0}^{s-1}$ for $\mathbb{F}_{2^s}/\mathbb{F}_2$, the matrix over $\mathbb{F}_{2^s}$ describing the distillation protocol may be binarized by replacing each entry of the matrix, $\gamma$, with the $s \times s$ binary matrix which has entries $(\tr(\gamma\alpha_i\alpha_j)))$, where $i$ and $j$ run from $0$ to $s-1$.} The first-order time rate of any concatenated scheme with $4\CS\to1\CS$ at the top level may be found by multiplying its no-post-selection time rate by $(1-8p)$. The final example we give is the situation in which the concatenated chain begins with a synthesis protocol; consider, for example, that the chain begins with $(3\T \to \CS)\cdot (4\CS\to1\CS)$. There is no post-selection in a synthesis protocol, and finding the first-order correction to the time-rate requires us to consider the error probability of the top \textit{distillation} protocol in the chain. Given our usual independent $Z$ errors in injected $\ket{\T}$ states, with probability $p$, the error distribution of $3\T\to\CS$ turns out to be $Z_1, Z_2$, and $Z_1Z_2$, each with probability $p$, to leading order in $p$. With such $\CS$ states, the success probability of $4\CS \to 1\CS$ turns out to be $1-12p$, to leading order in $p$. Thus, the first-order time rate of any concatenated chain beginning with $(3\T\to\CS)\cdot (4\CS\to1\CS)$ may be found by multiplying the no-post-selection time rate by $1-12p$.

We conclude this subsection by noting that our protocols' volumetric rate may be found simply by dividing its time rate by its spatial footprint. Returning to our running example of $(64\T\to2\CCZ)\cdot(8\CCZ\to2\CCZ)$, its volumetric rate, to first order in $p$, is $\frac{1}{128\cdot28} - \frac{1}{2\cdot 28}p = \frac{1}{3584} - \frac{1}{56}p$.

\subsection{Error Calculations}\label{subsec:error_calcs}

We now make clear how we calculate the error rate of our (concatenated) protocols, when they are successful. Since one run of our schemes usually gives multiple outputs, the natural error rate to consider when benchmarking these schemes in the context of a wider quantum computational architecture is the \textit{per-state average error, conditioned on acceptance}. The best way to explain this initially is to consider the Haah-Hastings $64\T \to 2\CCZ$ protocol~\cite{Haah2018}. The probability of any error in the outputs (conditioned on acceptance) is $2944p^4$, to leading order in $p$. Breaking this down, it turns out that the probability that the probability that the ``first'' output state, but not the ``second'', has an error, is $576p^4$, and it happens that the probability that the second output state has an error, but not the first, is the same: $576p^4$. The probability that both have an error is $1792p^4$. We therefore find that the probability that the first state has an error is $2368p^4$, which happens to also be the probability that the second state has an error. Therefore, the per-state average error, conditioned on acceptance, is $2368p^4$, to leading order in $p$. Note that, by multiplying the per-state average error by the number of outputs, one obtains an upper bound on the probability of any error ($2944p^4$ in this case), but this bound is typically very loose, as one can see here, because the errors on outputted states are usually highly correlated.

There is an additional important point to make on this front, in the context of how we run our concatenated schemes. We adopt what one might call a \textit{global post-selection policy} for our concatenated protocols, which means the following. Suppose, as is often the case, that the first distillation protocol in the chain produces $k_A > 1$ outputs. Then, to overcome the problem of correlated errors, we run $k_A$ copies of the second-level factory in parallel. If one copy of the second-level factory has $n_B$ inputs, we run the first-level factory $n_B$ times, such that the second-level factories may be completed. After completing the second-level factories, there can be a subtlety. Suppose that some of the second-level factories accept, but some reject. In this case, our stricter \textit{global post-selection policy} says that we should throw away the outputs of \textit{all} second-level factories. This is in contrast to a weaker, ``local'' post-selection policy, which would call for one to only reject the outputs of the second-level factories that rejected. Our choice of global post-selection has a significant effect on the error rates of accepted states (by multiple orders of magnitude in some cases).

Let us now gain some intuition for why global post-selection improves error rates over local post-selection with an example, namely, the $(64\T\to2\CCZ)\cdot(8\CCZ\to2\CCZ)$ concatenation, where two of the second-level factory are run in parallel. Recalling the matrix for the $8\CCZ \to 2\CCZ$ protocol, see Construction~\ref{con:8CCZ_to_2CCZ}, we see that its only check, at the level of Galois qudits, is an all-ones row. This means that it is a distance $2$ protocol, and at leading order, one run of the $8\CCZ\to2\CCZ$ protocol produces errorful outputs when $2$ of its $8$ inputs have an error, \textit{and they are the same error}. For example, suppose that $2$ of the $8$ inputs to one run of this protocol have errors $e_1, e_2 \in \mathbb{F}_2^3$: $e_1, e_2 \neq 0$. There will be an error on the outputs exactly when $e_1 \neq e_2$. Thus, when running the $(64\T\to2\CCZ)\cdot(8\CCZ\to2\CCZ)$ concatenated scheme, with our global post-selection policy, the leading-order way that this can produce accepted outputs with errors is that some $2$ out of the $8$ runs of the top-level $64\T\to2\CCZ$ protocol produce outputs with errors $(e_1\mid f_1) \in \mathbb{F}_2^3 \oplus \mathbb{F}_2^3$ and $(e_2 \mid f_2) \in \mathbb{F}_2^3 \oplus \mathbb{F}_2^3$, where $(e_1\mid f_1) \neq (0 \mid 0)$, and $(e_2 \mid f_2) \neq (0 \mid 0)$, and moreover $e_1 = e_2$, and $f_1 = f_2$. On the other hand, with local post-selection, accepting errorful outputs states only requires $e_1 = e_2$, or separately, $f_1 = f_2$. The result of this distinction is apparent in the combinatorics. Fixing $2$ out of the $8$ runs of the $64\T\to2\CCZ$ factory to fail, considering the $2944^2 = 8,667,136$ ways they can fail, one finds that $2,056,192$ of them result in accepted errorful outputs under local post-selection, where errors need only agree on one second-level factory, whereas only $385,024$ of them result in accepted errorful outputs under global post-selection, where errors must agree on both second-level factories.\footnote{A few caveats are in order for the exposition of this paragraph. The first is that we are only considering leading-order behaviour. The same intuitions hold for higher-order behaviour, but the calculations become difficult to neatly expose; we also find throughout the analysis of our competitive protocols that higher-order terms provide minimal corrections to the leading-order behaviour. The second caveat is that what we have considered in this last calculation is the any-error probability (because it is easier to explain), whereas what we actually want is the per-state average error probability, although the intuition for the latter is very much the same. The final caveat is that, very shortly, we will discuss how we will be performing certain state twirling/averaging in order to make our largest calculations tractable, and the resulting performance as strong as possible; what we have shown above does not consider this.}

The above simple example shows a case where global post-selection provides a moderate improvement to the error rate over local post-selection, although in other cases the improvement is much more substantial. We therefore deem this choice to be worthwhile, despite the drawback that making the calculations under the global post-selection policy will be much harder than under the local post-selection policy. Indeed, with the definitions given, it is possible in-principle to directly calculate the error rate of any of our concatenated schemes by, for example, forming one large binary matrix describing the protocol, and enumerating error paths. However, for our largest protocols, this becomes computationally intractable, especially if we want to obtain higher-order terms.

To make our error rate computations tractable, and also to obtain the strongest possible performance, we invoke several twirling/averaging procedures, to simplify the problem. In the context of magic state distillation, the most well-known form of twirling is the Clifford-twirling of diagonal magic states~\cite{MSD, bravyi2012magic}, so that we may treat our magic states as only having $Z$ errors, without $X$ errors. We only work with diagonal magic states in this paper, and we assume throughout that they are twirled in this way. There are further, less standard, twirling/averaging techniques employed, to make the calculations more tractable, and to ensure the strongest possible performance, as follows:
\begin{enumerate}
    \item\label{randomisation:perm_twirl} For every (noisy) multi-qubit magic state (like $\ket{\CS}$ and $\ket{\CCZ}$), at every stage, we randomly permute the qubits of the multi-qubit magic state ($2$ or $3$ in the case of $\ket{\CS}$ and $\ket{CCZ}$, respectively). This has the effect of uniformizing the probabilities of errors over their weights. For example, for a (noisy) $\ket{\CCZ}$ state, we do not need to keep track of the probabilities of $7$ $Z$ errors, but only the probabilities of $3$ different error weights. We may call this randomization a \textit{permutation twirl}.
    \item\label{randomisation:routing} Between different layers in a concatenation, we randomly route different magic states into different copies of the next-level factory. For example, in the $(20\T\to3\CCZ$~\cite{campbell2017unified}$)\cdot(8\CCZ\to2\CCZ)$ concatenation, since the first level has three outputs, we run three copies of the second level $8\CCZ\to2\CCZ$ factory. For each of the $8$ successful runs of the $20\T\to3\CCZ$, the $3$ outputs are randomly routed to the three copies of the second-level factory. This means we only have to keep track of per-state average error probabilities between concatenation levels. In the case of multi-qubit magic states being passed between levels, like $\ket{\CS}$ and $\ket{\CCZ}$, we must only keep track of per-state average error probabilities of each weight.
    \item\label{randomisation:column} For every copy of every distillation step in a concatenated scheme, that is not at the top level, we randomly permute the columns of the matrix describing that copy of the protocol. For example, in the case of a concatenation of $(14\T\to2\T)\cdot(64\T\to2\CCZ)$, there are two copies of the second-level factory, and we consider randomly permuting the columns of the two $64\T\to2\CCZ$ factories independently, averaging the resulting output error rates over this random choice. Note that, when such a protocol arises from an $\mathbb{F}_4$ or $\mathbb{F}_8$-code (meaning it is used for $\ket{\CS}$ or $\ket{\CCZ}$ distillation, respectively), we randomly permute the columns of the matrix over $\mathbb{F}_{2^s}$, before binarizing the matrix (rather than binarizing and then randomizing), since each column of the $\mathbb{F}_{2^s}$-matrix corresponds to one input multi-qubit magic state consumption, and it would not make sense to split the consumption of individual input states.
\end{enumerate}
One may wonder what it actually means, in terms of actually running a concatenated protocol, to make such randomizations. Formally, what we are doing here, is considering the above randomizations, and we are going to calculate the (per-state average) output error rate \textit{in expectation over the above random choices}. Having calculated a value (or at least an upper bound) on this expected value, it necessarily means that there is \textit{some} choice of these randomizations that leads to the (per-state average) output error rate being at most this value.\footnote{This is not a deep statement; it is simply the statement that if $X$ is a random variable, and $f$ is a function, then $\mathbb{E}_X\left[f(X)\right] \leq V \implies$ there is \textit{some} choice for $X$ giving $f(X)$ at most $V$; this is exactly the same idea used in~\cite{wills2026concatenating} to make the calculation of the error rate of a concatenated scheme more tractable.}

Of the above three randomizations, it will likely be clear that the first two make the error calculations more tractable, with the intuition being that they allow us to keep track of less information. The third randomization, point~\ref{randomisation:column}, does not always make the calculation more tractable, but can dramatically improve performance, as we describe now with an example.

\begin{example}\label{example:14T2T*64T2CCZ}
We consider the leading order term of the per-state average error probability of the $(14\T\to2\T$~\cite{bravyi2012magic}$)\cdot(64\T\to2\CCZ$~\cite{Haah2018}$)$ concatenation. First, it turns out that, at leading order, all accepted logical errors on the $14\T\to2\T$ protocol affect both outputs. Indeed, with probability $7p^2 + O(p^3)$, one run of the top-level protocol fails, and it leaves $Z$ errors on both output $\ket{\T}$ states. Because of this, and because $\ket{\T}$ states have only one qubit, the first two randomizations above have no effect. First, let us suppose that we did \textit{not} apply randomization~\ref{randomisation:column} above. We imagine that the two copies of the Haah-Hastings protocol are run with a common ordering of their columns. Any one copy of $64\T\to2\CCZ$ fails at fourth order. It turns out that out of $\left(\begin{smallmatrix}
    64\\4
\end{smallmatrix}\right) = 635376$ possible size-$4$ errors, $3248$ cause the protocol to accept. Of those, $304$ cause no logical error on the two output $\ket{\CCZ}$ states, $1152$ cause exactly one of them to have an error, and $1792$ cause both of them to be errorful. Thus, considering the full concatenate scheme, two of the four outputs have an error with probability $1152\cdot(7p^2)^4 = 2765952p^8$, whereas all four outputs have an error with probability $1792\cdot(7p^2)^4 = 4302592p^8$, at leading order. The leading order per-state average output of the whole concatenated scheme is therefore
\begin{equation}\label{eq:error_rate_worse}
    \frac{2765952}{2}p^8 + 4302592p^8 = 5685568p^8.
\end{equation}
Let us now instead suppose that the randomization of point~\ref{randomisation:column} above has been applied. Some $4$ out of the $64$ runs of the top-level factory fail with probability $M\cdot (7p^2)^4$, where $M \coloneq \left(\begin{smallmatrix}64\\4\end{smallmatrix}\right) = 635376$. For a leading-order failure, each copy of the Haah-Hastings protocol sees error sets of size $4$, independently distributed randomly amongst their inputs of size $4$. The number of such sets that cause one of the four outputs to be errorful is $2\times 304\times1152 = 700416$, the number that cause two of the four outputs to be errorful is $1152^2 + 2\times 304\times1792 = 2416640$, the number that cause three to be errorful is $2\times 1152\times 1792 = 4128768$, and $1792^2 = 3211264$ cause all four to be errorful. Given that four of the first-level factories fail, the probability of exactly one, two, three, and four bad outputs are thus $\frac{700416}{M^2}$, $\frac{2416640}{M^2}$, $\frac{4128768}{M^2}$, and $\frac{3211264}{M^2}$, respectively. The leading-order per-state average error probability is thus
\begin{equation}\label{eq:error_rate_better}
    M\cdot(7p^2)^4\cdot\left[\frac{1}{4}\frac{700416}{M^2} + \frac{2}{4}\frac{2416640}{M^2} + \frac{3}{4}\frac{4128768}{M^2} + \frac{4}{4}\frac{3211264}{M^2}\right] \approx 29064.25p^8.
\end{equation}
We see that by randomizing the columns of the second-level factories, we have improved the error rate by almost a factor of $200$.
\end{example}
There is a further consideration to be aware of, considering Example~\ref{example:14T2T*64T2CCZ} above. We recall from Section~\ref{subsec:spacetime_calcs} above that the $64\T\to2\CCZ$ protocol can be \textit{compressed} from a spatial footprint of $17$ logical qubits to $10$. However, doing so requires us to run the protocol in a very particular way, indeed, it requires a particular ordering of its columns. Two possibilities thus arise for this particular concatenated scheme; one can either consider a minimal space layout, where both copies of the second protocol are run in the compressed form, but since the two protocols share a common column ordering, the leading-order error rate will be as in Equation~\eqref{eq:error_rate_worse}. On the other hand, a different layout, with higher spatial footprint, can achieve leading-order error rate at most as in Equation~\eqref{eq:error_rate_better}. Note that the only thing that mattered in our calculation was the \textit{relative} ordering of the columns of the two protocols. We can, therefore, fix the ordering of the columns of one copy of this protocol to one allowing compression, and only randomize the other. Since the spatial footprint of $14\T\to2\T$ is $5$, the spatial footprint of the former layout is $5 + 2\times10 = 25$, whereas for the latter layout, it is $5+10+17 = 32$.

The impact of these layout considerations on our broader survey is relatively minor. Indeed, the only compressible protocol that we consider for our competitive schemes is this $64\T\to2\CCZ$ protocol~\cite{Haah2018}. In addition, these considerations only appear when such a protocol is not the top-level distillation scheme in a concatenated chain, since permuting the columns of a top-level factory has no effect. Given that we only consider concatenated schemes with distance $\leq 8$, the only concatenated chains where these considerations have an effect is the concatenation of a distance $2$ $(3k+8)\T\to k\T$ protocol~\cite{bravyi2012magic} with the $64\T\to2\CCZ$ factory. For each of these, we consider a low-footprint layout, where all second-level factories have a fixed column ordering, and all are compressed, and we consider a high-footprint layout, where one second-level factory is compressed, and the rest have randomized columns (one fact can always have fixed column ordering and compressed because it is only relative column ordering that matters), and so are assumed to each occupy $17$ logical qubits. Future work may consider achieving a ``best of both worlds'' situation, where well-chosen column orderings on each factory minimizes the error rate, while allowing for some compression.

Let us provide a further example now.
\begin{example}
    We consider the leading order error probability of the $(2\CS\to\CCZ)\cdot(9\CCZ\to1\CCZ)$ concatenation. Note that the first step is a known synthesis conversion~\cite{howard2017application,beverland2020lower_bounds}, whereas the latter step is our novel protocol, see Construction~\ref{con:9-to-1-CCZ}. Beginning with the synthesis step, with independent $Z$ errors on each of the $4$ qubits of the $2$ $\ket{\CS}$ states, one checks that, at leading order, the resulting $\ket{\CCZ}$ state has $Z$ errors on it with weight $1$, $2$ and $3$ with probability $w_1\coloneq \frac{5}{2}p$, $w_2\coloneq p$, and $w_3\coloneq\frac{p}{2}$, respectively. The random permutation of the columns of the $\mathbb{F}_8$ matrix describing $9\CCZ\to1\CCZ$, as in the randomization~\ref{randomisation:column}, has no effect in this case, because there is only one second-level factory. Next, since the $9\CCZ\to1\CCZ$ protocol is distance $3$, at leading order, it requires $3$ errorful inputs to fail. As always, the columns of this protocol are randomly permuted (meaning that the columns of the $\mathbb{F}_8$-matrix are randomly permuted before binarization). It is convenient to define a function $K(a,b,c)$, where $a,b,c \in \{1,2,3\}$, and $a \leq b \leq c$. $K(a,b,c)$ is the probability, given that exactly three of the nine input $\ket{\CCZ}$ states had errors, and those errors had weights $a,b,c$, what is the probability that the $9\CCZ\to1\CCZ$ protocol suffers an undetectable logical error? Formally, we have
    \begin{equation}
        K(a,b,c) = \frac{1}{9\cdot8\cdot7\left(\begin{smallmatrix}3\\a\end{smallmatrix}\right)\left(\begin{smallmatrix}3\\b\end{smallmatrix}\right)\left(\begin{smallmatrix}3\\c\end{smallmatrix}\right)}\sum_{\substack{i,j,k, = 1, \ldots, 9\\i,j,k\text{ pairwise distinct}}}\sum_{\substack{x,y,z \;\in\; \mathbb{F}_2^3\\|x|=a, \;|y|=b,\; |z|=c}}\mathbb{I}\left[\substack{E(x,y,z;i,j,k)\\\text{ is an undetectable logical error}}\right].
    \end{equation}
    Here, we write, for convenience, $E(x,y,z;i,j,k)$ for the input error to the binarized $9\CCZ\to1\CCZ$ protocol which has $x,y,z$ in the $i,j,k$'th positions, respectively, and $\mathbb{I}\left[\cdot\right]$ is the indicator function.\footnote{It is worth emphasizing that this function, and more generally the output error rate of the concatenated protocols involving the $\mathbb{F}_4$ and $\mathbb{F}_8$ codes, do depend in general on the way that the matrices are binarized. Throughout, we consider binarizing $\mathbb{F}_4$ matrices in the self-dual normal basis $\{\omega, \omega^2\}$; see Section~\ref{sec:CS_distillation_protocols}, and $\mathbb{F}_8$ matrices in the self-dual normal basis given in Equation~\eqref{eq:F8_self_dual_normal_basis}. The procedure for binarizing a matrix in this basis is shown above and in Equation~\eqref{eq:qudit_to_qubit_expansion}, and the procedure in $\mathbb{F}_4$ is entirely analogous.} The matrix is small, and we may easily calculate the values $K(a,b,c)$, as shown in Table~\ref{tab:9ccz-to-1ccz_conditional_failure_function}.
    \begin{table}[ht]
        \centering
        
        \begin{tabular}{@{}cc@{\qquad}cc@{}}
        \toprule
        \((a,b,c)\) & \(K(a,b,c)\) & \((a,b,c)\) & \(K(a,b,c)\)\\
        \midrule
        \((1,1,1)\) & \(1/63\)  & \((1,1,2)\) & \(17/567\)\\
        \((1,1,3)\) & \(1/189\) & \((1,2,2)\) & \(1/189\)\\
        \((1,2,3)\) & \(1/27\)  & \((1,3,3)\) & \(1/63\)\\
        \((2,2,2)\) & \(23/567\)& \((2,2,3)\) & \(1/189\)\\
        \((2,3,3)\) & \(1/63\)  & \((3,3,3)\) & \(1/21\)\\
        \bottomrule
        \end{tabular}
        \captionsetup{labelsep=none}
        \caption{}
        \label{tab:9ccz-to-1ccz_conditional_failure_function}
    \end{table}
    Next, we let $P_{abc}$ be the number of distinct permutations of the numbers $a\leq b\leq c$; for example $P_{123} = 6$, but $P_{112} = 3$. Then, we have the (per-state average) leading order error rate of the concatenated protocol:
    \begin{equation}
        \begin{pmatrix}
            9\\3
        \end{pmatrix}\sum_{1\leq a\leq b \leq c \leq 3}P_{abc}w_aw_bw_cK(a,b,c) = \frac{2888}{27}p^3.
    \end{equation}
\end{example}
\subsection{Tables of Schemes}\label{subsec:tables_footprints_errors}

We now tabulate the statistics relevant to some of our competitive protocols in Tables~\ref{tab:pareto-t-to-cs},~\ref{tab:pareto-t-to-ccz},~\ref{tab:pareto-cs-to-cs}, and~\ref{tab:pareto-cs-to-ccz}. To keep the tables of reasonable size, we only tabulate the statistics of protocols that appear in some Pareto frontier over a range that covers the $x$-axis for at least half an order of magnitude. For each concatenated scheme, we show the protocol chain, the leading order term for the (per-state average) output error, as well as the actual value at physical input error rates $10^{-3}$ and $10^{-6}$, the spatial footprint, and the time and volume rates. The values of output error rate are calculated using two-term expansions (leading order and second-leading order). Because of different amounts of post-selection, the time and volume rates do depend mildly on input physical error rate. We display values for input physical error rate $10^{-3}$, but the values at $10^{-6}$ are typically only different by a few percent.

\begin{table*}[ht]
\centering
\begingroup
\small
\scriptsize
\setlength{\tabcolsep}{2.5pt}
\renewcommand{\arraystretch}{1.14}
\resizebox{\textwidth}{!}{%
\begin{tabular}{@{}>{\raggedright\arraybackslash}p{0.34\textwidth}cccccc@{}}
\toprule
\multicolumn{1}{c}{\multirow[c]{2}{*}[-1ex]{Protocol chain}}
&
\multirow[c]{2}{*}[-0.9ex]{\makecell[c]{Leading-order\\output error}}
&
\multicolumn{2}{c}{Output error}
&
\multirow[c]{2}{*}[-0.9ex]{\makecell[c]{Spatial\\footprint}}
&
\multicolumn{2}{c}{Resource rate}
\\
\cmidrule(lr){3-4}
\cmidrule(lr){6-7}
&
&
\(p=10^{-3}\)
&
\(p=10^{-6}\)
&
&
\makecell[c]{Time}
&
\makecell[c]{Volume}
\\
\midrule
\((3\mathrm{T}\!\to\!\mathrm{CS})\mathbin{\ast}\allowbreak{}\boldsymbol{(72\mathrm{CS}\!\to\!35\mathrm{CS})}\) & \(2874p^{2}\) & \(3.08\!\times\!10^{-3}\) & \(2.87\!\times\!10^{-9}\) & 75 & \(1.27\!\times\!10^{-1}\) & \(1.69\!\times\!10^{-3}\) \\
\((3\mathrm{T}\!\to\!\mathrm{CS})\) & \(3p\) & \(3.00\!\times\!10^{-3}\) & \(3.00\!\times\!10^{-6}\) & 3 & \(3.33\!\times\!10^{-1}\) & \(1.11\!\times\!10^{-1}\) \\
\((3\mathrm{T}\!\to\!\mathrm{CS})\mathbin{\ast}\allowbreak{}\boldsymbol{(28\mathrm{CS}\!\to\!13\mathrm{CS})}\) & \(498p^{2}\) & \(5.11\!\times\!10^{-4}\) & \(4.98\!\times\!10^{-10}\) & 31 & \(1.42\!\times\!10^{-1}\) & \(4.57\!\times\!10^{-3}\) \\
\((12\mathrm{T}\!\to\!\mathrm{CS})\) & \(18p^{2}\) & \(1.81\!\times\!10^{-5}\) & \(1.80\!\times\!10^{-11}\) & 4 & \(8.23\!\times\!10^{-2}\) & \(2.06\!\times\!10^{-2}\) \\
\((3\mathrm{T}\!\to\!\mathrm{CS})\mathbin{\ast}\allowbreak{}\boldsymbol{(21\mathrm{CS}\!\to\!4\mathrm{CS})}\) & \(414p^{3}\) & \(4.22\!\times\!10^{-7}\) & \(4.14\!\times\!10^{-16}\) & 17 & \(5.95\!\times\!10^{-2}\) & \(3.50\!\times\!10^{-3}\) \\
\((3\mathrm{T}\!\to\!\mathrm{CS})\mathbin{\ast}\allowbreak{}\boldsymbol{(12\mathrm{CS}\!\to\!5\mathrm{CS})}\mathbin{\ast}\allowbreak{}\boldsymbol{(12\mathrm{CS}\!\to\!5\mathrm{CS})}\) & \(10^{3.18}p^{4}\) & \(1.55\!\times\!10^{-9}\) & \(1.52\!\times\!10^{-21}\) & 75 & \(5.58\!\times\!10^{-2}\) & \(7.44\!\times\!10^{-4}\) \\
\((12\mathrm{T}\!\to\!\mathrm{CS})\mathbin{\ast}\allowbreak{}\boldsymbol{(6\mathrm{CS}\!\to\!2\mathrm{CS})}\) & \(1296p^{4}\) & \(1.31\!\times\!10^{-9}\) & \(1.30\!\times\!10^{-21}\) & 10 & \(2.74\!\times\!10^{-2}\) & \(2.74\!\times\!10^{-3}\) \\
\((12\mathrm{T}\!\to\!\mathrm{CS})\mathbin{\ast}\allowbreak{}\boldsymbol{(21\mathrm{CS}\!\to\!4\mathrm{CS})}\) & \(10^{4.95}p^{6}\) & \(9.07\!\times\!10^{-14}\) & \(8.94\!\times\!10^{-32}\) & 18 & \(1.57\!\times\!10^{-2}\) & \(8.71\!\times\!10^{-4}\) \\
\((3\mathrm{T}\!\to\!\mathrm{CS})\mathbin{\ast}\allowbreak{}\boldsymbol{(21\mathrm{CS}\!\to\!4\mathrm{CS})}\mathbin{\ast}\allowbreak{}\boldsymbol{(14\mathrm{CS}\!\to\!6\mathrm{CS})}\) & \(10^{4.69}p^{6}\) & \(5.04\!\times\!10^{-14}\) & \(4.86\!\times\!10^{-32}\) & 73 & \(2.55\!\times\!10^{-2}\) & \(3.49\!\times\!10^{-4}\) \\
\((3\mathrm{T}\!\to\!\mathrm{CS})\mathbin{\ast}\allowbreak{}\boldsymbol{(10\mathrm{CS}\!\to\!4\mathrm{CS})}\mathbin{\ast}\allowbreak{}\boldsymbol{(21\mathrm{CS}\!\to\!4\mathrm{CS})}\) & \(10^{3.57}p^{6}\) & \(3.82\!\times\!10^{-15}\) & \(3.74\!\times\!10^{-33}\) & 69 & \(2.46\!\times\!10^{-2}\) & \(3.57\!\times\!10^{-4}\) \\
\((3\mathrm{T}\!\to\!\mathrm{CS})\mathbin{\ast}\allowbreak{}\boldsymbol{(12\mathrm{CS}\!\to\!5\mathrm{CS})}\mathbin{\ast}\allowbreak{}\boldsymbol{(16\mathrm{CS}\!\to\!2\mathrm{CS})}\) & \(10^{3.01}p^{6}\) & \(1.05\!\times\!10^{-15}\) & \(1.03\!\times\!10^{-33}\) & 65 & \(1.67\!\times\!10^{-2}\) & \(2.57\!\times\!10^{-4}\) \\
\((12\mathrm{T}\!\to\!\mathrm{CS})\mathbin{\ast}\allowbreak{}\boldsymbol{(4\mathrm{CS}\!\to\!\mathrm{CS})}\mathbin{\ast}\allowbreak{}\boldsymbol{(6\mathrm{CS}\!\to\!2\mathrm{CS})}\) & \(10^{6.23}p^{8}\) & \(1.71\!\times\!10^{-18}\) & \(1.68\!\times\!10^{-42}\) & 14 & \(6.86\!\times\!10^{-3}\) & \(4.90\!\times\!10^{-4}\) \\
\((18\mathrm{T}\!\to\!2\mathrm{CS})\mathbin{\ast}\allowbreak{}\boldsymbol{(6\mathrm{CS}\!\to\!2\mathrm{CS})}\mathbin{\ast}\allowbreak{}\boldsymbol{(14\mathrm{CS}\!\to\!6\mathrm{CS})}\) & \(10^{5.84}p^{8}\) & \(7.09\!\times\!10^{-19}\) & \(6.91\!\times\!10^{-43}\) & 74 & \(1.56\!\times\!10^{-2}\) & \(2.11\!\times\!10^{-4}\) \\
\((18\mathrm{T}\!\to\!2\mathrm{CS})\mathbin{\ast}\allowbreak{}\boldsymbol{(8\mathrm{CS}\!\to\!3\mathrm{CS})}\mathbin{\ast}\allowbreak{}\boldsymbol{(8\mathrm{CS}\!\to\!3\mathrm{CS})}\) & \(10^{5.14}p^{8}\) & \(1.41\!\times\!10^{-19}\) & \(1.37\!\times\!10^{-43}\) & 70 & \(1.53\!\times\!10^{-2}\) & \(2.19\!\times\!10^{-4}\) \\
\((20\mathrm{T}\!\to\!4\mathrm{T})\mathbin{\ast}\allowbreak{}(18\mathrm{T}\!\to\!2\mathrm{CS})\mathbin{\ast}\allowbreak{}\boldsymbol{(4\mathrm{CS}\!\to\!\mathrm{CS})}\) & \(10^{2.96}p^{8}\) & \(9.15\!\times\!10^{-22}\) & \(9.08\!\times\!10^{-46}\) & 63 & \(5.44\!\times\!10^{-3}\) & \(8.64\!\times\!10^{-5}\) \\
\bottomrule
\end{tabular}%
}
\endgroup
\caption{Statistics for a selection of competitive protocols taking injected $\ket{\T}$ states as input, and distilling $\ket{\CS}$ states as output. We display those protocols that appear in one of our Pareto frontiers over a range at least half an order of magnitude on the $x$-axis.}\label{tab:pareto-t-to-cs}
\end{table*}

\begin{table*}[ht]
\centering
\begingroup
\small
\scriptsize
\setlength{\tabcolsep}{2.5pt}
\renewcommand{\arraystretch}{1.14}
\resizebox{\textwidth}{!}{%
\begin{tabular}{@{}>{\raggedright\arraybackslash}p{0.34\textwidth}cccccc@{}}
\toprule
\multicolumn{1}{c}{\multirow[c]{2}{*}[-1ex]{Protocol chain}}
&
\multirow[c]{2}{*}[-0.9ex]{\makecell[c]{Leading-order\\output error}}
&
\multicolumn{2}{c}{Output error}
&
\multirow[c]{2}{*}[-0.9ex]{\makecell[c]{Spatial\\footprint}}
&
\multicolumn{2}{c}{Resource rate}
\\
\cmidrule(lr){3-4}
\cmidrule(lr){6-7}
&
&
\(p=10^{-3}\)
&
\(p=10^{-6}\)
&
&
\makecell[c]{Time}
&
\makecell[c]{Volume}
\\
\midrule
\((146\mathrm{T}\!\to\!24\mathrm{CCZ})\) & \(4030p^{2}\) & \(4.04\!\times\!10^{-3}\) & \(4.03\!\times\!10^{-9}\) & 73 & \(1.40\!\times\!10^{-1}\) & \(1.92\!\times\!10^{-3}\) \\
\((44\mathrm{T}\!\to\!7\mathrm{CCZ})\) & \(460p^{2}\) & \(4.61\!\times\!10^{-4}\) & \(4.60\!\times\!10^{-10}\) & 22 & \(1.52\!\times\!10^{-1}\) & \(6.91\!\times\!10^{-3}\) \\
\((8\mathrm{T}\!\to\!\mathrm{CCZ})\) & \(28p^{2}\) & \(2.81\!\times\!10^{-5}\) & \(2.80\!\times\!10^{-11}\) & 4 & \(1.24\!\times\!10^{-1}\) & \(3.10\!\times\!10^{-2}\) \\
\((20\mathrm{T}\!\to\!3\mathrm{CCZ})\mathbin{\ast}\allowbreak{}\boldsymbol{(20\mathrm{CCZ}\!\to\!6\mathrm{CCZ})}\) & \(10^{4.07}p^{4}\) & \(1.17\!\times\!10^{-8}\) & \(1.17\!\times\!10^{-20}\) & 73 & \(4.41\!\times\!10^{-2}\) & \(6.04\!\times\!10^{-4}\) \\
\((64\mathrm{T}\!\to\!2\mathrm{CCZ})\) & \(2368p^{4}\) & \(2.38\!\times\!10^{-9}\) & \(2.37\!\times\!10^{-21}\) & 10 & \(2.92\!\times\!10^{-2}\) & \(2.92\!\times\!10^{-3}\) \\
\((38\mathrm{T}\!\to\!6\mathrm{CCZ})\mathbin{\ast}\allowbreak{}\boldsymbol{(8\mathrm{CCZ}\!\to\!2\mathrm{CCZ})}\) & \(10^{3.24}p^{4}\) & \(1.75\!\times\!10^{-9}\) & \(1.74\!\times\!10^{-21}\) & 73 & \(3.80\!\times\!10^{-2}\) & \(5.20\!\times\!10^{-4}\) \\
\((8\mathrm{T}\!\to\!\mathrm{CCZ})\mathbin{\ast}\allowbreak{}\boldsymbol{(9\mathrm{CCZ}\!\to\!\mathrm{CCZ})}\) & \(10^{4.58}p^{6}\) & \(3.79\!\times\!10^{-14}\) & \(3.76\!\times\!10^{-32}\) & 13 & \(1.38\!\times\!10^{-2}\) & \(1.06\!\times\!10^{-3}\) \\
\((20\mathrm{T}\!\to\!3\mathrm{CCZ})\mathbin{\ast}\allowbreak{}\boldsymbol{(21\mathrm{CCZ}\!\to\!3\mathrm{CCZ})}\) & \(10^{4.08}p^{6}\) & \(1.20\!\times\!10^{-14}\) & \(1.19\!\times\!10^{-32}\) & 64 & \(2.10\!\times\!10^{-2}\) & \(3.28\!\times\!10^{-4}\) \\
\((8\mathrm{T}\!\to\!\mathrm{CCZ})\mathbin{\ast}\allowbreak{}\boldsymbol{(64\mathrm{CCZ}\!\to\!8\mathrm{CCZ})}\) & \(10^{7.36}p^{8}\) & \(2.29\!\times\!10^{-17}\) & \(2.27\!\times\!10^{-41}\) & 43 & \(1.55\!\times\!10^{-2}\) & \(3.60\!\times\!10^{-4}\) \\
\((64\mathrm{T}\!\to\!2\mathrm{CCZ})\mathbin{\ast}\allowbreak{}\boldsymbol{(8\mathrm{CCZ}\!\to\!2\mathrm{CCZ})}\) & \(10^{6.83}p^{8}\) & \(6.82\!\times\!10^{-18}\) & \(6.77\!\times\!10^{-42}\) & 28 & \(7.31\!\times\!10^{-3}\) & \(2.61\!\times\!10^{-4}\) \\
\((8\mathrm{T}\!\to\!\mathrm{CCZ})\mathbin{\ast}\allowbreak{}\boldsymbol{(14\mathrm{CCZ}\!\to\!4\mathrm{CCZ})}\mathbin{\ast}\allowbreak{}\boldsymbol{(8\mathrm{CCZ}\!\to\!2\mathrm{CCZ})}\) & \(10^{6.16}p^{8}\) & \(1.44\!\times\!10^{-18}\) & \(1.43\!\times\!10^{-42}\) & 55 & \(8.86\!\times\!10^{-3}\) & \(1.61\!\times\!10^{-4}\) \\
\((14\mathrm{T}\!\to\!2\mathrm{T})\mathbin{\ast}\allowbreak{}(64\mathrm{T}\!\to\!2\mathrm{CCZ})\) & \(10^{4.46}p^{8}\) & \(2.93\!\times\!10^{-20}\) & \(2.91\!\times\!10^{-44}\) & 32 & \(4.40\!\times\!10^{-3}\) & \(1.38\!\times\!10^{-4}\) \\
\((20\mathrm{T}\!\to\!4\mathrm{T})\mathbin{\ast}\allowbreak{}(64\mathrm{T}\!\to\!2\mathrm{CCZ})\) & \(10^{3.47}p^{8}\) & \(2.97\!\times\!10^{-21}\) & \(2.94\!\times\!10^{-45}\) & 68 & \(6.12\!\times\!10^{-3}\) & \(9.01\!\times\!10^{-5}\) \\
\bottomrule
\end{tabular}%
}
\endgroup
\caption{Statistics for a selection of competitive protocols taking injected $\ket{\T}$ states as input, and distilling $\ket{\CCZ}$ states as output. We display those protocols that appear in one of our Pareto frontiers over a range at least half an order of magnitude on the $x$-axis. The $((3k+8)\T\to k\T)*(64\T\to2\CCZ)$ entries use the higher-footprint randomized column ordering layout.}
\label{tab:pareto-t-to-ccz}
\end{table*}

\begin{table*}[ht]
\centering
\begingroup
\small
\scriptsize
\setlength{\tabcolsep}{2.5pt}
\renewcommand{\arraystretch}{1.14}
\resizebox{\textwidth}{!}{%
\begin{tabular}{@{}>{\raggedright\arraybackslash}p{0.34\textwidth}cccccc@{}}
\toprule
\multicolumn{1}{c}{\multirow[c]{2}{*}[-1ex]{Protocol chain}}
&
\multirow[c]{2}{*}[-0.9ex]{\makecell[c]{Leading-order\\output error}}
&
\multicolumn{2}{c}{Output error}
&
\multirow[c]{2}{*}[-0.9ex]{\makecell[c]{Spatial\\footprint}}
&
\multicolumn{2}{c}{Resource rate}
\\
\cmidrule(lr){3-4}
\cmidrule(lr){6-7}
&
&
\(p=10^{-3}\)
&
\(p=10^{-6}\)
&
&
\makecell[c]{Time}
&
\makecell[c]{Volume}
\\
\midrule
\(\boldsymbol{(74\mathrm{CS}\!\to\!36\mathrm{CS})}\) & \(10^{3.31}p^{2}\) & \(2.02\!\times\!10^{-3}\) & \(2.02\!\times\!10^{-9}\) & 74 & \(2.07\!\times\!10^{-1}\) & \(2.80\!\times\!10^{-3}\) \\
\(\boldsymbol{(34\mathrm{CS}\!\to\!16\mathrm{CS})}\) & \(472p^{2}\) & \(4.73\!\times\!10^{-4}\) & \(4.72\!\times\!10^{-10}\) & 34 & \(2.19\!\times\!10^{-1}\) & \(6.45\!\times\!10^{-3}\) \\
\(\boldsymbol{(4\mathrm{CS}\!\to\!\mathrm{CS})}\) & \(12p^{2}\) & \(1.20\!\times\!10^{-5}\) & \(1.20\!\times\!10^{-11}\) & 4 & \(1.24\!\times\!10^{-1}\) & \(3.10\!\times\!10^{-2}\) \\
\(\boldsymbol{(21\mathrm{CS}\!\to\!4\mathrm{CS})}\) & \(123p^{3}\) & \(1.25\!\times\!10^{-7}\) & \(1.23\!\times\!10^{-16}\) & 14 & \(9.12\!\times\!10^{-2}\) & \(6.52\!\times\!10^{-3}\) \\
\(\boldsymbol{(10\mathrm{CS}\!\to\!4\mathrm{CS})}\mathbin{\ast}\allowbreak{}\boldsymbol{(16\mathrm{CS}\!\to\!7\mathrm{CS})}\) & \(10^{3.21}p^{4}\) & \(1.64\!\times\!10^{-9}\) & \(1.63\!\times\!10^{-21}\) & 74 & \(8.58\!\times\!10^{-2}\) & \(1.16\!\times\!10^{-3}\) \\
\(\boldsymbol{(16\mathrm{CS}\!\to\!2\mathrm{CS})}\) & \(944p^{4}\) & \(9.48\!\times\!10^{-10}\) & \(9.44\!\times\!10^{-22}\) & 10 & \(6.05\!\times\!10^{-2}\) & \(6.05\!\times\!10^{-3}\) \\
\(\boldsymbol{(4\mathrm{CS}\!\to\!\mathrm{CS})}\mathbin{\ast}\allowbreak{}\boldsymbol{(21\mathrm{CS}\!\to\!4\mathrm{CS})}\) & \(10^{4.42}p^{6}\) & \(2.67\!\times\!10^{-14}\) & \(2.65\!\times\!10^{-32}\) & 18 & \(2.36\!\times\!10^{-2}\) & \(1.31\!\times\!10^{-3}\) \\
\(\boldsymbol{(21\mathrm{CS}\!\to\!4\mathrm{CS})}\mathbin{\ast}\allowbreak{}\boldsymbol{(14\mathrm{CS}\!\to\!6\mathrm{CS})}\) & \(10^{3.63}p^{6}\) & \(4.41\!\times\!10^{-15}\) & \(4.29\!\times\!10^{-33}\) & 70 & \(3.91\!\times\!10^{-2}\) & \(5.59\!\times\!10^{-4}\) \\
\(\boldsymbol{(16\mathrm{CS}\!\to\!2\mathrm{CS})}\mathbin{\ast}\allowbreak{}\boldsymbol{(32\mathrm{CS}\!\to\!15\mathrm{CS})}\) & \(10^{7.20}p^{8}\) & \(1.58\!\times\!10^{-17}\) & \(1.57\!\times\!10^{-41}\) & 74 & \(2.84\!\times\!10^{-2}\) & \(3.83\!\times\!10^{-4}\) \\
\(\boldsymbol{(16\mathrm{CS}\!\to\!2\mathrm{CS})}\mathbin{\ast}\allowbreak{}\boldsymbol{(6\mathrm{CS}\!\to\!2\mathrm{CS})}\) & \(10^{5.95}p^{8}\) & \(8.98\!\times\!10^{-19}\) & \(8.91\!\times\!10^{-43}\) & 22 & \(2.02\!\times\!10^{-2}\) & \(9.17\!\times\!10^{-4}\) \\
\(\boldsymbol{(6\mathrm{CS}\!\to\!2\mathrm{CS})}\mathbin{\ast}\allowbreak{}\boldsymbol{(6\mathrm{CS}\!\to\!2\mathrm{CS})}\mathbin{\ast}\allowbreak{}\boldsymbol{(14\mathrm{CS}\!\to\!6\mathrm{CS})}\) & \(10^{5.13}p^{8}\) & \(1.37\!\times\!10^{-19}\) & \(1.36\!\times\!10^{-43}\) & 74 & \(2.35\!\times\!10^{-2}\) & \(3.18\!\times\!10^{-4}\) \\
\(\boldsymbol{(6\mathrm{CS}\!\to\!2\mathrm{CS})}\mathbin{\ast}\allowbreak{}\boldsymbol{(8\mathrm{CS}\!\to\!3\mathrm{CS})}\mathbin{\ast}\allowbreak{}\boldsymbol{(8\mathrm{CS}\!\to\!3\mathrm{CS})}\) & \(10^{4.43}p^{8}\) & \(2.72\!\times\!10^{-20}\) & \(2.70\!\times\!10^{-44}\) & 70 & \(2.32\!\times\!10^{-2}\) & \(3.31\!\times\!10^{-4}\) \\
\bottomrule
\end{tabular}%
}
\endgroup
\caption{Statistics for a selection of competitive protocols taking injected $\ket{\CS}$ states as input, and distilling $\ket{\CS}$ states as output. We display those protocols that appear in one of our Pareto frontiers over a range at least half an order of magnitude on the $x$-axis.}
\label{tab:pareto-cs-to-cs}
\end{table*}

\begin{table*}[ht]
\centering
\begingroup
\small
\scriptsize
\setlength{\tabcolsep}{2.5pt}
\renewcommand{\arraystretch}{1.14}
\resizebox{\textwidth}{!}{%
\begin{tabular}{@{}>{\raggedright\arraybackslash}p{0.34\textwidth}cccccc@{}}
\toprule
\multicolumn{1}{c}{\multirow[c]{2}{*}[-1ex]{Protocol chain}}
&
\multirow[c]{2}{*}[-0.9ex]{\makecell[c]{Leading-order\\output error}}
&
\multicolumn{2}{c}{Output error}
&
\multirow[c]{2}{*}[-0.9ex]{\makecell[c]{Spatial\\footprint}}
&
\multicolumn{2}{c}{Resource rate}
\\
\cmidrule(lr){3-4}
\cmidrule(lr){6-7}
&
&
\(p=10^{-3}\)
&
\(p=10^{-6}\)
&
&
\makecell[c]{Time}
&
\makecell[c]{Volume}
\\
\midrule
\(\boldsymbol{(26\mathrm{CS}\!\to\!12\mathrm{CS})}\mathbin{\ast}\allowbreak{}(2\mathrm{CS}\!\to\!\mathrm{CCZ})\) & \(10^{2.76}p^{2}\) & \(5.82\!\times\!10^{-4}\) & \(5.81\!\times\!10^{-10}\) & 74 & \(1.09\!\times\!10^{-1}\) & \(1.48\!\times\!10^{-3}\) \\
\(\boldsymbol{(4\mathrm{CS}\!\to\!\mathrm{CS})}\mathbin{\ast}\allowbreak{}(2\mathrm{CS}\!\to\!\mathrm{CCZ})\) & \(24p^{2}\) & \(2.40\!\times\!10^{-5}\) & \(2.40\!\times\!10^{-11}\) & 8 & \(6.20\!\times\!10^{-2}\) & \(7.75\!\times\!10^{-3}\) \\
\(\boldsymbol{(21\mathrm{CS}\!\to\!4\mathrm{CS})}\mathbin{\ast}\allowbreak{}(2\mathrm{CS}\!\to\!\mathrm{CCZ})\) & \(246p^{3}\) & \(2.49\!\times\!10^{-7}\) & \(2.46\!\times\!10^{-16}\) & 30 & \(4.56\!\times\!10^{-2}\) & \(1.52\!\times\!10^{-3}\) \\
\((2\mathrm{CS}\!\to\!\mathrm{CCZ})\mathbin{\ast}\allowbreak{}\boldsymbol{(9\mathrm{CCZ}\!\to\!\mathrm{CCZ})}\) & \(10^{2.03}p^{3}\) & \(1.08\!\times\!10^{-7}\) & \(1.07\!\times\!10^{-16}\) & 13 & \(2.68\!\times\!10^{-2}\) & \(2.06\!\times\!10^{-3}\) \\
\(\boldsymbol{(16\mathrm{CS}\!\to\!2\mathrm{CS})}\mathbin{\ast}\allowbreak{}(2\mathrm{CS}\!\to\!\mathrm{CCZ})\) & \(1888p^{4}\) & \(1.90\!\times\!10^{-9}\) & \(1.89\!\times\!10^{-21}\) & 18 & \(3.02\!\times\!10^{-2}\) & \(1.68\!\times\!10^{-3}\) \\
\(\boldsymbol{(8\mathrm{CS}\!\to\!3\mathrm{CS})}\mathbin{\ast}\allowbreak{}\boldsymbol{(8\mathrm{CS}\!\to\!3\mathrm{CS})}\mathbin{\ast}\allowbreak{}(2\mathrm{CS}\!\to\!\mathrm{CCZ})\) & \(10^{3.15}p^{4}\) & \(1.41\!\times\!10^{-9}\) & \(1.41\!\times\!10^{-21}\) & 68 & \(3.46\!\times\!10^{-2}\) & \(5.09\!\times\!10^{-4}\) \\
\(\boldsymbol{(4\mathrm{CS}\!\to\!\mathrm{CS})}\mathbin{\ast}\allowbreak{}\boldsymbol{(4\mathrm{CS}\!\to\!\mathrm{CS})}\mathbin{\ast}\allowbreak{}(2\mathrm{CS}\!\to\!\mathrm{CCZ})\) & \(576p^{4}\) & \(5.78\!\times\!10^{-10}\) & \(5.76\!\times\!10^{-22}\) & 12 & \(1.55\!\times\!10^{-2}\) & \(1.29\!\times\!10^{-3}\) \\
\(\boldsymbol{(16\mathrm{CS}\!\to\!7\mathrm{CS})}\mathbin{\ast}\allowbreak{}\boldsymbol{(4\mathrm{CS}\!\to\!\mathrm{CS})}\mathbin{\ast}\allowbreak{}(2\mathrm{CS}\!\to\!\mathrm{CCZ})\) & \(10^{2.15}p^{4}\) & \(1.41\!\times\!10^{-10}\) & \(1.41\!\times\!10^{-22}\) & 72 & \(2.65\!\times\!10^{-2}\) & \(3.68\!\times\!10^{-4}\) \\
\(\boldsymbol{(4\mathrm{CS}\!\to\!\mathrm{CS})}\mathbin{\ast}\allowbreak{}(2\mathrm{CS}\!\to\!\mathrm{CCZ})\mathbin{\ast}\allowbreak{}\boldsymbol{(9\mathrm{CCZ}\!\to\!\mathrm{CCZ})}\) & \(10^{4.37}p^{6}\) & \(2.37\!\times\!10^{-14}\) & \(2.36\!\times\!10^{-32}\) & 17 & \(6.89\!\times\!10^{-3}\) & \(4.05\!\times\!10^{-4}\) \\
\(\boldsymbol{(6\mathrm{CS}\!\to\!2\mathrm{CS})}\mathbin{\ast}\allowbreak{}\boldsymbol{(21\mathrm{CS}\!\to\!4\mathrm{CS})}\mathbin{\ast}\allowbreak{}(2\mathrm{CS}\!\to\!\mathrm{CCZ})\) & \(10^{3.99}p^{6}\) & \(9.82\!\times\!10^{-15}\) & \(9.76\!\times\!10^{-33}\) & 66 & \(1.57\!\times\!10^{-2}\) & \(2.38\!\times\!10^{-4}\) \\
\(\boldsymbol{(21\mathrm{CS}\!\to\!4\mathrm{CS})}\mathbin{\ast}\allowbreak{}\boldsymbol{(6\mathrm{CS}\!\to\!2\mathrm{CS})}\mathbin{\ast}\allowbreak{}(2\mathrm{CS}\!\to\!\mathrm{CCZ})\) & \(10^{3.32}p^{6}\) & \(2.14\!\times\!10^{-15}\) & \(2.09\!\times\!10^{-33}\) & 70 & \(1.52\!\times\!10^{-2}\) & \(2.17\!\times\!10^{-4}\) \\
\(\boldsymbol{(16\mathrm{CS}\!\to\!2\mathrm{CS})}\mathbin{\ast}\allowbreak{}\boldsymbol{(12\mathrm{CS}\!\to\!5\mathrm{CS})}\mathbin{\ast}\allowbreak{}(2\mathrm{CS}\!\to\!\mathrm{CCZ})\) & \(10^{6.75}p^{8}\) & \(5.69\!\times\!10^{-18}\) & \(5.65\!\times\!10^{-42}\) & 74 & \(1.26\!\times\!10^{-2}\) & \(1.70\!\times\!10^{-4}\) \\
\(\boldsymbol{(16\mathrm{CS}\!\to\!2\mathrm{CS})}\mathbin{\ast}\allowbreak{}\boldsymbol{(4\mathrm{CS}\!\to\!\mathrm{CS})}\mathbin{\ast}\allowbreak{}(2\mathrm{CS}\!\to\!\mathrm{CCZ})\) & \(10^{5.95}p^{8}\) & \(8.98\!\times\!10^{-19}\) & \(8.91\!\times\!10^{-43}\) & 26 & \(7.56\!\times\!10^{-3}\) & \(2.91\!\times\!10^{-4}\) \\
\(\boldsymbol{(6\mathrm{CS}\!\to\!2\mathrm{CS})}\mathbin{\ast}\allowbreak{}\boldsymbol{(4\mathrm{CS}\!\to\!\mathrm{CS})}\mathbin{\ast}\allowbreak{}\boldsymbol{(4\mathrm{CS}\!\to\!\mathrm{CS})}\mathbin{\ast}\allowbreak{}(2\mathrm{CS}\!\to\!\mathrm{CCZ})\) & \(10^{4.92}p^{8}\) & \(8.36\!\times\!10^{-20}\) & \(8.29\!\times\!10^{-44}\) & 30 & \(5.15\!\times\!10^{-3}\) & \(1.72\!\times\!10^{-4}\) \\
\(\boldsymbol{(6\mathrm{CS}\!\to\!2\mathrm{CS})}\mathbin{\ast}\allowbreak{}\boldsymbol{(6\mathrm{CS}\!\to\!2\mathrm{CS})}\mathbin{\ast}\allowbreak{}\boldsymbol{(6\mathrm{CS}\!\to\!2\mathrm{CS})}\mathbin{\ast}\allowbreak{}(2\mathrm{CS}\!\to\!\mathrm{CCZ})\) & \(10^{4.82}p^{8}\) & \(6.69\!\times\!10^{-20}\) & \(6.64\!\times\!10^{-44}\) & 74 & \(9.15\!\times\!10^{-3}\) & \(1.24\!\times\!10^{-4}\) \\
\(\boldsymbol{(6\mathrm{CS}\!\to\!2\mathrm{CS})}\mathbin{\ast}\allowbreak{}\boldsymbol{(8\mathrm{CS}\!\to\!3\mathrm{CS})}\mathbin{\ast}\allowbreak{}\boldsymbol{(4\mathrm{CS}\!\to\!\mathrm{CS})}\mathbin{\ast}\allowbreak{}(2\mathrm{CS}\!\to\!\mathrm{CCZ})\) & \(10^{4.22}p^{8}\) & \(1.69\!\times\!10^{-20}\) & \(1.68\!\times\!10^{-44}\) & 70 & \(7.72\!\times\!10^{-3}\) & \(1.10\!\times\!10^{-4}\) \\
\bottomrule
\end{tabular}%
}
\endgroup
\caption{Statistics for a selection of competitive protocols taking injected $\ket{\CS}$ states as input, and distilling $\ket{\CCZ}$ states as output. We display those protocols that appear in one of our Pareto frontiers over a range at least half an order of magnitude on the $x$-axis.}
\label{tab:pareto-cs-to-ccz}
\end{table*}

%% file: F64_maximal_curves.tex
\begin{table}[htbp]
\centering
\begingroup
\small
\renewcommand{\arraystretch}{1.5}
\setlength{\tabcolsep}{4pt}

\setcellgapes{2.5pt}
\makegapedcells

\begin{tabularx}{\linewidth}{|c|Y|Y|}
\hline
$g$ & Curve(s) & Comments\\ \hline
$2$ & $y^2+y=x^5+x^3+\theta^5 x$~\cite{vandergeer1996} & - \\ \hline
$3$ & \makecell[c]{Two curves are known:\\$y^4+y^2+y=x^3$~\cite[Example~12]{garcia2001construction},\\  $y^4+y=x^3$.} & These two curves are not $\F_{64}$-isomorphic~\cite{Garcia_2009_unramified}.\\ \hline
$4$ & $y^2+y=x^9$~\cite[Method I]{vandergeer1996} & - \\ \hline
$6$ & \makecell[c]{$\begin{cases}
    y_1^2+y_1=\theta x^2 + \theta^4 x^3 + \theta^2 x^5\\
    y_2^2+y_2=\theta^2 x^2 + \theta x^3 + \theta^4 x^5
\end{cases}$\\\cite[Method II]{vandergeer1996}}
& \makecell[c]{A singular plane model\\$z^4+(\theta+1)z^2+\theta z=\theta^6 x^3 (x+1)^3$\\can be obtained by defining\\$z=y_1+\theta^6 y_2$.}\\ \hline
$7$ & $y^8+y=x^3$ & Follows from~\cite[Prop.~6.4.1]{Stichtenoth}.\\ \hline
$9$ & \makecell[c]{$c\cdot s(U,V,W)=0$ where \\
$s(U,V,W)=U^8W + W^8V+V^8U + (UVW)^3$,\\ 
$U=\beta X+Y+\beta^{65} Z$,\\ $V=\beta^{65} X + \beta Y + Z$, \\$W=X+\beta^{65} Y + \beta Z$.\\ \cite[Remark~2.2~\&~2.3]{cossidente2000curves}.} & \makecell[c]{$c=\beta^{27}$, $\beta\in\F_{512}$ satisfies $\beta^9+\beta+1=0$.\\ The multiplication by $c$ makes the coefficients \\in $c\cdot s(U,V,W)=0$ land in $\F_{64}$.}\\ \hline
$10$ & $y^9=x^6+x^3$ & \makecell[c]{This is a singular plane model of the\\ $q=2,\; n=3$ case of~\cite{Beelen_2018,GGS2010,GK2007}.}\\ \hline
$12$ & $\begin{cases}
    y_1^2 + y_1 = \theta^4 x^9\\
    y_2^2 + y_2 = \theta^2 x^9
\end{cases}$~\cite[Method I]{vandergeer1996}
& \makecell[c]{A singular plane model $z^4+z^2+z=x^9$\\ can be obtained by defining $z=\theta y_1 + \theta^2 y_2$.\\
This curve also appears as \\the (II.1) case of~\cite[Thm.~2.1]{cossidente2000curves}.}\\ \hline
$28$ & $y^{8}+y=x^{9}$ & Hermitian curve\\ \hline
\end{tabularx}
\endgroup

\caption{A summary of maximal curves over $\F_{64}$. $g$ is the genus and a maximal curve of that genus has $64+16g+1$ points. $\alpha$ is a primitive element of $\F_{64}$ that satisfies $\alpha^6+\alpha+1=0$, taking $\theta=\alpha^{27}$, then $\theta$ is a generator of the $\F_8$ subfield and satisfies $\theta^3+\theta+1=0$.}
\label{table:F64_maximal_curves}
\end{table}